\numberwithin{equation}{section}
\newcommand{\e}{\varepsilon}
\newcommand{\Pb}{\mathbb{P}}
\newcommand{\I}{\mathrm{i}}
\newcommand{\dx}{\mathrm{d}}
\newcommand{\R}{\mathbb{R}}
\newcommand{\N}{\mathbb{N}}
\newcommand{\Z}{\mathbb{Z}}
\newcommand{\Id}{\mathbbm{1}}
\newcommand{\Ai}{\mathrm{Ai}}
\renewcommand{\Re}{\operatorname{Re}}
\newcommand{\Or}{\mathcal{O}}
\newtheorem{assumption}{Assumption}
\newtheorem{prop}{Proposition}[section]
\newtheorem{thm}[prop]{Theorem}
\newtheorem{lem}[prop]{Lemma}
\newtheorem{cor}[prop]{Corollary}
\newtheorem{cla}[prop]{Claim}
\newtheorem{rem}[prop]{Remark}
\newenvironment{remark}{\begin{rem}\normalfont}{\end{rem}}
\title{Anomalous shock fluctuations in TASEP\\ and last passage percolation models}
\author{Patrik L.\ Ferrari\thanks{Institute for Applied Mathematics, Bonn University, Endenicher Allee 60, 53115 Bonn, Germany. E-mail: {\tt ferrari@uni-bonn.de}} \and
Peter Nejjar\thanks{Institute for Applied Mathematics, Bonn University, Endenicher Allee 60, 53115 Bonn, Germany. E-mail: {\tt nejjar@uni-bonn.de}}
}
\date{}
\begin{document}
\maketitle
\sloppy

\vfill
\begin{abstract}
We consider the totally asymmetric simple exclusion process with initial conditions and/or jump rates such that shocks are generated. If the initial condition is deterministic, then the shock at time $t$ will have a width of order $t^{1/3}$. We determine the law of particle positions in the large time limit around the shock in a few models. In particular, we cover the case where at both sides of the shock the process of the particle positions is asymptotically described by the Airy$_1$ process. The limiting distribution is a product of two distribution functions, which is a consequence of the fact that at the shock two characteristics merge and of the slow decorrelation along the characteristics. We show that the result generalizes to generic last passage percolation models.
\end{abstract}
\vfill

\newpage

\section{Introduction}

We start by considering the simplest non-reversible interacting stochastic particle system, namely  the totally asymmetric simple exclusion process (TASEP) on $\Z$. Despite its simplicity, this model is full of interesting features. In TASEP, particles independently try to jump to their right neighbor site at a constant rate and jumps occur if the exclusion constraint is satisfied: no site can be occupied by more than one particle. Under hydrodynamic scaling, the particle density solves the deterministic Burgers equation (see e.g.~\cite{Li99,AV87}). This model belongs to the Kardar-Parisi-Zhang (KPZ) universality class~\cite{KPZ86} (see~\cite{cor11} for a recent review).

We are interested in the fluctuations around the macroscopic behavior given in terms of the solution of the Burgers equation and we focus on the fluctuations of particles' positions. Depending on the initial condition, the deterministic solution may have parts of constant and decreasing density, as well as a discontinuity, also referred to as shock. The fluctuations of the shock location have attracted a lot of attention.

For TASEP product Bernoulli measures are the only translation invariant stationary measures~\cite{Lig76}. In the first works one considered initial configurations to have a shock at the origin, with Bernoulli measures with density $\rho$ (resp.\ $\lambda$) at its left (resp.\ right), with $\rho<\lambda$.
The shock location is often identified by the position of a second class particle. In this case, the shock fluctuations are Gaussian in the scale $t^{1/2}$~\cite{Fer90,FF94b,PG90}. Microscopic information on the shock are available too~\cite{DJLS93,FKS91,Fer86,BS13}. The origin of the $t^{1/2}$ fluctuations lies in the randomness of the initial conditions, since fluctuations coming from the dynamics grow only as $t^{1/3}$. If the initial randomness is only at one side of the shock, a similar picture still holds. For example, in~\cite{BFS09} one considers the initial condition is Bernoulli-$\rho$ to the right and periodic with density $1/2$ to the left of the origin. When $\rho>1/2$ there is a shock with Gaussian fluctuations in the scale $t^{1/2}$. In that work, the fluctuations of the shock position are derived from the ones of the particle positions. The result fits in with the heuristic argument in~\cite{Spo91} (Section~5). The Gaussian form of the distribution function is not robust (see for instance Remark~17 in~\cite{BFS09}).

This paper is the first where the fluctuation laws around a shock occurring without initial randomness are analyzed. In that case, one heuristically expects that the shock fluctuations, but also tagged particles fluctuations, live only on a scale of order $t^{1/3}$, see~\cite{vanB91} for a physical argument. We find that the distribution function of a particle position (and also of tagged particles) is a product of two other distribution functions.
The reason of the product form of the distribution function is that (1) at the shock two characteristics merge and (2) along the characteristics decorrelation is slow~\cite{Fer08,CFP10b}.

More precisely, if we look at the history of a particle close to the shock at time $t$, it has non-trivial correlations with a region of width $\Or(t^{2/3})$ around the characteristics, see Figure~\ref{FigShockTASEP}.
\begin{figure}
\begin{center}
\psfrag{x}[cc]{$x$}
\psfrag{t}[lc]{$t$}
\psfrag{E}[cc]{$E$}
\psfrag{El}[cc]{$E_\ell$}
\psfrag{Er}[cc]{$E_r$}
\includegraphics[height=5cm]{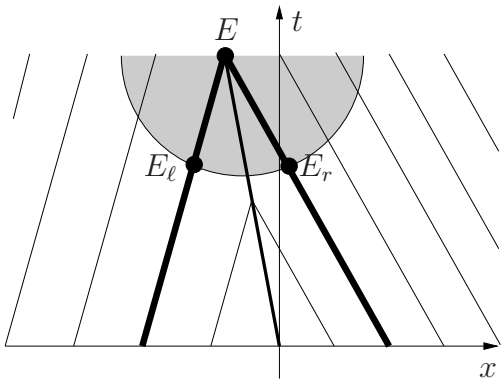}
\caption{Illustration of the characteristics for TASEP. $E$ is the shock location, where two characteristics merges (the thick lines). The gray region is of order $t^\nu$ for some $2/3<\nu<1$. Due to the slow decorrelation along characteritics, at large time $t$ the fluctuations at $E$ originates from the ones at $E_\ell$ and $E_r$.}
\label{FigShockTASEP}
\end{center}
\end{figure}
At the shock the two characteristics come together with a positive angle so that at time $t-t^{\nu}$, \mbox{$2/3<\nu<1$}, their distance will be farther away than  $\Or(t^{2/3})$. This implies that the fluctuations built up along the two characteristics before time $t-t^\nu$ will be (asymptotically) independent. But if we stay on a characteristic, then the dynamical fluctuations created between time $t-t^\nu$ and time $t$ are only $o(t^{1/3})$, which are irrelevant with respect to the total fluctuations present at time $t-t^\nu$ that are of order $t^{1/3}$ (this is also known as the slow-decorrelation phenomenon~\cite{Fer08,CFP10b}).

To generate a shock between two regions of constant density, we consider the initial condition  where $2\Z$ is fully occupied and where the jump rates of particles starting to the left (resp.\ right) of the origin is equal to $1$ (resp.\ $\alpha<1$).

In Corollary~\ref{Cor1b} we determine the distribution function of the fluctuations of TASEP particles in the $t^{1/3}$ scale. It is a product of two GOE Tracy-Widom distributions, $F_1$, and the transition from a single $F_1$ to the second $F_1$ distribution occurs over a distance of order $t^{1/3}$. In Corollary~\ref{Cor2b} (resp.~\ref{Cor3b}) we study another shock situation, where the distribution function goes from $F_2$ to $F_1$ (resp.\ from $F_2$ to $F_2$).

For completeness, let us briefly discuss what happens when there is no shock. From the KPZ theory, the fluctuations of particle positions generated by the dynamics grow as $t^{1/3}$ where $t$ is the time. When the density is constant and the initial condition is non-random, the fluctuations will be governed by $F_1$, as shown in a few cases in~\cite{BR99} (see~\cite{Sas05,BFPS06,BFP06} for joint distributions). However, when the density is decreasing, the fluctuations are governed by $F_2$, the GUE Tracy-Widom distribution, as shown in~\cite{Jo00b} (see also~\cite{BC09,PS01} for random initial conditions and~\cite{BF07,BFP06,Jo03b} for joint distributions). The transition process between these two cases has been determined in~\cite{BFS07}. The stationary initial condition~\cite{Lig76} has also constant density but with different fluctuation laws, see~\cite{BR00,FS05a,BFP09}. For further details see e.g.\ the review~\cite{Fer07}.

It is well known that TASEP is linked to last passage percolation (LPP). As a consequence slow-decorrelation holds also for generic LPP models~\cite{CFP10b}. For this reason our first main result, Theorem~\ref{ThmMain}, is a generic statement proven under some assumptions, that one needs to verify model by model. This theorem states that the distribution function of a generic last passage percolation is the product of two distribution functions corresponding to two simpler last passage problems. The verification of the assumptions can be quite involved. Here we prove they are valid for three different LPP models and obtain Corollaries~\ref{Cor1}--\ref{Cor3}.

 Using the connection to TASEP we then restate the results in terms of TASEP particles (see Corollaries~\ref{Cor1b}--\ref{Cor3b}). Corollary~\ref{Cor1b} is the result which motivated our work.

\subsubsection*{Outline}
In Section~\ref{sectResults} we define precisely the models and state our main results. In Section~\ref{sectProofGeneric} we prove the generic theorem. In order to apply it to specific models we have to verify the assumptions, which is the content of Section~\ref{SectAllLPP}. Finally, in Section~\ref{sectDistrLminus} we derive a kernel used in Section~\ref{SectAllLPP}.

\subsubsection*{Acknowledgments}
The authors would like to thank Alexei Borodin and Tomohiro Sasamoto for early discussions about the problem, and Herbert Spohn for valuable remarks. The work of P.L.~Ferrari was supported by the German Research Foundation via the SFB 1060--B04 project. P.~Nejjar is supported by the Bonn International Graduate School (BIGS).

\section{Models and main results}\label{sectResults}

\subsection{Last passage percolation -- general statement}
We consider last passage percolation (LPP) models\footnote{One can also define LPP models on $\R^2$, see e.g.~\cite{CFP10b}. The general arguments in this paper are unchanged. Only a minor modification in the proofs in Section~\ref{sectProofNocrossing} is needed, namely the discretization used in Johansson's argument~\cite{Jo00}.} on $\Z^2$ with independent random variables $\{\omega_{i,j},i,j\in\Z\}$. An up-right path $\pi=(\pi(0),\pi(1),\ldots,\pi(n))$ on $\Z^2$ from a point $A$ to a point $E$ is a sequence of points in $\Z^2$ with \mbox{$\pi(k+1)-\pi(k)\in \{(0,1),(1,0)\}$}, with $\pi(0)=A$ and $\pi(n)=E$, and where $n$ is called the length $\ell(\pi)$ of $\pi$. Now, given two sets of points $S_A$ and $S_E$, one defines the last passage time $L_{S_A\to S_E}$ as
\begin{equation}\label{eq1}
L_{S_A\to S_E}=\max_{\begin{subarray}{c}\pi:A\to E\\A\in S_A,E\in S_E\end{subarray}} \sum_{1\leq k\leq \ell(\pi)} \omega_{\pi(k)}.
\end{equation}
The purpose of this paper is to determine the law of last passage times of various models.
Finally, we denote by $\pi^{\rm max}_{S_A\to S_E}$ any maximizer of the last passage time $L_{S_A\to S_E}$. For continuous random variables, the maximizer is a.s.\ unique.

In this paper we consider situations where the end set is one point and the starting set is the union of sets, namely\footnote{We will not write always explicitly the integer parts.}
\begin{equation}
S_A=\mathcal{L}^+\cup \mathcal{L}^-,\quad S_E=E=(\lfloor\eta  t\rfloor,\lfloor  t\rfloor),
\end{equation}
where $\mathcal{L}^+ \subseteq \{(v,n)\in \mathbb{Z}^{2}:v\leq 0, n \geq  0\}$, $\mathcal{L}^- \subseteq \{(v,n)\in \mathbb{Z}^{2}:n\leq 0, v\geq 0\}$.
Note that, by putting some of the $\omega_{i,j}$ to zero, it is always possible to choose $\mathcal{L}^+=(\Z_-,0)$ and $\mathcal{L}^-=(0,\Z_-)$.

With this choice it follows from the definition of the last passage time (\ref{eq1}) that
\begin{equation}
L=L_{S_A\to S_E} = \max\left\{L_{\mathcal{L}^+\to (\eta t, t)},L_{\mathcal{L}^-\to (\eta t, t)}\right\}.
\end{equation}
The two random variables $L_1=L_{\mathcal{L}^+\to (\eta t, t)}$ and $L_2=L_{\mathcal{L}^-\to (\eta t, t)}$ are not independent. However, under some assumptions they are essentially independent as $t\to\infty$, in the sense that the random last passage time $L=\max\{L_1,L_2\}$ properly rescaled has asymptotically the law of the product of the two rescaled random variables. This is due to the fact that the fluctuations present in the region where the maximizers of the two LPP problems tend to come together are on a smaller scale than the typical fluctuations. This is by  virtue of the slow-decorrelation phenomenon~\cite{Fer08,CFP10b}.

Typically one has in mind a law of large numbers  $L_i/ t \to \mu_i$ as \mbox{$t\to\infty$} and a fluctuation result $L_i-\mu_i  t=\Or( t^{\chi_i})$ with $\chi_i=1/3$ or $\chi_i=1/2$. If $L_1$ and $L_2$ have different leading orders $\mu_1,\mu_2$, the result is quite easy since only the largest of the two random variables is relevant in the $ t\to\infty$ limit. This situation can be treated directly with coupling arguments as in~\cite{BC09}. If $\mu_1=\mu_2=\mu$ but for instance $\chi_1<\chi_2$, then the natural scaling is \mbox{$(L-\mu  t)/ t^{\chi_2}$}, under which scaling $(L_1-\mu  t)/ t^{\chi_2}$ degenerates to the trivial random variable $0$ and acts as a cut-off. This situation occured for instance in~\cite{BFS09} (Proposition~1).

In this paper we consider the case where $L_1$ and $L_2$ have the same leading order $\mu$ and both fluctuations live  in the scale $ t^{1/3}$. This is our first assumption.

\begin{assumption}\label{Assumpt1}
Assume that there exists some $\mu$ such that
\begin{equation}\label{eq4a}
\lim_{ t\to\infty} \Pb\left(\frac{L_{\mathcal{L}^+\to (\eta t, t)}-\mu  t}{ t^{1/3}}\leq s\right) = G_1(s),
\end{equation}
and
\begin{equation}\label{eq4b}
\lim_{ t\to\infty} \Pb\left(\frac{L_{\mathcal{L}^-\to (\eta t, t)}-\mu  t}{ t^{1/3}}\leq s\right) = G_2(s),
\end{equation}
where $G_1$ and $G_2$ are some distribution functions.
\end{assumption}

Secondly, we assume that there is a point $E^+$ at distance of order $ t^\nu$, for some $1/3<\nu<1$, which lies on the characteristic  from $\mathcal{L}^+$ to $E$ and that there is slow-decorrelation as in Theorem~2.1 of~\cite{CFP10b}.
\begin{assumption}\label{Assumpt2}
Assume that we have a point $E^+=(\eta  t-\kappa  t^\nu, t- t^\nu)$ such that for some $\mu_0$, and $\nu\in (1/3,1)$ it holds
\begin{equation}\label{eq5}
\begin{aligned}
\lim_{ t\to\infty} \Pb\left(\frac{L_{E^+\to (\eta t, t)}-\mu_0  t^\nu}{ t^{\nu/3}}\leq s\right) &= G_0(s),\\
\lim_{ t\to\infty} \Pb\left(\frac{L_{\mathcal{L}^+\to E^+}-\mu  t+\mu_0  t^\nu}{ t^{1/3}}\leq s\right) &= G_1(s),
\end{aligned}
\end{equation}
where $G_0$ and $G_1$ are distribution functions.
\end{assumption}

Then, provided (\ref{eq4a}) and (\ref{eq5}) hold, Theorem~2.1 of~\cite{CFP10b} implies that for any $M>0$,
\begin{equation}
\lim_{ t\to\infty}\Pb\left(|L_{\mathcal{L}^+\to (\eta t, t)}-L_{\mathcal{L}^+\to E^+}-\mu_0  t^\nu|\geq M  t^{1/3}\right)=0.
\end{equation}
This means that the fluctuations of $L_{\mathcal{L}^+\to (\eta t, t)}$ are the same as the ones of $L_{\mathcal{L}^+\to E^+}$ up to $o( t^{1/3})$. Thus, we have to determine the maximum of $L_{\mathcal{L}^+\to E^+}$ and $L_{\mathcal{L}^-\to E}$. The final assumption ensures that these two random variables are asymptotically independent.
\begin{assumption}\label{Assumpt3}
Let $\nu$ be as in Assumption~\ref{Assumpt2}. Consider the points \mbox{$D_{\gamma}=(\lfloor \gamma \eta  t \rfloor,\lfloor  \gamma  t\rfloor)$} with \mbox{$\gamma \in [0,1- t^{\beta-1}]$}. Assume that there exists a \mbox{$\beta \in (0,\nu)$}, such that
\begin{equation}
\begin{aligned}
\lim_{ t\to\infty}\Pb\bigg(\bigcup_{D_{\gamma}\atop \gamma \in [0,1- t^{\beta-1}]}\left\{D_\gamma\in \pi^{\rm max}_{L_{\mathcal{L}^+\to E^+}}\right\}\bigg) &=0,\\
\lim_{ t\to\infty}\Pb\bigg(\bigcup_{D_{\gamma}\atop \gamma \in [0,1- t^{\beta-1}]}\left\{D_\gamma\in \pi^{\rm max}_{L_{\mathcal{L}^-\to (\eta t, t)}}\right\}\bigg) &=0.
\end{aligned}
\end{equation}
\end{assumption}

Under these assumptions, that will be verified in special cases, we have the first result of this paper, proven in Section~\ref{sectProofGeneric}.
\begin{thm}\label{ThmMain}
Under Assumptions~\ref{Assumpt1}--\ref{Assumpt3} we have
\begin{equation}
\lim_{ t\to\infty} \Pb\left(\frac{\max\left\{L_{\mathcal{L}^+\to (\eta t, t)},L_{\mathcal{L}^-\to (\eta t, t)}\right\}-\mu  t}{ t^{1/3}}\leq s\right) = G_1(s) G_2(s).
\end{equation}
\end{thm}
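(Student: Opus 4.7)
Combining Assumption~\ref{Assumpt2} with Theorem~2.1 of~\cite{CFP10b} yields, as already remarked immediately after Assumption~\ref{Assumpt2},
\[
t^{-1/3}\bigl(L_{\mathcal{L}^+\to (\eta t,t)} - L_{\mathcal{L}^+\to E^+} - \mu_0 t^\nu\bigr) \;\xrightarrow[t\to\infty]{\Pb}\; 0.
\]
Writing $\widetilde L_1 := L_{\mathcal{L}^+\to E^+}+\mu_0 t^\nu$ and $L_2 := L_{\mathcal{L}^-\to (\eta t,t)}$, the elementary inequality $|\max(a+\eps,b)-\max(a,b)|\leq|\eps|$ shows that the theorem reduces to proving
\[
\lim_{t\to\infty}\Pb\bigl(\widetilde L_1 \leq \mu t + s t^{1/3},\; L_2 \leq \mu t + s t^{1/3}\bigr) = G_1(s)\,G_2(s)
\]
at every continuity point $s$ of $G_1 G_2$.

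\textbf{Step 2 (decoupling via Assumption~\ref{Assumpt3}).} By Assumption~\ref{Assumpt3}, with probability tending to $1$ neither maximizer $\pi^{\max}_{L_{\mathcal{L}^+\to E^+}}$ nor $\pi^{\max}_{L_{\mathcal{L}^-\to(\eta t, t)}}$ touches any $D_\gamma$ with $\gamma \leq 1-t^{\beta-1}$. Since the first path starts in $\mathcal{L}^+$ and the second in $\mathcal{L}^-$, on this event the two paths lie on opposite sides of the initial segment of the characteristic ray and can share lattice sites only within an $\Or(t^\beta)$-neighborhood of $E$. I would then introduce \emph{constrained} LPPs $\widehat L_1,\widehat L_2$ obtained by maximizing only over up-right paths avoiding the forbidden segment; by Assumption~\ref{Assumpt3} these equal $L_{\mathcal{L}^+\to E^+}$ and $L_{\mathcal{L}^-\to(\eta t,t)}$ respectively with probability $\to 1$. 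Truncating further to strict half-spaces on each side of the full ray produces variables $\widehat L_1^{\rm ind},\widehat L_2^{\rm ind}$ that depend on disjoint $\omega$'s, hence are independent; the cost of the further truncation is localized in the $\Or(t^\beta)$-window near $E$. Standard LPP moderate-deviation bounds (mean and fluctuations of passage times in a box of side $\Or(t^\beta)$ are $\Or(t^\beta)$ and $\Or(t^{\beta/3})$ respectively, both $o(t^{1/3})$ since $\beta<\nu<1$) guarantee, after absorbing the deterministic shift into the centering, that this cost is $o(t^{1/3})$ in probability, so $\widehat L_i^{\rm ind}$ retains the marginal limit $G_i$.

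\textbf{Step 3 (conclusion) and main obstacle.} Independence of $\widehat L_1^{\rm ind}$ and $\widehat L_2^{\rm ind}$, combined with the marginal convergences inherited from Assumption~\ref{Assumpt1} (and Step~1 for $\widehat L_1^{\rm ind}$), yields the factorization $G_1(s)G_2(s)$; unwinding the $o(t^{1/3})$ substitutions from Steps~1--2 finishes the proof. The main obstacle is Step~2: producing a decoupled pair whose $\omega$-dependencies are genuinely disjoint \emph{and} whose marginals differ from the originals by only $o(t^{1/3})$ in probability. This hinges on the positive-angle merging of the two characteristics at $E$ (cf.\ Figure~\ref{FigShockTASEP}), encoded in the inequality $\beta<\nu$ of Assumption~\ref{Assumpt3}, together with standard LPP tail bounds controlling the overlap neighborhood near $E$.
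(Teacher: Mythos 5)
Your Steps 1 and 3, and the idea of passing to LPP problems constrained to avoid the points $D_\gamma$, are exactly the paper's strategy (Propositions~\ref{step1} and~\ref{step2}, the conditioned quantities $\tilde L$, and the transfers via Lemmas~\ref{lemma4.1} and~\ref{lemma4.2}). The genuine gap is in Step~2, in how you manufacture independence. You treat the constrained variables $\widehat L_1,\widehat L_2$ as still possibly dependent near $E$ and repair this by a further truncation ``to strict half-spaces on each side of the full ray'', claiming the cost is $o(t^{1/3})$ by moderate-deviation bounds in a box of side $\Or(t^{\beta})$. That estimate is false as stated: in the window of size $t^{\beta}$ near $E$ which Assumption~\ref{Assumpt3} deliberately leaves unconstrained (and in all the verified applications $\beta>1/3$), passage times across the window have mean of order $t^{\beta}\gg t^{1/3}$, and the discrepancy between the half-space--constrained LPP and the merely staircase-avoiding LPP is a \emph{random} quantity, not a deterministic shift that can be absorbed into the centering. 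To show that this discrepancy is $o(t^{1/3})$ you would need a localization statement for the $\mathcal{L}^-\to E$ maximizer in the last stretch of length $t^{\beta}$ before $E$, which is precisely what the assumptions do not provide (the forbidden set stops at $D_{1-t^{\beta-1}}$). Note also that disjointness of the two maximizers (your ``share lattice sites only within an $\Or(t^{\beta})$-neighborhood of $E$''; in fact, on the good event they share no sites at all, since $E^+$ is at distance $t^{\nu}\gg t^{\beta}$ from $E$) does not by itself yield independence of the passage times: independence has to come from measurability with respect to disjoint families of $\omega_{i,j}$.

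The repair --- and the paper's actual argument --- is that no further truncation is needed: the constrained variables are already exactly independent, and this is where $\beta<\nu$ and the choice of the endpoint $E^+$ (rather than $E$) enter. The forbidden points $\{D_\gamma\}_{\gamma\in[0,1-t^{\beta-1}]}$ form a connected monotone staircase from $(0,0)$ up to height roughly $t-t^{\beta}$, and an up-right path can pass from one side of this staircase to the other only by occupying one of its points. Hence every admissible path in the constrained $\mathcal{L}^+\to E^+$ problem stays strictly to the left of the staircase (all its heights are at most $t-t^{\nu}<t-t^{\beta}$), while every admissible path in the constrained $\mathcal{L}^-\to E$ problem stays strictly to its right until it reaches heights above $t-t^{\beta}$, a region the first problem never visits. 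The two constrained LPPs are therefore functions of disjoint collections of the $\omega_{i,j}$ and are independent for every finite $t$; the proof then concludes with the marginal transfers you sketch in Step~3 (in the paper this is the $\varepsilon$-argument with the event $A_R$ together with Lemmas~\ref{lemma4.1} and~\ref{lemma4.2}).
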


\subsection{Application to specific LPP models}\label{sectApplicLPP}
Let us consider now $\omega_{i,j}$ to be exponentially distributed random variables, that will become waiting times for TASEP particles. Let the waiting times be given by
\begin{equation}\label{eq10}
\begin{aligned}
\omega_{i,j}\sim \exp(1),\quad & j\geq 1,\\
\omega_{i,j}\sim \exp(\alpha),\quad & j\leq 0,
\end{aligned}
\end{equation}
for some $\alpha>0$. We are going to consider the scaling
\begin{equation}\label{eq11}
\eta=\eta_0+u t^{-2/3}.
\end{equation}
Then the following results hold true and will be proven in Section~\ref{sectApplications}, see Figure~\ref{FigLPPpictures} for an illustration of the geometry in the following three corollaries.
\begin{figure}
\begin{center}
\psfrag{0}[c][b]{$0$}
\psfrag{P1}[c][l]{$(\frac{\alpha}{2-\alpha}t,t)$}
\psfrag{P2}[c][l]{$(\frac{\alpha(3-2\alpha)}{2-\alpha}t,t)$}
\psfrag{P3}[c][l]{$(t,t)$}
\psfrag{a}[c][b]{$(a)$}
\psfrag{b}[c][b]{$(b)$}
\psfrag{c}[c][b]{$(c)$}
\includegraphics[height=5cm]{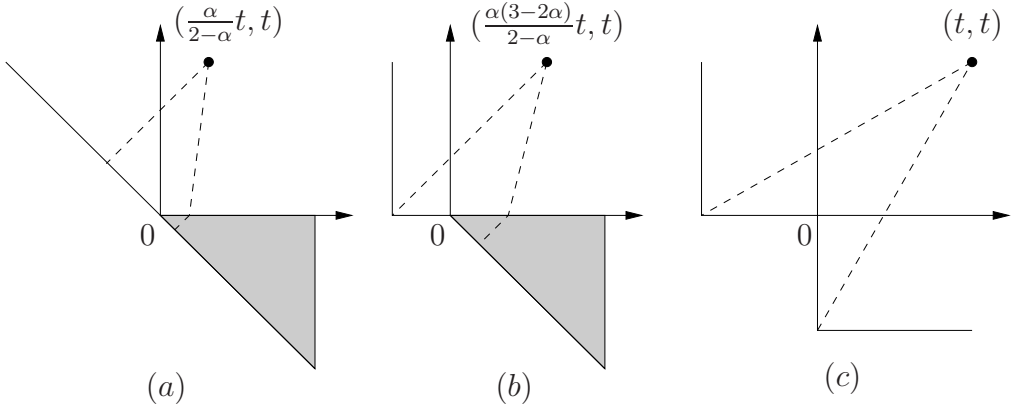}
\caption{Illustration of the geometry considered in (a) Corollary~\ref{Cor1}, (b) Corollary~\ref{Cor2}, and (c) Corollary~\ref{Cor3}, for $u=b=0$ and $\alpha=1/2$. The random variables in the gray (resp.\ white) regions are $\exp(\alpha)$ (resp.\ $\exp(1)$) distributed. The dashed lines represents the typical trajectories of the maximizers for the two LPP problems.}
\label{FigLPPpictures}
\end{center}
\end{figure}

\begin{cor}[Two point-to-line problems]\label{Cor1}
Let
\begin{equation}
\mathcal{L}^+=\{(-v,v),v\in\Z_+\},\quad \mathcal{L}^-=\{(-v,v),v\in\Z_-\},
\end{equation}
with $\eta_0=\frac{\alpha}{2-\alpha}$ and $\alpha<1$. Then, Theorem~\ref{ThmMain} holds with $\mu=4/(2-\alpha)$ and
\begin{equation}
G_1(s)=F_1\left(\frac{s-2u}{\sigma_1}\right),\quad G_2(s)=F_1\left(\frac{s-2u/\alpha}{\sigma_2}\right),
\end{equation}
where $F_1$ is the GOE Tracy-Widom distribution function of random matrices~\cite{TW96}, $\sigma_1=\frac{2^{2/3}}{(2-\alpha)^{1/3}}$ and $\sigma_2=\frac{2^{2/3}(2-2\alpha+\alpha^2)^{1/3}}{\alpha^{2/3}(2-\alpha)}$.
\end{cor}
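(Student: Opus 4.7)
The plan is to verify Assumptions~\ref{Assumpt1}--\ref{Assumpt3} for this geometry and then invoke Theorem~\ref{ThmMain}. Write $L_1 = L_{\mathcal{L}^+\to(\eta t, t)}$ and $L_2 = L_{\mathcal{L}^-\to(\eta t, t)}$; $L_1$ is a half-line--to--point LPP lying entirely in the $\exp(1)$ region, while any maximizer of $L_2$ must cross the rate discontinuity at $j=0$.

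For the law-of-large-numbers part of Assumption~\ref{Assumpt1}, one maximizes the explicit shape function. For $L_1$, the optimal exit is $v^* = (1-\alpha)t/(2-\alpha) > 0$, giving $\mu_1(\eta_0) = 2(1+\eta_0) = 4/(2-\alpha) = \mu$. For $L_2$, I would decompose any admissible path at its crossing $(x_0,0)$ of the line $j=0$ into an $\exp(\alpha)$ half-line--to--point LPP plus an $\exp(1)$ point-to-point LPP; optimizing the starting point on $\mathcal{L}^-$ (giving $|v|^* = x_0/2$ and lower-segment LLN $2x_0/\alpha$) and then the crossing $x_0$ (giving $b^* = 2\alpha(1-\alpha)/(2-\alpha)^2$) recovers the same LLN $\mu = 4/(2-\alpha)$; this coincidence is exactly what singles out $\eta_0$ as the shock direction. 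The envelope theorem then yields $\mu_1'(\eta_0) = 2$ and $\mu_2'(\eta_0) = 2/\alpha$, which account for the shifts $2u$ and $2u/\alpha$ in $G_1, G_2$ when $\eta = \eta_0 + u t^{-2/3}$ is inserted.

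For the Tracy--Widom limits, I would invoke the known $F_1$ asymptotics for point-to-line exponential LPP. $L_1$ fits this framework directly, with $\sigma_1$ arising from a Taylor expansion of the shape function at the optimal exit. For $L_2$, I would use slow decorrelation along the upper characteristic (from the macroscopic crossing $(b^* t, 0)$ to $E$) to absorb the $\exp(1)$ segment's fluctuations into an $o(t^{1/3})$ error, freezing the crossing location and reducing the problem to a pure $\exp(\alpha)$ point-to-line LPP; its $F_1$ asymptotics, after tracking the deterministic $1/\alpha$ time rescaling induced by the rate change, then account for the stated $\sigma_2$. Assumption~\ref{Assumpt2} is satisfied by choosing $E^+$ on the ray from $E$ through $(-v^*, v^*)$ at Euclidean distance $\Theta(t^\nu)$: since $E^+$ lies in the pure $\exp(1)$ bulk, the bulk LPP $L_{E^+\to E}$ has the $F_2$ limit on scale $t^{\nu/3}$, and $L_{\mathcal{L}^+\to E^+}$ has the same $F_1$ limit as $L_1$.

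Assumption~\ref{Assumpt3} requires that the maximizers of $L_1$ and $L_2$ avoid the region $\{D_\gamma : \gamma \leq 1-t^{\beta-1}\}$; since their macroscopic exit and crossing points sit at macroscopic distance from the origin, this follows from standard transversal-fluctuation moderate-deviation bounds for exponential LPP (cf.~\cite{Jo00}), which place the actual exit within $\Or(t^{2/3+\delta})$ of the macroscopic optimum for any fixed $\delta>0$, so any $\beta\in(2/3+\delta,\nu)$ works. The main obstacle will be the rigorous version of the slow-decorrelation step for $L_2$: carrying it out uniformly in the local variable around $b^* t$ so that the mixed-rate line-to-line LPP really does reduce to a single pure-$\exp(\alpha)$ point-to-line problem with the claimed KPZ scaling $\sigma_2$.
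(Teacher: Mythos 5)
Your overall strategy (verify Assumptions~\ref{Assumpt1}--\ref{Assumpt3} and invoke Theorem~\ref{ThmMain}) is the paper's, and your variational computations for $\mu$, for the optimal exit $v^*$, for the crossing point $b^*t$, and for the shifts $2u$, $2u/\alpha$ are all correct, as is the choice of $E^+$ along the diagonal for Assumption~\ref{Assumpt2}. The genuine gap is your treatment of $G_2$, i.e.\ of $L_{\mathcal{L}^-\to(\eta t,t)}$. You propose to use slow decorrelation ``along the upper characteristic from $(b^*t,0)$ to $E$'' to absorb the $\exp(1)$ segment's fluctuations into $o(t^{1/3})$ and thereby reduce $L_2$ to a pure $\exp(\alpha)$ point-to-line problem with frozen crossing. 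Slow decorrelation (Theorem~2.1 of~\cite{CFP10b}) only transfers fluctuations over sub-macroscopic distances $t^{\nu}$, $\nu<1$, along the characteristic; it cannot remove a macroscopic segment. The $\exp(1)$ point-to-point piece from the crossing to $E$ has length $\Theta(t)$ and its own $\Theta(t^{1/3})$ fluctuations, and the crossing location itself fluctuates on scale $t^{2/3}$, so both pieces genuinely enter the limit law. Concretely, your reduction would predict the scale of a pure $\exp(\alpha)$ point-to-line problem of size $b^*t$, namely $\frac{2^{2/3}(1-\alpha)^{1/3}}{\alpha^{2/3}(2-\alpha)^{2/3}}$, which differs from the stated $\sigma_2=\frac{2^{2/3}(2-2\alpha+\alpha^2)^{1/3}}{\alpha^{2/3}(2-\alpha)}$ and even degenerates to $0$ as $\alpha\uparrow 1$; the factor $(2-2\alpha+\alpha^2)^{1/3}$ is precisely the signature of both rate regions contributing. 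That the limit is nevertheless a single $F_1$ (and not some convolution) is a nontrivial output of an exact computation, not of a decoupling argument.

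What the paper actually does for this piece is the technical core of the work: it identifies $L_{\mathcal{L}^-\to(m,n)}$ with a two-speed TASEP (rates $1$ and $\alpha$, initial data (\ref{eq43})--(\ref{eq44})), derives an exact Fredholm determinant formula for it by taking the $M\to\infty$ limit of the finite-$M$ kernel of~\cite{BFS09} (Proposition~\ref{PropDistrLminus}, Section~\ref{sectDistrLminus}), and then performs a steepest-descent analysis of the kernel $\tilde K_{n,t}=K^{(1)}_{n,t}+K^{(2)}_{n,t}$ (Propositions~\ref{propDecayHalfLinePt2a}--\ref{propDecayHalfLinePt2b}) to obtain the $F_1$ limit with $\hat\sigma_\eta$ (Proposition~\ref{propHalfFlatConvergence2}), which at $\eta_0$ gives the stated $\sigma_2$. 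Two further, smaller points: your appeal to ``standard'' transversal-fluctuation bounds for Assumption~\ref{Assumpt3} is in the spirit of the paper's Johansson-type argument, but for the $\mathcal{L}^-$ maximizer those bounds are not off the shelf -- the paper needs the tail estimate of Proposition~\ref{decdec} and Corollary~\ref{CorLminus}, which themselves rest on the new kernel bounds; and for $G_1$ note that $\mathcal{L}^+$ is a half-line, so the ``known'' full-line $F_1$ result does not apply verbatim -- the paper proves Proposition~\ref{propHalfFlatConvergence} from the half-flat kernel of~\cite{BFS07}, showing the extra term is negligible (your observation $v^*>0$ is the right heuristic, but it is not by itself a proof).
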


\begin{cor}[One point-to-point and one point-to-line problem]\label{Cor2}
Let
\begin{equation}
\mathcal{L}^+=([-\lfloor \beta t\rfloor,0],0)\cup(-\lfloor \beta t\rfloor,\Z_+),\quad \mathcal{L}^-=\{(-v,v),v\in\Z_-\},
\end{equation}
with $\beta=\beta_0+b t^{-2/3}$, $\beta_0=1-\eta_0$, $\eta_0=\frac{\alpha (3-2 \alpha)}{2-\alpha}$ and $\alpha\in (0,1)$. Then, Theorem~\ref{ThmMain} holds with $\mu=4$ and
\begin{equation}
G_1(s)=F_2\left(\frac{s-2(u+b)}{\sigma_1}\right),\quad G_2(s)=F_1\left(\frac{s-2u/\alpha}{\sigma_2}\right),
\end{equation}
where $F_2$ is the GUE Tracy-Widom distribution function of random matrices~\cite{TW94}, $\sigma_1=2^{4/3}$, and $\sigma_2=\frac{2^{2/3}(6-10\alpha+6\alpha^2-\alpha^3)^{1/3}}{\alpha^{2/3}(2-\alpha)}$.
\end{cor}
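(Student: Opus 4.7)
The plan is to verify the three assumptions of Theorem~\ref{ThmMain} for $L_1:=L_{\mathcal{L}^+\to(\eta t,t)}$ and $L_2:=L_{\mathcal{L}^-\to(\eta t,t)}$. The key identity, obtained by direct computation, is $\eta_0+\beta_0=1$: the characteristic from the corner of $\mathcal{L}^+$, namely $(-\beta t,0)$, to the endpoint has equal horizontal and vertical extent and lies entirely in the $\exp(1)$ region (giving an $F_2$ limit), whereas the characteristic from $\mathcal{L}^-$ crosses the horizontal axis at a macroscopic point strictly to the right of the origin and then traverses both regions (giving an $F_1$ limit for a two-region problem).

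To verify Assumption~\ref{Assumpt1} for $L_1$, I first argue that the maximizer is asymptotically pinned at the corner $(-\beta t,0)$: any other start on the horizontal or vertical arm of $\mathcal{L}^+$ produces a leading-order mean $(\sqrt{x}+\sqrt{y})^2$ strictly smaller than at the corner by an amount linear in the shift from the corner, so only corner-adjacent starts within $O(t^{1/3})$ contribute to the $t^{1/3}$-scale limit. Thus $L_1$ reduces, up to negligible corrections, to the classical point-to-point LPP from $(-\beta t,0)$ to $(\eta t,t)$ with $\exp(1)$ weights, to which Johansson's $F_2$ theorem applies. Expanding $(\sqrt{\eta+\beta}+1)^2 t$ around $\eta_0+\beta_0=1$ with $\eta=\eta_0+ut^{-2/3}$ and $\beta=\beta_0+bt^{-2/3}$ yields exactly the centering $4t$, the shift $2(u+b)$, and the scale $\sigma_1=2^{4/3}$ in $G_1$. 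For $L_2$, the path crosses between two regions with different rates, and I would use the explicit Fredholm determinant representation whose kernel is derived in Section~\ref{sectDistrLminus}. A steepest-descent analysis of that kernel, with a single double critical point of Airy$_1$ type produced by the interplay of the rates $1$ and $\alpha$, gives the $F_1$ limit with shift $2u/\alpha$ and scale $\sigma_2$. This asymptotic analysis is the main technical obstacle: one must locate the correct saddle, deform to a steepest-descent contour compatible with the two-rate structure, and establish uniform tail bounds on the kernel.

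For Assumption~\ref{Assumpt2} I pick $E^+=(\eta t - t^\nu, t - t^\nu)$ and fix some $\nu\in(2/3,1)$, so that $E^+$ lies on the characteristic from the corner of $\mathcal{L}^+$ to $E$ up to corrections absorbed into the $t^{\nu/3}$ scale, with $\mu_0=4$. Both convergences in Assumption~\ref{Assumpt2} reduce, by the same Johansson-type argument as for $L_1$, to $F_2$ limits for point-to-point LPP in the $\exp(1)$ region, and slow decorrelation then follows from Theorem~2.1 of~\cite{CFP10b}.

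Finally, Assumption~\ref{Assumpt3} is a transverse-fluctuation statement. At height $\gamma t$ the characteristic from $(-\beta_0 t,0)$ is at horizontal coordinate $-\beta_0 t+\gamma t$, while $D_\gamma$ is at $\gamma\eta_0 t$; using $\eta_0+\beta_0=1$ these are separated by $\beta_0(1-\gamma)t$. Choosing $\beta\in(2/3,\nu)$, this separation exceeds $t^\beta\gg t^{2/3}$ for all $\gamma\leq 1-t^{\beta-1}$, so the standard $t^{2/3}$ transverse-fluctuation bound for the point-to-point LPP maximizer (of the type used around slow decorrelation in~\cite{CFP10b}) rules out intersection with $D_\gamma$ with probability $1-o(1)$. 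The maximizer of $L_2$ crosses into the $\exp(1)$ region at a point whose horizontal coordinate is strictly positive, hence stays at macroscopic distance from $D_\gamma$ for $\gamma$ bounded away from $1$, giving the second bound analogously. Combining these three verifications with Theorem~\ref{ThmMain} concludes the proof.
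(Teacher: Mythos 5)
Your overall route coincides with the paper's: verify Assumptions~\ref{Assumpt1}--\ref{Assumpt3} and invoke Theorem~\ref{ThmMain}. On the $\mathcal{L}^+$ side of Assumption~\ref{Assumpt1} your reduction is the paper's (in fact no pinning estimate is needed: every point of $\mathcal{L}^+$ is reachable from the corner inside $\mathcal{L}^+$, so $L_{\mathcal{L}^+\to E}=L_{(-\lfloor\beta t\rfloor,0)\to E}$ exactly), and your expansion of $(1+\sqrt{\eta+\beta})^2t$ reproduces the centering $4t$, the shift $2(u+b)$ and $\sigma_1=2^{4/3}$. On the $\mathcal{L}^-$ side, however, you only announce that a steepest-descent analysis of the Section~\ref{sectDistrLminus} kernel ``gives'' the $F_1$ limit with shift $2u/\alpha$ and scale $\sigma_2$, and you call this the main obstacle; in the paper this is precisely Propositions~\ref{propDecayHalfLinePt2a}--\ref{propHalfFlatConvergence2}, and it is also where the hypothesis $\alpha<1$ enters (Proposition~\ref{propHalfFlatConvergence2} requires $\eta_0>\alpha^2/(2-\alpha)^2$, which for $\eta_0=\frac{\alpha(3-2\alpha)}{2-\alpha}$ is equivalent to $\alpha<1$), a point your proposal never checks. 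Your Assumption~\ref{Assumpt2} verification (two point-to-point $F_2$ limits with $\mu_0=4$, then slow decorrelation from Theorem~2.1 of~\cite{CFP10b}) is fine and matches the paper.

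The genuine gap is Assumption~\ref{Assumpt3}. You dispose of it by quoting a ``standard $t^{2/3}$ transverse-fluctuation bound'' attributed to~\cite{CFP10b}; that paper proves slow decorrelation, not localization of maximizers, so there is nothing there to cite, and within the present paper the needed statements are exactly Propositions~\ref{nocrossingPtPt} and~\ref{nocrossingMinus}, which have to be proven: the argument is Johansson's scheme, namely if the maximizer passed through $D_\gamma$ then $L\le L_{\cdot\to D_\gamma}+L_{D_\gamma\to E}$, whose leading order falls short of $\mu t$ by at least order $t^{\beta}\gg t^{1/3}$, and this event is excluded by the upper-tail bounds (Proposition~\ref{devone}, Corollaries~\ref{CorLplus} and~\ref{CorLminus}) combined with the lower-tail bound of Proposition~\ref{devtwo}. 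So on the $\mathcal{L}^+$ side your geometric mechanism (separation $\beta_0(1-\gamma)t\ge\beta_0 t^{\beta}$ versus $t^{2/3}$ wandering) is the right heuristic but rests on an input that does not exist where you cite it. On the $\mathcal{L}^-$ side the argument is worse than incomplete: ``stays at macroscopic distance from $D_\gamma$ for $\gamma$ bounded away from $1$'' covers only $\gamma$ bounded away from $1$, whereas Assumption~\ref{Assumpt3} demands all $\gamma\le 1-t^{\beta-1}$. The whole difficulty sits at $\gamma\to 1$, where $D_\gamma$ is within distance of order $t^{\beta}$ of the common endpoint $E$ and both maximizers necessarily approach $E$; no macroscopic-separation statement decides whether the $\mathcal{L}^-$ maximizer hits such a $D_\gamma$, and this regime is exactly what the quantitative deficit-plus-deviation estimates of Propositions~\ref{PropLminusA}--\ref{PropLminusC} are designed to handle. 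As written, your verification of Assumption~\ref{Assumpt3} does not go through.
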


\begin{cor}[Two point-to-point problems]\label{Cor3}
Let us fix a $\beta>0$ and consider
\begin{equation}
\mathcal{L}^+=(-\lfloor \beta t\rfloor ,\Z_+)\cup([-\lfloor \beta t\rfloor,0],0),\quad \mathcal{L}^-=(0,[0,-\lfloor \beta t\rfloor])\cup(\Z_+,-\lfloor \beta t\rfloor),
\end{equation}
with $\eta_0=1$ and $\alpha=1$. Then, Theorem~\ref{ThmMain} holds with $\mu = (1+\sqrt{1+\beta})^2$ and
\begin{equation}
G_1(s)=F_2\left(\frac{s-u(1+1/\sqrt{1+\beta})}{\sigma}\right),\quad G_2(s)=F_2\left(\frac{s-u(1+\sqrt{1+\beta})}{\sigma}\right),
\end{equation}
where $\sigma=(1+\sqrt{1+\beta})^{4/3}/(1+\beta)^{1/6}$.
\end{cor}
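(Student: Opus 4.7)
The plan is to verify Assumptions~\ref{Assumpt1}--\ref{Assumpt3} so that Theorem~\ref{ThmMain} applies. Since $\alpha=1$, every $\omega_{i,j}$ is $\exp(1)$-distributed, and because the weights are nonnegative, prepending the missing leg of the L-shape can only increase the sum; hence almost surely
\begin{equation*}
L_{\mathcal{L}^+\to(\eta t,t)}=L_{(-\lfloor\beta t\rfloor,0)\to(\eta t,t)},\qquad L_{\mathcal{L}^-\to(\eta t,t)}=L_{(0,-\lfloor\beta t\rfloor)\to(\eta t,t)},
\end{equation*}
reducing both $L_1$ and $L_2$ to standard point-to-point exponential LPP.

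For Assumption~\ref{Assumpt1} I invoke Johansson's theorem~\cite{Jo00}, which states that for i.i.d.\ $\exp(1)$ weights
\begin{equation*}
\frac{L_{(0,0)\to(M,N)}-(\sqrt{M}+\sqrt{N})^2}{(\sqrt{M}+\sqrt{N})^{4/3}/(MN)^{1/6}}\Rightarrow F_2
\end{equation*}
as $M,N\to\infty$ at comparable rates. Taking $(M,N)=((1+\beta)t+ut^{1/3},t)$ for $L_1$ and $(M,N)=(t+ut^{1/3},(1+\beta)t)$ for $L_2$, and expanding to first order in $u$, a short Taylor calculation produces the common leading constant $\mu t=(1+\sqrt{1+\beta})^2 t$, the common fluctuation scale $\sigma t^{1/3}$, and the two shifts $u(1+1/\sqrt{1+\beta})$ and $u(1+\sqrt{1+\beta})$ that enter $G_1$ and $G_2$. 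For Assumption~\ref{Assumpt2} I place
\begin{equation*}
E^+=(\lfloor\eta t\rfloor-(1+\beta)t^\nu,\lfloor t\rfloor-t^\nu)
\end{equation*}
on the macroscopic characteristic from $(-\beta t,0)$ to $E$; the same Johansson theorem yields $\mu_0=\mu$ together with $G_1$ as the limit of $(L_{\mathcal{L}^+\to E^+}-\mu t+\mu_0 t^\nu)/t^{1/3}$, while slow decorrelation along this characteristic is exactly Theorem~2.1 of~\cite{CFP10b}, which applies directly to exponential LPP.

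The main obstacle is Assumption~\ref{Assumpt3}. For any $\gamma<1$ the characteristic from $(-\beta t,0)$ to $E$ meets the line $y=\gamma t$ at $x=\gamma\eta t-(1-\gamma)\beta t$, so the point $D_\gamma=(\gamma\eta t,\gamma t)$ is horizontally separated from this characteristic by $(1-\gamma)\beta t$; symmetrically, the other characteristic lies a vertical distance $(1-\gamma)\beta t$ from $D_\gamma$. For $\gamma\leq 1-t^{\beta'-1}$ with $\beta'\in(2/3,\nu)$ (I rename the Assumption~\ref{Assumpt3} exponent to $\beta'$ to avoid clashing with the geometric parameter $\beta$), this separation is at least $\beta t^{\beta'}\gg t^{2/3}$, hence far larger than the typical transversal fluctuation scale $t^{2/3}$ of a point-to-point maximizer. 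My plan is a union bound over the $\Or(t)$ diagonal points $D_\gamma$, combined with moderate-deviation bounds of the form $\exp(-c r^3/t^2)$ on the probability that the maximizer exits a tube of transversal width $r$ about its characteristic; such bounds are available from~\cite{Jo00} and from the geometric arguments developed for Corollaries~\ref{Cor1}--\ref{Cor2} in Section~\ref{SectAllLPP}. Adapting those transversal estimates to both endpoint geometries (one to $E^+$, one to $E$) is the principal technical step; the remainder is direct bookkeeping.
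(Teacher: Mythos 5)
Your verification of Assumptions~\ref{Assumpt1} and~\ref{Assumpt2} is essentially the paper's own argument: both L-shaped boundaries reduce to their corner points because the weights are nonnegative, Proposition~\ref{propJohConvergence} (Johansson) gives the two $F_2$ limits, and your Taylor expansion produces the correct shifts $u(1+1/\sqrt{1+\beta})$, $u(1+\sqrt{1+\beta})$ and the common scale $\sigma$. Your explicit choice $E^+=(\lfloor\eta t\rfloor-(1+\beta)t^\nu,\lfloor t\rfloor-t^\nu)$ on the characteristic, giving $\mu_0=\mu$, is the right one (and is spelled out more carefully than in the paper, whose generic figure uses the slope-one displacement that would \emph{not} work here for $\beta>0$). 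The one place where you genuinely diverge is Assumption~\ref{Assumpt3}. The paper does not invoke a tube-exit bound of the form $\exp(-cr^3/t^2)$: Proposition~\ref{nocrossingsymmetric} runs Johansson's curvature argument directly at the one-point level --- a union bound over the $\Or(t)$ points $D_\gamma$, upper-tail estimates for $L_{\mathcal{L}^+\to D_\gamma}$ and $L_{D_\gamma\to E}$ from Proposition~\ref{devone}, the deterministic energy deficit $\mu_\gamma+\mu_{{\rm pp},\gamma}-\mu\leq -\const\, t^{\beta-1}$ of Proposition~\ref{PropLplusB++}, and the lower-tail bound of Proposition~\ref{devtwo} --- and this works for every exponent in $(1/3,1]$, not only above $2/3$ as your tube argument would require. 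Be aware that the bound you cite is not actually available in~\cite{Jo00}: that paper treats Poissonian LPP and establishes only the transversal exponent $2/3$ with weaker tails, by precisely this curvature-plus-deviation mechanism, so quoting it as off-the-shelf would leave a gap. Your fallback --- adapting the geometric arguments of Section~\ref{SectAllLPP} --- is the correct and, in fact, necessary route; note also that Assumption~\ref{Assumpt3} requires the no-crossing statement for the path ending at $E^+$ as well, for which the modification of Proposition~\ref{nocrossing+} is needed. With that substitution your proof coincides with the paper's.
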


\subsection{Application to the totally asymmetric simple exclusion process}
It is well known that the choice of $\omega_{i,j}$ to be exponential distributed random variables directly links LPP with the totally asymmetric simple exclusion process (TASEP), which we recall here. TASEP is an interacting particle system on $\Z$ where two particles can not occupy the same site at the same time. Further, particles (independently) try to jump to their right neighboring site after an exponentially distributed waiting time. A jump occurs only if the destination site is empty. As a consequence, the order of particles is preserved. Thus, we can assign to each particle a number and we do it from right to left, i.e.
\[\ldots < x_2(0) < x_1(0) < 0 \leq x_0(0)< x_{-1}(0)< \cdots.\] Then, at all time $t\geq 0$, $x_{n+1}(t)<x_n(t)$, $n\in\Z$.  Then, the precise link between LPP and TASEP is the following. Let $\omega_{i,j}$ be the exponential  waiting time of particle $j$, and $L_E$ be the last passage time from $S_A=\{(u,k)\in\Z^2: u=k+x_k(0), k\in\Z\}$. This implies that
\begin{equation}\label{eqLinkLPPtasep}
\Pb\left(L_{{S_A}\to (m,n)}\leq t\right)=\Pb\left(x_n(t)+n\geq m\right).
\end{equation}
This connection will be used several times to verify that Assumptions~\ref{Assumpt1}--\ref{Assumpt3} hold in special cases.

The particular choice of the $\omega_{i,j}$ in (\ref{eq10}) means that particles with label $n\geq 1$ have jump rate $1$, while particles with label $n\leq 0$ have jump rate $\alpha$. The choice (\ref{eq11}) implies that we look at particle number $t$ at different times. Indeed, if
\begin{equation}
\lim_{t\to\infty}\Pb\left(L_{S_A\to (\eta_0 t+ut^{1/3},t)}\leq \mu t + s t^{1/3}\right)=F(u,s),
\end{equation}
then by (\ref{eqLinkLPPtasep}) we have that
\begin{equation}\label{eqFullLink}
\lim_{t\to\infty}\Pb\left(x_t(\mu t+\tau t^{1/3})\geq (\eta_0-1)t - s t^{1/3}\right)=F(-s,\tau).
\end{equation}
Since this relation is straightforward we do not restate the three corollaries for the tagged particle problem. Instead, we restate them so that they gives the distribution function at a fixed time $t$ of particles around the shock.

In the case of Corollary~\ref{Cor2}-\ref{Cor3}, the boundaries of the LPP problem to $(\eta t,t)$ also depends on the variable $t$. This has to be taken in account here too. Therefore, let us write explicitly this dependence in the measure and just write $L_{m,n}$ for the last passage time. For the case of Corollary~\ref{Cor1}, the boundary condition does not depends on the observation time parameter $t$. For this case, one can just set $\beta=0$ in the computations below. Assume that we have, as in the previous section,
\begin{equation}
\lim_{t\to\infty}\Pb_{\beta t}(L_{\eta_0 t+u t^{1/3},t}\leq \mu(\beta) t+s t^{1/3})= F(\beta,u,s).
\end{equation}
By (\ref{eqLinkLPPtasep}) we have
\begin{equation}\label{eq2.21}
\Pb_{\beta t}(x_{\nu t+\xi t^{1/3}}(t)\geq v t - s t^{1/3})=\Pb_{\beta t}(L_{(\nu+v)t+(\xi-s)t^{1/3},\nu t + \xi t^{1/3}}\leq t)
\end{equation}
Let us define $\tilde t$, $\eta$, and $\tilde\beta$ by the equations
\begin{equation}
\tilde t=\nu t+\xi t^{1/3},\quad
\eta\tilde t= (\nu+v)t+(\xi-s) t^{1/3},\quad
\tilde\beta\tilde t =\beta t.
\end{equation}
This gives, $t=\tilde t/\nu-\xi\nu^{-4/3}\tilde t^{1/3}+\Or(\tilde t^{-1/3})$, from which
\begin{equation}
\begin{aligned}
\eta&=(1+v/\nu)-(s+\xi v/\nu) \nu^{-1/3} \tilde t^{-2/3},\\
\tilde\beta&=\beta/\nu-\xi \beta \nu^{-4/3} \tilde t^{-2/3},
\end{aligned}
\end{equation}
up to $\Or(\tilde t^{-4/3})$. By plugging this in (\ref{eq2.21}) one readily obtains
\begin{equation}
\begin{aligned}
\lim_{t\to\infty}\Pb_{\beta t}(x_{\nu t+\xi t^{1/3}}(t)\geq v t - s t^{1/3}) &=
\lim_{\tilde t\to\infty}\Pb_{\tilde \beta\tilde t}(L_{\eta_0\tilde t+u \tilde t^{1/3},\tilde t}\leq \tilde t/\nu-\xi\nu^{-4/3}\tilde t^{1/3})\\
&=\lim_{\tilde t\to\infty}\Pb_{\tilde \beta\tilde t}(L_{\eta_0\tilde t+u \tilde t^{1/3},\tilde t}\leq \mu(\tilde\beta)\tilde t+\tilde s\tilde t^{1/3})
\end{aligned}
\end{equation}
with
\begin{equation}
\eta_0=1+v/\nu,\quad u=-(s+\xi v/\nu) \nu^{-1/3}, \quad \tilde s=\xi(\beta\mu'(\beta/\nu)-1)\nu^{-4/3}.
\end{equation}
provided that it holds $\mu(\beta/\nu)=1/\nu$. This condition sets which particles are around the shock position at time $t$. Then, by (\ref{eq2.21}) we have
\begin{equation}\label{eqFullLinkSpatial}
\lim_{t\to\infty}\Pb_{\beta t}(x_{\nu t+\xi t^{1/3}}(t)\geq v t - s t^{1/3}) = F\left(\frac{\beta}{\nu},-\frac{s+\xi v/\nu}{\nu^{1/3}},\frac{\xi(\beta\mu'(\beta/\nu)-1)}{\nu^{4/3}}\right).
\end{equation}
\begin{figure}
\begin{center}
\psfrag{0}[c][b]{$0$}
\psfrag{density}[l][l]{Density $\rho$}
\psfrag{position}[c][l]{Position}
\psfrag{v1}[c][c]{$-\tfrac{1-\alpha}{2}t$}
\psfrag{v2}[c][c]{$-\tfrac{(1-\alpha)^2}{2(2-\alpha)}t$}
\psfrag{b1}[c][c]{$-\beta t$}
\psfrag{b2}[c][c]{$\beta t$}
\psfrag{alpha}[c][c]{$\tfrac{\alpha}{2} t$}
\psfrag{rhol1}[c][c]{$\rho=\tfrac12$}
\psfrag{rhor1}[c][c]{$\rho=\tfrac{2-\alpha}{2}$}
\psfrag{rhol2}[c][c]{$\rho=\tfrac12$}
\psfrag{r1}[l][c]{$\rho=1$}
\psfrag{rhor2}[c][c]{$\rho=\tfrac{2-\alpha}{2}$}
\psfrag{rhol3}[l][c]{$\rho=\tfrac{1-\beta}{2}$}
\psfrag{rhor3}[c][c]{$\rho=\tfrac{1+\beta}{2}$}
\psfrag{a}[c][b]{$(a)$}
\psfrag{b}[c][b]{$(b)$}
\psfrag{c}[c][b]{$(c)$}
\includegraphics[height=6cm]{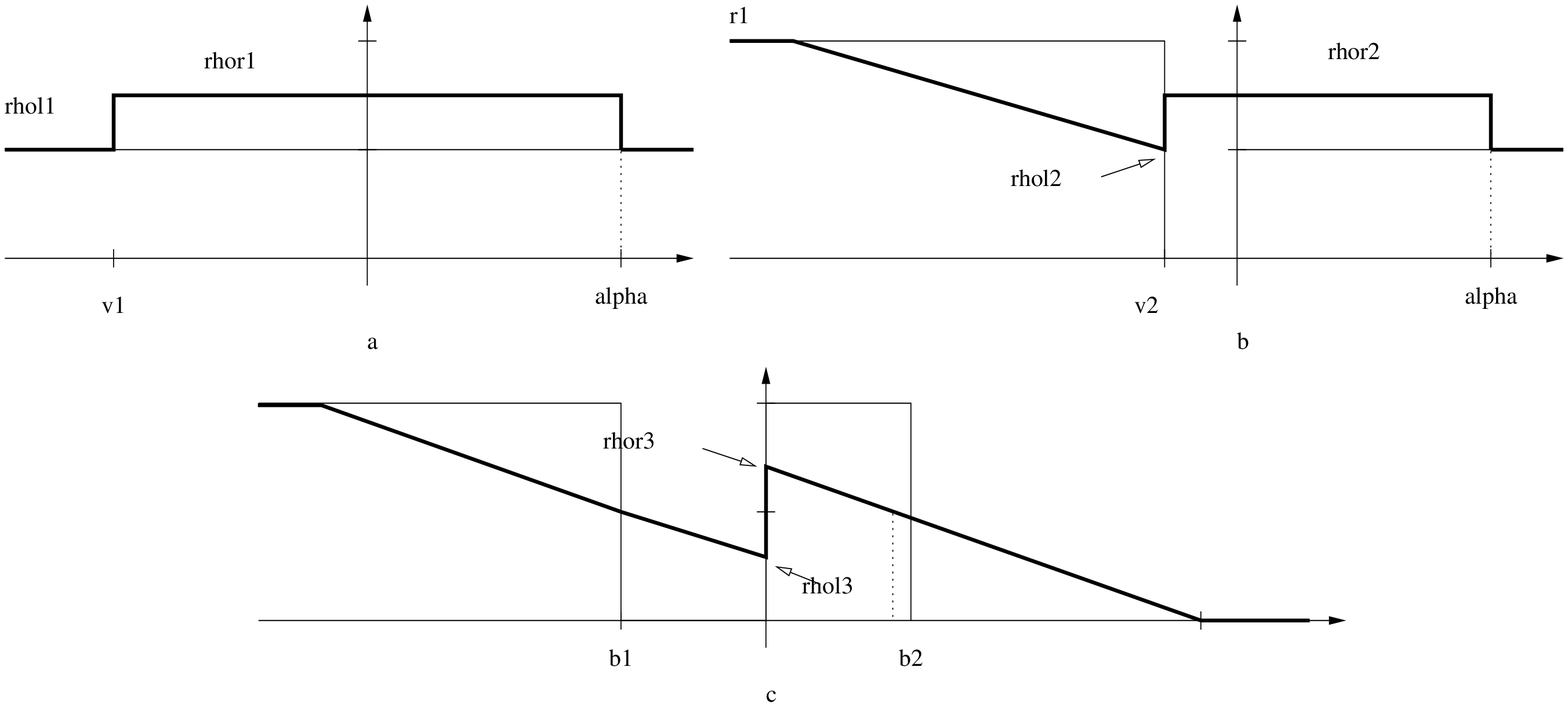}
\caption{The thick lines are the density profiles $\rho$ at time $t$ (resp. $t=\ell$) of (a) Corollary~\ref{Cor1b}, (b) Corollary~\ref{Cor2b}, and (c) Corollary~\ref{Cor3b}, for $u=b=0$ and $\alpha=1/2$. The thin lines are the initial conditions. The dotted vertical lines indicate the macroscopic position of the particle that started from the origin.
}
\label{FigTASEPpictures}
\end{center}
\end{figure}

The LPP situations considered above correspond to cases where, in terms of TASEP, there is a macroscopic discontinuity in the particles' density, i.e., there is a shock. Using (\ref{eqFullLinkSpatial}) we can restate Corollaries~\ref{Cor1}--\ref{Cor3} in terms of TASEP as follows, see Figure~\ref{FigTASEPpictures} for an illustration of the density profiles.

\medskip

\begin{cor}[At the $F_1$--$F_1$ shock]\label{Cor1b} Let $x_n(0)=-2n$ for $n\in\Z$. For $\alpha<1$ let $\nu=\frac{2-\alpha}{4}$ and $v=-\frac{1-\alpha}{2}$. Then it holds
\begin{equation}\label{eqCor1b}
\lim_{t\to\infty}\Pb\left(x_{\nu t+\xi t^{1/3}}(t)\geq v t - s t^{1/3}\right)=F_1\left(\frac{s-\xi/\rho_1}{\sigma_1}\right)F_1\left(\frac{s-\xi/\rho_2}{\sigma_2}\right),
\end{equation}
with $\rho_1=\frac12$, $\rho_2=\frac{2-\alpha}{2}$, $\sigma_1=\frac{1}{2}$, and $\sigma_2=\frac{\alpha^{1/3}(2-2\alpha+\alpha^2)^{1/3}}{2(2-\alpha)^{2/3}}$.
\end{cor}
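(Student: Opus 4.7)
The plan is to derive Corollary~\ref{Cor1b} directly from Corollary~\ref{Cor1} via the general LPP--to--TASEP translation formula (\ref{eqFullLinkSpatial}) that the authors established in the paragraphs immediately preceding the statement. First I would check the identification of data. The initial condition $x_n(0)=-2n$ gives the LPP starting set $S_A=\{(-k,k):k\in\Z\}$ via the defining relation $u=k+x_k(0)$, which is exactly $\mathcal{L}^+\cup\mathcal{L}^-$ of Corollary~\ref{Cor1} (split according to the sign of $k$). Moreover the jump-rate prescription (rate $1$ for $n\geq1$, rate $\alpha$ for $n\leq0$) matches the weight assignment (\ref{eq10}). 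Since $\mathcal{L}^\pm$ in Corollary~\ref{Cor1} do not depend on $t$, one may take $\beta=0$ in (\ref{eqFullLinkSpatial}).

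Next I would verify the macroscopic compatibility condition $\mu(\beta/\nu)=1/\nu$: with $\nu=(2-\alpha)/4$ and $\mu=4/(2-\alpha)$ from Corollary~\ref{Cor1}, one has $1/\nu=\mu$, so the condition holds trivially for $\beta=0$. The shock velocity $v=-(1-\alpha)/2$ satisfies $1+v/\nu=\alpha/(2-\alpha)=\eta_0$, confirming that $x_{\nu t+\xi t^{1/3}}(t)$ is the particle sitting macroscopically at position $vt$, i.e.\ at the shock. Applying (\ref{eqFullLinkSpatial}) with $\beta=0$ reduces the TASEP statement to the LPP limit $F(0,u,\tilde s)$ of Corollary~\ref{Cor1} evaluated at
\[
u=-(s+\xi v/\nu)\,\nu^{-1/3},\qquad \tilde s=-\xi\,\nu^{-4/3}.
\]

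The remaining task is the algebraic identification of the rescaled arguments. I would compute $\tilde s-2u$ and $\tilde s-2u/\alpha$, substitute $\nu^{1/3}=(2-\alpha)^{1/3}/2^{2/3}$ and the value of $v$, and check that division by the LPP constants $\sigma_1^{\rm LPP}=2^{2/3}/(2-\alpha)^{1/3}$ and $\sigma_2^{\rm LPP}=2^{2/3}(2-2\alpha+\alpha^2)^{1/3}/(\alpha^{2/3}(2-\alpha))$ produces exactly $(s-\xi/\rho_1)/\sigma_1$ and $(s-\xi/\rho_2)/\sigma_2$, respectively, with the values $\rho_1,\rho_2,\sigma_1,\sigma_2$ stated in the corollary. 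For instance in the first factor one gets $\tilde s-2u=(2^{5/3}/(2-\alpha)^{1/3})(s-2\xi)$, which upon dividing by $\sigma_1^{\rm LPP}$ yields $2(s-2\xi)=(s-\xi/\rho_1)/\sigma_1$; the second factor is analogous and produces the $F_1$ argument with the density $\rho_2=(2-\alpha)/2$.

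There is no substantive analytic obstacle here — all the heavy lifting (slow decorrelation, the product structure) is done by Theorem~\ref{ThmMain} and its verification for this geometry in Corollary~\ref{Cor1}. The only risk is an arithmetic slip in matching the two sets of constants; I would therefore double-check the transformation by confirming independently that $\rho_1$ and $\rho_2$ agree with the macroscopic densities on the two sides of the shock (i.e.\ $\rho_1=1/2$ for the unperturbed half and $\rho_2=(2-\alpha)/2$ for the slow half), and that the shifts $\xi/\rho_i$ are the expected characteristic slopes appearing when one reinterprets the fluctuations of the LPP endpoint in $(\eta,\tilde t)$-coordinates as fluctuations of a tagged particle in $(v,t)$-coordinates.
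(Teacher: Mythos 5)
Your proposal is correct and is exactly the paper's own derivation: Corollary~\ref{Cor1b} is obtained by applying the translation formula (\ref{eqFullLinkSpatial}) (with $\beta=0$, so that $u=-(s+\xi v/\nu)\nu^{-1/3}$ and $\tilde s=-\xi\nu^{-4/3}$) to the LPP result of Corollary~\ref{Cor1}, and your arithmetic checks ($\mu=1/\nu$, $\eta_0=1+v/\nu$, and the identification of the rescaled arguments with $(s-\xi/\rho_i)/\sigma_i$) all come out right.
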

As one can see from (\ref{eqCor1b}) the shock moves with speed $v$. When $\xi$ is very large we are in the region before the shock, where the density of particle is $1/2$. Indeed, by replacing $s\to s+2\xi$ and taking the $\xi\to \infty$ limit, then (\ref{eqCor1b}) converges to $F_1(s/\sigma_1)$. Similarly, when $-\xi$ is very large we are already in the shock, where the density of particles in $(2-\alpha)/2$. Indeed, by replacing $s\to s+2\xi/(2-\alpha)$ and taking $\xi\to-\infty$, then (\ref{eqCor1b}) converges to  $F_1(s/\sigma_2)$. This is the reason why we call this situation a $F_1$--$F_1$ shock.

\medskip

\begin{cor}[At the $F_2$--$F_1$ shock]\label{Cor2b}
For $\alpha<1$ let $\nu=1/4$ and \mbox{$v=-\frac{(1-\alpha)^2}{2(2-\alpha)}$}. Let $x_n(0)=v \ell -n$ for $n\geq 1$ and \mbox{$x_n(0)=-2n$} for $n\leq 0$.  Then it holds
\begin{equation}
\lim_{\ell\to\infty}\Pb\left(x_{\nu \ell+\xi \ell^{1/3}}(t=\ell)\geq v \ell - s \ell^{1/3}\right)=F_2\left(\frac{s-\xi/\rho_1}{\sigma_1}\right) F_1\left(\frac{s-\xi/\rho_2}{\sigma_2}\right),
\end{equation}
with $\rho_1=\frac12$, $\rho_2=\frac{2-\alpha}{2}$, $\sigma_1=2^{-1/3}$, and $\sigma_2=\frac{\alpha^{1/3}(6-10\alpha+6\alpha^2-\alpha^3)^{1/3}}{2(2-\alpha)}$.
\end{cor}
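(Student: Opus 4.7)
The statement is the TASEP reformulation of the LPP result in Corollary~\ref{Cor2}, obtained by combining the LPP--TASEP identity \eqref{eqLinkLPPtasep} with the spatial rescaling exemplified by \eqref{eqFullLinkSpatial}; equivalently one can redo the change of variables by hand, which is the path I would follow here.

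First I would match the initial data to the LPP boundary geometry of Corollary~\ref{Cor2}. From the formula $S_A=\{(k+x_k(0),k):k\in\Z\}$ the step part $x_n(0)=v\ell-n$ for $n\geq 1$ produces the vertical ray $\{(v\ell,k):k\geq 1\}$, while $x_n(0)=-2n$ for $n\leq 0$ produces the antidiagonal $\{(j,-j):j\geq 0\}$. Choosing $\beta=-v$ in Corollary~\ref{Cor2} yields $\mathcal{L}^+\cup\mathcal{L}^-\supseteq S_A$, the difference being the horizontal segment $[v\ell,0]\times\{0\}$; but the LPP from $\mathcal{L}^+$ is always attained at the corner $(v\ell,0)$ and differs from $L_{(v\ell,1)\to E}$ by a single extra weight, so $L_{\mathcal{L}^+\cup\mathcal{L}^-\to E}-L_{S_A\to E}=\Or(1)=o(t^{1/3})$ and the two boundaries give the same scaling limit. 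One checks $-v/\nu=2(1-\alpha)^2/(2-\alpha)=1-\eta_0=\beta_0$ with $\nu=1/4$, and that the rate assignment ($\exp(1)$ for $j\geq 1$, $\exp(\alpha)$ for $j\leq 0$) matches \eqref{eq10}.

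Next I would carry out the change of variables. By \eqref{eqLinkLPPtasep} the left-hand side of Corollary~\ref{Cor2b} equals
\begin{equation*}
\Pb\!\left(L_{S_A\to((v+\nu)\ell+(\xi-s)\ell^{1/3},\,\nu\ell+\xi\ell^{1/3})}\leq\ell\right).
\end{equation*}
Setting $t:=\nu\ell+\xi\ell^{1/3}$ and inverting, $\ell=t/\nu-\xi\nu^{-4/3}t^{1/3}+\Or(t^{-1/3})$, this becomes the LPP event $\{L_{\eta t,t}\leq 4t+\tilde s\,t^{1/3}\}$ with boundary shift $\beta t$, where
\begin{equation*}
\eta=\eta_0+u\,t^{-2/3},\quad \beta=\beta_0+b\,t^{-2/3},\quad u=-\frac{s+\xi v/\nu}{\nu^{1/3}},\quad b=\frac{v\xi}{\nu^{4/3}},\quad \tilde s=-\frac{\xi}{\nu^{4/3}}.
\end{equation*}
With $\mu=1/\nu=4$ this places us exactly in the regime of Corollary~\ref{Cor2}.

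Finally I would substitute these parameters into the product limit of Corollary~\ref{Cor2} and simplify. A quick computation gives $u+b=-s\nu^{-1/3}$, hence
\begin{equation*}
\frac{\tilde s-2(u+b)}{2^{4/3}}=\frac{2\nu^{-1/3}(s-2\xi)}{2^{4/3}}=\frac{s-2\xi}{2^{-1/3}},
\end{equation*}
which is the $F_2$ factor of Corollary~\ref{Cor2b} with $\rho_1=1/2$ and $\sigma_1=2^{-1/3}$. For the $F_1$ factor, the elementary identity $(1-\alpha)^2+\alpha(2-\alpha)=1$ reduces $4v-2\alpha$ to $-2/(2-\alpha)$, so that $\tilde s-2u/\alpha=(2^{5/3}/\alpha)(s-2\xi/(2-\alpha))$; dividing by the $\sigma_2$ of Corollary~\ref{Cor2} then produces $(s-\xi/\rho_2)/\sigma_2$ with $\rho_2=(2-\alpha)/2$ and the $\sigma_2$ stated in Corollary~\ref{Cor2b}. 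The only real obstacle in this proof is the careful bookkeeping of the subleading parameters $u$, $b$, $\tilde s$ and the verification of the two algebraic identities above; no probabilistic input beyond Corollary~\ref{Cor2} is required.
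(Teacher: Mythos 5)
Your proposal is correct and follows essentially the same route as the paper: Corollary~\ref{Cor2b} is obtained there by specializing the general change of variables leading to (\ref{eqFullLinkSpatial}) to the data of Corollary~\ref{Cor2}, which is exactly the computation you carry out by hand (your values $u=-(s+\xi v/\nu)\nu^{-1/3}$, $b=v\xi\nu^{-4/3}$, $\tilde s=-\xi\nu^{-4/3}$ and the two algebraic simplifications all check out). Your explicit verification that the TASEP boundary set $S_A$ differs from $\mathcal{L}^+\cup\mathcal{L}^-$ of Corollary~\ref{Cor2} only by an $\Or(1)$ contribution is a point the paper leaves implicit, but it is the right justification.
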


\medskip

\begin{cor}[At the $F_2$--$F_2$ shock]\label{Cor3b}
For a fixed $\beta\in (0,1)$, consider the initial condition given by $x_n(0)=-n-\lfloor\beta\ell\rfloor$ for $n\geq 1$ and $x_n(0)=-n$ for $-\lfloor\beta\ell\rfloor\leq n\leq 0$. Then, for $\alpha=1$, $\nu=\frac{(1-\beta)^2}{4}$ it holds
\begin{equation}
\lim_{\ell\to\infty}\Pb\left(x_{\nu \ell+\xi \ell^{1/3}}(t=\ell)\geq -s\ell^{1/3}\right) =F_2\left(\frac{s-\xi/\rho_1}{\sigma_1}\right)F_2\left(\frac{s-\xi/\rho_2}{\sigma_2}\right)
\end{equation}
with $\rho_1=\frac{1-\beta}{2}$, $\rho_2=\frac{1+\beta}{2}$, $\sigma_1=\frac{(1+\beta)^{2/3}}{2^{1/3}(1-\beta)^{1/3}}$, and $\sigma_2=\frac{(1-\beta)^{2/3}}{2^{1/3}(1+\beta)^{1/3}}$.
\end{cor}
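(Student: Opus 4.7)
The plan is to derive Corollary~\ref{Cor3b} as a direct spatial translation of Corollary~\ref{Cor3}, via the dictionary (\ref{eqLinkLPPtasep})--(\ref{eqFullLinkSpatial}).

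First I identify the LPP starting set $S_A$ associated with the TASEP initial condition. From (\ref{eqLinkLPPtasep}), $S_A=\{(k+x_k(0),k):k\in\Z\}$; plugging in the stated $x_k(0)$ gives a vertical ray $\{-\lfloor\beta\ell\rfloor\}\times\{k:k\geq 1\}$ (from particles $k\geq 1$) together with the vertical segment $\{0\}\times\{k:-\lfloor\beta\ell\rfloor\leq k\leq 0\}$ (from $-\lfloor\beta\ell\rfloor\leq k\leq 0$). This $S_A$ is contained in the set $\mathcal{L}^+\cup\mathcal{L}^-$ of Corollary~\ref{Cor3}. Since $\omega_{i,j}>0$ a.s., any up-right path starting on the horizontal piece $[-\lfloor\beta\ell\rfloor,0]\times\{0\}\subset\mathcal{L}^+$ or $\Z_+\times\{-\lfloor\beta\ell\rfloor\}\subset\mathcal{L}^-$ is strictly dominated by a path starting at the associated corner. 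The corner $(0,-\lfloor\beta\ell\rfloor)$ is already in $S_A$, while the other corner $(-\lfloor\beta\ell\rfloor,0)$ differs from $(-\lfloor\beta\ell\rfloor,1)\in S_A$ by one lattice step, so the two last passage times differ by at most $\omega_{(-\lfloor\beta\ell\rfloor,0)}=\Or(1)=o(\ell^{1/3})$. Hence the distributional limit of Corollary~\ref{Cor3} applies verbatim to the LPP coming from our TASEP initial condition.

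Next I apply (\ref{eqFullLinkSpatial}) with $F(\tilde\beta,u,\tilde s)=G_1(\tilde s)G_2(\tilde s)$ taken from Corollary~\ref{Cor3} and $\mu(\tilde\beta)=(1+\sqrt{1+\tilde\beta})^2$. The compatibility condition $\mu(\beta/\nu)=1/\nu$ with $\nu=(1-\beta)^2/4$ is immediate from $\sqrt{1+4\beta/(1-\beta)^2}=(1+\beta)/(1-\beta)$. A short differentiation gives $\mu'(\beta/\nu)=2/(1+\beta)$, so the third slot in (\ref{eqFullLinkSpatial}) becomes $-\xi(1-\beta)/((1+\beta)\nu^{4/3})$; since $v=0$ in Corollary~\ref{Cor3b}, the second slot is simply $-s/\nu^{1/3}$.

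Finally I simplify the two $F_2$ factors. Using $1+\sqrt{1+\tilde\beta}=2/(1-\beta)$, $1+1/\sqrt{1+\tilde\beta}=2/(1+\beta)$ and $\sigma=(1+\sqrt{1+\tilde\beta})^{4/3}/(1+\tilde\beta)^{1/6}=2^{4/3}/((1-\beta)(1+\beta)^{1/3})$, substituting into the Corollary~\ref{Cor3} formulae for $G_1,G_2$ and performing the resulting linear simplification reproduces exactly $F_2((s-\xi/\rho_1)/\sigma_1)F_2((s-\xi/\rho_2)/\sigma_2)$ with the stated $\rho_i,\sigma_i$. The only mildly delicate step is the boundary-extension argument of the first paragraph, which is however routine once one notes that positivity of the weights makes the corner starts the unique asymptotic optima; the remainder is deterministic bookkeeping inside the change of variables supplied by (\ref{eqFullLinkSpatial}).
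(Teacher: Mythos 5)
Your route is the paper's own: Corollary~\ref{Cor3b} is obtained by feeding the limit law of Corollary~\ref{Cor3} into the change of variables (\ref{eqFullLinkSpatial}), and your bookkeeping is correct --- the compatibility condition $\mu(\beta/\nu)=1/\nu$ for $\nu=(1-\beta)^2/4$, the values $\mu'(\beta/\nu)=2/(1+\beta)$, $1+\sqrt{1+\beta/\nu}=2/(1-\beta)$, $\sigma=2^{4/3}/\bigl((1-\beta)(1+\beta)^{1/3}\bigr)$, and the resulting $\rho_i,\sigma_i$ all check out, with $v=0$ as you say.

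The one flawed point is in your boundary-matching paragraph (a step the paper passes over silently): the claim that $L_{(-\lfloor\beta\ell\rfloor,0)\to E}$ and $L_{(-\lfloor\beta\ell\rfloor,1)\to E}$ ``differ by at most $\omega_{(-\lfloor\beta\ell\rfloor,0)}$'' is not a valid deterministic bound (also, the weight at the starting point is not even collected in (\ref{eq1})). The maximizer from the lower corner need not pass through the upper one: it can run along the row $\{j=0\}$ and collect weights inaccessible to any path from $(-\lfloor\beta\ell\rfloor,1)$, so no almost-sure bound by a single weight holds. What is true, and suffices, is the following: by prepending steps one has the monotone coupling $L_{(-\lfloor\beta\ell\rfloor,1)\to E}\leq L_{(-\lfloor\beta\ell\rfloor,0)\to E}$, and by Proposition~\ref{propJohConvergence} both quantities, properly centered and divided by $t^{1/3}$, converge to the same continuous $F_2$ law, so their difference is $o(t^{1/3})$ in probability --- the same monotonicity-plus-identical-limit mechanism that underlies Lemmas~\ref{lemma4.1}--\ref{lemma4.2}. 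Combining this with the sandwich
\begin{equation*}
L_{S_A\to E}\;\leq\; L_{\mathcal{L}^+\cup\mathcal{L}^-\to E}\;\leq\;\max\bigl\{L_{S_A\to E},\,L_{(-\lfloor\beta\ell\rfloor,0)\to E}\bigr\},
\end{equation*}
which follows from your (correct) corner-domination of the two horizontal pieces and from $(-\lfloor\beta\ell\rfloor,1),(0,-\lfloor\beta\ell\rfloor)\in S_A$, gives the asymptotic equivalence of the two LPP problems that your first paragraph was after. With that repair the argument is complete and coincides with the paper's.
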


As expected by KPZ universality, if we move away from the shock, the distribution function considered above becomes a single GOE or GUE distribution, with GOE whenever the particles density is constant and GUE whenever the particle density is decreasing, e.g., in the $F_2$-$F_2$ shock, the particle density is decreasing both to the left and to the right of the shock.

\section{Proof of Theorem~\ref{ThmMain}}\label{sectProofGeneric}
In the following, we will several times use the following two lemmas from~\cite{BC09}. By $``\Rightarrow"$ we designate convergence in distribution.
\begin{lem}[Lemma 4.1 in \cite{BC09}]\label{lemma4.1}
Let $D$ be a probability distribution and $(X_n)_{n\in \mathbb{N}}$ be a sequence of random variables.
If $X_n\geq \tilde{X}_n$ and $X_n \Rightarrow D$ and $X_n-\tilde{X}_n$ converges to zero in probability, then $\tilde{X}_n \Rightarrow D$ as well.
\end{lem}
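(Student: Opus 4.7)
The plan is to establish the claim directly at the level of cumulative distribution functions, since weak convergence $\tilde X_n \Rightarrow D$ is equivalent to $\Pb(\tilde X_n \leq s) \to D(s)$ at every continuity point $s$ of $D$. I therefore fix such an $s$ and verify the two one-sided bounds separately. The lower bound is immediate from the hypothesis $\tilde X_n \leq X_n$: the event inclusion $\{X_n \leq s\} \subseteq \{\tilde X_n \leq s\}$ yields $\Pb(\tilde X_n \leq s) \geq \Pb(X_n \leq s)$, and $X_n \Rightarrow D$ then gives $\liminf_n \Pb(\tilde X_n \leq s) \geq D(s)$.

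For the matching upper bound I would use a standard small-$\e$ decomposition. For any $\e > 0$, the event $\{\tilde X_n \leq s\}$ is contained in $\{X_n \leq s + \e\} \cup \{X_n - \tilde X_n > \e\}$, because $\tilde X_n \leq s$ together with $X_n - \tilde X_n \leq \e$ forces $X_n \leq s + \e$. Hence $\Pb(\tilde X_n \leq s) \leq \Pb(X_n \leq s + \e) + \Pb(X_n - \tilde X_n > \e)$, where the second term vanishes as $n \to \infty$ by the assumed convergence in probability of $X_n - \tilde X_n$ to $0$, and the first tends to $D(s+\e)$ whenever $s+\e$ is a continuity point of $D$. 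Choosing a sequence of such continuity points $s + \e \downarrow s$ (which is possible since the set of continuity points of $D$ is dense in $\R$) and using right-continuity of $D$ at $s$ yields $\limsup_n \Pb(\tilde X_n \leq s) \leq D(s)$, completing the proof.

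There is no genuine obstacle here: the lemma is essentially a one-sided Slutsky statement for $\tilde X_n = X_n - (X_n - \tilde X_n)$, and the a.s.\ ordering $X_n \geq \tilde X_n$ only serves to make the $\liminf$ half trivial by an event-inclusion argument. If preferred, one can bypass the CDF argument entirely and deduce the conclusion from the classical Slutsky lemma applied to the continuous map $(a,b) \mapsto a - b$ and the pair $(X_n, X_n - \tilde X_n) \Rightarrow (X, 0)$ where $X$ has law $D$.
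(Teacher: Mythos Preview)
Your proof is correct. Note, however, that the paper does not supply its own proof of this lemma: it is simply quoted from~\cite{BC09} and used as a black box. There is therefore nothing to compare your argument against within this paper.

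Your CDF argument is the standard one and is sound. Your closing remark is also worth emphasising: the conclusion follows immediately from Slutsky's lemma applied to $\tilde X_n = X_n - (X_n-\tilde X_n)$, since $X_n \Rightarrow D$ and $X_n-\tilde X_n \to 0$ in probability already force $\tilde X_n \Rightarrow D$. In particular, the almost-sure ordering $X_n \geq \tilde X_n$ is not actually needed for the conclusion; it only makes the $\liminf$ half of your direct argument a one-line event inclusion rather than a second $\e$-decomposition.
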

\begin{lem}[Lemma 4.2 in \cite{BC09}]\label{lemma4.2} Let $(X_n)_{n\in \mathbb{N}}$, $(Y_n)_{n\in \mathbb{N}}$, $(\tilde{X}_n)_{n\in \mathbb{N}}$, $(\tilde{Y}_n)_{n\in \mathbb{N}}$ be sequences of random variables and $D_1,D_2,D_3$ be probability distributions.
Assume $X_n\geq \tilde{X}_n$ and $X_n \Rightarrow D_1$ as well as $\tilde{X}_n \Rightarrow D_1$; and similarly $Y_n\geq \tilde{Y}_n$ and $Y_n \Rightarrow D_2$ as well as $\tilde{Y}_n \Rightarrow D_2$. Let $Z_n=\max\{X_n,Y_n\}$ and \mbox{$\tilde{Z}_n=\max\{\tilde{X}_n, \tilde{Y}_n\}$}. Then if $\tilde{Z}_n\Rightarrow D_3$, we also have $Z_n \Rightarrow D_3$.
\end{lem}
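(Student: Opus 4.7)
The plan is to reduce Lemma~\ref{lemma4.2} to a Slutsky-type argument in three steps. First, I show that each squeezed pair collapses in probability, i.e.\ $X_n-\tilde X_n\to 0$ and $Y_n-\tilde Y_n\to 0$; second, an elementary max-inequality transfers this to $Z_n-\tilde Z_n\to 0$ in probability; and third, the distributional convergence of $\tilde Z_n$ lifts to $Z_n$.

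For the first step, fix $\epsilon>0$ and let $F$ denote the cdf of $D_1$. The ordering $X_n\geq \tilde X_n$ gives the inclusion $\{X_n\leq s\}\subseteq \{\tilde X_n\leq s,\,X_n\leq s+\epsilon\}$, hence
\begin{equation*}
\Pb(\tilde X_n\leq s,\,X_n>s+\epsilon)\leq \Pb(\tilde X_n\leq s)-\Pb(X_n\leq s),
\end{equation*}
which tends to $F(s)-F(s)=0$ at every continuity point $s$ of $F$. I then partition a bulk interval $[s_1,s_2]$ containing most of the mass of $\tilde X_n$ into finitely many sub-intervals of mesh $\leq \epsilon/2$ with continuity-point endpoints $s_1=t_0<t_1<\cdots<t_K=s_2$; on the event $\{\tilde X_n\in(t_{i-1},t_i],\,X_n-\tilde X_n>\epsilon\}$ one has $X_n>t_i+\epsilon/2$, so summing the bound above over $i$ and adding the tail probabilities $\Pb(\tilde X_n<s_1)+\Pb(\tilde X_n>s_2)$ yields
\begin{equation*}
\limsup_n \Pb(X_n-\tilde X_n>\epsilon)\leq F(s_1)+(1-F(s_2)).
\end{equation*}
Letting $s_1\to-\infty$ and $s_2\to+\infty$ through continuity points of $F$ gives $X_n-\tilde X_n\to 0$ in probability; the same argument applies verbatim to $Y_n-\tilde Y_n$.

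For the second step, from $X_n\geq \tilde X_n$ and $Y_n\geq \tilde Y_n$ one has $Z_n\geq \tilde Z_n$, and the elementary bound $\max(a,b)-\max(c,d)\leq (a-c)+(b-d)$, valid when $a\geq c$ and $b\geq d$, yields
\begin{equation*}
0\leq Z_n-\tilde Z_n\leq (X_n-\tilde X_n)+(Y_n-\tilde Y_n),
\end{equation*}
which tends to $0$ in probability. Writing $Z_n=\tilde Z_n+(Z_n-\tilde Z_n)$ and invoking Slutsky's theorem together with the hypothesis $\tilde Z_n\Rightarrow D_3$ gives $Z_n\Rightarrow D_3$.

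The main obstacle is Step~1: although it is intuitive that two ordered sequences with a common distributional limit must get close, there is no assumption of joint convergence, no tightness beyond what the marginal weak convergences provide, and no regularity of $F$ beyond right-continuity, so the finite partition through continuity points of $F$ is essential. Once Step~1 is established, the remaining steps are purely algebraic.
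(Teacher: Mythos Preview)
Your argument is correct. Note, however, that the paper itself does not supply a proof of this lemma: it is quoted as Lemma~4.2 of \cite{BC09} and used as a black box alongside Lemma~\ref{lemma4.1}, so there is no ``paper's own proof'' to compare against here.

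That said, your route is the natural one and matches the spirit of \cite{BC09}. One small comment on Step~1: a slightly slicker variant avoids the partition entirely. Take any bounded, strictly increasing, Lipschitz $f$ (e.g.\ $f=\arctan$); then $f(X_n)-f(\tilde X_n)\geq 0$ and $\E[f(X_n)-f(\tilde X_n)]\to 0$ by the common weak limit, so $f(X_n)-f(\tilde X_n)\to 0$ in probability. Tightness of $(\tilde X_n)$ (from $\tilde X_n\Rightarrow D_1$) then upgrades this to $X_n-\tilde X_n\to 0$ in probability, since on any compact set $f$ has derivative bounded away from zero. Your partition argument is equally valid; I mention this only because it sidesteps the bookkeeping with continuity points. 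In Step~2 the sharper inequality $\max(a,b)-\max(c,d)\leq \max(a-c,b-d)$ for $a\geq c$, $b\geq d$ is available, but your additive bound suffices.
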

We denote
\begin{equation}
L_{\mathcal{L}^{+}\to E}^{\mathrm{resc}}=\frac{L_{\mathcal{L}^+\to (\eta t, t)}-\mu  t}{ t^{1/3}},
\end{equation}
i.e. the last passage time $L_{\mathcal{L}^{+}\to E}$ rescaled as required by Assumption~\ref{Assumpt1},
we define analogously $L_{\mathcal{L}^{-}\to E}^{\mathrm{resc}}, L_{E^{+}\to (\eta t,t)}^{\mathrm{resc}}$ and $L_{\mathcal{L}^{+}\to E^{+}}^{\mathrm{resc}}$ as the last passage times rescaled as required by Assumption~\ref{Assumpt1} resp.~\ref{Assumpt2}.
We first note the following.
\begin{prop}\label{step1}
If $ \max\{L_{\mathcal{L}^{+}\to E}^{\mathrm{resc}},L_{\mathcal{L}^{-}\to E}^{\mathrm{resc}}\}\Rightarrow D$ as $t\to\infty$,
then
\begin{equation}
\frac{L_{\mathcal{L}\to E}-\mu t}{  t^{1/3}}\Rightarrow D.
\end{equation}
\end{prop}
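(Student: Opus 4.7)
The plan is to observe that $L_{\mathcal{L}\to E}$, the last passage time from $S_A=\mathcal{L}^+\cup\mathcal{L}^-$ to $E=(\lfloor\eta t\rfloor,\lfloor t\rfloor)$, decomposes as $L_{\mathcal{L}\to E} = \max\{L_{\mathcal{L}^+\to (\eta t,t)},L_{\mathcal{L}^-\to (\eta t,t)}\}$. This identity was already recorded in Section~\ref{sectResults}: an up-right path from $S_A$ to $E$ begins at a unique lattice point which lies either in $\mathcal{L}^+$ or in $\mathcal{L}^-$ (the two sets are disjoint by construction, as they sit in different half-axes), so the maximum in the definition \eqref{eq1} splits cleanly according to the starting set.

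Once the identity is in hand, the proof is a one-liner: subtracting $\mu t$ from both sides and dividing by $t^{1/3}>0$ commutes with the maximum, giving
\begin{equation}
\frac{L_{\mathcal{L}\to E}-\mu t}{t^{1/3}} = \max\left\{L_{\mathcal{L}^{+}\to E}^{\mathrm{resc}},\,L_{\mathcal{L}^{-}\to E}^{\mathrm{resc}}\right\},
\end{equation}
so the hypothesis that the right-hand side converges in distribution to $D$ transfers immediately to the left-hand side and the claim follows.

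I expect no obstacle in this step; its role is purely organizational, reducing the statement of Theorem~\ref{ThmMain} to convergence of the rescaled maximum. The substantive work will then come in the subsequent steps of the section: first, using slow-decorrelation (Assumption~\ref{Assumpt2}) together with Lemmas~\ref{lemma4.1}--\ref{lemma4.2} to replace $L_{\mathcal{L}^+\to E}$ by $L_{\mathcal{L}^+\to E^+}+\mu_0 t^\nu$ up to an $o(t^{1/3})$ error that is negligible on the $t^{1/3}$ scale; and second, using Assumption~\ref{Assumpt3} to establish the asymptotic independence of $L_{\mathcal{L}^+\to E^+}$ and $L_{\mathcal{L}^-\to E}$ by localizing their maximizers to spatially disjoint regions, whence the joint limit factorizes into the product $G_1(s)G_2(s)$.
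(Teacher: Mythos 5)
Your proof is correct and matches the paper's: the paper's entire proof is the one-line observation that $L_{\mathcal{L}\to E}=\max\{L_{\mathcal{L}^{+}\to E},L_{\mathcal{L}^{-}\to E}\}$, after which the rescaling trivially commutes with the maximum. Your additional remarks about the decomposition and the role of this step are accurate but not needed.
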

\begin{proof}
Simply note that \mbox{$L_{\mathcal{L}\to E}=\max\{L_{\mathcal{L}^{+}\to E},L_{\mathcal{L}^{-}\to E}\}$}.
\end{proof}
 Thus it suffices to determine the limiting distribution of $\max\{L_{\mathcal{L}^{+}\to E}^{\mathrm{resc}},L_{\mathcal{L}^{-}\to E}^{\mathrm{resc}}\}$. We can actually reduce our problem a bit more.
\begin{prop}\label{step2}
Under Assumptions~\ref{Assumpt1} and~\ref{Assumpt2},
\begin{equation}
\max\left\{\frac{L_{\mathcal{L}^{+}\to E^{+}}+L_{E^{+}\to E}-\mu t}{ t^{1/3}},L_{\mathcal{L}^{-}\to E}^{\mathrm{resc}}\right\}\Rightarrow D \end{equation}
implies
\begin{equation}
\frac{L_{\mathcal{L}\to E}-\mu t}{t^{1/3}}\Rightarrow D.
\end{equation}
\end{prop}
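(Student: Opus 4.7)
The plan is to reduce the hypothesis on $\tilde Z_t:=\max\{\tilde X_t,\tilde Y_t\}$, where I set $\tilde X_t:=(L_{\mathcal{L}^+\to E^+}+L_{E^+\to E}-\mu t)/t^{1/3}$ and $\tilde Y_t:=L_{\mathcal{L}^-\to E}^{\mathrm{resc}}$, to the analogous statement for $Z_t:=\max\{X_t,Y_t\}$ with $X_t:=L_{\mathcal{L}^+\to E}^{\mathrm{resc}}$ and $Y_t:=\tilde Y_t$, and then apply Proposition~\ref{step1}. Lemmas~\ref{lemma4.1} and~\ref{lemma4.2} are tailor-made for this kind of comparison.

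Step one is super-additivity of the last passage times: concatenating a maximizer of $L_{\mathcal{L}^+\to E^+}$ with a maximizer of $L_{E^+\to E}$ produces an admissible up-right path from $\mathcal{L}^+$ to $E$, so the definition~\eqref{eq1} yields $L_{\mathcal{L}^+\to E}\geq L_{\mathcal{L}^+\to E^+}+L_{E^+\to E}$, i.e., $X_t\geq\tilde X_t$. Assumption~\ref{Assumpt1} further supplies $X_t\Rightarrow G_1$ and $Y_t=\tilde Y_t\Rightarrow G_2$.

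The only genuine step is to verify $\tilde X_t\Rightarrow G_1$. I would split
\[
\tilde X_t=\frac{L_{\mathcal{L}^+\to E^+}-\mu t+\mu_0 t^\nu}{t^{1/3}}+t^{(\nu-1)/3}\cdot\frac{L_{E^+\to E}-\mu_0 t^\nu}{t^{\nu/3}}.
\]
By Assumption~\ref{Assumpt2} the first summand converges weakly to $G_1$, while the second summand is a deterministic vanishing prefactor (here $\nu<1$ is essential) times a tight sequence converging to $G_0$, and therefore tends to $0$ in probability. Slutsky then gives $\tilde X_t\Rightarrow G_1$. Equivalently, the slow-decorrelation estimate extracted from Theorem~2.1 of~\cite{CFP10b} immediately after Assumption~\ref{Assumpt2} shows directly that $X_t-\tilde X_t\to 0$ in probability, so Lemma~\ref{lemma4.1} applied to the ordered pair $X_t\geq\tilde X_t$ transfers the $G_1$ limit from $X_t$ to $\tilde X_t$.

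With $X_t\geq\tilde X_t$ both converging to $G_1$ and $Y_t=\tilde Y_t$ converging to $G_2$, Lemma~\ref{lemma4.2} (taking $D_1=G_1$, $D_2=G_2$, $D_3=D$) promotes the assumed $\tilde Z_t\Rightarrow D$ to $Z_t\Rightarrow D$, and Proposition~\ref{step1} then delivers the conclusion. I do not anticipate any real obstacle: the one substantive input is the vanishing of $t^{(\nu-1)/3}$, which is built into the slow-decorrelation regime $\nu<1$ of Assumption~\ref{Assumpt2}, and everything else is bookkeeping through the two comparison lemmas.
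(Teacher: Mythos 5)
Your proposal is correct and follows essentially the same route as the paper: super-additivity gives $L_{\mathcal{L}^+\to E}^{\mathrm{resc}}\geq \tilde X_t$, Assumption~\ref{Assumpt2} together with the vanishing factor $t^{(\nu-1)/3}$ gives $\tilde X_t\Rightarrow G_1$, and Lemma~\ref{lemma4.2} combined with Proposition~\ref{step1} concludes. No issues.
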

\begin{proof}
We  have
\begin{equation}\label{downestimate}
L_{\mathcal{L}^{+}\to E}^{\mathrm{resc}}\geq \frac{L_{\mathcal{L}^{+}\to E^{+}}-\mu t+\mu_0 t^{\nu}}{t^{1/3}}+\frac{L_{E^{+}\to E}-\mu_0 t^{\nu}}{ t^{1/3}}=L_{\mathcal{L}^{+}\to E^{+}}^{\mathrm{resc}}+L_{ E^{+}\to (\eta t, t)}^{\mathrm{resc}}.
\end{equation}
By Assumption~\ref{Assumpt2}, $L_{\mathcal{L}^{+}\to E^{+}}^{\mathrm{resc}}$ converges to $G_1$.
Also by Assumption~\ref{Assumpt2},
$L_{E^{+}\to E}$ has fluctuations of order $t^{\nu/3}$, thus one can write
\begin{equation}\label{fadeaway}
L_{E^{+}\to E}^{\mathrm{resc}}=\frac{1}{t^{(1-\nu)/3}} X_t,
\end{equation}
where $X_t$ is a random variable converging to $G_0$.
In particular,
\eqref{fadeaway} vanishes as $t\to\infty$.
Applying Lemma~\ref{lemma4.2} to
\mbox{$X_n=L_{\mathcal{L}^{+}\to E}^{\mathrm{resc}}$}, \mbox{$\tilde{X_n}=\left(L_{\mathcal{L}^{+}\to E^{+}}+L_{E^{+}\to E}-\mu t\right)/ t^{1/3}$} and $Y_n=\tilde{Y}_n=L_{\mathcal{L}^{-}\to E}^{\mathrm{resc}}$ finishes the proof.
\end{proof}
Using the preceeding Propositions we can now prove Theorem~\ref{ThmMain}.
\begin{proof}[Proof of Theorem~\ref{ThmMain}]
Define, for some set $B$ and point $C$, $\tilde{L}_{B\to C}$ to be the last passage time of all paths from
$B$ to $C$ conditioned not to contain any point
$\bigcup_{\gamma \in [0,1-t^{\beta-1}]}D_{\gamma} $ with \mbox{$D_\gamma$} as in Assumption~\ref{Assumpt3}.
Then,
\begin{equation}
\begin{aligned}
\Pb\bigg(\bigg|\frac{L_{\mathcal{L}^{+}\to E^{+}}-\tilde{L}_{\mathcal{L}^{+}\to E^{+}}}{ t^{1/3}}\bigg|>\e\bigg)
&\leq
\Pb\bigg(\bigcup_{D_{\gamma}\atop \gamma \in [0,1-t^{\beta-1}]}\{D_\gamma\in \pi^{\rm max}_{\mathcal{L}^{+}\to E^{+}}\}\bigg)\to 0
\end{aligned}
\end{equation}
as $t\to\infty$, so that
\begin{equation}\label{conver1}
\Pb\bigg(\frac{\tilde{L}_{\mathcal{L}^{+}\to E^{+}}+\tilde{L}_{E^{+}\to E}-\mu t}{ t^{1/3}}\leq s\bigg)\to G_1(s)
\end{equation}
by the vanishing of~\eqref{fadeaway} and Lemma~\ref{lemma4.1}. Using Assumptions~\ref{Assumpt1} and~\ref{Assumpt3}, an analogous argument shows
\begin{equation}\label{conver2}
\Pb\big(\tilde{L}_{\mathcal{L}^{-}\to E}^{\mathrm{resc}}\leq s\big)\to G_2(s).
\end{equation}

 Let $\varepsilon >0$ and recall $X_t$ from \eqref{fadeaway}.
 We take $R>0$ such that with \mbox{$A_R=\{|\tilde{X}_t|<R\}$ $\Pb(A_{R}^{c})\leq \e$} for all $t$ large enough.
This implies that
\begin{equation}\begin{aligned}\label{check0}
\big|\Pb(\{\tilde{L}_{\mathcal{L}^{+}\to E^{+}}^{\mathrm{resc}}+t^{(\nu-1)/3}\tilde{X}_t\leq s\}\cap  A_R &\cap  \{\tilde{L}_{\mathcal{L}^{-}\to E}^{\mathrm{resc}}\leq s\})\\
-\Pb(\{\tilde{L}_{\mathcal{L}^{+}\to E^{+}}^{\mathrm{resc}}+t^{(\nu-1)/3}\tilde{X}_t\leq s\} &\cap \{\tilde{L}_{\mathcal{L}^{-}\to E}^{\mathrm{resc}}\leq s\})\big|\leq \e
\end{aligned}
\end{equation}
Then,

\begin{align}\label{check1}
&\,\,\Pb(\{\tilde{L}_{\mathcal{L}^{+}\to E^{+}}^{\mathrm{resc}}+t^{(\nu-1)/3}R\leq s\}\cap \{\tilde{L}_{\mathcal{L}^{-}\to E}^{\mathrm{resc}}\leq s\})-\e\\
&\leq\,\label{check2}\Pb(\{\tilde{L}_{\mathcal{L}^{+}\to E^{+}}^{\mathrm{resc}}+t^{(\nu-1)/3}\tilde{X}_t\leq s\}\cap A_R \cap \{\tilde{L}_{\mathcal{L}^{-}\to E}^{\mathrm{resc}}\leq s\})\\
&\leq\,\label{check3}\Pb(\{\tilde{L}_{\mathcal{L}^{+}\to E^{+}}^{\mathrm{resc}}-t^{(\nu-1)/3}R\leq s\}\cap A_R \cap \{\tilde{L}_{\mathcal {L}^{-}\to E}^{\mathrm{resc}}\leq s\})\\
&\leq\,\label{check4}\Pb(\{\tilde{L}_{\mathcal{L}^{+}\to E^{+}}^{\mathrm{resc}}-t^{(\nu-1)/3}R\leq s\}\cap  \{\tilde{L}_{\mathcal {L}^{-}\to E}^{\mathrm{resc}}\leq s\})
\end{align}

Finally, by construction, $\tilde{L}_{\mathcal{L}^{+}\to E^{+}}^{\mathrm{resc}}$  and $\tilde{L}_{\mathcal{L}^{-}\to E}^{\mathrm{resc}}$ are independent random variables, since $\beta< \nu$ and $\pi_{\mathcal{L}^{-}\to E}^{\mathrm{max}}$ has to pass to the right of $D_{1-t^{\beta-1}}$ by conditioning.
Due to  this independence, the fact that $\nu<1$ and the convergence in \eqref{conver1}, \eqref{conver2}, there is a $t_0$ such that for $t>t_0$
\begin{equation}
\begin{aligned}\label{ineq}
G_1(s) G_2(s)
-2\varepsilon &\leq \eqref{check1}\leq \eqref{check2}\leq\eqref{check4}\leq
G_1(s)G_2(s)
+\varepsilon.
\end{aligned}
\end{equation}
Thus applying \eqref{check0} to \eqref{check2} yields

\begin{equation}
\begin{aligned}
&\bigg|\Pb\left(\{\tilde{L}_{\mathcal{L}^{+}\to E^{+}}^{\mathrm{resc}}+t^{(\nu-1)/3}\tilde{X}_t\leq s\} \cap \{\tilde{L}_{\mathcal{L}^{-}\to E}^{\mathrm{resc}}\leq s\}\right)-
G_1(s)G_2(s)
\bigg|\leq 3\e,
\end{aligned}
\end{equation}
for all $t$ large enough. Therefore
\begin{equation}
\begin{aligned}
&\lim_{t\to\infty}\Pb\left(\max\bigg\{\frac{\tilde{L}_{\mathcal{L}^{+}\to E^{+}}+\tilde{L}_{E^{+}\to E}-\mu t}{ t^{1/3}},\tilde{L}_{\mathcal{L}^{-}\to E}^{\mathrm{resc}}\bigg\}\leq s\right)
=\, G_1(s) G_2(s).
\end{aligned}
\end{equation}
Applying Lemma~\ref{lemma4.2} to \mbox{$X_n=\left(L_{\mathcal{L}^{+}\to E^{+}}+L_{E^{+}\to E}-\mu t\right)/t^{1/3}$}, \mbox{$Y_n=L_{\mathcal{L}^{-}\to E}^{\mathrm{resc}}$}, \mbox{$\tilde{X}_n=\big(\tilde{L}_{\mathcal{L}^{+}\to E^{+}}+\tilde{L}_{E^{+}\to E}-\mu t\big)/t^{1/3}$}, $\tilde{Y}_n=\tilde{L}_{\mathcal{L}^{-}\to E}^{\mathrm{resc}}$, and using Proposition~\ref{step2} finishes the proof.
\end{proof}

\section{Results on specific LPP}\label{SectAllLPP}
In this section we derive some results on the LPP model with
\begin{equation}
\begin{aligned}
\omega_{i,j}\sim \exp(1),\quad & j\geq 1,\\
\omega_{i,j}\sim \exp(\alpha),\quad & j\leq 0,
\end{aligned}
\end{equation}
and with two half-lines given by
\begin{equation}
\mathcal{L}^{+}=\{(-v,v)|v \in \Z_+\}\textrm{ and }\mathcal{L}^{-}=\{(-v,v)|v \in\Z_-\}.
\end{equation}
Assumptions~\ref{Assumpt1}-\ref{Assumpt2} will be verified by using the results of Section~\ref{sectLPP}.
After that, in Section~\ref{sectProofNocrossing} we determine the no-crossing results corresponding to Assumption~\ref{Assumpt3}.

\subsection{Deviation Results for LPP}\label{sectLPP}

\subsubsection{Point-to-point LPP results}
First we remind a result of Johansson (Theorem 1.6 of~\cite{Jo00b}, originally stated for $\eta\geq 1$, but by symmetry of the LPP one easily extends the statement for any $\eta>0$).
\begin{prop}[Point-to-point LPP: convergence to $F_2$]\label{propJohConvergence}
Let $0<\eta<\infty$. Then,
\begin{equation} \begin{aligned}
\lim_{\ell\to\infty}\Pb\left(L_{0\to (\lfloor\eta\ell\rfloor,\lfloor\ell\rfloor)}\leq \mu_{\rm pp}\ell +s \sigma_\eta \ell^{1/3}\right)= F_2(s)
\end{aligned}\end{equation}
where $\mu_{\rm pp}=(1+\sqrt{\eta})^2$, and $\sigma_\eta=\eta^{-1/6}(1+\sqrt{\eta})^{4/3}$, and $F_2$ is the GUE Tracy-Widom distribution function.
\end{prop}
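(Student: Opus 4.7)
The plan is to follow Johansson's classical route: realize the last passage time as the largest particle of a determinantal point process, express its cumulative distribution as a Fredholm determinant, and perform a steepest-descent analysis at the edge.

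First, via the Robinson-Schensted-Knuth correspondence applied to the $\lfloor\eta\ell\rfloor\times\lfloor\ell\rfloor$ array of i.i.d.\ $\exp(1)$ weights, $L_{0\to(\lfloor\eta\ell\rfloor,\lfloor\ell\rfloor)}$ is equal in distribution to the largest eigenvalue of a complex Wishart matrix with parameters $(\lfloor\eta\ell\rfloor,\lfloor\ell\rfloor)$; equivalently, the shifted row-lengths of the RSK output form a determinantal point process on $\Z_{\geq 0}$ with an explicit Laguerre-type kernel $K_N$. Consequently
\begin{equation*}
\Pb\bigl(L_{0\to(\lfloor\eta\ell\rfloor,\lfloor\ell\rfloor)}\leq x\bigr)=\det(I-K_N)_{\ell^2(\{\lfloor x\rfloor+1,\lfloor x\rfloor+2,\dots\})},
\end{equation*}
so the question reduces to trace-norm convergence of the properly rescaled kernel.

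Second is the technical core. Write $K_N$ as a double contour integral
\begin{equation*}
K_N(a,b)=\frac{1}{(2\pi\I)^{2}}\oint\oint e^{\ell(f(z)-f(w))}\,g(z,w)\,\dx z\,\dx w,
\end{equation*}
with a rational phase $f$ (depending on $\eta$ and on $a/\ell,b/\ell$) and a simple rational $g$. At the edge scaling $x=\mu_{\rm pp}\ell+s\sigma_\eta\ell^{1/3}$, $f$ develops a coalescing critical point at which $f''$ vanishes while $f'''\neq 0$: the constants $\mu_{\rm pp}(\eta)=(1+\sqrt{\eta})^{2}$ and $\sigma_\eta=\eta^{-1/6}(1+\sqrt{\eta})^{4/3}$ are pinned down precisely by the location of this double saddle and the value of $f'''$ there. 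Deforming the two contours through the saddle along steepest-descent rays and replacing $f$ by its cubic Taylor polynomial produces, after rescaling $a,b$ by $\sigma_\eta\ell^{1/3}$, the Airy double-integral kernel. Gaussian-type tail estimates away from the saddle together with the exponential decay of the Airy kernel on $(s,\infty)$ upgrade pointwise convergence to trace-norm convergence.

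Third, continuity of the Fredholm determinant in the trace norm, combined with the Tracy-Widom identity $\det(I-K_{\Ai})_{L^{2}(s,\infty)}=F_2(s)$, finishes the argument for $\eta\geq 1$. The extension to $0<\eta<1$ is obtained from the trivial symmetry $L_{0\to(M,N)}\stackrel{d}{=}L_{0\to(N,M)}$ (transpose the weight array): after renaming $\ell\to\eta\ell$, the scaling constants $\mu_{\rm pp}$ and $\sigma_\eta$ transform correctly under $\eta\mapsto 1/\eta$, so Johansson's statement for $\eta\geq 1$ yields the claim for all $\eta\in(0,\infty)$.

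The hard part is the steepest-descent step: identifying the coalescing saddle globally, deforming contours without picking up unwanted residues, and producing quantitative tail bounds uniform in $s$ on compacts so that pointwise kernel convergence lifts to trace-norm convergence. All other ingredients (RSK, the Fredholm determinant formula for gap probabilities of determinantal point processes, and the Tracy-Widom characterization of $F_2$) are standard black boxes once the kernel asymptotics are in hand.
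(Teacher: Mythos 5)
Your proposal is correct and follows the same route as the paper, which simply recalls this as Johansson's Theorem 1.6 of \cite{Jo00b} for $\eta\geq 1$ and extends to all $\eta>0$ by the transposition symmetry $L_{0\to(M,N)}\stackrel{d}{=}L_{0\to(N,M)}$ — exactly your final step, and your check that $\mu_{\rm pp}$ and $\sigma_\eta\ell^{1/3}$ are invariant under $\eta\mapsto 1/\eta$, $\ell\mapsto\eta\ell$ is the only thing that actually needs verifying. Your sketch of the internals of Johansson's argument (RSK/Wishart correspondence, Fredholm determinant, edge steepest descent to the Airy kernel) is accurate in outline, though the paper treats all of that as a black box.
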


The distribution function of $L_{0\to (\lfloor\eta\ell\rfloor,\lfloor\ell\rfloor)}$ has the following known decay\footnote{One could improve the decay of Proposition~\ref{devone} to $\exp(-c s^{3/2})$ and of Proposition~\ref{devtwo} to $\exp(-c |s|^3)$, but it is not needed for our purposes.}.
\begin{prop}[Point-to-point LPP: upper tail]\label{devone}
Let $0<\eta<\infty$. Then for given $\ell_0>0$ and $s_0 \in \mathbb{R}$, there exist constants $C,c>0$ only dependent on $\ell_0,s_0$ such that for all $\ell\geq \ell_0$ and $s\geq s_0$ we have
\begin{equation} \begin{aligned} \label{moddef}
\Pb\left(L_{0\to (\lfloor\eta\ell\rfloor,\lfloor\ell\rfloor)}> \mu_{\rm pp}\ell +\ell^{1/3}s\right)\leq C \exp(-c s),
\end{aligned}\end{equation}
where $\mu_{\rm pp}=(1+\sqrt{\eta})^{2}$.
\end{prop}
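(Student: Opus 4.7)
The plan is to reduce the upper-tail bound to a uniform diagonal estimate on the Laguerre correlation kernel. By the analysis of point-to-point LPP with i.i.d.\ $\exp(1)$ weights underlying Proposition~\ref{propJohConvergence} \cite{Jo00b}, the law of $L_{0\to(M,N)}$ with $M=\lfloor\eta\ell\rfloor$, $N=\lfloor\ell\rfloor$ admits the Fredholm determinant representation $\Pb(L_{0\to(M,N)} \leq t) = \det(I - K_{M,N})_{L^2((t,\infty))}$, where $K_{M,N}$ is the Laguerre-type correlation kernel of the associated determinantal point process, satisfying $0 \leq K_{M,N} \leq I$. The elementary inequality $1 - \det(I-K) \leq \Tr K$, valid whenever $0 \leq K \leq I$ is trace-class, then yields
$$\Pb(L_{0\to(M,N)} > t) \leq \int_t^\infty K_{M,N}(x,x)\,\dx x.$$
After the substitution $x = \mu_{\rm pp}\ell + y\,\sigma_\eta\ell^{1/3}$, the proposition reduces to producing constants $C_0, c_0 > 0$, uniform in $\ell \geq \ell_0$ and $y \geq s_0$, such that the rescaled diagonal kernel obeys
$$\sigma_\eta\ell^{1/3}\,K_{M,N}\bigl(\mu_{\rm pp}\ell + y\sigma_\eta\ell^{1/3},\,\mu_{\rm pp}\ell + y\sigma_\eta\ell^{1/3}\bigr) \leq C_0\,e^{-c_0 y},$$
since integrating this bound from $s$ to infinity immediately produces the desired decay.

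The uniform kernel bound itself is standard and is obtained from the double contour integral representation of $K_{M,N}$ by a steepest descent analysis at the edge saddle point. One deforms the two contours so that, on the relevant segments, the real part of the exponent grows at least linearly in $y$ and is controlled uniformly in $\ell$. Since the critical saddle stays bounded away from the integrand's singularities as $\ell \to \infty$, the pointwise convergence to the Airy kernel $K_{\Ai}(y,y) = (\Ai'(y))^2 - y\,\Ai(y)^2$, combined with the estimate $K_{\Ai}(y,y) \leq Ce^{-cy^{3/2}}$ for $y \geq 0$, lifts to a uniform bound of the claimed form.

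The main obstacle is precisely this uniformity: one must verify that the steepest-descent contours and the exponential estimates they provide remain valid with $\ell$-independent constants throughout the regime $\ell \geq \ell_0$, $y \geq s_0$, which requires a careful choice of contours that avoids both the singularities of the integrand and the moving saddle. A convenient shortcut, and likely the one invoked here, is to quote the uniform moderate-deviation bounds for the largest Laguerre Unitary Ensemble eigenvalue established in \cite{Jo00b} (or subsequent refinements), which already yield the sharper $\exp(-cy^{3/2})$ decay mentioned in the footnote; the weaker exponential bound of Proposition~\ref{devone} then follows immediately after integration.
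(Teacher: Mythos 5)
Your proposal is correct in outline and ends up leaning on the same external input as the paper, but the mechanism you propose in the main body is different. The paper's proof identifies $L_{0\to(\lfloor\eta\ell\rfloor,\lfloor\ell\rfloor)}$ with the largest eigenvalue of a Laguerre Unitary Ensemble via Proposition~6.1 of~\cite{BBP06}, writes its distribution function as the Fredholm determinant $F(u)=\det(1-\chi_u K_{m,d}\chi_u)$ with the conjugated LUE kernel of~\cite{FS05a}, and then simply quotes the known tail estimate $1-Ce^{-ds}\leq F(u(s,\ell))$, i.e.\ (37) of~\cite{BFP12}, finishing with the change of variables $\ell\to\mu_{\rm pp}\ell$. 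You instead propose to re-derive such a bound from the first-moment inequality $1-\det(I-K)\leq\Tr K$ together with a uniform exponential estimate on the rescaled diagonal kernel. That route is legitimate: the LUE correlation kernel is a projection, so $0\leq K\leq I$ and its restriction to a half-line is trace class, and this is essentially how bounds of the type (37) in~\cite{BFP12} are themselves obtained. What your version buys is a self-contained scheme; what the paper's version buys is brevity, since the needed Fredholm-determinant tail bound is already in the literature.

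The one genuine soft spot is that you do not actually establish the uniform diagonal kernel estimate, which is the entire quantitative content of the proposition. In particular, the claim that pointwise convergence to the Airy kernel, combined with the decay of $K_{\Ai}(y,y)$, ``lifts to'' a uniform bound is not a valid inference: pointwise convergence gives no control in the regime where $y$ grows with $\ell$, and uniformity over $\ell\geq\ell_0$, $s\geq s_0$ is exactly what is at stake. Either the uniform steepest-descent contour estimates must be carried out explicitly (compare Propositions~\ref{propDecayHalfLinePt} and~\ref{propDecayHalfLinePt2b}, where this paper does exactly that for the half-line kernels), or one must cite a reference that states the uniform bound. Your closing fallback --- quoting a uniform tail bound for the largest LUE eigenvalue --- is precisely what the paper does (via~\cite{BFP12} rather than~\cite{Jo00b}), so with that citation made precise your argument closes.
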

\begin{proof}
By symmetry, it is enough to consider $\eta\in (0,1]$. Also, we will (re)derive the statement for the complementary event.
As stated in Proposition 6.1 of~\cite{BBP06}, we  have
\begin{equation}
\Pb(\lambda_{1}(m-d,m+d) \leq u)=\Pb\left(L_{0\to (\lfloor\eta\ell\rfloor,\lfloor\ell\rfloor)}\leq u \right),
\end{equation}
where $\lambda_1$ is the largest eigenvalue of a \mbox{$(m-d)\times(m+d)$} Laguerre Unitary Ensemble (LUE), i.e., the largest eigenvalue of
$\frac{1}{m-d}XX^{*}$, where $X$ is a \mbox{$(m-d)\times(m+d)$} matrix with i.i.d.\ standard complex  Gaussian entries; the choice of parameters is
 so that $m+d=\lfloor\eta\ell/\mu_{\rm pp}\rfloor$ and $m-d=\lfloor\ell/\mu_{\rm pp}\rfloor$ (explicitly, one might take
$m=\lfloor\frac{\ell(\eta+1)}{2\mu_{\rm pp}}\rfloor$ and $d=\lfloor\frac{\ell(1-\eta)}{2\mu_{\rm pp}}\rfloor$, but then these identites
might only hold with an error $\pm 1$).
Take $K_{m,d}$ to be the kernel (3.13) of~\cite{FS05a} (with $w=0$), which, according to Proposition C.1 of~\cite{FS05a}, is a conjugated correlation kernel for the LUE. Then, with  $\chi_{u}=\Id_{(u,+\infty)}$
\begin{equation} \begin{aligned}\label{equalities}
F(u):= \det(1-\chi_{u}K_{m,d}\chi_u)=\Pb(\lambda_{1}(m-d,m+d) \leq u)
\end{aligned}\end{equation}
 Define the function $u(s,\ell)=\ell-s\ell^{1/3}$. The decay of $F(u)$ is known, see
 (37) in~\cite{BFP12}; more precisely  we have with $C,d>0$ dependent on $s_0 \in \mathbb{R}$ and $\ell_0>0$
\begin{equation}
1-Ce^{-ds}\leq F(u(s,\ell))
\end{equation} for $\ell>\ell_0$ and  $s>s_0$. Making the change of variable $\ell\to \mu_{\rm pp}\ell $,   (\ref{moddef}) follows with $c=d/\mu_{\rm pp}^{1/3}$.
\end{proof}
\begin{prop}[Point-to-point LPP: lower tail]\label{devtwo} Let $0<\eta<\infty$ and $\mu_{\rm pp}=(1+\sqrt{\eta})^2$. There exist positive constants $s_0,\ell_0,C,c$
such that for $s\leq -s_0,$ $\ell\geq \ell_0$,
\begin{equation} \begin{aligned}
\Pb\left(L_{0\to (\lfloor\eta\ell\rfloor,\lfloor\ell\rfloor)}\leq \mu_{\rm pp} \ell +s\ell^{1/3}\right)\leq C\exp(-c|s|^{3/2}).
\end{aligned}\end{equation}
\end{prop}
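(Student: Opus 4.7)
The plan is to follow the same route as in Proposition~\ref{devone} and exploit the LUE identification
\begin{equation*}
F(u) := \Pb\!\left(L_{0\to(\lfloor\eta\ell\rfloor,\lfloor\ell\rfloor)}\leq u\right) = \det\!\left(1-\chi_u K_{m,d}\chi_u\right)_{L^2(\R)},
\end{equation*}
with $K_{m,d}$ the (conjugated) correlation kernel from~\cite{FS05a} and $m\pm d$ of order $\ell$. The Fredholm determinant is invariant under the conjugation, so I may regard $\chi_u K_{m,d}\chi_u$ as a trace-class operator with eigenvalues $\{\mu_i\}\subset[0,1]$. The elementary inequality $\prod_i(1-\mu_i)\leq \prod_i e^{-\mu_i}$ then gives
\begin{equation*}
F(u)\leq \exp\!\left(-\int_u^\infty K_{m,d}(x,x)\,dx\right),
\end{equation*}
so the task reduces to a lower bound on the expected number of LUE eigenvalues above $u$.

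At the scale $u=\mu_{\rm pp}\ell+s\ell^{1/3}$ with $s\leq -s_0$, the macroscopic spectral density is Marchenko--Pastur, with the characteristic square-root vanishing $K_{m,d}(x,x)\sim c\sqrt{\mu_{\rm pp}\ell-x}/\sqrt{\ell}$ throughout a macroscopic window below the upper edge. Integrating,
\begin{equation*}
\int_u^{\mu_{\rm pp}\ell}K_{m,d}(x,x)\,dx\sim \int_0^{|s|\ell^{1/3}}\frac{c\sqrt{y}}{\sqrt{\ell}}\,dy\sim c\,|s|^{3/2},
\end{equation*}
the powers of $\ell$ canceling precisely (the contribution from $x>\mu_{\rm pp}\ell$ being nonnegative, it can only help). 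Undoing the change of variable $\ell\to\mu_{\rm pp}\ell$ made at the end of the proof of Proposition~\ref{devone} preserves the $|s|^{3/2}$ form and yields the claim with a constant $c=c(\eta)>0$.

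The main obstacle is turning the heuristic $K_{m,d}(x,x)\approx \rho_{\mathrm{MP}}$ into a genuine pointwise lower bound, uniform in $\ell\geq\ell_0$ and in $|s|$ up to, say, a fixed small fraction of $\ell^{2/3}$. Two realistic routes are available. Route (a): perform a direct Plancherel--Rotach analysis of the Laguerre polynomials entering $K_{m,d}$ to obtain the required lower bound on the diagonal; this is classical but not short. Route (b): simply quote a ready-made small-deviation estimate for the LUE largest eigenvalue, e.g.\ a Ledoux--Rider type bound for the $\beta=2$ Laguerre ensemble, or a lower-tail inequality analogous to (37) of~\cite{BFP12} but at the left edge; such a bound already delivers $\exp(-c|s|^{3/2})$ in one step. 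Route (b) is the economical choice, and only the scaling-back to LPP coordinates remains.
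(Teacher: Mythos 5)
Your ``route (b)'' is, in fact, the paper's own proof: the argument there simply takes $F$, $u(s,t)$ and the LUE parameters $m,d$ exactly as in the proof of Proposition~\ref{devone} and quotes a ready-made lower-tail estimate from~\cite{BFP12} (Proposition~3 there, combined with inequality (56) of that paper) to get $F(u(s,t))\leq C\exp(-c|s|^{3/2})$, and then rescales $\ell\to\mu_{\rm pp}\ell$. So the economical choice you identify is precisely what is done, and it suffices. One small slip in your phrasing: what is needed is the lower tail of the \emph{largest} eigenvalue (deviations below the upper spectral edge), not an estimate ``at the left edge'' of the spectrum; also note that (37) of~\cite{BFP12} is the upper-tail bound already used in Proposition~\ref{devone}, while the lower-tail input is the separate statement just mentioned.

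Your primary route is genuinely different and conceptually attractive: writing $\Pb(\lambda_1\leq u)=\det(1-\chi_uK_{m,d}\chi_u)\leq\exp\bigl(-\int_u^\infty K_{m,d}(x,x)\,\dx x\bigr)$ is legitimate (the unconjugated kernel is a projection, so the eigenvalues of $\chi_uK\chi_u$ lie in $[0,1]$), and your Marchenko--Pastur computation correctly shows the powers of $\ell$ cancel, which is exactly why this cheap bound produces the exponent $3/2$ rather than the true $|s|^3$ decay mentioned in the paper's footnote. However, as written this route is not a proof: the whole content of the statement is the uniform lower bound on the diagonal $K_{m,d}(x,x)$ (equivalently, $\E\,\#\{\text{eigenvalues}>u\}\geq c|s|^{3/2}$) in the window between $u$ and the edge, uniformly in $\ell\geq\ell_0$ and in $|s|$ up to order $\ell^{2/3}$, and you only assert it from the macroscopic density heuristic. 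Making it rigorous requires uniform Plancherel--Rotach-type asymptotics for the Laguerre wave functions entering the kernel of~\cite{FS05a} (or, again, a citation), plus the easy complementary argument for $|s|$ of order $\ell^{2/3}$ down to the trivial cutoff $|s|\leq\mu_{\rm pp}\ell^{2/3}$ where the probability vanishes. So: correct strategy and correct exponent, with the decisive analytic step delegated to exactly the kind of external estimate the paper invokes.
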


\begin{proof}
Take the functions $F,u(s,t)$ and the parameters $w,m,d$ as in the proof of Proposition~\ref{devone}.
Proposition 3 of~\cite{BFP12} (to be found in the proof of Proposition 2 of~\cite{BFP12}) and the inequality (56) of the same paper imply that there exist positive constants $s_0,t_0,C,c$ such that
\begin{equation}
F(u(s,t))\leq C \exp(-c |s|^{3/2}),
\end{equation}
for all $s\leq -s_0$ and $t\geq t_0$.
\end{proof}

\subsubsection{Half-line $\mathcal{L}^+$-to-point LPP results}
To obtain the results for the LPP from the half-line $\mathcal{L}^+$ to a point $(\eta \ell,\ell)$, we use the correspondence of LPP and TASEP, namely
\begin{equation}\label{eqLPPtasep}
\Pb\left(L_{\mathcal{L}^{+}\to (m,n)}\leq t\right)=\Pb\left(x_n(t)+n\geq m\right),
\end{equation}
where $x_n(t)$ is the position at time $t$ of the TASEP particle that started from $x_n(0)=-2n$ in the initial configuration where particles occupy $-2\N_0$. TASEP particle have all jump rate $1$.
The latter distribution function is expressed as a Fredholm determinant of a kernel $\hat{K}_{n,t}$, as is shown in~\cite{BFS07}.
\begin{prop}[Proposition 3 in~\cite{BFS07}]\label{PropTASEPHalfFlat}
Let particle number $n \in \mathbb{N}_0$ start in $-2n$ at time $t=0$. Denote by $x_n (t)$ the position of particle number $n$ at time $t$.
We then have
\begin{equation}
\Pb(x_n(t) > s)=\det(1-\chi_s \hat{K}_{n,t} \chi_s)_{\ell^2(\Z)}
\end{equation}
where $\chi_s =\Id_{(-\infty,s]}$ and $\hat{K}_{n,t}$ is given by\footnote{For a set $S$, the notation $\Gamma_{S}$ means a path anticlockwise oriented enclosing only poles of the integrand belonging to the set $S$.}
\begin{equation}\label{kernelhalfflat}
\begin{aligned}
&\hat{K}_{n,t}(x_1,x_2)
=\frac{1}{(2\pi i)^{2}} \oint_{\Gamma_1}\dx v\oint_{\Gamma_{0,1-v}}\frac{\dx w}{w}\frac{e^{tw}(w-1)^{n }}{w^{x_1 + n}}\frac{v^{x_2 + n }}{e^{tv}(v-1)^{n}}\\
&\hspace{17em}\times\frac{2v-1}{(w+v-1)(w-v)}.
\end{aligned}
\end{equation}
\end{prop}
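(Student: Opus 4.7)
The proof is the content of~\cite{BFS07} (Proposition 3), so I only sketch the strategy. The plan is to specialize the general determinantal formula for TASEP with deterministic initial data to the half-flat configuration $x_k(0)=-2k$, $k\in\N_0$, and carry out the resulting summations in closed form.

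The first step is to invoke the general framework of~\cite{BFPS06,BFS07}, which shows that for any deterministic initial condition $x_1(0)>x_2(0)>\cdots$, the distribution $\Pb(x_n(t)>s)$ equals a Fredholm determinant
\begin{equation*}
\Pb(x_n(t)>s)=\det(1-\chi_s K_{n,t}\chi_s)_{\ell^2(\Z)},
\end{equation*}
where the kernel has the Eynard--Mehta-type structure
\begin{equation*}
K_{n,t}(x_1,x_2)=-\phi_{n,t}(x_1,x_2)+\sum_{k=1}^{n}\Psi^{n}_{n-k}(x_1)\,\Phi^{n}_{n-k}(x_2).
\end{equation*}
Here the functions $\Psi^{n}_{j}$ encode the free TASEP single-particle semigroup and admit an explicit contour integral representation, while the $\Phi^{n}_{j}$ are biorthogonal duals on the set of initial positions $\{x_k(0)\}$.

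The second step is to produce closed-form contour integral expressions for $\Psi^{n}_{j}$ and $\Phi^{n}_{j}$ tailored to $x_k(0)=-2k$. The natural ansatz
\begin{equation*}
\Psi^{n}_{n-k}(x)=\frac{1}{2\pi\I}\oint_{\Gamma_1}\dx w\,\frac{e^{tw}(w-1)^{n-1}}{w^{x+n-k+1}},
\end{equation*}
and an analogous $v$-integral for $\Phi^{n}_{n-k}(x)$ multiplied by a yet-undetermined rational factor $p(v,w)$, reduces biorthogonality to residue identities. Summing $\sum_{k=1}^{n}\Psi^{n}_{n-k}(x_1)\Phi^{n}_{n-k}(x_2)$ collapses a geometric series in $v/w$ and contributes the factor $1/(w-v)$ (valid when $|v|<|w|$ on the chosen contours). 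Imposing biorthogonality against the infinite starting set $-2\N_0$ triggers a second geometric summation, this time in $(vw)^{-1}$-type variables, which evaluates to the rational factor $(2v-1)/(w+v-1)$ — the $2v-1$ arising from the spacing $2$ of the initial condition and the denominator from summing $\sum_{k\geq 0}(w+v-1)^{-k}$ or its analogue. The additive $-\phi_{n,t}$ term is absorbed by a contour deformation argument that effectively extends the finite sum over $k$ to an integral representation, producing the final compact double contour~\eqref{kernelhalfflat}.

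The main obstacle is the simultaneous choice of the contours and the control of the geometric summations. One must pick $\Gamma_1$ (enclosing $w=1$ but not $w=0$) and $\Gamma_{0,1-v}$ (enclosing $0$ and $1-v$ but not $w$) so that \emph{all three} infinite series — the sum over $k$, the biorthogonality sum over $-2\N_0$, and the gap-extension absorbing $\phi_{n,t}$ — are absolutely convergent and that the Fubini exchange with the contour integrals is legitimate. A secondary subtlety is confirming that the apparent pole at $w+v=1$ introduced by the rational prefactor lies inside the $v$-contour as prescribed, so that the residue calculus reproduces the biorthogonality on the nose rather than off by a sign. Once these analytic points are settled, matching the resulting kernel with~\eqref{kernelhalfflat} and invoking the Fredholm determinant identity for $\Pb(x_n(t)>s)$ complete the argument.
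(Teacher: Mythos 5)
The paper does not actually prove this statement: it is imported verbatim as Proposition~3 of~\cite{BFS07}, so deferring to that reference, as you do, is exactly what the authors do, and at that level your proposal is acceptable.

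That said, your sketch of the argument misdescribes it in ways that would derail a reconstruction. First, for a one-point function the Eynard--Mehta-type kernel has no $-\phi_{n,t}$ term: that term carries an indicator $\one_{[n_2>n_1]}$ and vanishes when both arguments refer to the same particle (compare the paper's own Proposition~\ref{PropFiniteM}, where the one-point kernel is just $\sum_{k=1}^{n}\Psi^{n,t}_{n-k}(x_1)\Phi^{n,t}_{n-k}(x_2)$). Consequently there is no ``third infinite series'' absorbing $\phi_{n,t}$ and no contour-deformation step of that kind; the only summations are the finite geometric sum over $k=1,\dots,n$, which produces the factor $1/(w-v)$, and the sums needed to verify biorthogonality. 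Second, your ansatz for $\Psi^{n}_{n-k}$ is not the right one: specializing the general formula (cf.~(\ref{Psi}) with $x_j(0)=-2j$ and all rates equal to $1$) gives $\Psi^{n,t}_{n-k}(x)=\frac{1}{2\pi\I}\oint_{\Gamma_0}\dx w\, e^{tw}(w-1)^{n-k}w^{-(x+n+k+1)}$, so the exponent of $(w-1)$ is $n-k$ rather than $n-1$, the power of $w$ in the denominator grows with $k$, and the contour encircles $0$, not $1$. Third, biorthogonality is the condition $\sum_{x\in\Z}\Psi^{n,t}_{n-j}(x)\Phi^{n,t}_{n-k}(x)=\delta_{j,k}$ together with a spanning condition on the $\Phi$'s determined by the initial data; it is not an orthogonality ``against the starting set $-2\N_0$''. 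The factor $(2v-1)/(w+v-1)$ does encode the period-two initial configuration, but it arises from constructing $\Phi$ to satisfy these two conditions, and the pole at $w=1-v$ must be enclosed by the $w$-contour $\Gamma_{0,1-v}$, not the $v$-contour. None of this affects the validity of the citation, but as a proof strategy the sketch would not go through as written.
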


To get a bound for the upper tail we need to have the following estimate of the decay of the kernel.
\begin{prop}[Exponential decay $\hat{K}_{n,t}$]\label{propDecayHalfLinePt}
Consider the scaling
\begin{equation}\label{moddevscaling}
n(t)=\left[\frac{r}{4}t\right] \quad x_{i}=\left[\frac{1-r}{2}t-s_{i}t^{1/3}\right],
\end{equation}
for some $r>1$.
With this choice, there exists a constant $C$ and a $t_0$ such that for $t>t_0$ and $s_{1},s_{2} \geq 0$
\begin{equation}\label{expdecay}
|\hat{K}_{n,t} (x_1,x_2)t^{1/3}2^{x_2-x_1} e^{-(s_2-s_1)/2}|\leq C\, e^{-(s_1+s_2)/2}.
\end{equation}
\end{prop}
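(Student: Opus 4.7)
The plan is a saddle-point analysis of the double contour integral \eqref{kernelhalfflat}. After the substitution $\tilde w = 2w$, $\tilde v = 2v$ (which absorbs the conjugation factor $2^{x_2-x_1}$ into the $\tilde w$-, $\tilde v$-powers), the integrand factors as $\exp[t(H(\tilde w) - H(\tilde v))] \cdot \tilde w^{s_1 t^{1/3}} \tilde v^{-s_2 t^{1/3}}$ times a rational prefactor, where
\[
H(\tilde z) = \tilde z/2 + (r/4)\log(\tilde z - 2) - ((2-r)/4)\log \tilde z.
\]
A direct computation gives $H'(1)=0$ and $H''(1) = (1-r)/2$, so for $r>1$ the point $\tilde z = 1$ is a non-degenerate saddle, with $\Re H$ descending along the real direction and ascending along the imaginary direction through it.

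To extract the decay factors $e^{-s_i/2}$, I would deform both contours to circles of specific radii: the $\tilde w$-contour to the circle around the origin of radius $e^{-1/(2t^{1/3})}$, and the $\tilde v$-contour to the circle around $\tilde v = 2$ of radius $2 - e^{1/(2t^{1/3})}$. The latter has minimum modulus $e^{1/(2t^{1/3})}$, so uniformly on these contours and for all $s_1, s_2 \ge 0$,
\[
|\tilde w^{s_1 t^{1/3}}| \le e^{-s_1/2}, \qquad |\tilde v^{-s_2 t^{1/3}}| \le e^{-s_2/2}.
\]
Both circles pass within $O(t^{-1/3})$ of the saddle $\tilde z = 1$, so the Gaussian saddle estimate $\Re(H(\tilde w) - H(1)) \le -c|\tilde w - 1|^2$ (and the opposite inequality for $\tilde v$) applies locally, with a routine check needed that the growth of $\Re H$ along the rest of the circles does not destroy the bound. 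Taking Gaussian widths of order $t^{-1/2}$ together with the rational prefactor $(\tilde v - 1)/[\tilde w(\tilde w + \tilde v - 2)(\tilde w - \tilde v)]$ and the measure $d\tilde w\,d\tilde v$, one obtains an overall scaling of order $t^{-1/2}$, which is stronger than the $t^{-1/3}$ appearing in \eqref{expdecay}.

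The main obstacle is the pole $\tilde w = 2 - \tilde v$ in the prefactor: for $\tilde v$ near the point on the $\tilde v$-contour closest to $\tilde v = 1$, the pole sits at $\tilde w = 2 - e^{1/(2t^{1/3})}$, just inside the $\tilde w$-circle by an $O(t^{-2/3})$ margin. The enclosure condition of \eqref{kernelhalfflat} is thus preserved, but the simple pole $1/(\tilde w - \tilde v)$ forces a careful local estimate. I would handle this by splitting the $\tilde w$-integral into a disk of radius $\sim t^{-1/3}$ around $\tilde w = 1$ and its complement: on the former the Gaussian decay of $|e^{tH(\tilde w)}|$ combined with the $|\tilde v - 1|$ factor from the numerator controls the apparent singularity of $1/(\tilde w - \tilde v)$, while on the latter the sharp exponential decay of $|e^{tH(\tilde w)}|$ away from the saddle dominates. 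Combining these estimates with the $s_i$-dependent bounds above yields \eqref{expdecay}.
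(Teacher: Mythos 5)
There is a genuine gap, and it sits exactly where you defer the ``routine check'': a circle centered at the origin of radius close to $1$ is an \emph{ascent} path for $\Re H$ near the saddle $\tilde w=1$, not a descent path. Writing $\tilde w=\rho e^{\I\theta}$ with $\rho\approx 1$, one finds
\begin{equation*}
\frac{\partial}{\partial\theta}\Re H(\rho e^{\I\theta})=\rho\sin\theta\Bigl(-\tfrac12+\tfrac{r/2}{|\rho e^{\I\theta}-2|^2}\Bigr),
\end{equation*}
which is positive for small $\theta>0$ since $|\tilde w-2|\approx 1$ and $r>1$; $\Re H$ keeps increasing until $|\tilde w-2|^2=r$ and its maximum over the circle exceeds $\Re H(1)=1/2$ by $\tfrac18(1-r+r\ln r)>0$. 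So $|e^{tH(\tilde w)}|$ is larger than the saddle value by a factor $e^{ct}$, $c>0$, on a macroscopic arc, and nothing in the $\tilde v$-integral (which is bounded by \emph{its} saddle value) or in the rational prefactor can compensate. This is not an artifact of your particular radius: the steepest-descent direction at $\tilde w=1$ is the real direction, transverse to every circle about the origin, so no circle through (or near) the saddle can work for the $w$-variable. A second warning sign is your claimed overall size $O(t^{-1/2})$: the kernel times $t^{1/3}$ converges to the Airy$_1$ kernel (Proposition~\ref{propHalfFlatConvergence}), so it is genuinely of order $t^{-1/3}$. The discrepancy reflects that the dominant contribution does not come from the non-degenerate saddle of $H$ at all, but from the residue at the pole $w+v=1$; after taking that residue the exponent becomes $t g(v)$ with $g'(v)=(2v-1)^2/(2v(1-v))$, a \emph{degenerate} (cubic) critical point at $v=1/2$, which is what produces the $t^{-1/3}$ Airy scaling. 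A Gaussian treatment of the double integral cannot see this.

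The paper's proof is organized around precisely this decomposition: it first computes the residue of the $w$-integral at $w=1-v$, recognizes the resulting single integral as the flat-initial-condition kernel of~\cite{BF07} (Proposition~5.3 there), and imports the bound $\const\,e^{-(s_1+s_2)}$ from that reference. The leftover double integral, in which $w$ encircles only the origin, is then pushed onto the circles $|w|=1-r/2$ and $|v-1|=1/2$; both are shown to be steep descent, the $w$-circle now staying well away from the saddle, so that this remainder is $O(t^{1/3}e^{-\mu(r)t}e^{s_1t^{1/3}\ln(2-r)})$ and harmless. If you want to keep a self-contained saddle-point argument, you must at minimum perform this residue extraction first and then run a cubic (Airy-type) local analysis on the residue integral along a genuinely steep-descent contour; the two-circle Gaussian picture as proposed does not close.
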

\begin{proof}[Proof of Proposition~\ref{propDecayHalfLinePt}]
Below we will show that for $t$ large enough, there are
 constants $C,\mu(r)>0$ such that  we have uniformly in $s_1,s_2\geq 0$
\begin{equation}
|\hat{K}_{n,t} (x_1,x_2)t^{1/3}2^{x_2-x_1} |\leq C e^{-(s_1+s_2)}+C t^{1/3} e^{-\mu(r) t} e^{s_1 t^{1/3}\ln(2-r)}.
\end{equation}
From this then follows that
\begin{equation}
|\hat{K}_{n,t} (x_1,x_2)t^{1/3}2^{x_2-x_1} e^{-(s_2-s_1)/2}|\leq 2C e^{-(s_1+s_2)/2}
\end{equation}
since $t^{1/3} e^{-\mu(r) t}\leq 1$ and $e^{s_1 (t^{1/3}\ln(2-r)+1/2)}\leq 1$ for $t$ large enough (because $\ln(2-r)<0$).

Therefore, below we need to bound $\hat{K}_{n,t} (x_1,x_2)t^{1/3}2^{x_2-x_1}$. We can divide the kernel $\hat K_{n,t}$ into the contribution coming from the residue at $w=-v+1$ and the rest. The contribution of this residue is
\begin{equation}\label{residue1}
(-1)^{x_1+1}2^{x_2-x_1} \frac{t^{1/3}}{2\pi i} \oint_{\Gamma_{1}}\text{d}v\frac{v^{x_2+2n}}{(1-v)^{x_1+2n+1}}e^{(1-2v)t}
\end{equation}
This kernel was already analyzed in~\cite{BF07}. Indeed, (\ref{residue1}) is the kernel from Proposition 5.3 in~\cite{BF07} for the special choice of parameters $t_1=t_2=T=t$, $L=0$, and $R=1$.
Our scaling also fits in the one from  (2.9) in~\cite{BF07};
take
$\pi(\theta)=r/4+\theta$ and  $\theta$ to be the solution of $r/4+2\theta=1$, i.e., $\theta=1/2-r/8$. Then (2.9) in~\cite{BF07} equals exactly (\ref{moddevscaling}).
Said Proposition yields now that for any $(s_1,s_2) \in [-l,\infty)^{2}$ we have
\begin{equation} \begin{aligned}
|(\ref{residue1})| \leq \text{const}\, e^{-(s_1+s_2)}.
\end{aligned}\end{equation}
Let us deal now with the remaining part. Taking $\tilde{s}_{i}=s_i t^{-2/3}$, we have to bound the kernel

\begin{equation}\label{eq23}
\begin{aligned}
2^{x_2-x_1}&\frac{t^{1/3}}{(2\pi i)^{2}}\oint_{\Gamma_1}\text{d}v\oint_{\Gamma_{0}}\frac{\text{d}w}{w}\frac{e^{tw}(w-1)^{n}}{w^{x_1 + n}}
\frac{v^{x_2 + n}}{e^{tv}(v-1)^{n}}\frac{2v-1}{(w+v-1)(w-v)}
\\=&\frac{t^{1/3}}{(2\pi i)^{2}}\oint_{\Gamma_1}\text{d}v\oint_{\Gamma_{0}}\text{d}w
\frac{e^{tf_{0}(w,\tilde s_1)}}{e^{tf_{0}(v,\tilde s_2)}}
\frac{2v-1}{(w+v-1)(w-v)}
\end{aligned}
\end{equation}
with
\begin{equation}\label{myf}
f_{0}(w,s)=\frac{r}{4}\ln(w-1)+w-\frac{2-r}{4}\ln(w)+s\ln(2w).
\end{equation}
We first note that for $r\geq 2$ the pole at $w=0$ disappears and thus (\ref{eq23}) vanishes. We therefore assume $1<r<2$ in the following.
We now claim that
\begin{equation}
\Gamma_{0}(t)=\lambda e^{it},\quad t \in [0,2\pi)
\end{equation}
is a steep descent path of $f_{0}$ for $\lambda =1-r/2$.
To check the steep descent condition, note
\begin{equation}
\begin{aligned}
\Re(f_{0}(\Gamma_0 (t)&,\tilde{s}_1))
=\tilde{s}_1\ln(2\lambda)+\lambda \cos(t)-\frac{2-r}{4}\ln(\lambda)
+\frac{r}{4}\ln(|\lambda e^{it}-1|)
\\&=\tilde{s}_1\ln(2\lambda)+\lambda \cos(t)-\frac{2-r}{4}\ln(\lambda)
+\frac{r}{8}\ln\left(\lambda^{2}+1-2\lambda\cos(t)\right).
\end{aligned}
\end{equation}
Thus we have
\begin{equation}\label{whenzerooo}
\frac{\partial}{\partial t}
\Re\left(f_{0}(\Gamma_0 (t),\tilde{s}_1)\right)
=-\lambda\sin(t)\left(1 -\frac{r/4}{|\lambda e^{it}-1|^2}\right),
\end{equation}
which is strictly negative for all $t\in (0,\pi)$ (and strictly positive for \mbox{$t\in (\pi,2\pi)$}). Indeed, $|\lambda e^{it}-1|\geq r/2$, from which $1 -\frac{r/4}{|\lambda e^{it}-1|^2}\geq 1-1/r>0$. Thus $\Gamma_0$ as chosen above is a steep descent path for $f_0$ with maximum at $t=0$.

For $\Gamma_{1}$, we choose
\begin{equation}
\Gamma_{1}(t)=1-\frac{1}{2} e^{it}, \quad t \in [0,2\pi)
\end{equation}
and we want to show that it is a steep descent path for $-f_{0}$.
We have
\begin{align*}
\Re(-f_{0}(\Gamma_{1}(t),\tilde{s}_2))=&-\frac{r}{4}\ln(1/2)+\frac{2-r}{8}\ln\left(5/4-\cos(t)\right)+\frac12\cos(t)\\
&-\tilde{s}_2\ln(|2-e^{it}|).
\end{align*}
The term $-\tilde{s}_2\ln(|2-e^{it}|)$ reaches clearly its maximum at $t=0$ for any $\tilde s_2\geq 0$. Thus we can focus on the $\tilde s_2=0$ case. We have
\begin{equation}\label{scderiv}
\frac{\partial}{\partial t} \Re\left(f_{0}(\Gamma_1(t),0)\right)
=-\frac{\sin(t)}{2} \left(1-\frac{2-r}{8}\frac{1}{|1-\frac12 e^{it}|^2}\right),
\end{equation}
which is strictly negative for $t\in(0,\pi)$ and strictly positive for $t\in(\pi,2\pi)$. This follows from $|1-\frac12 e^{it}|\geq 1/2$, so that $1-\frac{2-r}{8}|1-\frac12 e^{it}|^{-2}\geq r/2>0$. Thus $\Gamma_1$ is a steep descent path for $-f_0$ attaining its maximum at $t=0$.

The paths $\Gamma_0$ and $\Gamma_1$ are such that the factor $\frac{2v-1}{(w+v-1)(w-v)}$ in (\ref{eq23}) is uniformly bounded and the length of the paths is also bounded. Therefore, since $\Gamma_0$ and $\Gamma_1$ are steep descent paths, we get the easy bound
\begin{equation}\label{eq28}
\begin{aligned}
|(\ref{eq23})|\leq t^{1/3}e^{t(f_{0}(1-r/2,\tilde{s}_1)-f_{0}(1/2,\tilde{s}_2))}=t^{1/3}e^{-\mu(r) t}e^{t^{1/3}\ln(2-r)s_1},
\end{aligned}
\end{equation}
with $\mu(r)=-\frac{r}{4}\ln(r)-\frac{1-r}{2}+\frac{2-r}{4}\ln(2-r)>0$ for all $1<r<2$.
\end{proof}

\begin{prop}\label{propHalfLineBound}
Fix an $0< \eta < 1$ and let $\mu=2(1+\eta)$. Then, for any $\e\in [0,2(1-\eta))$, there exists constants $C,\tilde{c}>0$ and $\ell_0 >0$ such that for all $\ell>\ell_0$
\begin{equation} \begin{aligned}\label{devest}
\Pb\left(L_{\mathcal{L}^{+}\to (\lfloor\eta\ell\rfloor,\lfloor\ell\rfloor)}> (\mu+\e/2) \ell\right) \leq C\exp\left(-\tilde{c} \,\e \ell^{2/3}\right).
\end{aligned}\end{equation}
\end{prop}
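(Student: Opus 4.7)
The plan is to translate the upper-tail event for the half-line LPP into a deep lower-tail event for the tagged TASEP particle via the correspondence (\ref{eqLinkLPPtasep}), and then to bound the resulting Fredholm determinant using the kernel decay of Proposition~\ref{propDecayHalfLinePt}. Setting $t:=(\mu+\varepsilon/2)\ell$, $n:=\lfloor\ell\rfloor$ and $s':=\lfloor\eta\ell\rfloor-\lfloor\ell\rfloor-1$, the correspondence gives
\begin{equation*}
\Pb\bigl(L_{\mathcal{L}^+\to(\lfloor\eta\ell\rfloor,\lfloor\ell\rfloor)}>t\bigr)=\Pb\bigl(x_n(t)\leq s'\bigr).
\end{equation*}
To match the scaling of Proposition~\ref{propDecayHalfLinePt}, I set $r:=4n/t$; then $r=4/(\mu+\varepsilon/2)$ belongs to $(1,2)$ precisely when $\varepsilon\in[0,2(1-\eta))$ and $\eta\in(0,1)$, using $\mu=2(1+\eta)$. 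Under the parametrisation $x=\lfloor(1-r)t/2-s\,t^{1/3}\rfloor$, a direct computation shows that the cutoff $s'$ corresponds to a rescaled value $s_\ast=(\varepsilon/4)\ell/t^{1/3}+O(1)\geq c(\eta)\,\varepsilon\,\ell^{2/3}$, with $c(\eta)>0$ bounded uniformly from below over $\varepsilon$ in the stated range.

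Next, Proposition~\ref{PropTASEPHalfFlat} yields $\Pb(x_n(t)\leq s')=1-\det(1-\chi_{s'}\hat K_{n,t}\chi_{s'})_{\ell^2(\Z)}$. Since the Fredholm determinant is invariant under diagonal conjugations $\hat K(x_1,x_2)\mapsto (f(x_2)/f(x_1))\hat K(x_1,x_2)$, taking $f(x)=2^{x}e^{-s(x)/2}$ produces a conjugated kernel $\tilde K$ with the factorised bound
\begin{equation*}
|\tilde K(x_1,x_2)|\leq C\,t^{-1/3}e^{-(s(x_1)+s(x_2))/2}\quad\text{uniformly on }\{x_1,x_2\leq s'\},
\end{equation*}
by Proposition~\ref{propDecayHalfLinePt}. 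I would then expand
\begin{equation*}
\bigl|1-\det(1-\chi_{s'}\tilde K\chi_{s'})\bigr|\leq\sum_{k\geq 1}\frac{1}{k!}\sum_{x_1,\ldots,x_k\leq s'}\bigl|\det[\tilde K(x_i,x_j)]_{i,j=1}^{k}\bigr|
\end{equation*}
and apply Hadamard's inequality together with the rank-one factorisation of the kernel bound. The resulting spatial sums are geometric series of the form $\sum_{x\leq s'}e^{-\alpha s(x)}\lesssim \alpha^{-1}t^{1/3}e^{-\alpha s_\ast}$ (since $s(\cdot)$ is linear in $x$ with spacing $t^{-1/3}$), and the $t^{1/3}$-factors so produced cancel the $t^{-1/3}$ of the kernel bound. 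After Stirling's formula the $k$-th term is of order $(Ce/\sqrt{k})^{k}e^{-k s_\ast}$, so for $s_\ast$ large the series is dominated by its $k=1$ term, yielding $|1-\det(1-\chi_{s'}\tilde K\chi_{s'})|\leq C'e^{-s_\ast}\leq C\exp(-\tilde c\,\varepsilon\,\ell^{2/3})$ with $\tilde c=c(\eta)>0$, as claimed.

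The main obstacle is making the Fredholm series converge with the right rate: the combinatorial growth $k^{k/2}$ from Hadamard's inequality must be defeated by $e^{-k s_\ast}$, which is possible only because Proposition~\ref{propDecayHalfLinePt} provides the full product decay $e^{-(s_1+s_2)/2}$ (rather than, say, a uniform or Hilbert--Schmidt bound) and because $s_\ast\gtrsim\varepsilon\ell^{2/3}\to\infty$. Careful bookkeeping of the $t^{1/3}$-prefactors, so that the spatial sums and the kernel normalisation cancel exactly, is the main technical point; everything else, including the final geometric-series summation, is routine.
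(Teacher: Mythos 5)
Your proposal is correct and follows essentially the same route as the paper's proof: translate the upper-tail LPP event into a lower-tail event for the tagged TASEP particle via (\ref{eqLinkLPPtasep}), apply the product decay bound of Proposition~\ref{propDecayHalfLinePt} with $r=4/(\mu+\e/2)$, and control the Fredholm expansion by Hadamard's inequality plus geometric summation over the region $s\gtrsim \e\ell^{2/3}$. The points you flag as delicate (the cancellation of $t^{\pm 1/3}$ factors and beating $m^{m/2}$ by $e^{-m s_\ast}$) are handled in the paper in exactly the way you describe.
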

\begin{proof}[Proof of Proposition~\ref{propHalfLineBound}]
We follow along the lines of the proof of Theorem~2.5 in Section 5 of~\cite{BFP06}. We use the relation (\ref{eqLPPtasep}) between LPP and TASEP, in which we set $t:=(\mu+\e/2) \ell$ and denote by $\ell(t)=t/(\mu+\e/2)$ its inverse function. Then, using this relation and Proposition~\ref{PropTASEPHalfFlat}, we see that
\begin{equation}\label{eq12}
(\ref{devest})=1-\Pb\left(x_{\ell(t)}(t)\geq (\eta-1)\ell(t)\right).\\
\end{equation}
Let us denote
\begin{equation}
X^{\rm resc}_t=\frac{x_{\ell(t)}(t)-(-2 \ell(t)+t/2)}{-t^{1/3}}.
\end{equation}
Then,
\begin{equation}\label{yes}
\begin{aligned}
(\ref{eq12})&=1-\Pb\left(X^{\rm resc}_t \leq \frac{(\eta+1)\ell(t)-t/2}{-t^{1/3}}  \right)\\
&=-\sum_{m=1}^{\infty}\frac{(-1)^{m}}{m!}\int \text{d}s_1\cdots\int \text{d}s_m  \det[t^{1/3}\hat{K}_{\ell(t),t}([x(s_i)],[x(s_j)])]_{1\leq i,j\leq m}
\end{aligned}
\end{equation}
where $x(s)=(-2\ell(t)+t/2)-s t^{1/3}$ and the integration domain of the $s_i$'s is $(\e t^{2/3}/4(\mu+\e/2),\infty)$.
On (\ref{yes}) we apply Proposition~\ref{propDecayHalfLinePt} with \mbox{$r=4/(\mu+\e/2)$}.

We can thus single out a product $\prod_{i=1}^{m}e^{-s_i}$ of the determinant, so that the absolute value of all entries in the matrix  is bounded by a constant $C$, so using Hadamard's bound, we get
\begin{equation} \begin{aligned}
|(\ref{yes})|&\leq \sum_{m=1}^{\infty}\frac{C^{m}m^{m/2}}{m!}\int_{\e t^{2/3}/4(\mu+\e/2)}\text{d}s_1\cdots\int_{\e t^{2/3}/4(\mu+\e/2)} \text{d}s_{m}\prod_{i=1}^{m}e^{-s_i}
\\&=\sum_{m=1}^{\infty}\frac{(2C)^{m}m^{m/2}\exp\left(-m \e t^{2/3}/4(\mu+\e/2)\right)}{m!}\\&\leq \tilde{C}\exp\left(-\e t^{2/3}/4(\mu+\e/2)\right)\leq \tilde{C}\exp\left(-\tilde c \e \ell^{2/3}\right)
\end{aligned}\end{equation}
for some constants $\tilde C,\tilde c$ (uniform in $\ell$).
\end{proof}

\begin{prop}[Half-line $\mathcal{L}^+$-to-point LPP: convergence to $F_1$]\label{propHalfFlatConvergence}
For any fixed $0< \eta < 1$, it holds
\begin{equation} \begin{aligned}
\lim_{\ell\to\infty}\Pb\left(L_{\mathcal{L}^{+}\to (\lfloor\eta\ell\rfloor,\lfloor\ell\rfloor)}\leq \mu\ell +s \tilde \sigma_\eta \ell^{1/3}\right)= F_1(2s)
\end{aligned}\end{equation}
where $\mu=2(1+\eta)$, $\tilde \sigma_\eta=2^{4/3}(1+\eta)^{1/3}$, and $F_1$ is the GOE Tracy-Widom distribution function.
\end{prop}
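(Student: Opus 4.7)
The plan is to start from the Fredholm-determinant representation provided by Proposition~\ref{PropTASEPHalfFlat}, then run a steepest-descent analysis of the kernel $\hat K_{n,t}$, and conclude with a dominated-convergence argument that uses the uniform tail bound of Proposition~\ref{propDecayHalfLinePt}. First I translate the LPP statement to TASEP via~\eqref{eqLPPtasep}: setting $t=\mu\ell+s\tilde\sigma_\eta\ell^{1/3}$ with $\mu=2(1+\eta)$, the event $\{L_{\mathcal{L}^+\to(\lfloor\eta\ell\rfloor,\lfloor\ell\rfloor)}\leq t\}$ becomes $\{x_\ell(t)\geq\lfloor\eta\ell\rfloor-\ell\}$, whose probability is $\det(1-\chi_S\hat K_{\ell,t}\chi_S)_{\ell^2(\Z)}$ with $S=\lfloor\eta\ell\rfloor-\ell-1$. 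Changing summation variables to $x_i=\lfloor\eta\ell\rfloor-\ell-\xi_i\tilde\sigma_\eta\ell^{1/3}$ puts us exactly in the scaling of Proposition~\ref{propDecayHalfLinePt} with $r=4\ell/t\to 2/(1+\eta)\in(1,2)$.

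Second, I perform the saddle-point analysis on the integral representation~\eqref{kernelhalfflat}. The phase $f_0(w,\tilde s)$ from~\eqref{myf} has a double critical point at $w_c=1-r/2$ when $\tilde s=0$: direct differentiation shows $f_0'(w_c,0)=f_0''(w_c,0)=0$ while $f_0'''(w_c,0)\neq 0$. The steep-descent circles $\Gamma_0$ and $\Gamma_1$ already used in the proof of Proposition~\ref{propDecayHalfLinePt} are centered at $0$ and $1$ with radius $|w_c|=1-r/2$ and pass through $w_c$, so after the local rescaling $w=w_c+W t^{-1/3}/c_r$, $v=w_c+V t^{-1/3}/c_r$ (with a suitable cubic-coefficient constant $c_r$) the Taylor expansion produces Airy-type exponentials $e^{W^3/3-\xi_1 W}/e^{V^3/3-\xi_2 V}$ along the standard Airy contours. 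The algebraic prefactor $(2v-1)/((w+v-1)(w-v))$ is the key structural input: at $w=v=w_c$ one has $w+v-1=-r\neq 0$, so this factor remains bounded in the local variables and produces, in addition to the standard Airy pole at $w=v$, an extra smooth contribution. This is exactly the integral representation of a conjugate of the GOE Airy$_1$ kernel whose Fredholm determinant is $F_1$, as identified in the half-flat TASEP analyses of~\cite{BFS07,BFPS06}.

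Third, I pass to the limit inside the Fredholm expansion. The bound~\eqref{expdecay} shows that $\hat K_{\ell,t}(x_1,x_2)t^{1/3}2^{x_2-x_1}e^{-(s_2-s_1)/2}$ is dominated by $C e^{-(s_1+s_2)/2}$ uniformly for $s_1,s_2\geq 0$; the $2^{x_2-x_1}e^{-(s_2-s_1)/2}$ conjugation is an inert similarity inside the determinant. Hadamard's inequality then gives an $m!$-summable bound on the $m$-th term of the series, so dominated convergence reduces the problem to the pointwise limit of the conjugated kernel, which by the previous step equals (a conjugate of) the $F_1$ kernel. The constant $2$ inside $F_1(2s)$ is fixed by matching the prefactor $\tilde\sigma_\eta=2^{4/3}(1+\eta)^{1/3}$ with the canonical Airy$_1$/GOE normalization, in which the variance carries an additional $2^{1/3}$.

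The main obstacle is the local identification of the limiting kernel as a conjugate of the $F_1$ kernel: one must carefully track the cubic constant $c_r$, the sign coming from the fact that $\Pb(x_n(t)\geq\cdot)$ is a survival function in the TASEP picture and thus corresponds to a reflected probability on the LPP side, and the combinatorial conjugation factor $2^{x_2-x_1}$. Each of these operations is routine in isolation but their interplay is responsible for the precise constant $2$ appearing in $F_1(2s)$; the cleanest route is to invoke the analogous half-flat computation of~\cite{BFS07,BFPS06} and only re-derive the $\eta$-dependent prefactors through the critical-point analysis sketched above.
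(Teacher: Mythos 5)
Your overall skeleton (translate to TASEP via \eqref{eqLPPtasep}, write the probability as a Fredholm series of $\hat K_{n,t}$, use the bound of Proposition~\ref{propDecayHalfLinePt} for dominated convergence, then identify the pointwise limit of the kernel) is the same as the paper's. However, the step where you identify the limiting kernel contains a genuine error. You claim that $f_0(\cdot,0)$ from \eqref{myf} has a \emph{double} critical point at $w_c=1-r/2$ and that both contours can be localized there so that the full double integral \eqref{kernelhalfflat} produces Airy exponentials $e^{W^3/3-\xi_1W}/e^{V^3/3-\xi_2V}$ with a pole at $w=v$. This is false for $r\neq1$: solving $f_0'(w,0)=0$ gives $4w^2+(2r-6)w+(2-r)=0$, whose two roots are the \emph{distinct simple} critical points $w=1/2$ and $w=1-r/2$ (they merge only at $r=1$, i.e.\ $\eta=1$, which is excluded), and indeed $f_0''(1-r/2,0)=\tfrac{2(r-1)}{r(2-r)}\neq0$. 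Correspondingly, the paper's steep-descent contours $\Gamma_0$ (radius $1-r/2$, max at $1-r/2$) and $\Gamma_1$ (radius $1/2$, max at $1/2$) pass through \emph{different} points, and the double-integral part \eqref{eq23} is therefore exponentially small, see \eqref{eq28}; it contributes nothing to the limit. (Also, at $w=v=w_c$ one has $w+v-1=1-r$, not $-r$, though this slip is harmless.)

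The Airy$_1$ kernel $\Ai(s_1+s_2)$ does \emph{not} come from a joint saddle of the double integral --- a pole at $w=v$ with independent cubic exponents would produce an Airy$_2$-type structure. It comes entirely from the residue of \eqref{kernelhalfflat} at $w=1-v$, which couples the two variables and collapses the kernel to the single integral \eqref{residue1}; this is exactly the flat-initial-data TASEP kernel, whose convergence to $\Ai(s_1+s_2)$ is the content of Theorem~2.3 of~\cite{BFPS06} (or Proposition~5.1 of~\cite{BF07}). So the correct decomposition is: residue at $w+v=1$ $\to$ Airy$_1$ kernel; remaining double integral $\to$ $0$ exponentially fast by the two separated steep-descent maxima. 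Without this split your analysis would not produce the kernel $\Ai(s_1+s_2)$, and the final identification $\det(\Id-\mathcal{A}_1)_{L^2(s,\infty)}=F_1(2s)$ (from~\cite{FS05b}), which is where the factor $2$ in $F_1(2s)$ actually originates, would have nothing to attach to.
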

\begin{proof}[Proof of Proposition~\ref{propHalfFlatConvergence}]
As in the proof of Proposition~\ref{propHalfLineBound} we use the relation (\ref{eqLPPtasep}) between LPP and TASEP, in which we set $t:=\mu \ell+s\tilde \sigma_\eta \ell^{1/3}$ and denote by
\begin{equation}
\ell(t)=\frac{t}{\mu}-2s \frac{t^{1/3}}{\mu}+o(1)
\end{equation}
its inverse function. Thus,
\begin{equation}\label{eq32}
\Pb\left(L_{\mathcal{L}^{+}\to (\lfloor\eta\ell\rfloor,\lfloor\ell\rfloor)}\leq \mu\ell +s \tilde \sigma_\eta \ell^{1/3}\right) =\Pb\left(x_{\ell(t)}(t)\geq (\eta-1)\ell(t) \right).
\end{equation}
Let us denote
\begin{equation}
X^{\rm resc}_t=\frac{x_{\ell(t)}(t)-(-2 \ell(t)+t/2)}{-t^{1/3}}.
\end{equation}
Then,
\begin{equation}
\begin{aligned}
(\ref{eq32}) &=\Pb\left(X^{\rm resc}_t \leq \frac{(\eta+1)\ell(t)-t/2}{-t^{1/3}} \right)\\
&=\sum_{m=0}^{\infty}\frac{(-1)^{m}}{m!}\int_s^\infty \text{d}s_1\cdots\int_s^\infty \text{d}s_m  \det[t^{1/3}\hat{K}_{\ell(t),t}([x(s_i)],[x(s_j)])]_{1\leq i,j\leq m}
\end{aligned}
\end{equation}
where $x(s)=(-2\ell(t)+t/2)-s t^{1/3}$. The bound of Proposition~\ref{propDecayHalfLinePt} allows us to apply dominated convergence and take the $t\to\infty$ (i.e., $\ell\to\infty$) inside the Fredholm series. Thus it remains to show that the rescaled kernel $t^{1/3}\hat{K}_{\ell(t),t}([x(s_i)],[x(s_j)])$, or a conjugation of it, converges pointwise to the Airy$_1$ kernel $\mathcal{A}_1(s_i,s_j)=\Ai(s_i+s_j)$.

As in Proposition~\ref{propDecayHalfLinePt}, we consider the kernel conjugated by the factor $2^{x(s_j)-x(s_i)}$. We can divide the kernel $\hat K_{n,t}$ into the contribution coming from (a) the residue at $u=-v+1$ and (b) the rest. The contribution coming from the residue is (\ref{residue1}), that is, the kernel for the flat initial configuration (all even sites are initially occupied by a particle). It was shown in Theorem~2.3 of~\cite{BFPS06} (see also Proposition~5.1 of~\cite{BF07}) that the kernel converges pointwise to the Airy$_1$ kernel. The control of the contribution of (b) is already made in the proof of Proposition~\ref{propDecayHalfLinePt}. Indeed, the estimate (\ref{eq28}) implies that this contribution goes to $0$ as $t\to\infty$ for all fixed $s\in\R$. This ends the proof of Proposition~\ref{propHalfFlatConvergence}, since $\det(\Id-\mathcal{A}_1)_{L^2(s,\infty)}=F_1(2s)$ by~\cite{FS05b}.
\end{proof}

A simple corollary of Proposition~\ref{propHalfLineBound} adapted to the problem we are looking at is the following.
\begin{cor}\label{CorLplus}
Fix an $0< \eta < 1$, a $\beta\in (1/3,1]$ and define
\begin{equation}
\gamma \in [0,1-t^{\beta-1}],\quad \e  = t^{-\chi}\textrm{ with } \chi\in (0,2/3).
\end{equation}
Then there exists constants $C,\tilde{c}>0$ and $t_0 >0$ such that for all $t >t_0$
\begin{equation} \begin{aligned}\label{devestBB}
\Pb\left(L_{\mathcal{L}^{+}\to D_\gamma}> \left(\mu_{\gamma}+\frac{\e }{2}\right)t\right) \leq C\exp\left(-\tilde{c} \,t^{2/3-\chi}\right).
\end{aligned}\end{equation}
\end{cor}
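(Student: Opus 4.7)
The strategy is to reduce Corollary~\ref{CorLplus} directly to Proposition~\ref{propHalfLineBound} via a rescaling $\ell := \gamma t$. Under this identification, the endpoint $D_\gamma = (\lfloor \gamma \eta t\rfloor, \lfloor \gamma t\rfloor)$ becomes $(\lfloor \eta \ell \rfloor, \lfloor \ell \rfloor)$ (up to integer-part errors, which are immaterial for the exponential bound), and the law-of-large-numbers constant matches: since $\mu_\gamma = 2(1+\eta)\gamma$, one has $\mu_\gamma t = \mu \ell$ with $\mu = 2(1+\eta)$. The excess deviation $\e t /2$ becomes, after dividing by $\ell$, a rescaled tolerance $\e' \ell /2$ with $\e' := \e/\gamma = t^{-\chi}/\gamma$. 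Thus the event in \eqref{devestBB} coincides with the event controlled by Proposition~\ref{propHalfLineBound} applied at scale $\ell$ with parameter $\e'$.

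\textbf{Main regime.} Fix a small $\delta>0$. For all $\gamma$ with $\gamma \geq \e/(2(1-\eta)-\delta)$, i.e.\ $\gamma \geq c\, t^{-\chi}$ for a suitable constant $c>0$, we have $\e' \in [0,2(1-\eta)-\delta]$, so Proposition~\ref{propHalfLineBound} applies uniformly in $\gamma$ (with constants depending only on $\delta$ and $\eta$). It gives
\[
\Pb\!\left(L_{\mathcal{L}^+ \to D_\gamma} > (\mu_\gamma + \e/2)t\right) \,\leq\, C\exp\!\left(-\tilde{c}\, \e' \ell^{2/3}\right) \,=\, C\exp\!\left(-\tilde{c}\, \e\, \gamma^{-1/3}\, t^{2/3}\right).
\]
Since $\gamma \leq 1$, the factor $\gamma^{-1/3} \geq 1$ is favorable and the right-hand side is bounded by $C\exp(-\tilde{c}\, t^{2/3-\chi})$, which is the desired bound.

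\textbf{Small-$\gamma$ regime.} It remains to handle $\gamma \in [0, c\,t^{-\chi}]$, where $\ell = \gamma t \leq c\,t^{1-\chi}$ is too small for Proposition~\ref{propHalfLineBound} to apply. In this range $L_{\mathcal{L}^+ \to D_\gamma}$ is a maximum over at most $\gamma t + 1$ starting points of sums of exactly $(1+\eta)\gamma t$ independent exponential variables (all with rate at least $\min(1,\alpha)$), so its expectation is of order $\gamma t = O(t^{1-\chi})$, while the threshold $(\mu_\gamma+\e/2)t$ is at least $\e t/2 = \tfrac12 t^{1-\chi}$. A standard Chernoff estimate for sums of i.i.d.\ exponentials, combined with a union bound over the $O(\gamma t)$ starting points, yields an upper bound of the form $C\exp(-c'\, t^{1-\chi})$; since $1-\chi > 2/3 - \chi$, this is dominated by $C\exp(-\tilde{c}\, t^{2/3-\chi})$ for $t$ large.

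\textbf{Main obstacle.} The only subtle point is that Proposition~\ref{propHalfLineBound} is stated for $\e \in [0,2(1-\eta))$ with constants that degenerate as $\e$ approaches $2(1-\eta)$; after rescaling, this forces the bifurcation of the argument into the two regimes above, since the rescaled parameter $\e/\gamma$ can be arbitrarily large when $\gamma$ is of order $\e$ or smaller. Once one accepts this case split, the proof is essentially just a change of variables plus a crude direct estimate on the boundary.
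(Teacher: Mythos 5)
Your route is genuinely different from the paper's, and in one respect more careful. The paper's own proof of Corollary~\ref{CorLplus} is a one-line reduction: setting $\ell=\gamma t$ it rewrites the event as $\{L_{\mathcal{L}^{+}\to (\lfloor\eta\ell\rfloor,\lfloor\ell\rfloor)}>(\mu+\e/(2\gamma))\ell\}$, bounds $\e/(2\gamma)\geq \e/2$ using $\gamma\leq 1$, and invokes Proposition~\ref{propHalfLineBound} with deviation parameter $\e$; read literally this only yields $C\exp(-\tilde c\,\e\,(\gamma t)^{2/3})$, which degenerates as $\gamma\to 0$ (and Proposition~\ref{propHalfLineBound} in any case requires $\ell=\gamma t>\ell_0$). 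By instead keeping the full rescaled parameter $\e'=\e/\gamma$ you get $\e'\ell^{2/3}=\e\,\gamma^{-1/3}t^{2/3}\geq t^{2/3-\chi}$, which is precisely the uniformity in $\gamma$ that \eqref{devestBB} asserts, at the price of the case split forced by the constraint $\e'<2(1-\eta)$. The one point to check in your main regime is that the constants of Proposition~\ref{propHalfLineBound} can be taken uniform for $\e'\in[0,2(1-\eta)-\delta]$; this does follow from its proof, since the parameter $r=4/(\mu+\e'/2)$ then stays in a compact subset of $(1,2)$, but it is not part of the statement and should be said.

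One step in your small-$\gamma$ regime is wrong as written: $L_{\mathcal{L}^{+}\to D_\gamma}$ is not a maximum over $O(\gamma t)$ sums of i.i.d.\ exponentials. For each admissible starting point it is already a maximum over exponentially many up-right paths (of order $4^{(1+\eta)\gamma t}$ in total), so a union bound over starting points alone does not control it. The repair is standard: apply the Chernoff bound to a single path of length $N=(1+\eta)\gamma t+O(1)$ and union over all paths; since in this regime the threshold exceeds $\e t/2=\tfrac12 t^{1-\chi}$, which is at least a constant of order $1/c$ times $N$, optimizing the Chernoff parameter absorbs the $e^{O(N)}$ path entropy and still gives $C\exp(-c' t^{1-\chi})$, which dominates $\exp(-\tilde c\,t^{2/3-\chi})$ as you observe. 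With that correction your argument is complete, and it actually establishes the uniform-in-$\gamma$ bound more convincingly than the paper's own proof does.
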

\begin{proof}
It is a straightforward consequence of Proposition~\ref{propHalfLineBound}. Indeed, setting $\ell=\gamma t$,
\begin{equation}
\begin{aligned}
\Pb\left(L_{\mathcal{L}^{+}\to D_\gamma}> \left(\mu_{\gamma}+\e/2\right)t\right) &=
\Pb\left(L_{\mathcal{L}^{+}\to (\lfloor\eta\ell\rfloor,\lfloor\ell\rfloor)}> (\mu+\e/(2\gamma)) \ell\right) \\
&\leq \Pb\left(L_{\mathcal{L}^{+}\to (\lfloor\eta\ell\rfloor,\lfloor\ell\rfloor)}> (\mu+\e/2) \ell\right)
\end{aligned}
\end{equation}
since $\gamma\in [0,1]$. Then the result is the bound (\ref{devest}).
\end{proof}

\subsubsection{Half-line $\mathcal{L}^-$-to-point LPP results}
To obtain the results for the LPP from the half-line $\mathcal{L}^-$ to a point $(\eta \ell,\ell)$, we use the correspondence of LPP and TASEP, namely
\begin{equation}\label{eqLPPtasepB}
\Pb\left(L_{\mathcal{L}^{-}\to (m,n)}\leq t\right)=\Pb\left(x_n(t)+n\geq m\right),
\end{equation}
where $x_n(t)$ is the position at time $t$ of the TASEP particle with label $n$. The initial condition is
\begin{equation}\label{eq43}
x_n(0)=-n, n\geq 1,\quad x_n(0)=-2n, n\leq 0,
\end{equation}
and the jump rates $v_n$ of particles are given by
\begin{equation}\label{eq44}
v_n=1,n\geq 1,\quad v_n=\alpha, n\leq 0.
\end{equation}

\begin{prop}\label{PropDistrLminus}
Let us consider TASEP with jump rates (\ref{eq44}) and initial condition (\ref{eq43}). Denote $x_n(t)$ the position of particle number $n$ at time $t$.
We then have
\begin{equation}\label{eq49}
\Pb(x_n(t) > s)=\det(1-\chi_s \tilde{K}_{n,t} \chi_s)_{\ell^2(\Z)}
\end{equation}
where $\chi_s =\Id_{(-\infty,s]}$ and $\tilde{K}_{n,t}=K_{n,t}^{(1)}+ K_{n,t}^{(2)}$ with
\begin{equation}\label{kernelhalfflatLminus}
\begin{aligned}
  K_{n,t}^{(1)}(x_1,x_2)&=\frac{1}{(2\pi \I)^{2}} \oint_{\Gamma_{-1}}\frac{\mathrm{d}w}{w+1}\oint_{\Gamma_{0,\alpha-2-w}}\mathrm{d}z\frac{e^{t(w+1)}w^{n}}{(w+1)^{x_1+n}}\\
  &\hspace{11em}\times\frac{(z+1)^{x_2+n}}{e^{t(z+1)}z^{n}}\frac{1}{z-(\alpha-2-w)},\\
 K_{n,t}^{(2)}(x_1,x_2)&=\frac{1}{(2\pi \I)^{2}}\oint_{\Gamma_{0}}\mathrm{d}z
 \oint_{\Gamma_{-1}}\frac{\mathrm{d}w}{w+1}\frac{e^{t(w+1)}w^{n}}{(w+1)^{x_1+n}}\frac{(z+1)^{x_2+n}}{e^{t(z+1)}z^{n}}\frac{1}{w-z}.
\end{aligned}
\end{equation}
\end{prop}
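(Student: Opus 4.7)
The plan is to follow the Schütz--Borodin--Ferrari--Prähofer--Sasamoto (BFPS) scheme, adapting the derivation of the kernel (\ref{kernelhalfflat}) in Proposition~\ref{PropTASEPHalfFlat} to accommodate both the mixed initial data and the two different jump rates. Schütz's determinantal formula expresses the joint law of $(x_1(t),\dots,x_N(t))$ in TASEP with arbitrary rates as a determinant of single-particle Green's functions, each of which is encoded by a contour integral. For a particle with label $n\geq 1$ (rate $1$) the Green's function carries the factor $e^{t(w+1)}/(w+1)^{x+n}$, and for a particle with label $n\leq 0$ (rate $\alpha$) the analogous factor is the one obtained by time-rescaling, which is ultimately responsible for the appearance of the pole combination $z=\alpha-2-w$ in $K_{n,t}^{(1)}$.

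First I would write down Schütz's $N$-particle joint distribution and take the $N\to\infty$ limit, which is harmless because $x_n(t)$ depends on only finitely many of the particles to its right. Standard manipulations then recast $\Pb(x_n(t)>s)$ as a Fredholm determinant on $\ell^2(\Z)$ with a kernel of biorthogonal type $\tilde K_{n,t}(x_1,x_2)=\sum_{k=1}^{n}\Psi^n_k(x_1)\Phi^n_k(x_2)$. The functions $\Psi^n_k$ are read off directly from Schütz's matrix as simple contour integrals around $w=-1$ and carry the rate-$1$ exponential. The dual functions $\Phi^n_k$ must annihilate the initial data $(x_j(0))_j$, and because the initial configuration decomposes into two blocks---step ($x_n(0)=-n$ for $n\geq 1$, rate $1$) and period-$2$ flat ($x_n(0)=-2n$ for $n\leq 0$, rate $\alpha$)---one constructs $\Phi^n_k$ as a sum of two contour integrals, each tailored to one block. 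Performing the sum over $k$ produces a geometric series: the step block yields the factor $1/(w-z)$ appearing in $K_{n,t}^{(2)}$, while the flat/rate-$\alpha$ block yields $1/(z-(\alpha-2-w))$ in $K_{n,t}^{(1)}$, which also explains why the decomposition $\tilde K_{n,t}=K_{n,t}^{(1)}+K_{n,t}^{(2)}$ is natural rather than ad hoc.

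The main technical obstacle is the construction of the dual functions $\Phi^n_k$ and the verification of biorthogonality $\sum_{x\in\Z}\Psi^n_k(x)\Phi^n_j(x)=\delta_{k,j}$. This requires choosing the right two-piece ansatz, and then, after several contour deformations in which residues at $w=0$, $w=-1$, $z=0$, and $z=\alpha-2-w$ must be carefully tracked, reducing the resulting double integrals to Kronecker deltas by geometric resummations. Once biorthogonality is in place, interchanging the sum over $k$ with the contour integrations and collecting the two contributions gives exactly the two pieces of \eqref{kernelhalfflatLminus}; the final rewriting is routine bookkeeping of residues. No fundamentally new idea beyond the BFPS machinery is required---the novelty lies purely in handling the inhomogeneity in both rates and initial condition simultaneously.
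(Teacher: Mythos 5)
Your overall strategy coincides with the paper's: both rest on the biorthogonal (Sch\"utz/BFPS) representation of Proposition~\ref{PropFiniteM}, an explicit two-family construction of the dual functions $\Phi$ adapted to the step block and to the rate-$\alpha$ flat block, and a geometric resummation over each family that produces, respectively, the factor $1/(w-z)$ of $K^{(2)}_{n,t}$ and the factor $1/(z-(\alpha-2-w))$ of $K^{(1)}_{n,t}$. Your identification of the biorthogonalization as the main technical burden is also accurate.

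The gap is in the treatment of the infinite-particle limit, which you declare ``harmless''. The biorthogonal formula is only available for configurations with finitely many particles to the right of the tagged particle, whereas here the rate-$\alpha$ particles $x_n(0)=-2n$, $n\le 0$, extend to $+\infty$; accordingly the kernel $\sum_{k=1}^{n}\Psi^n_k\Phi^n_k$ you write down is not defined for the actual initial data. The paper therefore truncates to $M$ slow particles, and for finite $M$ the geometric sum over the rate-$\alpha$ family is finite: it yields not only $K^{(1)}_{n,t}$ but also a remainder kernel $K^{(0)}_{n,M,t}$ (see \eqref{eqKernelK0}) carrying a factor $q^{M}$ with $q<1$ on suitable contours. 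Passing to $M\to\infty$ then requires two separate facts: (i) $\Pb^{(M)}(x_{n+M}^{M}(t)>s)\to\Pb(x_n(t)>s)$, which is the finite-propagation argument you invoke, and (ii) convergence of the Fredholm determinants themselves, which demands uniform-in-$M$ bounds on all three kernel pieces (the estimates \eqref{kernsbounds}) so that dominated convergence can be applied term by term in the Fredholm expansion and the $K^{(0)}$ contribution discarded. Point (ii) is a genuine step, not bookkeeping: convergence of the probabilities alone does not transfer the determinantal formula to the limit, and summing the infinite geometric series directly, as your phrasing suggests, presupposes the validity of the formula for the untruncated system, which is exactly what needs to be proved.
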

The proof of this proposition is not so short and it is given in Section~\ref{sectDistrLminus} below.

Next we show the point-wise convergence and get bounds for the properly rescaled kernel. Consider the scaling\footnote{Below we will no longer   write explicitly the integer values.}
\begin{equation}\label{ScalingLminus}
n=\left[\frac{\kappa (2-\alpha)}{4}t\right] \quad x_{i}=\left[\frac{\alpha-\kappa}{2}t-s_{i}t^{1/3}\right],
\end{equation}
for $\alpha\in [0,1)$ and $\kappa\in [0,1)$. Then, we define the rescaled and conjugated kernels by
\begin{equation}
K^{(i)}_{t,{\rm resc}}(s_1,s_2)=t^{1/3}(\alpha/2)^{x_1-x_2}K^{(i)}_{n,t}(x_1,x_2),\quad i=1,2,
\end{equation}
with $x_i$ and $n$ as in (\ref{ScalingLminus}). Before stating the results, let us manipulate the kernel slightly. Denote by $\tilde s_i=s_i t^{-2/3}$. In particular, we can assume \mbox{$0\leq \tilde s_1\leq \alpha(2-\kappa)/4$}, since otherwise the kernel is identically equal to zero. Because of that, the Fredholm determinant in (\ref{eq49}) is identically equal to zero for $s>\alpha(2-\kappa)t^{2/3}/4$. Therefore, below we can restrict our estimates to $s_1,s_2\leq \alpha(2-\kappa)t^{2/3}/4$ only.

Let us introduce the function
\begin{equation}\label{eq52}
f_0(w,\tilde s)=w+1+\frac{\kappa(2-\alpha)}{4}\ln(w)- \left(\frac{\alpha(2-\kappa)}{4}-\tilde s\right)\ln(2(w+1)/\alpha).
\end{equation}
we have
\begin{equation}
K^{(2)}_{t,{\rm resc}}(s_1,s_2) = \frac{t^{1/3}}{(2\pi \I)^{2}} \oint_{\Gamma_{0}}\mathrm{d}z \oint_{\Gamma_{-1}}\frac{\mathrm{d}w}{w+1}
\frac{e^{t f_0(w,\tilde s_1)}}{e^{t f_0(z,\tilde s_2)}}
\frac{1}{w-z}
\end{equation}
and, separating the contribution of the simple pole at $z=\alpha-2-w$ in $K^{(1)}_{n,t}$,
\begin{equation}
K^{(1)}_{t,{\rm resc}}(s_1,s_2) = K^{(1,a)}_{t,{\rm resc}}(s_1,s_2)+K^{(1,b)}_{t,{\rm resc}}(s_1,s_2)
\end{equation}
where
\begin{equation}\label{eq54}
\begin{aligned}
K^{(1,a)}_{t,{\rm resc}}(s_1,s_2) &= \frac{t^{1/3}}{(2\pi \I)^{2}} \oint_{\Gamma_{-1,\alpha-2}}\frac{\mathrm{d}w}{w+1}\oint_{\Gamma_{0}}\mathrm{d}z
\frac{e^{t f_0(w,\tilde s_1)}}{e^{t f_0(z,\tilde s_2)}}
\frac{1}{z-(\alpha-2-w)},\\
 K^{(1,b)}_{t,{\rm resc}}(s_1,s_2)&= \frac{t^{1/3}}{2\pi \I} \oint_{\Gamma_{-1,\alpha-2}}\frac{\mathrm{d}w}{w+1} e^{t [f_0(w,\tilde s_1)-f_0(\alpha-2-w,\tilde s_2)]}.
\end{aligned}
\end{equation}

\begin{figure}
\begin{center}
\psfrag{w}[c][b]{$w\in \Gamma_{-1,\alpha-2}$}
\psfrag{z}[c][b]{$z\in\Gamma_0$}
\psfrag{0}[c][b]{$0$}
\psfrag{m1}[c][b]{$-1$}
\psfrag{am2}[c][b]{$\alpha-2$}
\includegraphics[height=4cm]{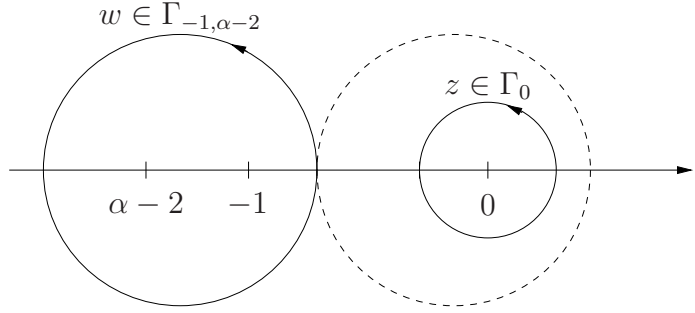}
\caption{Illustration of the paths used in the kernel $K_{t,\mathrm{resc}}^{(1,a)}$. The dashed line is the image of $\alpha-2-w$.}
\label{FigPaths1}
\end{center}
\end{figure}

\begin{remark}
$\alpha-2$ is not a pole for the double integral, but the reason why we have chosen the path for $w$ to encircle also $\alpha-2$ is the following. The function $-f_0(\alpha-2-w,\tilde s_2)$ has a pole at $w=\alpha-2$. Therefore, if, before computing the residue at $z=\alpha-2-w$, we choose the path $w$ so that it goes around $\alpha-2$ too, then, its image by $\alpha-2-w$ goes around the origin too, see Figure~\ref{FigPaths1}. This means that, the path for $z$ in the first term of (\ref{eq54}) will have to be chosen to stay inside the image of $\alpha-2-w$. We could have also chosen to have $\alpha-2$ outside the path for $w$, but this is not adequate to get the bounds on the kernel.
\end{remark}

\begin{remark}
For large $|w|$, the leading term in $f_0(w,\tilde s)$ is given simply the linear term $w$. So, we can as well consider (open) contours $\Gamma_{-1,\alpha-2}$ such that the real part of $w$ goes to $-\infty$, and similarly $\Gamma_0$ such that the real part of $z$ goes to $\infty$, see Figure~\ref{FigPaths2}.
\end{remark}
\begin{figure}
\begin{center}
\psfrag{w}[l][b]{$w\in \Gamma_{-1,\alpha-2}$}
\psfrag{w2}[r][b]{$\alpha-2-w$}
\psfrag{z}[l][b]{$z\in\Gamma_0$}
\psfrag{0}[c][b]{$0$}
\psfrag{m1}[c][b]{$-1$}
\psfrag{am2}[c][b]{$\alpha-2$}
\psfrag{k}[c][b]{$-\frac{\kappa}{2}$}
\psfrag{wc}[c][b]{$-1+\frac{\alpha}{2}$}
\includegraphics[height=4cm]{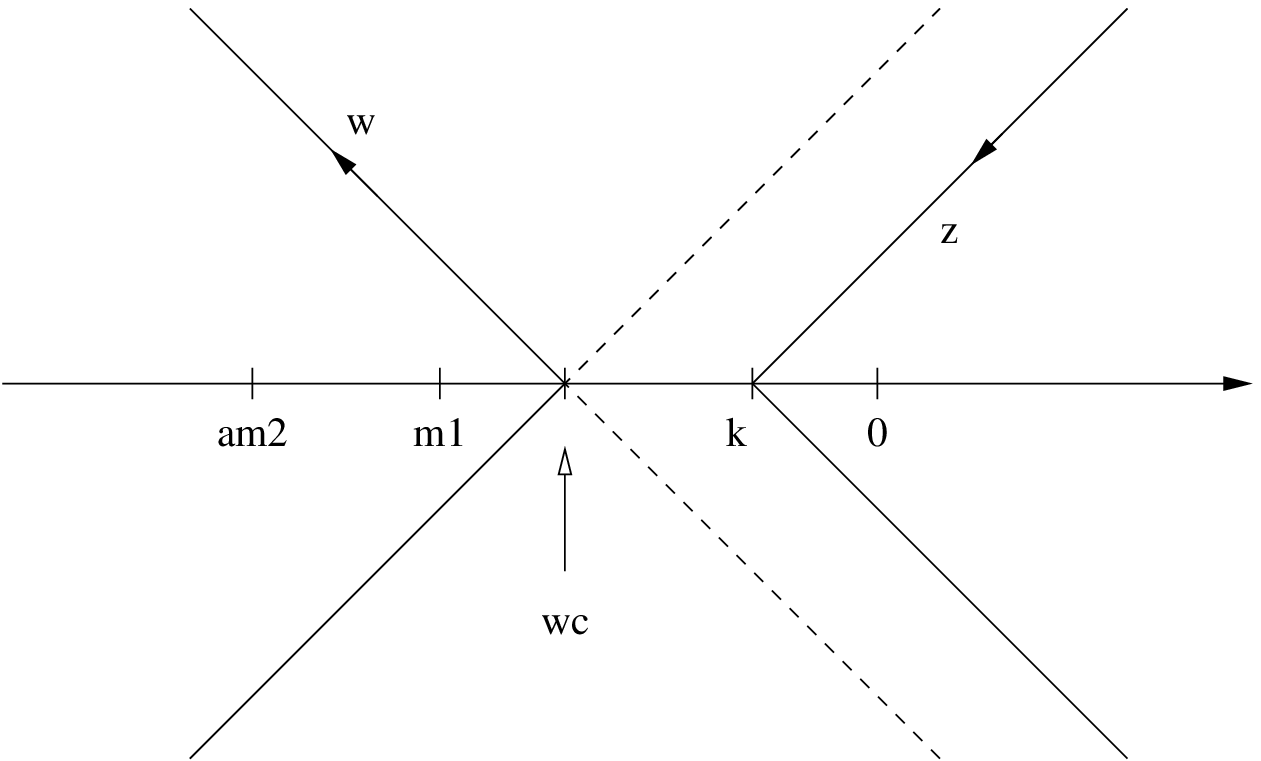}
\caption{Paths used for the asymptotic analysis in Proposition~\ref{propDecayHalfLinePt2a} and Proposition~\ref{propConvHalfLinePt2}. The dashed line is the image of $\alpha-2-w$.}
\label{FigPaths2}
\end{center}
\end{figure}

\begin{prop}[Bounds for $K^{(1,a)}_{t,{\rm resc}}$ and $K^{(2)}_{t,{\rm resc}}$]\label{propDecayHalfLinePt2a}
For any $\ell_0>0$, there exists a $t_0$ such that for $t>t_0$ and $s_{1},s_{2} \in [-\ell_0,\frac{\alpha(2-\kappa)}{4} t^{2/3}]$,
\begin{equation}
\begin{aligned}
|K^{(1,a)}_{t,{\rm resc}}(s_1,s_2)| &\leq e^{-t F(\alpha,\kappa)/2},\\
|K^{(2)}_{t,{\rm resc}}(s_1,s_2)| &\leq e^{-t F(\alpha,\kappa)/2},
\end{aligned}
\end{equation}
where
\begin{equation}
F(\alpha,\kappa)=-\frac{\alpha+\kappa-2}{2}-\frac{\kappa (2-\alpha)}{4}\ln\left(\frac{2-\alpha}{\kappa}\right)+\frac{\alpha(2-\kappa)}{4}\ln\left(\frac{2-\kappa}{\alpha}\right)>0
\end{equation}
for all $\alpha,\kappa\in [0,2)$ and $\kappa\in[0,2-\alpha)$.
\end{prop}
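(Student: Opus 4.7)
The proof is a steep-descent asymptotic analysis of the double contour integrals defining $K^{(1,a)}_{t,\mathrm{resc}}$ and $K^{(2)}_{t,\mathrm{resc}}$, in the same spirit as the proof of Proposition~\ref{propDecayHalfLinePt}. I would first locate the saddles of the phase $f_0(\cdot,0)$ from (\ref{eq52}): the equation $\partial_w f_0(w,0)=0$, after clearing denominators, reduces to
\begin{equation}
w^2 + \tfrac{2+\kappa-\alpha}{2}\,w + \tfrac{\kappa(2-\alpha)}{4}=0,
\end{equation}
whose discriminant factors as $(2-\alpha-\kappa)^2/4$, yielding two real saddles
\begin{equation}
w_c^- = -1+\tfrac{\alpha}{2},\qquad w_c^+ = -\tfrac{\kappa}{2},
\end{equation}
both in $(-1,0)$ and coalescing exactly on the excluded boundary $\kappa=2-\alpha$ (where $F$ must vanish). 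A direct computation gives $f_0''(w_c^-,0)=\tfrac{2(2-\alpha-\kappa)}{\alpha(2-\alpha)}>0$ and $f_0''(w_c^+,0)=-\tfrac{2(2-\alpha-\kappa)}{\kappa(2-\kappa)}<0$, so in both cases the steep-descent direction through the saddle is vertical (imaginary).

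The next step is to deform the original contours $\Gamma_{-1,\alpha-2}$ and $\Gamma_0$ to the infinite contours depicted in Figure~\ref{FigPaths2}: a new $w$-contour that crosses $w_c^-$ vertically and extends with $\Re w\to-\infty$, still enclosing $-1$ and $\alpha-2$; and a new $z$-contour that crosses $w_c^+$ vertically and extends with $\Re z\to+\infty$, still enclosing $0$. Since $f_0(w)\sim w$ at infinity, the arcs closing the contours at infinity do not contribute; for $K^{(1,a)}$, the new $z$-contour must moreover be chosen to stay clear of the image curve $\alpha-2-w$ (Figure~\ref{FigPaths1}), which is possible as long as the saddles do not coalesce, i.e.\ $\kappa<2-\alpha$. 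The steep-descent property along the full contours, not just near the saddles, is verified by parametrizing each contour and checking the sign of the angular derivative of $\Re f_0$, exactly as is done for (\ref{whenzerooo}) and (\ref{scderiv}). Bounding each integrand by its maximum on the contours, together with the fact that $2(w_c^-+1)/\alpha=1$ (so the $\ln(2(w+1)/\alpha)$ contribution drops out at $w_c^-$), a direct algebraic simplification identifies $f_0(w_c^+,0)-f_0(w_c^-,0)$ with $F(\alpha,\kappa)$; the extra factors of $t^{1/3}$, contour length, and the factor $1/[(w+1)(w-z)]$ (resp.\ $1/[(w+1)(z-(\alpha-2-w))]$) are all polynomial in $t$ and hence absorbed into the exponent when one writes $e^{-tF}\cdot\mathrm{poly}(t)\leq e^{-tF/2}$ for $t$ large.

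The main obstacle is the uniformity in $s_1,s_2\in[-\ell_0,\alpha(2-\kappa)t^{2/3}/4]$: the $\tilde s_i$-dependent piece of $f_0$, namely $\tilde s_i\ln(2(w+1)/\alpha)$, vanishes at $w_c^-$ but not elsewhere, and the true saddle itself shifts with $\tilde s_i$ (now solving $w^2+[\tfrac{2+\kappa-\alpha}{2}+\tilde s_1]w+\tfrac{\kappa(2-\alpha)}{4}=0$, and symmetrically on the $z$-side). The clean fix is to redo the deformation with a contour through the shifted saddle $w_c^\pm(\tilde s_i)$ for each pair $(\tilde s_1,\tilde s_2)$; the envelope theorem then yields
\begin{equation}
\tfrac{d}{d\tilde s_1}f_0\bigl(w_c^-(\tilde s_1),\tilde s_1\bigr)=\ln\bigl(2(w_c^-(\tilde s_1)+1)/\alpha\bigr)\leq 0,
\end{equation}
since $w_c^-(\tilde s_1)$ stays in $[-1,-1+\alpha/2]$, where the argument of the $\ln$ is in $[0,1]$; symmetrically $\tfrac{d}{d\tilde s_2}f_0(w_c^+(\tilde s_2),\tilde s_2)\geq 0$. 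Hence the difference $f_0(w_c^-(\tilde s_1),\tilde s_1)-f_0(w_c^+(\tilde s_2),\tilde s_2)$ is maximized at $\tilde s_1=\tilde s_2=0$, so the $\tilde s=0$ bound (with exponent $F(\alpha,\kappa)$) is the worst case and the stated uniform estimate follows. The positivity $F(\alpha,\kappa)>0$ on the interior of the admissible region is a final elementary check: $F$ vanishes on the coalescence line $\kappa=2-\alpha$ and an explicit computation of $\nabla F$, or monotonicity in one variable at a time, shows $F$ strictly increases as one moves off this line into the interior.
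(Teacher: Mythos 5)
Your overall strategy coincides with the paper's: steep descent through the two real critical points $w_c^-=-1+\alpha/2$ and $w_c^+=-\kappa/2$ of $f_0(\cdot,0)$, identification of the exponent at the critical points with $-F(\alpha,\kappa)$, and absorption of the polynomial prefactors into $e^{-tF/2}$. Your saddle computations (the quadratic, its factored discriminant, the signs of $f_0''$) are correct, and presenting the contour choice as a saddle-point analysis is a legitimate way to motivate the paper's contours (\ref{eq56}). The place where you depart from the paper is the uniformity in $s_1,s_2$, and that is where there is a genuine gap.

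The problem is with $K^{(1,a)}_{t,\rm resc}$. You propose to move the $w$-contour through the shifted saddle $w_c^-(\tilde s_1)$, which (as you correctly note) travels from $-1+\alpha/2$ down towards $-1$ as $\tilde s_1$ increases. But the $z$-integrand of $K^{(1,a)}$ in (\ref{eq54}) has a pole at $z=\alpha-2-w$, so the image of the shifted $w$-contour has its apex at $\alpha-2-w_c^-(\tilde s_1)$, which travels up towards $\alpha-1$. Since $w_c^-(\tilde s)+w_c^+(\tilde s)=-\tfrac{2+\kappa-\alpha}{2}-\tilde s$, one gets $\alpha-2-w_c^-(\tilde s)-w_c^+(\tilde s)=\tfrac{\alpha+\kappa-2}{2}+\tilde s$, which is positive once $\tilde s>\tfrac{2-\alpha-\kappa}{2}$; and $\tfrac{\alpha(2-\kappa)}{4}>\tfrac{2-\alpha-\kappa}{2}$ exactly when $\kappa>4(1-\alpha)/(2-\alpha)$, a nonempty subinterval of $[0,2-\alpha)$ for every $\alpha\in(0,2)$ (all of it for $\alpha\geq1$). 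So for $\tilde s_1$ near the top of its range the image curve overtakes the $z$-contour, the deformation crosses the pole $z=\alpha-2-w$, and you pick up a residue term of exactly the $K^{(1,b)}$ type, which is only $\Or(e^{-c(s_1+s_2)})=\Or(e^{-ct^{2/3}})$ and not $\Or(e^{-tF/2})$; the deformed integral then no longer represents $K^{(1,a)}$ and the claimed bound fails. Your envelope computation is correct in itself (and for $K^{(2)}$ the two shifted saddles move apart, so no collision occurs there), but the shift is both problematic and unnecessary: the paper keeps the contours fixed through $-1+\alpha/2$ and $-\kappa/2$ and verifies the steep-descent inequalities \emph{uniformly} in $\tilde s\in[0,\alpha(2-\kappa)/4]$ directly from the explicit derivative formulas; the $\tilde s_1$-dependence then contributes nothing at the $w$-apex (where $\ln(2(w+1)/\alpha)=0$), and the $\tilde s_2$-dependence contributes only a factor $e^{-s_2\ln((2-\kappa)/\alpha)t^{1/3}}\leq e^{c\ell_0 t^{1/3}}$, which is swallowed by $e^{-tF/2}$. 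You should either adopt that fixed-contour route or cap the contour shift so that the pole is never crossed. A secondary point: the global steep-descent property on your contours is asserted by analogy with (\ref{whenzerooo}) and (\ref{scderiv}) but not carried out; since your contours are not those, this verification still has to be done (the $45^\circ$ wedges of (\ref{eq56}) are a choice for which it works).
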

\begin{proof}

To get the result we need to choose the paths for $z,w$ so that they will be steep descent. Let us consider the following paths:
\begin{equation}\label{eq56}
\begin{aligned}
\Gamma_{-1,\alpha-2}&=\left\{w=-1+\frac{\alpha}{2}+\I y-|y|,y\in\R\right\},\\
\Gamma_0&=\left\{z=-\frac{\kappa}{2}+\I y+|y|,y\in\R\right\}.
\end{aligned}
\end{equation}
With this choice, $\Gamma_0$ stays on the right of $\alpha-2-\Gamma_{-1,\alpha-2}$ since we assumed $\kappa<2-\alpha$, see Figure~\ref{FigPaths2}.
Now we verify the steep descent property of the paths. By symmetry it is enough to consider the portion of the paths in the upper-half plane.

\textit{Path $\Gamma_{-1,\alpha-2}$:} Consider $w=-1+\frac{\alpha}{2}+\I y-y$ for $y\geq 0$, $\tilde s\in [0,\alpha(2-\kappa)/4]$. Then,
\begin{equation}
\Re(f_0(w,\tilde s))=\mathrm{const} -y+\frac{\kappa(2-\alpha)}{8}\ln(|w|^2)- \frac12\left(\frac{\alpha(2-\kappa)}{4}-\tilde s\right)\ln(|w+1|^2),
\end{equation}
with $|w|^2=\frac{(2-\alpha)^2}{4}+(2-\alpha)y+2y^2$ and $|w+1|^2=\frac{\alpha^2}{4}-\alpha y+2y^2$. Thus,
\begin{equation}\label{eq61b}
\frac{\partial \Re(f_0(w,\tilde s))}{\partial y} = -1+
\frac{\kappa(2-\alpha)}{8|w|^2}\left(4y+2-\alpha\right)- \left(\frac{\alpha(2-\kappa)}{4}-\tilde s\right)\frac{4y-\alpha}{2|w+1|^2}.
\end{equation}
Now we consider two cases:\\[0.5em]
\textit{Case a: $0< y\leq \alpha/4$.} In this case,
\begin{equation}
\begin{aligned}
(\ref{eq61b})&\leq -1+
\frac{\kappa(2-\alpha)}{8|w|^2}\left(4y+2-\alpha\right)
- \frac{\alpha(2-\kappa)}{8}\frac{4y-\alpha}{|w+1|^2}\\
&=-y^2\frac{8 y^2+(4y+1-\alpha) (2-\alpha-\kappa)+2-\alpha}{2|w|^2|w+1|^2}<0
\end{aligned}
\end{equation}
for all $0<\alpha<2$ and $0\leq \kappa < 2-\alpha$.\\[0.5em]
\textit{Case b: $y\geq \alpha/4$.} In this case,
\begin{equation}
\begin{aligned}
(\ref{eq61b})&\leq -1+
\frac{\kappa(2-\alpha)}{8|w|^2}\left(4y+2-\alpha\right)\\
&=-\frac{(2-\kappa)\left(\frac{(2-\alpha)^2}{4}+(2-\alpha)y\right)+4y^2}{2|w|^2}<0
\end{aligned}
\end{equation}
for all $\kappa<2$.

Further, as $y\to\infty$, $\frac{\partial \Re(f_0(w,\tilde s))}{\partial y}\to -1$, i.e., $\Re(f_0(w,\tilde s))\simeq -y$. This implies that the estimates of the integrand in $w$ will have an exponential decay as $e^{-yt}$. Thus our chosen path $\Gamma_{-1,\alpha-2}$ is steep descent.

\textit{Path $\Gamma_0$:} Consider $z=-\frac{\kappa}{2}+\I y + y$ for $y\geq 0$. Then
\begin{equation}
\Re(-f_0(z,\tilde s))=\mathrm{const} -y-\frac{\kappa(2-\alpha)}{8}\ln(|z|^2)+ \frac12\left(\frac{\alpha(2-\kappa)}{4}-\tilde s\right)\ln(|z+1|^2),
\end{equation}
with $|z|^2=\frac{\kappa^2}{4}-\kappa y+2y^2$ and $|z+1|^2=\frac{(2-\kappa)^2}{4}+(2-\kappa) y+2y^2$. Thus, using $\tilde s\geq 0$,
\begin{equation}\label{eq63}
\begin{aligned}
\frac{\partial \Re(-f_0(z,\tilde s))}{\partial y} &= -1-
\frac{\kappa(2-\alpha)}{8|z|^2}\left(4y-\kappa\right)+ \left(\frac{\alpha(2-\kappa)}{4}-\tilde s\right)\frac{4y+2-\kappa}{2(|z+1|^2)}\\
&\leq-1-
\frac{\kappa(2-\alpha)}{8|z|^2}\left(4y-\kappa\right)+ \frac{\alpha(2-\kappa)}{8}\frac{4y+2-\kappa}{(|z+1|^2)}\\
&=-y^2\frac{8 y^2+(4 y+2-\kappa) (2-\alpha -\kappa)+\alpha\kappa}{2 |z|^2 |z+1|^2}<0
\end{aligned}
\end{equation}
for all $\kappa>0$, $y>0$, since we assumes $0<\alpha<2$ and $0\leq \kappa<2-\alpha<2$.

By these two results on the steep descent property, the exponential decay for large $y$, and the fact that $|z-w|$ remains bounded away from $0$, we get the bound
\begin{equation}
\begin{aligned}
\left|K^{(2)}_{t,{\rm resc}}(s_1,s_2)\right|&\leq \mathrm{const}\, t^{1/3} e^{t \Re(f_0((\alpha-2)/2,\tilde s_1))-t \Re(f_0(-\kappa/2,\tilde s_2))}\\
&= \mathrm{const}\, t^{1/3} e^{t [\frac{\alpha+\kappa-2}{2}+\frac{\kappa (2-\alpha)}{4}\ln(\frac{2-\alpha}{\kappa})-\frac{\alpha(2-\kappa)}{4}\ln(\frac{2-\kappa}{\alpha})]}e^{-s_2 \ln((2-\kappa)/\alpha) t^{1/3}}.
\end{aligned}
\end{equation}
Since $(2-\kappa)/\alpha>1$ and $s_2\geq -\ell_0$, the last term is at worse $e^{c \ell_0 t^{1/3}}$ with $c=\ln((2-\kappa)/\alpha)>0$. Further one can verify that  $F(\alpha,\kappa)>0$ for all \mbox{$\alpha\in [0,2)$} and $\kappa\in [0,2-\alpha)$. Thus $\mathrm{const}\, t^{1/3} e^{-t F(\alpha,\kappa)}e^{c \ell_0 t^{1/3}}\leq e^{-t F(\alpha,\kappa)/2}$ for $t$ large enough. We have obtained that
\begin{equation}\label{eq64a}
\left|K^{(2)}_{t,{\rm resc}}(s_1,s_2)\right|\leq e^{-t F(\alpha,\kappa)/2}
\end{equation}
for $t$ large enough.

By exactly the same argument, but using that $|z-(\alpha-2-w)|$ remains bounded away from zero, we can bound $K^{(1,a)}_{t,{\rm resc}}$, namely
\begin{equation}\label{eq64b}
\left|K^{(1,a)}_{t,{\rm resc}}(s_1,s_2)\right|\leq e^{-t F(\alpha,\kappa)/2}.
\end{equation}
\end{proof}

\begin{prop}[Convergence for $K^{(1,b)}_{t,{\rm resc}}$]\label{propConvHalfLinePt2}
For any $s_1,s_2$ in a bounded set,
\begin{equation}
\lim_{t\to\infty} K^{(1,b)}_{t,{\rm resc}}(s_1,s_2) = \sigma\Ai(\sigma(s_1+s_2))
\end{equation}
with $\sigma=\frac{(2-\alpha)^{2/3}}{(\alpha\left((2-\alpha)^2-2 (1-\alpha)\kappa\right))^{1/3}}$.
\end{prop}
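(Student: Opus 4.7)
The plan is a classical steepest descent analysis of the single-contour integral $K^{(1,b)}_{t,\mathrm{resc}}$, identifying a triple saddle at $w_c := (\alpha-2)/2$ that produces an Airy function in the limit.

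The first step is to exploit the involution $w \mapsto \alpha-2-w$, which fixes $w_c$ (note $w_c + 1 = \alpha/2 = \alpha - 1 - w_c$). Writing $G(w) := f_0(w,0) - f_0(\alpha-2-w, 0)$, the exponent in $K^{(1,b)}_{t,\mathrm{resc}}$ decomposes as $t G(w) + t\tilde s_1 \ln(2(w+1)/\alpha) - t \tilde s_2 \ln(2(\alpha-1-w)/\alpha)$. Since the involution sends $G$ to $-G$, the function $G$ is odd around $w_c$, which forces $G(w_c) = G''(w_c) = 0$ automatically. A direct evaluation of (\ref{eq52}) yields $f_0'(w_c, 0) = 1 - \kappa/2 - (2-\kappa)/2 = 0$, hence also $G'(w_c) = 2 f_0'(w_c, 0) = 0$. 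Thus $w_c$ is a triple critical point, and differentiating (\ref{eq52}) thrice gives, after simplification,
\begin{equation*}
G'''(w_c) = 2 f_0'''(w_c, 0) = -\frac{16\,[(2-\alpha)^2 - 2\kappa(1-\alpha)]}{\alpha^2(2-\alpha)^2} =: -6c,\quad c > 0.
\end{equation*}

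Next I would rescale $w = w_c + u\,t^{-1/3}$. The cubic part of the exponent satisfies $t G(w) \to -c u^3$ pointwise, and using $\tilde s_i = s_i t^{-2/3}$ together with $\ln(1 \pm 2u t^{-1/3}/\alpha) = \pm 2u t^{-1/3}/\alpha + O(t^{-2/3})$, the $\tilde s_i$-terms contribute $\tfrac{2(s_1+s_2)}{\alpha}\,u$; the prefactor $\tfrac{t^{1/3}}{2\pi\I}\tfrac{\dx w}{w+1}$ converges to $\tfrac{\dx u}{\pi \I \alpha}$. The contour (\ref{eq56}) is steep descent for $\Re f_0(\cdot,\tilde s_1)$ (this is precisely what is established in the proof of Proposition~\ref{propDecayHalfLinePt2a}), and by symmetry its image under $w \mapsto \alpha-2-w$ is steep descent for $\Re(-f_0(\cdot,\tilde s_2))$. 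Hence outside a $t^{-1/3+\epsilon}$-neighborhood of $w_c$ the integrand decays exponentially in $t$, uniformly in $s_1, s_2$ on compact sets, providing the dominating function. Dominated convergence gives
\begin{equation*}
\lim_{t\to\infty} K^{(1,b)}_{t,\mathrm{resc}}(s_1, s_2) = \frac{1}{\pi \I \alpha} \int_{\tilde C} e^{-c u^3 + \frac{2(s_1+s_2)}{\alpha} u}\, \dx u,
\end{equation*}
with $\tilde C$ the limit contour through $0$ having tangent directions $\pm 3\pi/4$, which lie in descent sectors of $-u^3$.

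Finally I would deform $\tilde C$ by Cauchy to a standard Airy contour (both $\pm 3\pi/4$ and $\pm 2\pi/3$ are descent directions) and substitute $u = v/(3c)^{1/3}$ to recognize $\int_{\tilde C} \cdots\, \dx u = \tfrac{2\pi\I}{(3c)^{1/3}} \Ai\!\bigl(\tfrac{2(s_1+s_2)}{\alpha(3c)^{1/3}}\bigr)$. This yields $\sigma \Ai(\sigma(s_1+s_2))$ with $\sigma = 2/(\alpha(3c)^{1/3})$, which upon inserting the value of $c$ simplifies to the claimed $\sigma = (2-\alpha)^{2/3}/(\alpha[(2-\alpha)^2 - 2(1-\alpha)\kappa])^{1/3}$. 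The principal conceptual point (and main obstacle) is that $w_c$ is a \emph{triple} critical point rather than a double one: this is what produces an Airy function instead of a Gaussian, and it relies on the antisymmetry of $G$ together with the algebraic coincidence $f_0'(w_c, 0) = 0$. The remaining technical work, namely the $t$-uniform tail estimate that legitimizes the dominated-convergence step, is essentially a recycling of the steep descent bounds already proved in Proposition~\ref{propDecayHalfLinePt2a}.
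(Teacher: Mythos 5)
Your argument is correct and is essentially the paper's own proof: same contour $\Gamma_{-1,\alpha-2}$, same critical point $w_c=-1+\alpha/2$ at which the exponent has a vanishing first and second derivative, same cubic Taylor expansion, localization by steep descent, and Airy-integral identification, with your constant $c$ reproducing the paper's $Q(\alpha,\kappa)$ so that indeed $\sigma=Q(\alpha,\kappa)^{-1/3}$. The one point you gloss over that the paper treats explicitly is the steep-descent property of the contour for $w\mapsto -f_0(\alpha-2-w,\tilde s_2)$: the image contour $\alpha-2-\Gamma_{-1,\alpha-2}$ opens in the opposite direction from $\Gamma_{-1,\alpha-2}$, so this does not follow literally ``by symmetry'' from the bound for $f_0(\cdot,\tilde s_1)$ but requires the short separate derivative computation the paper carries out.
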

\begin{proof}
We have
\begin{equation}
K^{(1,b)}_{t,{\rm resc}}(s_1,s_2) = \frac{t^{1/3}}{2\pi \I} \oint_{\Gamma_{-1,\alpha-2}}\frac{\mathrm{d}w}{w+1} e^{t [f_0(w,0)-f_0(\alpha-2-w,0)]} e^{t^{1/3} [s_1 f_2(w)-s_2 f_2(2-\alpha-w)]}
\end{equation}
with $f_2(w)=\ln(2(w+1)/\alpha)$.

First we show that $\Gamma_{-1,\alpha-2}$ as in (\ref{eq56}) is steep descent for
\begin{equation}\label{eq72}
g_0(w,\tilde s_1,\tilde s_2):=f_0(w,\tilde s_1)-f_0(\alpha-2-w,\tilde s_2),
\end{equation}
for $\tilde s_1,\tilde s_2\in [0,\alpha(2-\kappa)/4]$. It is a little bit more than what we need for this proposition, but we will use it in Proposition~\ref{propDecayHalfLinePt2b} again.
From the proof of Proposition~\ref{propDecayHalfLinePt2a} we already know that the path is steep descent for $f_0(w,\tilde s_1)$. Now consider $z=\alpha-2-w=-1+\frac{\alpha}{2}+\I y+y$, $y\geq 0$. Then, $|z|^2=\frac{(2-\alpha)^2}{4}-(2-\alpha)y+2y^2$ and $|z+1|^2=\frac{\alpha^2}{4}+\alpha y+2 y^2$. The same computation as in (\ref{eq63}) given, for $\tilde s\geq 0$,
\begin{equation}
\begin{aligned}
\frac{\partial \Re(-f_0(z,\tilde s))}{\partial y} &\leq-1-
\frac{\kappa(2-\alpha)}{8|z|^2}\left(4y-1+\alpha/2\right)+ \frac{\alpha(2-\kappa)}{8}\frac{4y+1+\alpha/2}{(|z+1|^2)}\\
&=-y^2\frac{8 y^2+(4y+1-\alpha) (2-\alpha-\kappa)+2-\alpha}{2 |z|^2 |z+1|^2}<0
\end{aligned}
\end{equation}
for all $y>0$ under our assumptions $0<\alpha<2$ and $0\leq \kappa < 2-\alpha$. Moreover, as $y\to\infty$, $\Re(-f_0(z,\tilde s))\simeq -y$. Putting together the two results, we have that the chosen path $\Gamma_{-1,\alpha-2}$ is steep descent for $g_0(w,\tilde s_1,\tilde s_2)$ and for $y\to\infty$ we have $\Re(g_0(w,\tilde s_1,\tilde s_2))\lesssim -2y$.

Therefore, the contribution to $K^{(1,b)}_{t,{\rm resc}}(s_1,s_2)$ coming from $|y|\geq \delta$ is of order $\Or(t^{1/3} e^{-c(\delta) t})$ for some $c(\delta)>0$. It remains to control the contribution for $|y|\leq \delta$. By Taylor series we have
\begin{equation}
g_0(w,0,0)=-Q(\alpha,\kappa) \frac{(2(\I-1)y/\alpha)^3}{3}+\Or(y^4),
\end{equation}
with
\begin{equation}\label{eqQ}
Q(\alpha,\kappa)=\frac{\alpha\left((2-\alpha)^2-2 (1-\alpha)\kappa\right)}{(2-\alpha)^2}
\end{equation}
and
\begin{equation}
s_1 f_2(w)-s_2 f_2(2-\alpha-w) = (s_1+s_2) 2(\I-1)y/\alpha +\Or(y^2).
\end{equation}
So, the contribution from $0\leq y\leq \delta$ is given by
\begin{equation}\label{eq71b}
\frac{t^{1/3}}{2\pi\I}\frac{2(\I-1)}{\alpha} \int_{0}^\delta \dx y  e^{-t Q(\alpha,\kappa) (2(\I-1)y/\alpha)^3/3+t^{1/3} (s_1+s_2) 2(\I-1)y/\alpha} e^{\Or(t y^4,t^{1/3} y^2)}.
\end{equation}
The cubic term has a prefactor with negative real part, so that it dominates all the error terms. Consider first (\ref{eq71b}) without the error terms. Then, by the change of variables $W:=-t^{1/3} Q(\alpha,\kappa)^{1/3} 2 (\I-1) y/\alpha$, we get
\begin{equation}
\frac{Q(\alpha,\kappa)^{-1/3}}{2\pi\I} \int_{-t^{1/3} Q(\alpha,\kappa)^{1/3} 2 (1-\I) \delta/\alpha}^0 \dx W e^{W^3/3-(s_1+s_2)Q(\alpha,\kappa)^{-1/3} W}.
\end{equation}
Extending the contour to $(\I-1)\infty$ the error term is only $\Or(e^{-c(\delta)t})$ and adding the contribution of $y\leq 0$ we finally get that the main contribution is given by
\begin{equation}
\frac{Q(\alpha,\kappa)^{-1/3}}{2\pi\I} \int_{-(1-\I)\infty}^{-(1+\I)\infty} \dx W e^{W^3/3-(s_1+s_2)Q(\alpha,\kappa)^{-1/3} W} = \sigma \Ai(\sigma(s_1+s_2))
\end{equation}
where we set $\sigma=Q(\alpha,\kappa)^{-1/3}$. Finally, to control the error terms in  (\ref{eq71b}), one uses as usual the identity $|e^{|x|}-1|\leq |x|e^{|x|}$ with $x$ replaced by the error terms, and obtains a contribution of order $\Or(t^{-1/3})$.
\end{proof}

\begin{prop}[Bounds for $K^{(1,b)}_{t,{\rm resc}}$]\label{propDecayHalfLinePt2b}
For any $\ell_0>0$, there exists a $t_0$ such that for $t>t_0$ and $s_{1},s_{2} \in [-\ell_0,\frac{\alpha(2-\kappa)}{4} t^{2/3}]$
\begin{equation}\label{eq78}
|K^{(1,b)}_{t,{\rm resc}}(s_1,s_2)| \leq  C e^{-(s_1+s_2)/2},
\end{equation}
for some finite constant $C$.
\end{prop}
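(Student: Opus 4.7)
The plan is a steep descent analysis on the contour $\Gamma_{-1,\alpha-2}$ from (\ref{eq56}), upgrading the argument of Proposition~\ref{propConvHalfLinePt2} to a uniform estimate in $s_1,s_2$. From the proof of that proposition I already have that $\Gamma_{-1,\alpha-2}$ is steep descent for $g_0(w,\tilde s_1,\tilde s_2)$ uniformly in $(\tilde s_1,\tilde s_2)\in[0,\alpha(2-\kappa)/4]^2$, with maximum value $g_0(w^*,\tilde s_1,\tilde s_2)=(\tilde s_1-\tilde s_2)f_2(w^*)=0$ at the fixed point $w^*=-1+\alpha/2$ of $w\mapsto\alpha-2-w$.

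First I would split $\Gamma_{-1,\alpha-2}=\Gamma_{\mathrm{loc}}\cup\Gamma_{\mathrm{tail}}$ with $\Gamma_{\mathrm{loc}}=\{|y|\leq t^{-\beta}\}$ for some fixed $\beta\in(1/4,1/3)$. Writing $w=w^*+(\I-1)y$ for $y>0$ (with the symmetric parameterization for $y<0$), one has $\Re(w-w^*)^2=0$ and $\Re(w-w^*)^3=2y^3$; hence the quadratic term of $t^{1/3}(s_1f_2(w)-s_2f_2(\alpha-2-w))$ is purely imaginary and drops out of the real part. Taylor expanding then yields, on $\Gamma_{\mathrm{loc}}$,
\begin{equation*}
\Re\bigl[tg_0(w,\tilde s_1,\tilde s_2)\bigr]=-\frac{16}{3\alpha^3}\bigl(Qt-(s_1+s_2)t^{1/3}\bigr)y^3-\frac{2(s_1+s_2)}{\alpha}t^{1/3}y+O(t^{1-4\beta}),
\end{equation*}
with $Q=Q(\alpha,\kappa)$ as in~(\ref{eqQ}). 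The algebraic identity $Q-\alpha(2-\kappa)/2=\alpha^3\kappa/(2(2-\alpha)^2)>0$ guarantees that $Qt-(s_1+s_2)t^{1/3}$ stays bounded below by a positive multiple of $t$ throughout the allowed range $s_1+s_2\leq\alpha(2-\kappa)t^{2/3}/2$, and the error term vanishes uniformly for $\beta>1/4$.

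For the tail piece I exploit that $\Re g_0$ is decreasing in $|y|$ along the path, so evaluating at $|y|=t^{-\beta}$ gives $t\Re g_0\leq-\frac{2(s_1+s_2)}{\alpha}t^{1/3-\beta}-\frac{16Q}{3\alpha^3}t^{1-3\beta}$ on $\Gamma_{\mathrm{tail}}$. Since $\beta<1/3$ makes $t^{1/3-\beta}\to\infty$, the first term dominates $(s_1+s_2)/2$ for $s_1+s_2>0$ while the second term handles the case $s_1+s_2\leq 0$; the tail contribution is therefore $O(t^{1/3})$ times an exponential much smaller than $e^{-(s_1+s_2)/2}$ uniformly in the allowed range. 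The substitution $Y=(16Qt/\alpha^3)^{1/3}y$ reduces the local integral up to $o(1)$ to $\sigma\Ai(\sigma(s_1+s_2))$ with $\sigma=Q^{-1/3}$, exactly as in the proof of Proposition~\ref{propConvHalfLinePt2} but now uniformly in $s_1,s_2$. Combining with the standard uniform estimate $|\Ai(x)|\leq C_1e^{-x/2}$ for all $x\in\R$ and the fact that $\sigma\geq\alpha^{-1/3}>1$ (since $Q\leq\alpha<1$ in our regime) gives $\sigma\Ai(\sigma(s_1+s_2))\leq C_1 e^{-(s_1+s_2)/2}$ for $s_1+s_2\geq 0$; for $s_1+s_2\in[-2\ell_0,0)$ the bound is absorbed into the constant. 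Together with the tail estimate, this yields~(\ref{eq78}).

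The main obstacle will be the uniform control of the Taylor expansion as $|s_i|$ grows to order $t^{2/3}$: the mixed fourth-order term $O((|s_1|+|s_2|)t^{1/3}y^4)$ is just barely $o(1)$ for $\beta>1/4$, and the cubic correction from the conjugation factor has the opposite sign to the main cubic of $g_0(w,0,0)$. The decisive input is the strict margin $Q>\alpha(2-\kappa)/2$, which guarantees that the effective cubic coefficient $Qt-(s_1+s_2)t^{1/3}$ stays bounded away from $0$ uniformly across the allowed range, so that the Airy rescaling does not degenerate at the upper boundary of the $s$-range.
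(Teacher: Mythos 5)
Your setup is right up through the Taylor expansion: the identity $Q-\alpha(2-\kappa)/2=\alpha^3\kappa/(2(2-\alpha)^2)$ is correct, and so is the bookkeeping of which terms survive on the contour $w=w_c+(\I-1)y$ (quadratic part purely imaginary in the real part, cubic correction from the conjugation factor of opposite sign to the main cubic). The gap is in the local piece. On the \emph{fixed} contour through $w_c$ the integrand has modulus $\asymp t^{1/3}$ at $y=0$, so taking absolute values inside the local integral gives at best
\begin{equation*}
t^{1/3}\int_0^{\infty}e^{-\frac{2(s_1+s_2)}{\alpha}t^{1/3}y-c\,t\,y^3}\,\dx y=\Or\Bigl(\frac{1}{s_1+s_2}\Bigr),
\end{equation*}
i.e.\ only polynomial decay in $s_1+s_2$. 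The exponential (in fact $e^{-\frac{2}{3}(s_1+s_2)^{3/2}Q^{-1/2}}$) decay you need is produced entirely by oscillatory cancellation, so your step ``the local integral reduces up to $o(1)$ to $\sigma\Ai(\sigma(s_1+s_2))$ uniformly'' would have to hold with a \emph{multiplicative} error; an additive $o(1)$ is useless here, since for $s_1+s_2$ between, say, $\log t$ and $t^{2/3}$ the target $e^{-(s_1+s_2)/2}$ is far smaller than any fixed negative power of $t$. A uniform multiplicative error is exactly what your expansion does not deliver: the quartic remainder contributes a factor $e^{\Or((|s_1|+|s_2|)t^{1/3}y^4)}$ and the quadratic term contributes a non-negligible phase $e^{-\I(s_1-s_2)\frac{4y^2}{\alpha^2}t^{1/3}}$, of size $t^{1-2\beta}$ in the exponent at $|y|=t^{-\beta}$ when $|s_1-s_2|\sim t^{2/3}$; both destroy the Airy approximation as a \emph{value} (rather than as a modulus bound) in the upper part of the $s$-range.

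The paper resolves this by making the contour $s$-dependent near $w_c$: it is replaced by $w=w_c-\rho(1-\I y)$ with $\rho=\frac{\alpha}{2\sqrt Q}\sqrt{\tilde s_1+\tilde s_2}$, capped at $\frac{\alpha}{2\sqrt Q}\sqrt{\e}$, i.e.\ it is pushed through the approximate saddle of the cubic-plus-linear exponent. Then the \emph{real part} of the exponent is already $\leq -\tfrac13(s_1+s_2)^{3/2}Q^{-1/2}$ (resp.\ $\leq -c(s_1+s_2)\sqrt{\e}\,t^{1/3}$ in the capped regime) pointwise on the contour, with additional Gaussian decay in $y$, so the bound follows from absolute values alone and the error terms are controlled inside the exponent by taking $\e$ small. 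Some version of this saddle shift (or an equivalent device extracting the oscillation, such as repeated integration by parts with uniform constants) is indispensable; as written, your argument only yields $|K^{(1,b)}_{t,{\rm resc}}(s_1,s_2)|\leq C/(1+s_1+s_2)$ in the regime where $s_1+s_2$ grows with $t$.
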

\begin{proof}
The proof is very similar to the one in previous papers, see e.g.~Proposition~5.3 of~\cite{BF07}. We will skip some algebraic details and focus on the strategy and the key points. First, for any $t$-independent $\tilde\ell$ the result for $(s_1,s_2)\in[-\ell_0,\tilde \ell]^2$ follows from the proof of Proposition~\ref{propConvHalfLinePt2}. The constant $\tilde\ell$ can be chosen later and, for instance, if $(s_1,s_2)\in [-\ell_0,\infty)^2\setminus [-\ell_0,\tilde \ell]^2$, it can be chosen such that $s_1+s_2$ is large enough.

As before, we denote $\tilde s_i=s_i t^{-2/3}$. The integral we have to estimate is then
\begin{equation}\label{eq80}
\frac{t^{1/3}}{2\pi \I} \oint_{\Gamma_{-1,\alpha-2}}\frac{\mathrm{d}w}{w+1} e^{t g_0(w,\tilde s_1,\tilde s_2)}
\end{equation}
with $g_0$ given in (\ref{eq72}). We have seen in the first part of the proof of Proposition~\ref{propConvHalfLinePt2} that the path $\Gamma_{-1,\alpha-2}$ as in (\ref{eq56}) is steep descent for general values of $s_1,s_2$ in our domain. The idea is now to consider a minor modification of this path around  $w_c=-1+\alpha/2$ as follows, see Figure~\ref{FigPaths3}.
\begin{figure}
\begin{center}
\psfrag{w}[l][b]{$w$}
\psfrag{wc-rho}[l][b]{$w_c-\rho$}
\psfrag{0}[c][b]{$0$}
\psfrag{m1}[c][b]{$-1$}
\psfrag{wc}[c][b]{$w_c$}
\includegraphics[height=4cm]{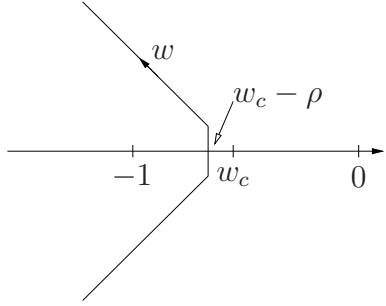}
\caption{Paths used for the asymptotic analysis in Proposition~\ref{propDecayHalfLinePt2b}.}
\label{FigPaths3}
\end{center}
\end{figure}

Consider
\begin{equation}\label{eq81}
w=w_c-\rho(1-\I y),\quad |y|\leq 1,
\end{equation}
where $\rho$ is chosen as follows:
\begin{equation}\label{eq82}
\begin{aligned}
\rho=\left\{
       \begin{array}{ll}
         \frac{\alpha}{2\sqrt{Q(\alpha,\kappa)}}\sqrt{\tilde s_1+\tilde s_2}, & \textrm{for }0\leq \tilde s_1+\tilde s_2\leq \e, \\
         \frac{\alpha}{2\sqrt{Q(\alpha,\kappa)}}\sqrt{\e}, & \textrm{for }\tilde s_1+\tilde s_2 \geq \e,
       \end{array}
     \right.
\end{aligned}
\end{equation}
with $Q=Q(\alpha,\kappa)$ given in (\ref{eqQ}). For the asymptotic analysis, $\e>0$ can be chosen as small as needed (but independent of $t$). This piece of contour joins the original path (\ref{eq56}). Now one has to control the real part of $g_0$ only in a neighborhood of $-1+\alpha/2$ (at a distance $\Or(\e)$ only). Taylor series at $w_c$ gives
\begin{equation}\label{eq83}
g_0(w,\tilde s_1,\tilde s_2)=-Q\frac{2^3}{\alpha^3}\frac{(w-w_c)^3}{3}+(\tilde s_1+\tilde s_2)\frac{2}{\alpha}(w-w_c) + \Or\left((w-w_c)^4,\tilde s_i (w-w_c)^2\right).
\end{equation}
For the choice in (\ref{eq81})-(\ref{eq82}), one looks for the minimal $w$ of (\ref{eq83}) without the error terms and gets the first choice. However, in order to have enough control through Taylor approximation, we have to stay in a small neighborhood of $w_c$. This is the reason for the $\e$ cut-off in (\ref{eq82}).

Replacing (\ref{eq81}) into the main part of (\ref{eq83}) one gets, for $0\leq \tilde s_1+\tilde s_2\leq \e$,
\begin{equation}\label{eq84}
\Re\left(-Q\frac{2^3}{\alpha^3}\frac{(w-w_c)^3}{3}+(\tilde s_1+\tilde s_2)\frac{2}{\alpha}(w-w_c)\right)=-\frac{(\tilde s_1+\tilde s_2)^{3/2} (2+3 y^2)}{3\sqrt{Q}},
\end{equation}
while for $\tilde s_1+\tilde s_2 \geq \e$,
\begin{equation}\label{eq85b}
\begin{aligned}
\Re\left(-Q\frac{2^3}{\alpha^3}\frac{(w-w_c)^3}{3}+(\tilde s_1+\tilde s_2)\frac{2}{\alpha}(w-w_c)\right)&=
-\frac{3(\tilde s_1+\tilde s_2)\sqrt{\e}+(3 y^2-1)\e^{3/2}}{3\sqrt{Q}}\\
&\leq -\frac{2(\tilde s_1+\tilde s_2)\sqrt{\e}+3 y^2\e^{3/2}}{3\sqrt{Q}}.
\end{aligned}
\end{equation}
The two key properties in (\ref{eq84}) and (\ref{eq85b}) are: (1) the quadratic decay of $e^{t g_0(w,\tilde s_1,\tilde s_2)}$ due the $y^2$ term, and (2) at $y=0$ one would have the bound
\begin{equation}
\begin{aligned}
e^{t\Re(g_0(w,\tilde s_1,\tilde s_2))} &\lesssim
\left\{
   \begin{array}{ll}
    e^{-\frac23 (s_1+s_2)^{3/2} Q^{-1/2}}, & \textrm{for }0\leq \tilde s_1+\tilde s_2\leq \e, \\
    e^{-\frac23 (s_1+s_2)\sqrt{\e}t^{1/3} Q^{-1/2}}, & \textrm{for }\tilde s_1+\tilde s_2 \geq \e,
   \end{array}
\right.
\end{aligned}
\end{equation}
by ignoring the error terms in (\ref{eq83}). For $s_1+s_2$ large enough and $t$ large enough, in both cases (\ref{eq85b}) is bounded by $e^{-c (s_1+s_2)}$ for any choice of $c>0$. By choosing $\e$ small enough, it is not so difficult (but a bit lengthy) to control the error terms in (\ref{eq83}) too. This can be made in exactly the same way as in the proof of Proposition~5.3 of~\cite{BF07} (see the argument between equations (5.40) and (5.47) in~\cite{BF07}). As a result, one obtains for instance a bound for the rescaled kernel (\ref{eq80}) like (\ref{eq85b}) with the prefactor $\frac23$ replaced by $\frac13$. This estimate is good enough and leads to the bound (\ref{eq78}).
\end{proof}

\begin{prop}\label{decdec}
Let $\eta>\frac{\alpha^2}{(2-\alpha)^2}$ and $\tilde{\mu}=2\big(\frac{\eta}{\alpha}+\frac{1}{2-\alpha}\big)$.
Then, for any  $\epsilon \geq 0$, there exist constants $C,\tilde{c}$ such that
\begin{equation}\label{Lmibound}
\Pb\left(L_{\mathcal{L}^{-}\rightarrow (\lfloor\eta\ell\rfloor,\lfloor \ell\rfloor)}> (\tilde{\mu}+\epsilon/2)\ell\right)
\leq C \exp(-\tilde{c}\epsilon \ell^{2/3}).
\end{equation}
\end{prop}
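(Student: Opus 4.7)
The plan is to mirror the strategy used for the $\mathcal{L}^+$-to-point case in Proposition~\ref{propHalfLineBound}, with the Fredholm determinant formula of Proposition~\ref{PropDistrLminus} in place of Proposition~\ref{PropTASEPHalfFlat}, and with the kernel estimates of Propositions~\ref{propDecayHalfLinePt2a}--\ref{propDecayHalfLinePt2b} playing the role of Proposition~\ref{propDecayHalfLinePt}. First, I would use the LPP--TASEP correspondence (\ref{eqLPPtasepB}): setting $t=(\tilde\mu+\e/2)\ell$ with inverse $\ell(t)=t/(\tilde\mu+\e/2)$, the probability on the left of (\ref{Lmibound}) becomes $1-\Pb(x_{\ell(t)}(t)\geq (\eta-1)\ell(t))$, and by Proposition~\ref{PropDistrLminus} this equals $1-\det(1-\chi_s\tilde K_{\ell(t),t}\chi_s)_{\ell^2(\Z)}$ with $s=\lfloor(\eta-1)\ell(t)\rfloor$. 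This I would expand as a Fredholm series and bound term by term.

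Next, I would fit the problem into the scaling (\ref{ScalingLminus}): choose $\kappa=\kappa(\e)=4/((2-\alpha)(\tilde\mu+\e/2))$ so that $n=\ell(t)=[\kappa(2-\alpha)t/4]$. A short computation with the explicit form of $\tilde\mu$ shows that, in the rescaled variable $s_i$ defined by $x(s)=[(\alpha-\kappa)t/2-st^{1/3}]$, the Fredholm integration starts at
\[ s_0=\frac{(\alpha-\kappa)t/2-(\eta-1)\ell(t)}{t^{1/3}}=\frac{\alpha\e/4}{\tilde\mu+\e/2}\,t^{2/3}, \]
which is of order $\e\ell^{2/3}$. A second short check shows that the hypothesis $\eta>\alpha^2/(2-\alpha)^2$ is equivalent to $\kappa(0)<2-\alpha$; since $\kappa(\e)$ is decreasing in $\e$, one has $\kappa(\e)\in[0,2-\alpha)$ for every $\e\geq 0$, i.e.\ precisely the regime in which Propositions~\ref{propDecayHalfLinePt2a} and~\ref{propDecayHalfLinePt2b} apply and $F(\alpha,\kappa)>0$.

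The third step is the Fredholm estimation. Writing $\tilde K_{t,\mathrm{resc}}=K^{(1,a)}_{t,\mathrm{resc}}+K^{(1,b)}_{t,\mathrm{resc}}+K^{(2)}_{t,\mathrm{resc}}$ and noting that the diagonal conjugation $(\alpha/2)^{x_1-x_2}$ cancels in any determinant, the bounds of Propositions~\ref{propDecayHalfLinePt2a}--\ref{propDecayHalfLinePt2b} give, uniformly for $s_1,s_2\in[s_0,\alpha(2-\kappa)t^{2/3}/4]$ and all sufficiently large $t$,
\[ |\tilde K_{t,\mathrm{resc}}(s_1,s_2)|\leq Ce^{-(s_1+s_2)/2}+2e^{-tF(\alpha,\kappa)/2}\leq C'e^{-(s_1+s_2)/2}, \]
where the last step uses $s_1+s_2\leq \alpha(2-\kappa)t^{2/3}/2\ll tF(\alpha,\kappa)$. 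Factoring $e^{-s_i/2}$ from row and column $i$ and applying Hadamard's inequality gives $|\det[\tilde K_{t,\mathrm{resc}}(s_i,s_j)]_{i,j=1}^m|\leq (C')^m m^{m/2}\prod_i e^{-s_i}$; integrating over $[s_0,\infty)^m$ and summing over $m\geq 1$ produces the bound $\tilde Ce^{-s_0}\leq C\exp(-\tilde c\,\e\,\ell^{2/3})$, exactly as in the concluding computation of Proposition~\ref{propHalfLineBound}. For the large-$\e$ range in which $s_0$ exceeds $\alpha(2-\kappa)t^{2/3}/4$, the Fredholm determinant is trivially equal to $1$ and the estimate is automatic.

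The main technical hurdle is ensuring that $K^{(1,b)}_{t,\mathrm{resc}}$ genuinely carries the decay $e^{-(s_1+s_2)/2}$ uniformly out to $s_i=\Theta(t^{2/3})$, not just for $s_i$ bounded; without this, factoring $e^{-s_i/2}$ out before Hadamard would not give a summable series. This is exactly the content of Proposition~\ref{propDecayHalfLinePt2b}. A secondary point is that the superexponentially small contributions of $K^{(1,a)}$ and $K^{(2)}$, after the conjugating factor $e^{(s_1+s_2)/2}$ has been inserted, still vanish; this uses the gap $tF(\alpha,\kappa)\gg t^{2/3}$. Once these two points are granted, the rest of the argument is bookkeeping essentially identical to the $\mathcal{L}^+$ upper-tail proof.
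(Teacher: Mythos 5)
Your proposal follows essentially the same route as the paper's proof: the LPP--TASEP correspondence combined with Proposition~\ref{PropDistrLminus}, the identification $\kappa=4/((2-\alpha)(\tilde\mu+\e/2))$ so that the Fredholm integration starts at $s_0=\alpha\e t^{2/3}/(4(\tilde\mu+\e/2))$, and the Hadamard bound after factoring $e^{-s_i}$ out using Propositions~\ref{propDecayHalfLinePt2a}--\ref{propDecayHalfLinePt2b}. The argument is correct, and your explicit checks (that $\eta>\alpha^2/(2-\alpha)^2$ is equivalent to $\kappa(0)<2-\alpha$, and that the superexponentially small pieces survive the conjugation because $tF(\alpha,\kappa)\gg t^{2/3}$) are exactly the points the paper relies on.
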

\begin{proof}
It is quite similar to the one of Proposition~\ref{propHalfLineBound}. We use again the correspondance \eqref{eqLPPtasep} between TASEP and LPP. We set
$t:=(\tilde{\mu}+\epsilon/2)\ell$, $\ell(t)=t/(\tilde{\mu}+\epsilon/2)$, Proposition~\ref{PropDistrLminus} tells us
\begin{equation}
\eqref{Lmibound}=1-\Pb(x_{\ell(t)}(t)\geq (\eta-1)\ell(t)).
\end{equation}
We denote
\begin{equation}
X_{t}^{\mathrm{resc}}=\frac{x_{\ell(t)}(t)-\frac{(\alpha-\kappa)t}{2}}{-t^{1/3}}
\end{equation}
with $\kappa=\frac{4}{2-\alpha}\big(\tilde{\mu}+\frac{\epsilon}{2}\big)^{-1}$ so that
$\ell(t)=\kappa \frac{2-\alpha}{4}t$.
Then,
\begin{equation}\label{Lmibound2}
\begin{aligned}
\eqref{Lmibound}&=1-\Pb(X_{t}^{\mathrm{resc}}\leq \frac{(\eta-1)\ell(t)-\frac{\alpha-\kappa}{2}t}{-t^{1/3}})
\\&=-\sum_{m=1}^{\infty}\frac{(-1)^{m}}{m!}\int \dx s_{1}\cdots \int \dx s_{m}\det[t^{1/3}\tilde{K}_{\ell(t),t}(x(s_i),x(s_j)]_{1\leq i,j\leq m},
\end{aligned}
\end{equation}
where $x(s)=\frac{\alpha-\kappa}{2}t-st^{1/3}$ and the integration domain of the $s_i$ is
\mbox{$(\alpha\epsilon t^{2/3}/4(\tilde{\mu}+\epsilon/2),\alpha  (2-\kappa) t^{2/3}/4]$}. This comes from the fact that
with $x=(\eta-1)\ell(t)$ we have
\begin{equation}
s =\frac{x-\frac{\alpha-\kappa}{2}t}{-t^{1/3}}=\frac{\alpha\epsilon t^{2/3}}{4(\tilde{\mu}+\epsilon/2)}
\end{equation}
together with the fact that the original kernel $K_{n,t}$ is identically equal to zero for $x(s)+n<0$.

A straightforward consequence of Proposition~\ref{propDecayHalfLinePt2a} is that, for \mbox{$s_1,s_2\in [-\ell_0,\alpha(2-\kappa)t^{2/3}/4]$} it holds
\begin{equation}
|K^{(1,a)}_{t,{\rm resc}}(s_1,s_2)|+|K^{(2)}_{t,{\rm resc}}(s_1,s_2)| \leq e^{-F(\alpha,\kappa)t/4} e^{-(s_1+s_2)/2}
\end{equation}
for $t$ large enough.
This together with the exponential bound of Proposition~\ref{propDecayHalfLinePt2b} implies that we can thus single out a factor $\prod_{i=1}^{m}C^{m}e^{-s_i}$ so that using Hadamard's bound, we get
\begin{equation} \begin{aligned}
|(\ref{Lmibound})|&\leq \sum_{m=1}^{\infty}\frac{C^{m}m^{m/2}}{m!}\int_{\e\alpha t^{2/3}/4(\tilde{\mu}+\e/2)}^{\alpha  (2-\kappa) t^{2/3}/4}\text{d}s_1\cdots\int_{\e \alpha t^{2/3}/4(\tilde{\mu}+\e/2)}^{\alpha  (2-\kappa) t^{2/3}/4} \text{d}s_{m}\prod_{i=1}^{m}e^{-s_i}
\\&\leq \tilde{C}\exp\left(-\tilde c \e \ell^{2/3}\right)
\end{aligned}\end{equation}
for some constants $\tilde C,\tilde c$ (uniform in $\ell$), where the last steps are identical to the ones of Proposition~\ref{propHalfLineBound}.
\end{proof}

\begin{prop}[Half-line $\mathcal{L}^-$-to-point LPP: convergence to $F_1$]\label{propHalfFlatConvergence2}
For any fixed $\eta>\frac{\alpha^2}{(2-\alpha)^2}$, it holds
\begin{equation} \begin{aligned}
\lim_{\ell\to\infty}\Pb\left(L_{\mathcal{L}^{-}\to (\lfloor\eta\ell\rfloor,\lfloor\ell\rfloor)}\leq \tilde{\mu}\ell +s \hat{ \sigma}_\eta \ell^{1/3}\right)= F_1(2s)
\end{aligned}\end{equation}
where $\tilde{\mu}=2(\frac{\eta}{\alpha}+\frac{1}{2-\alpha})$, $\hat{\sigma}_\eta=\frac{2^{4/3}}{\alpha}\left(\eta+\frac{\alpha^3}{(2-\alpha)^3}\right)^{1/3}$, and $F_1$ is the GOE Tracy-Widom distribution function.
\end{prop}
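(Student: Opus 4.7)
The strategy mirrors the one used in the proof of Proposition~\ref{propHalfFlatConvergence}. I use the LPP--TASEP correspondence (\ref{eqLPPtasepB}) and rewrite the probability as a Fredholm determinant via Proposition~\ref{PropDistrLminus}, then pass to the limit term by term in the Fredholm series using the uniform bounds of Propositions~\ref{propDecayHalfLinePt2a} and~\ref{propDecayHalfLinePt2b} together with the pointwise convergence from Proposition~\ref{propConvHalfLinePt2}.

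Concretely, set $t:=\tilde\mu\ell+s\hat\sigma_\eta\ell^{1/3}$ and let $\ell(t)$ be its inverse, so that $\ell(t)=t/\tilde\mu-\hat\sigma_\eta s\,\tilde\mu^{-4/3} t^{1/3}+o(t^{1/3})$. By (\ref{eqLPPtasepB}),
\begin{equation}
\Pb\left(L_{\mathcal{L}^{-}\to (\lfloor\eta\ell\rfloor,\lfloor\ell\rfloor)}\leq \tilde{\mu}\ell+s\hat{\sigma}_\eta\ell^{1/3}\right)=\Pb\left(x_{\ell(t)}(t)\geq (\eta-1)\ell(t)\right).
\end{equation}
Choose $\kappa$ so that $\ell(t)=\frac{\kappa(2-\alpha)}{4}t$ to leading order, i.e.\ $\kappa=\frac{4}{\tilde\mu(2-\alpha)}=\frac{2\alpha}{\eta(2-\alpha)+\alpha}$; the assumption $\eta>\alpha^2/(2-\alpha)^2$ is exactly the condition $\kappa<2-\alpha$ needed throughout Section~\ref{sectLPP}, and of course $\kappa>0$. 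Writing $(\eta-1)\ell(t)=\frac{\alpha-\kappa}{2}t-\tilde s\,t^{1/3}+o(t^{1/3})$ for some $\tilde s$ linear in $s$ and using Proposition~\ref{PropDistrLminus}, I obtain
\begin{equation}
\Pb\left(x_{\ell(t)}(t)\geq(\eta-1)\ell(t)\right)=\sum_{m\geq 0}\frac{(-1)^m}{m!}\int_{\tilde s}^{\infty}\!\!\dx s_1\cdots\int_{\tilde s}^{\infty}\!\!\dx s_m\det\left[K_{t,\rm resc}(s_i,s_j)\right]_{i,j=1}^m,
\end{equation}
where $K_{t,\rm resc}=K^{(1,a)}_{t,\rm resc}+K^{(1,b)}_{t,\rm resc}+K^{(2)}_{t,\rm resc}$ is the properly rescaled and conjugated kernel of Section~\ref{sectLPP}.

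Next, Propositions~\ref{propDecayHalfLinePt2a} and~\ref{propDecayHalfLinePt2b} give $|K_{t,\rm resc}(s_i,s_j)|\leq C e^{-(s_i+s_j)/2}$ uniformly on $[-\ell_0,\tfrac{\alpha(2-\kappa)}{4}t^{2/3}]$, since $F(\alpha,\kappa)>0$. Hadamard's inequality then provides an integrable and summable bound, so dominated convergence applies to the Fredholm series. Since $F(\alpha,\kappa)>0$, Proposition~\ref{propDecayHalfLinePt2a} forces $K^{(1,a)}_{t,\rm resc}$ and $K^{(2)}_{t,\rm resc}$ to vanish pointwise, and Proposition~\ref{propConvHalfLinePt2} yields $K^{(1,b)}_{t,\rm resc}(s_1,s_2)\to\sigma\Ai(\sigma(s_1+s_2))$ with $\sigma$ as given there. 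Therefore the limit of the Fredholm determinant is $\det(\Id-\mathcal{A}_1)_{L^2(\tilde s_\infty,\infty)}$, where $\tilde s_\infty$ is the limit value of $\tilde s$, and by~\cite{FS05b} this equals $F_1(2\tilde s_\infty)$.

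The last (and essentially only) computation is to check that the affine change of variables relating $\tilde s$ to $s$, when combined with the $\sigma$--rescaling of the Airy$_1$ variable, reproduces precisely the scaling $\hat\sigma_\eta$ in the statement. Using $\kappa=\frac{2\alpha}{\eta(2-\alpha)+\alpha}$ one checks the identity
\begin{equation}
\sigma^{-3}=\frac{\alpha\bigl((2-\alpha)^2-2(1-\alpha)\kappa\bigr)}{(2-\alpha)^2}=\frac{2}{\tilde\mu}\left(\eta+\frac{\alpha^3}{(2-\alpha)^3}\right),
\end{equation}
which combined with the Jacobian factor $\hat\sigma_\eta/\tilde\mu^{4/3}$ coming from $(\eta-1)\ell(t)-\frac{\alpha-\kappa}{2}t$ yields $\tilde s_\infty=s$, so the limit is $F_1(2s)$. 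This algebraic verification of the matching of constants is the only delicate point; the rest is a direct transcription of the argument used in Proposition~\ref{propHalfFlatConvergence}.
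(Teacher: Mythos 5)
Your proposal is correct and follows essentially the same route as the paper: rewrite via the LPP--TASEP correspondence and Proposition~\ref{PropDistrLminus}, use the bounds of Propositions~\ref{propDecayHalfLinePt2a} and~\ref{propDecayHalfLinePt2b} to justify dominated convergence in the Fredholm series, invoke Proposition~\ref{propConvHalfLinePt2} for the pointwise Airy$_1$ limit, and match constants. Your explicit verification that $\sigma^{-3}=Q(\alpha,\kappa)=\tfrac{2}{\tilde\mu}\bigl(\eta+\tfrac{\alpha^3}{(2-\alpha)^3}\bigr)$ is exactly the computation the paper leaves implicit when it says the relation $\tfrac{\alpha-\kappa}{2}t-\sigma^{-1}st^{1/3}=(\eta-1)\ell$ ``fixes the values of $\tilde\mu$ and $\hat\sigma_\eta$''.
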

\begin{proof}
First, with $\sigma$ as in Proposition~\ref{propConvHalfLinePt2}, it holds
\begin{equation}\label{eq85}
\begin{aligned}
\Pb\left(L_{\mathcal{L}^{-}\to (\lfloor\eta\ell\rfloor,\lfloor\ell\rfloor)}\leq \tilde{\mu}\ell +s \hat{ \sigma}_\eta \ell^{1/3}\right)
&= \Pb\left(x_{\ell}(\tilde\mu \ell + s \hat{ \sigma}_\eta \ell^{1/3})\geq (\eta-1)\ell\right)\\
&= \Pb\left(x_{[\kappa(2-\alpha)t/4]}(t)\geq \frac{\alpha-\kappa}{2}t-\sigma^{-1} s t^{1/3}\right)
\end{aligned}
\end{equation}
if we choose
\begin{equation}
\begin{aligned}
t = \tilde \mu +s \hat{ \sigma}_\eta \ell^{1/3} &\Leftrightarrow \ell=\frac{t}{\tilde\mu}-\frac{s\hat\sigma_\eta t^{1/3}}{\tilde\mu^{4/3}}+o(1),\\
\frac{\kappa(2-\alpha)}{4}t=\ell &\Leftrightarrow \kappa=\frac{4}{2-\alpha}\left(\frac{1}{\tilde\mu}-\frac{s\hat\sigma_\eta t^{-2/3}}{\tilde\mu^{4/3}}\right),
\end{aligned}
\end{equation}
and finally $\frac{\alpha-\kappa}{2}t-\sigma^{-1} s t^{1/3} =(\eta-1)\ell$, which fixes the values of $\tilde\mu$ and $\hat\sigma_\eta$ as given in the statement. Now, the r.h.s.~of (\ref{eq85}) is given by a Fredholm determinant like in (\ref{Lmibound2}), with the minor difference that now the lower integration bound is simply given by $s$ and that the scaling of the kernel has the extra $\sigma^{-1}$ in front. From Propositions~\ref{propDecayHalfLinePt2a} and~\ref{propDecayHalfLinePt2b} we know that the kernel is uniformly bounded (in $t$) by a function so that its Fredholm series is bounded. Thus we can apply dominated convergence to take the limit inside the Fredholm series. Finally, Proposition~\ref{propConvHalfLinePt2} tells us that the pointwise limit of the rescaled kernel (including the extra $\sigma^{-1}$ factor in the spatial scaling) converges pointwise to $\Ai(s_1+s_2)$. Thus,
\begin{equation}
\lim_{t\to\infty} \Pb\left(x_{[\kappa(2-\alpha)t/4]}(t)\geq \frac{\alpha-\kappa}{2}t-\sigma^{-1} s t^{1/3}\right)=F_1(2s),
\end{equation}
which ends the proof.
\end{proof}

\begin{cor}\label{CorLminus}
Fix an $\eta>\alpha^2/(2-\alpha)^2$, a $\beta\in (1/3,1]$ and define
\begin{equation}
\gamma \in [0,1-t^{\beta-1}],\quad \e  = t^{-\chi}\textrm{ with } \chi\in (0,2/3).
\end{equation}
Then there exists constants $C,\tilde{c}>0$ and $t_0 >0$ such that for all $t >t_0$
\begin{equation}
\begin{aligned}
\Pb\left(L_{\mathcal{L}^{-}\to D_\gamma}> \left(\tilde\mu_{\gamma}+\frac{\epsilon }{2}\right)t\right) \leq C\exp\left(-\tilde{c} \,t^{2/3-\chi}\right),
\end{aligned}
\end{equation}
where $\tilde \mu_{\gamma}= 2 \gamma \left(\frac{\eta}{\alpha}+\frac{1}{2-\alpha}\right) $.
\end{cor}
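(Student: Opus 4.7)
The plan is to mirror the proof of Corollary~\ref{CorLplus}, substituting Proposition~\ref{decdec} for Proposition~\ref{propHalfLineBound}. Note that the hypothesis $\eta > \alpha^2/(2-\alpha)^2$ required by Proposition~\ref{decdec} is part of the standing assumption of the corollary, so no additional work is needed to verify it.

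First I would set $\ell = \gamma t$, so that $D_\gamma = (\lfloor \eta \ell \rfloor, \lfloor \ell \rfloor)$ takes the form of the target point in Proposition~\ref{decdec}. Using the identity $\tilde\mu_\gamma = \gamma\tilde\mu$ with $\tilde\mu = 2(\eta/\alpha + 1/(2-\alpha))$, the threshold rewrites as $(\tilde\mu_\gamma + \e/2)t = (\tilde\mu + \e/(2\gamma))\ell$. I would then invoke Proposition~\ref{decdec} with $\e/\gamma$ in place of $\e$ (admissible since the proposition is stated for any nonnegative value) to conclude
\[
\Pb\!\left(L_{\mathcal{L}^{-}\to D_\gamma}> \left(\tilde\mu_{\gamma}+\tfrac{\e}{2}\right)t\right) \leq C \exp\!\left(-\tilde c\, \tfrac{\e}{\gamma}\, \ell^{2/3}\right) = C \exp\!\left(-\tilde c\, \e\, \gamma^{-1/3}\, t^{2/3}\right).
\]
Since $\gamma \leq 1-t^{\beta-1}\leq 1$, we have $\gamma^{-1/3} \geq 1$, so the right-hand side is further bounded by $C \exp(-\tilde c\, \e\, t^{2/3}) = C \exp(-\tilde c\, t^{2/3-\chi})$, which is the desired estimate.

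The main obstacle is the subtlety that Proposition~\ref{decdec}, like Proposition~\ref{propHalfLineBound}, is effectively proved under an implicit largeness condition $\ell \geq \ell_0$ coming from the saddle-point analysis of $\tilde K_{n,t}$ in Propositions~\ref{propDecayHalfLinePt2a}--\ref{propDecayHalfLinePt2b}. For $\gamma < \ell_0/t$ a separate argument is required: in this regime $D_\gamma$ lies within bounded distance of the origin, so $L_{\mathcal{L}^- \to D_\gamma}$ is a maximum over polynomially (in $\ell_0$, hence $\Or(1)$) many partial sums of $\Or(1)$ i.i.d.\ $\exp(\alpha)$ variables, and thus has exponential tails at a rate independent of $t$. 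Since the threshold $(\tilde\mu_\gamma + \e/2)t \geq \e t/2 = t^{1-\chi}/2$ grows strictly faster than $t^{2/3-\chi}$, a union bound combined with a Chernoff estimate yields a far stronger tail decay in this small-$\gamma$ regime. Merging the two regimes produces the claimed uniform bound for all $\gamma \in [0, 1-t^{\beta-1}]$.
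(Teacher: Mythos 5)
Your proof is correct and follows essentially the same route as the paper: set $\ell=\gamma t$, rewrite the threshold as $(\tilde\mu+\e/(2\gamma))\ell$, and reduce to Proposition~\ref{decdec} using $\gamma\leq 1$. Your additional treatment of the very-small-$\gamma$ regime (where $\ell$ is bounded) addresses a point the paper's one-line proof silently glosses over, and your bookkeeping of the exponent via $\gamma^{-1/3}\geq 1$ is in fact slightly tidier than the paper's, which first replaces $\e/(2\gamma)$ by $\e/2$ and thus nominally ends up with $\e\ell^{2/3}=\e\gamma^{2/3}t^{2/3}$ rather than $\e t^{2/3}$.
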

\begin{proof}
It is a straightforward consequence of Proposition~\ref{decdec}. Indeed, with $\ell=\gamma t$, we have
\begin{equation}
\begin{aligned}
\Pb\left(L_{\mathcal{L}^{-}\to D_\gamma}> \left(\tilde{\mu}_{\gamma}+\epsilon/2\right)t\right)
&=\Pb\left(L_{\mathcal{L}^{-}\to (\lfloor\eta\ell\rfloor,\lfloor\ell\rfloor)}> \left(\tilde{\mu}+\epsilon/(2\gamma)\right)\ell\right)
\\&\leq\Pb\left(L_{\mathcal{L}^{-}\to (\lfloor\eta\ell\rfloor,\lfloor\ell\rfloor)}> \left(\tilde{\mu}+\epsilon/2\right)\ell\right)
\\&\leq C\exp(-\tilde{c}t^{2/3-\chi}),
\end{aligned}
\end{equation}
where the second inequality holds since $\gamma\leq 1$.
\end{proof}
\subsection{No-crossing results}\label{sectProofNocrossing}
In this section we collect the non-crossing results, which are proven below.

\begin{prop}\label{nocrossingPlus}
Consider the point \mbox{$E=(\lfloor \eta  t \rfloor,\lfloor  t\rfloor)$} for $0<\eta< 1$ (see Figure~\ref{FigLplusminus}). For some fixed $\beta\in (1/3,1]$, consider the points $D_{\gamma}=(\lfloor \gamma \eta t \rfloor,\lfloor  \gamma t\rfloor)$ with $\gamma \in [0,1-t^{\beta-1}]$. Then, for all $t$ large enough
\begin{equation} \begin{aligned}
\Pb\bigg(\bigcup_{D_{\gamma}\atop \gamma \in [0,1-t^{\beta-1}]}\{D_\gamma\in \pi^{\rm max}_{\mathcal{L}^{+}\to E}\}\bigg)\leq C \exp(-c t^{\beta-1/3}),
\end{aligned}
\end{equation}
for some $t$-independent constants $C,c>0$.
\end{prop}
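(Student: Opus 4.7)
The plan is to use the standard concatenation decomposition. On the event $\{D_\gamma \in \pi^{\rm max}_{\mathcal{L}^+ \to E}\}$ one has the identity $L_{\mathcal{L}^+ \to E} = L_{\mathcal{L}^+ \to D_\gamma} + L_{D_\gamma \to E}$, so the event is contained in $\{L_{\mathcal{L}^+ \to D_\gamma} + L_{D_\gamma \to E} \geq L_{\mathcal{L}^+ \to E}\}$. The LLN asymptotics $L_{\mathcal{L}^+ \to E} \approx \mu t$ with $\mu = 2(1+\eta)$, $L_{\mathcal{L}^+ \to D_\gamma} \approx \mu \gamma t$, and $L_{D_\gamma \to E} \approx (1+\sqrt{\eta})^2(1-\gamma) t$ (from Propositions~\ref{propJohConvergence} and~\ref{propHalfFlatConvergence}) leave a deterministic gap $(1-\gamma)(1-\sqrt{\eta})^2 t$ between the typical values of the LHS and of the sum on the RHS. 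For $\gamma \in [0, 1-t^{\beta-1}]$ this gap is bounded below by $(1-\sqrt{\eta})^2 t^\beta$, vastly larger than the $t^{1/3}$ fluctuation scale; driving this gap into the large-deviation regime is the whole idea.

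For a fixed $\gamma$ I would place the threshold $V = \mu t - \tfrac{1}{2}(1-\gamma)(1-\sqrt{\eta})^2 t$ in the middle of the gap and split
\begin{equation*}
\Pb\big(D_\gamma \in \pi^{\rm max}_{\mathcal{L}^+ \to E}\big) \leq \Pb\big(L_{\mathcal{L}^+ \to D_\gamma} + L_{D_\gamma \to E} \geq V\big) + \Pb\big(L_{\mathcal{L}^+ \to E} \leq V\big).
\end{equation*}
For the first term I would divide the required excess $\tfrac{1}{2}(1-\gamma)(1-\sqrt{\eta})^2 t$ equally between the two summands and bound each by its upper tail: Corollary~\ref{CorLplus} with $\chi = 1 - \beta$ (which lies in $(0, 2/3)$ thanks to $\beta > 1/3$) for $L_{\mathcal{L}^+ \to D_\gamma}$ produces the rate $\exp(-\tilde{c}\, t^{\beta - 1/3})$, and Proposition~\ref{devone} for the point-to-point $L_{D_\gamma \to E}$ gives the same rate since the corresponding rescaled deviation satisfies $s \geq c\, (1-\gamma)^{-1/3} t^{\beta - 1/3} \geq c\, t^{\beta - 1/3}$. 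For the second term I would use the monotonicity $L_{\mathcal{L}^+ \to E} \geq L_{A \to E}$ at the characteristic's starting point $A = (-(1-\eta)t/2, (1-\eta)t/2)$; the point-to-point problem $A\to E$ has two equal legs of size $(1+\eta)t/2$ and hence mean exactly $\mu t$, so Proposition~\ref{devtwo} applied to a lower deviation of order $(1-\gamma)t^{2/3}$ produces $\exp(-c\, t^{(3\beta-1)/2})$, comfortably sharper than needed because $(3\beta-1)/2 \geq \beta - 1/3$ for every $\beta \geq 1/3$.

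Finally, since the lattice points $\{D_\gamma : \gamma \in [0, 1-t^{\beta-1}]\}$ consist of at most $\Or(t)$ distinct values, a simple union bound costs only a polynomial factor, which is absorbed by the exponential $e^{-c\, t^{\beta-1/3}}$, yielding the desired estimate.

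The delicate point will be the uniformity in $\gamma$. For $\gamma$ close to $0$ the chosen allocation forces the excess of $L_{\mathcal{L}^+ \to D_\gamma}$ to be of order $t$, which is outside the moderate-deviation window $\e < 2(1-\eta)$ in which Proposition~\ref{propHalfLineBound} (and hence Corollary~\ref{CorLplus}) is formulated. A clean fix is a case distinction: for $\gamma \geq \gamma_0$ with some fixed $\gamma_0 > 0$ the argument above goes through directly; for $\gamma \in [0, \gamma_0]$ one exploits that $D_\gamma$ sits at macroscopic transversal distance $(1-\gamma)(1-\eta)t$ from the $\mathcal{L}^+ \to E$ characteristic, much larger than the $\Or(t^{2/3 + \e})$ transversal fluctuations of the maximizer, so that a crude large-deviation estimate on the weight sum along any up-right path through $D_\gamma$, or an elementary shape-theorem argument, gives an exponential bound even sharper than $e^{-c t^{\beta-1/3}}$ in this regime.
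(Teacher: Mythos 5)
Your proposal is correct and follows essentially the same route as the paper: concatenate at $D_\gamma$, bound the upper tails of the two legs via Proposition~\ref{propHalfLineBound}/Corollary~\ref{CorLplus} and Proposition~\ref{devone}, bound the lower tail of $L_{\mathcal{L}^+\to E}$ by projecting onto $\mathcal{L}^+$ at $Z^+=(-(1-\eta)t/2,(1-\eta)t/2)$ and invoking Proposition~\ref{devtwo}, and finish with a union bound over the $\Or(t)$ points $D_\gamma$. The one place where your write-up differs is the allocation of the gap $(1-\gamma)(1-\sqrt{\eta})^2 t$: the paper gives each upper tail only an excess $\e t/2$ with $\e=\Theta(t^{\beta-1})$ uniformly in $\gamma$ (wasting the rest of the gap, which is harmless since the target event only shrinks), which keeps every application inside the moderate-deviation window and makes your two-case fix for small $\gamma$ unnecessary; with your ``middle of the gap'' split the excess parameter $\e=\tfrac12(1-\gamma)(1-\sqrt{\eta})^2$ is in fact still below the threshold $2(1-\eta)$ of Proposition~\ref{propHalfLineBound}, but it falls outside the literal $\e=t^{-\chi}$ hypothesis of Corollary~\ref{CorLplus}, so the uniform-in-$\gamma$ choice is the cleaner repair.
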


\begin{figure}[h!]
\begin{center}
\psfrag{A}[l][b]{$E=(\eta t,t)$}
\psfrag{AA}[r][b]{$E^+$}
\psfrag{D}[l][b]{$D_\gamma=(\gamma \eta t,\gamma t)$}
\psfrag{B}[l][b]{$B$}
\psfrag{Q}[c][b]{$((\eta-1)t,0)$}
\psfrag{Zp}[r][b]{$Z^+$}
\psfrag{Zm}[r][b]{$Z^-$}
\psfrag{Lp}[l][b]{$\mathcal{L}^+$}
\psfrag{Lm}[l][b]{$\mathcal{L}^-$}
\includegraphics[height=7cm]{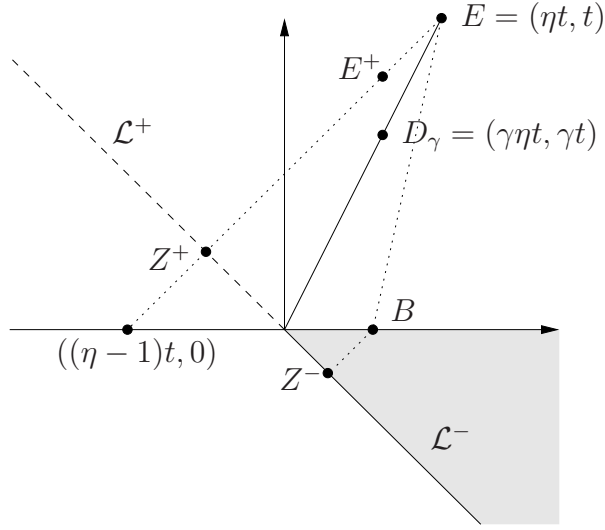}
\caption{Illustration of the geometry for the LPP of \mbox{Propositions~\ref{nocrossingPlus}--\ref{nocrossingPtPt}}. The half-line $\mathcal{L}^-$ is the solid one, while the half-line $\mathcal{L}^+$ is the dashed one. Further, $E^{+}=(\eta  t -t^{\nu},t-t^{\nu})$, $B=(\eta-\alpha^2/(2-\alpha)^2)(t,0)$, \mbox{$Z^+=(1-\eta)(-t/2,t/2)$}, and \mbox{$Z^-=(\eta-\alpha^2/(2-\alpha)^2)(t/2,-t/2)$}. In the grey regions, the exponential random variables have parameter $\alpha\in (0,2)$, while in the white regions, they have parameter $1$. }
\label{FigLplusminus}
\end{center}
\end{figure}
\begin{prop}\label{nocrossing+}
Consider the point \mbox{$E^{+}=(\lfloor \eta  t -t^{\nu}\rfloor,\lfloor  t-t^{\nu}\rfloor)$} for \mbox{$0<\eta<1$} and $1/3<\nu<1$(see Figure~\ref{FigLplusminus}). For some fixed $\beta\in (1/3,1]$, consider the points $D_{\gamma}=(\lfloor \gamma \eta t) \rfloor,\lfloor  \gamma t\rfloor)$ with $\gamma \in [0,1-t^{\beta-1}]$. Then, for all $t$ large enough
\begin{equation}
\begin{aligned}
\Pb\bigg(\bigcup_{D_{\gamma}\atop \gamma \in [0,1-t^{\beta-1}]}\{D_\gamma\in \pi^{\rm max}_{\mathcal{L}^{+}\to E^{+}}\}\bigg)\leq C \exp(-c t^{\beta-1/3}),
\end{aligned}
\end{equation}
for some $t$-independent constants $C,c>0$.
\end{prop}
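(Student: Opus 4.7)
The plan is the following. On the event $\{D_\gamma \in \pi^{\rm max}_{\mathcal{L}^+ \to E^+}\}$ one has the additive decomposition $L_{\mathcal{L}^+ \to E^+} = L_{\mathcal{L}^+ \to D_\gamma} + L_{D_\gamma \to E^+}$, so I will show this equality is atypical because routing the path through $D_\gamma$ is macroscopically suboptimal. By Proposition~\ref{propHalfLineBound}, $L_{\mathcal{L}^+ \to D_\gamma}$ concentrates at $\mu_\gamma t := 2(1+\eta)\gamma t$; by Proposition~\ref{propJohConvergence}, $L_{D_\gamma \to E^+}$ concentrates at $\nu_\gamma t := (1+\sqrt{\eta})^2\bigl((1-\gamma)t - t^\nu\bigr) + \Or(t^\nu)$; and $L_{\mathcal{L}^+ \to E^+}$ concentrates at $\mu t - \mu_0 t^\nu$ with $\mu = 2(1+\eta)$, the latter level realized by the deterministic optimal starting point $A^* := (-\lfloor(1-\eta)t/2\rfloor, \lfloor(1-\eta)t/2\rfloor) \in \mathcal{L}^+$. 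A short algebraic computation gives the macroscopic gap
\begin{equation*}
G_\gamma := (\mu t - \mu_0 t^\nu) - (\mu_\gamma + \nu_\gamma)t = (\sqrt{\eta} - 1)^2 \bigl((1-\gamma)t + \Or(t^\nu)\bigr) \geq c_\eta (1-\gamma)t \geq c_\eta t^\beta,
\end{equation*}
with $c_\eta = (\sqrt{\eta} - 1)^2 > 0$ since $\eta < 1$. (For $\gamma$ so close to $1$ that $D_\gamma$ cannot be connected to $E^+$ by an up-right path, the event is trivially empty and contributes $0$.)

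Next I decompose the bad event as $E_1 \cup E_2 \cup E_3$, where
\begin{equation*}
E_1 := \bigl\{L_{\mathcal{L}^+ \to D_\gamma} \geq \mu_\gamma t + G_\gamma/3\bigr\},\;
E_2 := \bigl\{L_{D_\gamma \to E^+} \geq \nu_\gamma t + G_\gamma/3\bigr\},\;
E_3 := \bigl\{L_{\mathcal{L}^+ \to E^+} \leq \mu t - \mu_0 t^\nu - G_\gamma/3\bigr\};
\end{equation*}
if none of $E_1, E_2, E_3$ holds then $L_{\mathcal{L}^+ \to D_\gamma} + L_{D_\gamma \to E^+} < (\mu_\gamma + \nu_\gamma)t + 2G_\gamma/3 = \mu t - \mu_0 t^\nu - G_\gamma/3 < L_{\mathcal{L}^+ \to E^+}$, forbidding $D_\gamma \in \pi^{\rm max}$. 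I then bound each: $\Pb(E_1) \leq C\exp(-\tilde c\, t^{\beta - 1/3})$ by Corollary~\ref{CorLplus} with $\e \asymp t^{\beta - 1}$ (admissible because $\beta > 1/3$ forces $1 - \beta < 2/3$); $\Pb(E_2) \leq C\exp(-c\, t^{2\beta/3})$ by applying the point-to-point upper tail Proposition~\ref{devone} to $L_{D_\gamma \to E^+}$ with $s \asymp ((1-\gamma)t)^{2/3}$; and $\Pb(E_3) \leq C\exp(-c\, t^{(3\beta-1)/2})$ by first invoking the deterministic monotonicity $L_{\mathcal{L}^+ \to E^+} \geq L_{A^* \to E^+}$ and then the point-to-point lower tail Proposition~\ref{devtwo}. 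All three exponents are at least $\beta - 1/3$ for $\beta \in (1/3, 1]$, so a union bound over the $\Or(t)$ lattice points $\{D_\gamma : \gamma \in [0, 1-t^{\beta-1}]\}$ gives a total of $\Or(t)\exp(-c\, t^{\beta - 1/3})$, which is absorbed into $C\exp(-c'\, t^{\beta - 1/3})$ for any $0 < c' < c$.

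The main obstacle I anticipate is the third bound: no direct exponential lower-tail estimate for the half-line LPP $L_{\mathcal{L}^+ \to E^+}$ is proved in the paper. The resolution is the deterministic sandwich $L_{\mathcal{L}^+ \to E^+} \geq L_{A^* \to E^+}$ via the optimal characteristic starting point $A^* \in \mathcal{L}^+$; since this $A^*$ already attains the correct leading order $\mu t - \mu_0 t^\nu$, the lower tail transfers from the point-to-point LPP, for which Proposition~\ref{devtwo} delivers the required $\exp(-c|s|^{3/2})$ decay. The argument is otherwise essentially identical to that of Proposition~\ref{nocrossingPlus}: replacing the target $E$ by $E^+$ only perturbs the typical values by $\Or(t^\nu)$, which is harmlessly absorbed into the gap estimate.
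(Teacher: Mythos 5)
Your proof is correct and follows essentially the same route as the paper's: the paper likewise combines upper-tail bounds for $L_{\mathcal{L}^+\to D_\gamma}$ (Corollary~\ref{CorLplus}) and $L_{D_\gamma\to E^+}$ (Proposition~\ref{devone}) with an algebraic gap estimate of order $(1-\gamma)t\gtrsim t^{\beta}$ (Proposition~\ref{PropLplusB+}) and a lower-tail bound for $L_{\mathcal{L}^+\to E^+}$ obtained by restricting to the projection point on $\mathcal{L}^+$ and invoking Proposition~\ref{devtwo}, followed by a union bound over the $\Or(t)$ points $D_\gamma$. Your $A^*$ is exactly the paper's $Z^+$, and your three-event decomposition is a cosmetic repackaging of the paper's $E_{D_\gamma}$ plus the final step of Proposition~\ref{PropLplusC}.
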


\begin{prop}\label{nocrossingMinus} Let $\alpha \in (0,2)$.
Consider, for some  $\eta>\alpha^2/(2-\alpha)^2$, the point \mbox{$E=(\lfloor \eta  t \rfloor,\lfloor  t\rfloor)$} as in Figure~\ref{FigLplusminus}. For some fixed $\beta\in (1/3,1]$, consider the points \mbox{$D_{\gamma}=(\lfloor \gamma \eta t \rfloor,\lfloor  \gamma t\rfloor)$} with $\gamma \in [0,1-t^{\beta-1}]$. Then, for all $t$ large enough
\begin{equation}
\begin{aligned}
\Pb\bigg(\bigcup_{D_{\gamma}\atop \gamma \in [0,1-t^{\beta-1}]}\{D_\gamma\in \pi^{\rm max}_{\mathcal{L}^{-}\to E}\}\bigg)\leq C \exp(-c t^{\beta-1/3}),
\end{aligned}
\end{equation}
for some $t$-independent constants $C,c>0$.
\end{prop}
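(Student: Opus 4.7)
The plan is to adapt the strategy of Propositions~\ref{nocrossingPlus}--\ref{nocrossing+} to the half-line $\mathcal{L}^-$. The starting observation is that if $D_\gamma\in\pi^{\rm max}_{\mathcal{L}^-\to E}$, then we have the pathwise identity
\begin{equation}
L_{\mathcal{L}^-\to E}=L_{\mathcal{L}^-\to D_\gamma}+L_{D_\gamma\to E},
\end{equation}
so one can rule out this event by comparing the right-hand side against a sufficiently sharp lower bound for $L_{\mathcal{L}^-\to E}$, obtained by forcing the path through a carefully chosen point.

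To build such a lower bound, I would first locate the macroscopic characteristic. A one-variable optimization (maximizing $\frac{2b}{\alpha}+(1+\sqrt{\eta-b})^2$ over $b\in[0,\eta]$) shows that the characteristic crosses the $x$-axis at $B=(\lfloor b_0 t\rfloor,0)$ with $b_0=\eta-\alpha^2/(2-\alpha)^2$, strictly positive by our assumption on $\eta$, and starts on $\mathcal{L}^-$ at $A^{*}=(\lfloor b_0 t/2\rfloor,-\lfloor b_0 t/2\rfloor)$. Concatenating paths through $B$ and the adjacent $B'=(\lfloor b_0 t\rfloor,1)$ yields the deterministic lower bound
\begin{equation}
L_{\mathcal{L}^-\to E}\geq L_{A^{*}\to B}+L_{B'\to E},
\end{equation}
and crucially the two summands are \emph{independent} point-to-point LPPs, since they use disjoint sets of weights: $L_{A^{*}\to B}$ samples only $\omega_{i,j}$ with $j\leq 0$, of parameter $\alpha$, while $L_{B'\to E}$ samples only $\omega_{i,j}$ with $j\geq 1$, of parameter $1$. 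Proposition~\ref{propJohConvergence} (trivially rescaled by $\alpha^{-1}$ in the lower piece) gives leading orders $\frac{2b_0}{\alpha}t$ and $\frac{4}{(2-\alpha)^2}t$, which sum to exactly $\tilde\mu t$, matching the asymptotics of $L_{\mathcal{L}^-\to E}$ from Proposition~\ref{propHalfFlatConvergence2}.

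The next step is to verify $\delta:=\tilde\mu-(1+\sqrt\eta)^2>0$ under our assumption $\eta>\alpha^2/(2-\alpha)^2$. A convexity argument does it: $\delta$ and its first $\eta$-derivative both vanish at $\eta=\alpha^2/(2-\alpha)^2$, while $\delta''(\eta)=\frac{1}{2}\eta^{-3/2}>0$, so $\delta$ is strictly positive for larger $\eta$. The leading orders from Corollary~\ref{CorLminus} and Proposition~\ref{propJohConvergence} then give
\begin{equation}
L_{\mathcal{L}^-\to D_\gamma}+L_{D_\gamma\to E}\sim\bigl(\gamma\tilde\mu+(1-\gamma)(1+\sqrt\eta)^2\bigr)t=\tilde\mu\, t-(1-\gamma)\delta\,t,
\end{equation}
which lies below the lower bound by a macroscopic gap of size $(1-\gamma)\delta t\geq \delta t^{\beta}$ for $\gamma\leq 1-t^{\beta-1}$.

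The last step is a union bound. Set $\epsilon_\gamma=\delta(1-\gamma)/8$ and let $\mathcal{B}_\gamma$ be the union of the four deviation events: $L_{\mathcal{L}^-\to D_\gamma}$ or $L_{D_\gamma\to E}$ exceeds its leading order by $\epsilon_\gamma t$, or $L_{A^{*}\to B}$ or $L_{B'\to E}$ falls below its leading order by $\epsilon_\gamma t$. On $\mathcal{B}_\gamma^c$, the upper bound on the sum lies strictly below the lower bound on $L_{\mathcal{L}^-\to E}$, contradicting the pathwise identity, so $D_\gamma\notin\pi^{\rm max}_{\mathcal{L}^-\to E}$. The four tail probabilities are controlled respectively by Corollary~\ref{CorLminus} (with $\chi=1-\beta$), Proposition~\ref{devone} (on the scale $(1-\gamma)t\geq t^\beta$), and Proposition~\ref{devtwo} (applied twice), and a short bookkeeping check of exponents shows each delivers decay $\exp(-c t^{\beta-1/3})$ uniformly in $\gamma$. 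Summing over the $\Or(t)$ admissible $D_\gamma$ absorbs the polynomial prefactor and yields the claim. The main obstacle I expect to focus on is the joint design of $A^{*}$ and $B$: the lower bound needs to be both (i) a sum of \emph{independent} LPPs and (ii) asymptotically sharp, as losing either property would destroy the gap $\delta$ and collapse the argument.
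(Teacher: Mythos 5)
Your proposal is correct and follows essentially the same route as the paper: the same upper-tail control of $L_{\mathcal{L}^-\to D_\gamma}+L_{D_\gamma\to E}$ via Corollary~\ref{CorLminus} and Proposition~\ref{devone}, the same macroscopic gap $\tilde\mu-(1+\sqrt\eta)^2=\frac{(\alpha-(2-\alpha)\sqrt\eta)^2}{\alpha(2-\alpha)}>0$ (which you verify by convexity and the paper writes as an explicit square), and the same lower bound obtained by forcing the path through the characteristic's entry point $Z^-=(\kappa t/2,-\kappa t/2)$ and the crossing point $B=(\kappa t,0)$ with $\kappa=\eta-\alpha^2/(2-\alpha)^2$, each piece controlled by Proposition~\ref{devtwo}. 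The only cosmetic difference is that you insist on independence of the two lower-bound pieces, whereas a plain union bound on the two lower-tail events suffices, as in the paper.
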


Similarly, for the point-to-point geometry we have:
\begin{prop}\label{nocrossingPtPt}
Consider the point \mbox{$E=(\lfloor \eta  t \rfloor,\lfloor  t\rfloor)$} for $0<\eta< 1$. For some fixed $\beta\in (1/3,1]$, consider the points $D_{\gamma}=(\lfloor \gamma \eta t \rfloor,\lfloor  \gamma t\rfloor)$ with \mbox{$\gamma \in [0,1-t^{\beta-1}]$}. Then, for all $t$ large enough
\begin{equation} \begin{aligned}
\Pb\bigg(\bigcup_{D_{\gamma}\atop \gamma \in [0,1-t^{\beta-1}]}\{D_\gamma\in \pi^{\rm max}_{(\lfloor(\eta-1)t\rfloor,0)\to E}\}\bigg)\leq C \exp(-c t^{\beta-1/3}),
\end{aligned}
\end{equation}
for some $t$-independent constants $C,c>0$.
\end{prop}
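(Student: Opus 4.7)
The plan is to follow the strategy used (implicitly) in Propositions~\ref{nocrossingPlus}--\ref{nocrossingMinus}: whenever the maximizer of $L_{A\to E}$ passes through $D_\gamma$, with $A=(\lfloor(\eta-1)t\rfloor,0)$, one has the deterministic identity $L_{A\to D_\gamma}+L_{D_\gamma\to E}=L_{A\to E}$, whereas the three law-of-large-numbers values do not add up in this way as long as $D_\gamma$ lies off the straight segment $\overline{AE}$. Since the line through $A$ and $E$ has parametrization $r\mapsto((\eta-1+r)t,rt)$ while $D_\gamma$ sits at $(\gamma\eta t,\gamma t)$, the horizontal distance from $D_\gamma$ to this line equals $(1-\gamma)(1-\eta)t$, which is macroscopic unless $\gamma$ is close to $1$. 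The whole argument amounts to quantifying this shape-function deficit and combining it with the tail estimates of Propositions~\ref{devone} and~\ref{devtwo}.

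Using the point-to-point shape function $\mu_{\mathrm{pp}}(\eta_\ast)=(1+\sqrt{\eta_\ast})^2$, one has $L_{A\to E}\sim 4t$, $L_{A\to D_\gamma}\sim\mu_1(\gamma)t$ with $\mu_1(\gamma)=(\sqrt{\gamma}+\sqrt{1-(1-\gamma)\eta})^2$, and $L_{D_\gamma\to E}\sim\mu_2(\gamma)t$ with $\mu_2(\gamma)=(1+\sqrt{\eta})^2(1-\gamma)$. A Taylor expansion at $\gamma=1$ yields
\begin{equation*}
\Delta(\gamma):=4-\mu_1(\gamma)-\mu_2(\gamma)=(1-\gamma)(1-\sqrt{\eta})^2+O((1-\gamma)^2),
\end{equation*}
and since $\Delta$ is continuous on $[0,1]$, strictly positive on $[0,1)$ (with $\Delta(0)=2(1-\sqrt{\eta})>0$), and vanishes to exactly first order at $\gamma=1$, there exists $c_0=c_0(\eta)>0$ such that $\Delta(\gamma)\geq c_0(1-\gamma)\geq c_0\,t^{\beta-1}$ uniformly for $\gamma\in[0,1-t^{\beta-1}]$.

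Combining this with the deterministic inclusion $\{D_\gamma\in\pi^{\mathrm{max}}_{A\to E}\}\subseteq\{L_{A\to D_\gamma}+L_{D_\gamma\to E}\geq L_{A\to E}\}$ and splitting $\Delta(\gamma)t$ into three equal pieces, $\Pb(D_\gamma\in\pi^{\mathrm{max}}_{A\to E})$ is bounded by a sum of three large-deviation probabilities: two upper-tail events $\Pb(L_{A\to D_\gamma}>\mu_1(\gamma)t+\tfrac13\Delta(\gamma)t)$ and $\Pb(L_{D_\gamma\to E}>\mu_2(\gamma)t+\tfrac13\Delta(\gamma)t)$, and one lower-tail event $\Pb(L_{A\to E}<4t-\tfrac13\Delta(\gamma)t)$. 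Each deviation has magnitude at least $\tfrac13 c_0 t^\beta$, i.e.\ at least $\tfrac13 c_0 t^{\beta-1/3}$ in units of the $t^{1/3}$-fluctuation scale. Proposition~\ref{devone} then bounds each upper tail by $C\exp(-ct^{\beta-1/3})$, and Proposition~\ref{devtwo} bounds the lower tail by the strictly smaller $C\exp(-ct^{3(\beta-1/3)/2})$. A discrete union bound over the at most $t+1$ distinct grid points $D_\gamma$ in the range is absorbed into the stretched exponential since $\beta>1/3$, producing the claimed estimate.

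The main technical obstacle is the uniformity of the constants in Propositions~\ref{devone}--\ref{devtwo} across the full range of $\gamma$: as $\gamma\to 0$ the sub-LPP $A\to D_\gamma$ has aspect ratio $(1-(1-\gamma)\eta)/\gamma$ diverging, while as $\gamma\to 1$ the sub-LPP $D_\gamma\to E$ shrinks in total size. I would address this by splitting $[0,1-t^{\beta-1}]$ into a closed sub-interval bounded away from both $0$ and $1$ (where $\Delta(\gamma)$ is bounded below by a positive constant and the aspect ratios are bounded, so a single crude bound suffices) together with the two boundary regimes, the neighborhood of $\gamma=1$ being the genuinely delicate one, where the Taylor expansion above provides the control needed to chase the constants. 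A minor side remark: weights $\omega_{i,0}$ in the starting row are $\exp(\alpha)$ rather than $\exp(1)$, but in the intended application (Corollary~\ref{Cor3}) one has $\alpha=1$ and this distinction evaporates.
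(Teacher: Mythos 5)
Your proposal is correct and follows essentially the same route as the paper, which proves this proposition by transplanting the three-step scheme of Propositions~\ref{PropLplusA}--\ref{PropLplusC}: good events controlling $L_{A\to D_\gamma}$ and $L_{D_\gamma\to E}$ from above via the point-to-point upper tail of Proposition~\ref{devone}, a deterministic shape-function deficit of order $(1-\gamma)\geq t^{\beta-1}$, the lower-tail bound of Proposition~\ref{devtwo} for $L_{A\to E}$, and a union bound over the $\Or(t)$ points $D_\gamma$ — the paper's only noted simplification being that the projection step \eqref{eq42} is unnecessary since the problem is already point-to-point. Your explicit constant $\mu_1(\gamma)=(\sqrt{\gamma}+\sqrt{1-(1-\gamma)\eta})^2$ for $L_{A\to D_\gamma}$ is in fact the correct one (the paper's sketch writes $\mu_\gamma=4\gamma$, which is valid only at $\eta=1$), and your expansion $\Delta(\gamma)=(1-\gamma)(1-\sqrt{\eta})^2+\Or((1-\gamma)^2)$ together with the attention to uniformity of the tail constants in $\gamma$ supplies exactly what the argument needs.
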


\begin{prop}\label{nocrossingsymmetric}
  For some fixed $\beta\in (1/3,1]$, consider the points \mbox{$D_{\gamma}=(\lfloor \gamma  t \rfloor,\lfloor  \gamma t\rfloor)$} with $\gamma \in [0,1-t^{\beta-1}]$. Then, for all $t$ large enough
\begin{equation}
\begin{aligned}
\Pb\bigg(\bigcup_{D_{\gamma}\atop \gamma \in [0,1-t^{\beta-1}]}\{D_\gamma\in \pi^{\rm max}_{(-t,0)\to (t,t)}\}\bigg)\leq C \exp(-c t^{\beta-1/3}),
\end{aligned}
\end{equation}
for some $t$-independent constants $C,c>0$.
\end{prop}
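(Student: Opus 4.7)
The plan is to bound the no-crossing probability by the simultaneous tail probabilities that three last passage times conspire to close a macroscopic gap. The key geometric observation is that the characteristic for the point-to-point LPP from $(-t,0)$ to $(t,t)$ has macroscopic slope $1/2$ and only touches the diagonal at the endpoint, so any diagonal point $D_\gamma=(\gamma t,\gamma t)$ with $\gamma<1$ is strictly suboptimal at the level of the law of large numbers.

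Since in Corollary~\ref{Cor3} all weights are $\exp(1)$, Proposition~\ref{propJohConvergence} identifies the LLN values
\begin{equation}
\mu_1(\gamma)t:=\bigl(\sqrt{(1+\gamma)t}+\sqrt{\gamma t}\bigr)^2,\qquad \mu_2(\gamma)t:=4(1-\gamma)t
\end{equation}
of $L_{(-t,0)\to D_\gamma}$ and $L_{D_\gamma\to(t,t)}$, respectively, while $L_{(-t,0)\to(t,t)}$ has LLN value $f(1)t=(3+2\sqrt{2})t$, where $f(\gamma):=\mu_1(\gamma)+\mu_2(\gamma)=5-2\gamma+2\sqrt{\gamma(1+\gamma)}$. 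A direct computation gives $f'(\gamma)=-2+(2\gamma+1)/\sqrt{\gamma(1+\gamma)}$, which is strictly positive on $(0,1]$ (indeed $(2\gamma+1)^2>4\gamma(1+\gamma)$ reduces to $1>0$) with $f'(1)=3/\sqrt{2}-2>0$. By continuity the gap $\Delta(\gamma):=f(1)-f(\gamma)$ satisfies $\Delta(\gamma)\geq c(1-\gamma)$ for some $c>0$ and all $\gamma\in[0,1]$, and in particular $\Delta(\gamma)\geq c\,t^{\beta-1}$ uniformly over $\gamma\in[0,1-t^{\beta-1}]$.

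For any fixed such $\gamma$, the event $\{D_\gamma\in\pi^{\rm max}_{(-t,0)\to(t,t)}\}$ forces the equality $L_{(-t,0)\to D_\gamma}+L_{D_\gamma\to(t,t)}=L_{(-t,0)\to(t,t)}$ and is therefore contained in the union of the three tail events
\begin{equation}
\begin{aligned}
&\bigl\{L_{(-t,0)\to D_\gamma}>\mu_1(\gamma)t+\tfrac{\Delta(\gamma)}{3}t\bigr\},\quad \bigl\{L_{D_\gamma\to(t,t)}>\mu_2(\gamma)t+\tfrac{\Delta(\gamma)}{3}t\bigr\},\\
&\hspace{6em}\bigl\{L_{(-t,0)\to(t,t)}<f(1)t-\tfrac{\Delta(\gamma)}{3}t\bigr\}.
\end{aligned}
\end{equation}
On the natural fluctuation scale $t^{1/3}$, the upper-tail bound of Proposition~\ref{devone} controls each of the first two events by $C\exp(-c\,\Delta(\gamma)\,t^{2/3})\leq C\exp(-c\,t^{\beta-1/3})$, while the lower-tail bound of Proposition~\ref{devtwo} controls the third event by the even smaller quantity $C\exp(-c\,t^{(3\beta-1)/2})$. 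A union bound over the at most $\Or(t)$ distinct points $D_\gamma$ then absorbs the polynomial prefactor into the stretched exponential, producing the claimed estimate.

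The main technical obstacle is the uniformity of the constants in Proposition~\ref{devone} as the aspect ratio $\eta=(1+\gamma)/\gamma$ of $L_{(-t,0)\to D_\gamma}$ varies with $\gamma$ and diverges as $\gamma\to 0$. This is circumvented by splitting the range: for $\gamma\in[1/2,1-t^{\beta-1}]$ both aspect ratios lie in a compact subset of $(0,\infty)$ and the constants in Propositions~\ref{devone}--\ref{devtwo} can be taken uniform; for $\gamma\in[0,1/2]$ the gap $\Delta(\gamma)$ is already bounded below by a positive constant, so applying the tail bounds only to the well-behaved $L_{D_\gamma\to(t,t)}$ (of aspect ratio $1$) together with the lower-tail bound on $L_{(-t,0)\to(t,t)}$ yields a probability much smaller than $e^{-c\,t^{\beta-1/3}}$. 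The degenerate case $\gamma t<1$ (where $D_\gamma=(0,0)$) reduces to the $\gamma=0$ instance of the same argument.
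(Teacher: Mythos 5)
Your overall scheme is exactly the paper's: bound $\{D_\gamma\in\pi^{\rm max}\}$ by the conjunction of upper-tail events for the two legs $L_{(-t,0)\to D_\gamma}$, $L_{D_\gamma\to (t,t)}$ (the analogue of Proposition~\ref{PropLplusA}, via Proposition~\ref{devone}), a deterministic law-of-large-numbers gap of order $1-\gamma\geq t^{\beta-1}$ (the analogue of Proposition~\ref{PropLplusB}, and your monotonicity computation of $f'$ reproduces Proposition~\ref{PropLplusB++}), a lower-tail bound on $L_{(-t,0)\to(t,t)}$ via Proposition~\ref{devtwo} giving $\exp(-ct^{(3\beta-1)/2})$, and a union bound over $\Or(t)$ points. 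All the exponent bookkeeping is right.

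The one place you deviate from the paper is also the one place your argument does not close: the treatment of $\gamma$ near $0$, where the aspect ratio of $L_{(-t,0)\to D_\gamma}$ degenerates. Your proposed fix --- ``apply the tail bounds only to $L_{D_\gamma\to(t,t)}$ together with the lower-tail bound on $L_{(-t,0)\to(t,t)}$'' --- is not sufficient. The containment of $\{D_\gamma\in\pi^{\rm max}\}$ in a union of tail events relies on the implication: if $L_{(-t,0)\to D_\gamma}+L_{D_\gamma\to(t,t)}=L_{(-t,0)\to(t,t)}$ and both legs are at most their LLN values plus $\Delta t/3$, then $L_{(-t,0)\to(t,t)}$ is macroscopically small. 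Dropping the upper-tail control on the first leg destroys this implication: on the complement of your two remaining events one only learns $L_{(-t,0)\to D_\gamma}\geq\bigl(f(1)-\mu_2(\gamma)-2a\bigr)t$, which is itself an upper-tail event for the thin-rectangle LPP that you declined to control, so the argument is circular. The regime is repairable by standard monotonicity: for $\gamma\leq\delta$ with $\delta>0$ small and fixed, $L_{(-t,0)\to(\gamma t,\gamma t)}\leq L_{(-t,0)\to(\gamma t,\delta t)}$ (any up-right path extends), and the latter has aspect ratio in the compact set $[1/\delta,(1+\delta)/\delta]$ and LLN value $\bigl(1+\gamma+\delta+2\sqrt{\delta(1+\gamma)}\bigr)t$, which for $\delta$ small enough still leaves a macroscopic gap below $f(1)t-\mu_2(\gamma)t$; alternatively an elementary large-deviation estimate for thin rectangles suffices. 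To be fair, the paper's own proof simply says ``uses twice Proposition~\ref{devone}'' and is silent on this uniformity issue, so you deserve credit for flagging it --- but the patch you wrote down is not a proof of that step.
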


\subsubsection{Proof of Propositions~\ref{nocrossingPlus},~\ref{nocrossing+},~\ref{nocrossingPtPt},~\ref{nocrossingsymmetric}}\label{sectProofNocrossingLplus}
In order to prove Proposition~\ref{nocrossingPlus}, we will adopt the notation and line of argumentation of a proof due to Johansson, namely the Lemmas~3.1, 3.2 and~3.3 in~\cite{Jo00}. He used them to prove that the maximizing path of a LPP model in Poisson points does not leave  a cylinder of width $N^{2/3}$ as $N\to\infty$.

Using the deviation results from the previous section, we first show that the probability that for some $\gamma$ the LPP-times $L_{\mathcal{L}^{+} \to D_\gamma}$ and  $L_{D_\gamma\to E}$ exceed by $\e  t/2$ their leading orders converges to zero.
\begin{prop} \label{PropLplusA}
Fix an $0< \eta < 1$, a $\beta\in (1/3,1]$, a $\chi\in (0,2/3)$. Let us set $\e = t^{-\chi}$.
We define a finite\footnote{The family is finite even if $\gamma$ is uncountable, since the number of different $D_\gamma$ is finite.} family of events $\left\{E_{D_{\gamma}}\right\}_{\gamma \in [0,1-t^{\beta-1}]}$ via
\begin{equation}
\begin{aligned}\label{bbb}
E_{D_\gamma}:=&\{\omega: L_{\mathcal{L}^{+}\to D_\gamma}(\omega)\leq (\mu_\gamma +\e /2)t\}
\cap \{L_{ D_\gamma\to E}(\omega)\leq (\mu_{{\rm pp},\gamma} +\e /2)t\},
\end{aligned}
\end{equation}
where
\begin{equation}
\mu_{\gamma}=2(1+\eta)\gamma,\quad \mu_{{\rm pp},\gamma}=(1-\gamma)(1+\sqrt{\eta})^{2}.
\end{equation}
Then
\begin{equation}\label{eq29}
\Pb\bigg(\bigcup_{D_\gamma} \Omega\setminus E_{D_\gamma}\bigg)\leq C' \exp(-c' t^{2/3-\chi})
\end{equation}
for some constants $C',c'>0$.
\end{prop}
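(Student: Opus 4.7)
The argument is a union bound. First note that although $\gamma$ ranges over an uncountable set, the map $\gamma\mapsto D_\gamma=(\lfloor\gamma\eta t\rfloor,\lfloor\gamma t\rfloor)$ takes at most $\lfloor t\rfloor+1$ distinct values (the second coordinate determines the point). Hence the union in \eqref{eq29} is effectively over $\mathcal{O}(t)$ events, and it suffices to bound $\Pb(\Omega\setminus E_{D_\gamma})$ uniformly in $\gamma\in[0,1-t^{\beta-1}]$ by $\tilde C\exp(-\tilde c\, t^{2/3-\chi})$: the polynomial prefactor $t$ will be absorbed by the exponential decay since $\chi<2/3$.

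For a fixed $D_\gamma$, decompose $\Omega\setminus E_{D_\gamma}=A_\gamma\cup B_\gamma$ with
\[
A_\gamma=\{L_{\mathcal{L}^+\to D_\gamma}>(\mu_\gamma+\varepsilon/2)t\},\quad B_\gamma=\{L_{D_\gamma\to E}>(\mu_{\mathrm{pp},\gamma}+\varepsilon/2)t\}.
\]
The first probability is handled directly by Corollary~\ref{CorLplus}, which gives $\Pb(A_\gamma)\leq C\exp(-\tilde c\,t^{2/3-\chi})$ uniformly in $\gamma$. For the second, by translation invariance $L_{D_\gamma\to E}$ is distributed like a point-to-point LPP on a rectangle of sides $(\lfloor(1-\gamma)\eta t\rfloor,\lfloor(1-\gamma)t\rfloor)$. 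Setting $\ell=(1-\gamma)t$ and $s=\varepsilon t/(2\ell^{1/3})$, the event $B_\gamma$ is exactly $\{L_{0\to(\lfloor\eta\ell\rfloor,\lfloor\ell\rfloor)}>\mu_{\mathrm{pp}}\ell+s\,\ell^{1/3}\}$ since $\mu_{\mathrm{pp},\gamma}t=(1-\gamma)(1+\sqrt\eta)^2 t=\mu_{\mathrm{pp}}\ell$ and $s\,\ell^{1/3}=\varepsilon t/2$. The constraint $\gamma\leq 1-t^{\beta-1}$ gives $\ell\geq t^\beta\to\infty$, and $(1-\gamma)^{1/3}\leq 1$ gives $s\geq t^{2/3-\chi}/2\to\infty$. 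So for $t$ large the hypotheses $\ell\geq\ell_0$ and $s\geq s_0$ of Proposition~\ref{devone} hold uniformly in $\gamma$, yielding $\Pb(B_\gamma)\leq C\exp(-c\,t^{2/3-\chi}/2)$.

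Summing over the at most $t+1$ distinct $D_\gamma$ gives
\[
\Pb\bigg(\bigcup_{D_\gamma}\Omega\setminus E_{D_\gamma}\bigg)\leq (t+1)\cdot 2C\exp(-c''\,t^{2/3-\chi})\leq C'\exp(-c'\,t^{2/3-\chi})
\]
for $t$ large and any $c'<c''$. The only real point to watch is the uniformity of the bounds in $\gamma$, which is precisely why the cutoff $\gamma\leq 1-t^{\beta-1}$ is imposed (without it, $\ell$ would fail to diverge uniformly). A minor technicality is that for the at most one $D_\gamma$ with $\lfloor\gamma t\rfloor=0$ the rectangle touches the row $j=0$ whose weights are $\exp(\alpha)$ rather than $\exp(1)$; this perturbation affects a single $D_\gamma$ and contributes only a lower-order correction to the leading-order bound, which can be absorbed into the constants by a direct stochastic-comparison argument.
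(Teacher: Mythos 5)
Your argument is correct and coincides with the paper's proof: a union bound over the $\Or(t)$ distinct points $D_\gamma$, Corollary~\ref{CorLplus} for the $\mathcal{L}^+$-to-$D_\gamma$ piece, and translation invariance plus Proposition~\ref{devone} (with $\ell=(1-\gamma)t$ and $s$ of order $t^{2/3-\chi}$, uniformly bounded below thanks to the cutoff $\gamma\leq 1-t^{\beta-1}$) for the point-to-point piece. The row-$0$ edge case you flag at the end concerns at most the single point $D_\gamma=(0,0)$ and is silently ignored in the paper; your parenthetical dismissal of it is the only non-rigorous clause, but it does not affect the substance of the argument.
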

\begin{proof}
To get the result, notice that there are $\mathcal{O}(t)$ many points $D_\gamma$, \mbox{$\gamma \in [0,1-t^{\beta-1}]$}, so that it is enough to get a good bound (uniform in $\gamma$) of $\Pb\left(\Omega\setminus E_{D_\gamma}\right)$. We have
\begin{equation}\label{devoneb}
\Pb\left(\Omega\setminus E_{D_\gamma}\right)\leq \Pb(L_{\mathcal{L}^{+} \to D_\gamma}\geq (\mu_\gamma +\e /2)t) + \Pb( L_{D_\gamma\to E}\geq (\mu_{{\rm pp},\gamma} +\e /2)t).
\end{equation}
According to  Corollary~\ref{CorLplus} there is a $t_0$ such that for $t>t_0$ we get
\begin{equation}\label{devonefirst}
\Pb(L_{\mathcal{L}^{+} \to D_\gamma}\geq (\mu_\gamma +\e /2)t)\leq C \exp(-\tilde{c} t^{2/3-\chi}).
\end{equation}
Remark that (with $\overset{d}{=}$ designating equality in distribution)
\begin{equation}\label{eqd}
L_{D_\gamma\to E} \overset{d}{=}  L_{0\to(\lfloor (1-\gamma) \eta t\rfloor,\lfloor(1-\gamma) t\rfloor)}.
\end{equation}
Furthermore, Proposition~\ref{devone} with $\ell=(1-\gamma) t$ and
$s=\frac{\e  t^{2/3}}{(1- \gamma)^{1/3}}$ gives
 \begin{equation}\label{devonetwice}
\Pb( L_{D_\gamma\to E}\geq (\mu_{{\rm pp},\gamma} +\e /2)t)\leq C \exp\left(-\e t^{2/3}\frac{c}{(1- \gamma)^{1/3}}\right)\leq C \exp(-c t^{2/3-\chi}).
\end{equation}
The bounds (\ref{devonefirst}) and (\ref{devonetwice}) imply that, for some constants $C',c'$,
\begin{equation}
\Pb\left(\Omega\setminus E_{D_\gamma}\right)\leq C' \exp(- c' t^{2/3-\chi}).
\end{equation}
Being the number of $D_\gamma$ of order $t$ only, the claimed bound holds true.
\end{proof}

Now we know that if a path goes through a point $D_\gamma$, then its typical last passage time is smaller than $(\mu_\gamma+\mu_{{\rm pp},\gamma}+2\e)t$. However, the typical last passage time of the maximizing paths is $\mu t,$ which is much larger.
\begin{prop}\label{PropLplusB}
Fix an $0< \eta < 1$, a $\beta\in (1/3,1]$, and $\gamma\in [0,1-t^{\beta-1}]$. Let us set \mbox{$\e = C t^{\beta-1}$}.
Then for all $t>0$ it holds
\begin{equation}
\frac{(\mu_{\gamma}+\mu_{{\rm pp},\gamma}+\e -\mu)t}{t^{1/3}}\leq -Ct^{\beta-1/3},
\end{equation}
with $C= (1-\sqrt{\eta})^2/2$, and
\begin{equation}
\mu=2(1+\eta),\quad \mu_{\gamma}=2(1+\eta)\gamma,\quad \mu_{{\rm pp},\gamma}=(1-\gamma)(1+\sqrt{\eta})^{2}.
\end{equation}
\end{prop}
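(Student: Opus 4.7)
The statement is purely algebraic, so the plan is to just verify the inequality by computing $\mu_\gamma+\mu_{\rm pp,\gamma}-\mu$ explicitly and then using the restriction on $\gamma$.

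First I would simplify the difference
\begin{equation*}
\mu_\gamma+\mu_{\rm pp,\gamma}-\mu = 2(1+\eta)\gamma+(1-\gamma)(1+\sqrt{\eta})^2-2(1+\eta).
\end{equation*}
Factoring out $(1-\gamma)$ gives
\begin{equation*}
\mu_\gamma+\mu_{\rm pp,\gamma}-\mu = (1-\gamma)\left[(1+\sqrt{\eta})^2-2(1+\eta)\right] = -(1-\gamma)(1-\sqrt{\eta})^2,
\end{equation*}
where in the last step I use $(1+\sqrt{\eta})^2-2(1+\eta)=1+2\sqrt{\eta}+\eta-2-2\eta=-(1-\sqrt{\eta})^2$. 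This is the key algebraic identity: the two macroscopic velocities on the two legs of the broken characteristic fall short of the straight characteristic by exactly $(1-\gamma)(1-\sqrt{\eta})^2$, which is strictly positive whenever $\eta\neq 1$ and $\gamma<1$.

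Next I would insert the assumption $\gamma\in[0,1-t^{\beta-1}]$, which gives $1-\gamma\geq t^{\beta-1}$ and therefore
\begin{equation*}
\mu_\gamma+\mu_{\rm pp,\gamma}+\eps-\mu \leq -t^{\beta-1}(1-\sqrt{\eta})^2 + C t^{\beta-1} = t^{\beta-1}\bigl[C-(1-\sqrt{\eta})^2\bigr].
\end{equation*}
Choosing $C=(1-\sqrt{\eta})^2/2$ as in the statement, the bracket equals $-(1-\sqrt{\eta})^2/2=-C$, hence $\mu_\gamma+\mu_{\rm pp,\gamma}+\eps-\mu\leq -C t^{\beta-1}$. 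Multiplying by $t/t^{1/3}=t^{2/3}$ yields the claimed upper bound $-Ct^{\beta-1/3}$.

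There is no real obstacle here; the only point worth noting is that the constant $C$ plays a dual role (appearing both in $\eps$ and in the final bound), and the choice $C=(1-\sqrt{\eta})^2/2$ is exactly the value that absorbs half of the macroscopic deficit and leaves the other half as the final estimate. This computation holds for every $t>0$, so no large-$t$ assumption is needed.
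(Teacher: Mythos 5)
Your proof is correct and follows essentially the same route as the paper: compute $\mu_\gamma+\mu_{\rm pp,\gamma}-\mu=-(1-\gamma)(1-\sqrt{\eta})^2$, use $1-\gamma\geq t^{\beta-1}$, and observe that the choice $C=(1-\sqrt{\eta})^2/2$ absorbs exactly half of the deficit. Nothing is missing.
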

\begin{proof}
A simple computations gives, for $0<\eta<1$,
\begin{equation}
\begin{aligned}
\frac{(\mu_{\gamma}+\mu_{{\rm pp},\gamma}+\e -\mu)t}{t^{1/3}}&=t^{\beta-1/3} (1-\sqrt{\eta})^2/2- (1-\gamma)(1-\sqrt{\eta})^2\\
&\leq -t^{\beta-1/3} (1-\sqrt{\eta})^2/2,
\end{aligned}
\end{equation}
where we used $1-\gamma\geq t^{\beta-1}$.
\end{proof}

We can now proceed to the final Proposition.
\begin{prop}\label{PropLplusC}
Fix an $0< \eta < 1$, a $\beta\in (1/3,1]$ and $\gamma\in [0,1-t^{\beta-1}]$.
Then, there exists a $t_0>0$ such that for all $t\geq t_0$ it holds
\begin{equation}\label{mouad}
\Pb(\{\omega:D_\gamma\in \pi^{\rm max}_{\mathcal{L}^{+}\to E}(\omega)\}) \leq C\exp(-c\, t^{\beta-1/3}),
\end{equation}
for some $t$-independent constants $C,c>0$.
\end{prop}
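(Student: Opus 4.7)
The plan is to combine the two preceding propositions via the decomposition
\[
\Pb(D_\gamma\in \pi^{\mathrm{max}}_{\mathcal{L}^{+}\to E})
\leq
\Pb\bigl(\{D_\gamma\in \pi^{\mathrm{max}}_{\mathcal{L}^{+}\to E}\}\cap E_{D_\gamma}\bigr)
+ \Pb(\Omega\setminus E_{D_\gamma}),
\]
where $E_{D_\gamma}$ is the ``typical behavior'' event of Proposition~\ref{PropLplusA}. I will take the parameter $\chi$ of Proposition~\ref{PropLplusA} to be $\chi=1-\beta$ (admissible since $\beta\in (1/3,1]$), so that $\e=t^{-\chi}=t^{\beta-1}$ matches exactly the $\e$ used in Proposition~\ref{PropLplusB}. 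With this choice, the complementary event is already controlled by the single-$\gamma$ bound underlying Proposition~\ref{PropLplusA}: $\Pb(\Omega\setminus E_{D_\gamma})\leq C\exp(-c\,t^{2/3-\chi})=C\exp(-c\,t^{\beta-1/3})$.

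On the event $\{D_\gamma\in \pi^{\mathrm{max}}_{\mathcal{L}^{+}\to E}\}\cap E_{D_\gamma}$, super-additivity $L_{\mathcal{L}^{+}\to E}=L_{\mathcal{L}^{+}\to D_\gamma}+L_{D_\gamma\to E}$ together with the two bounds defining $E_{D_\gamma}$ gives $L_{\mathcal{L}^{+}\to E}\leq (\mu_\gamma+\mu_{\mathrm{pp},\gamma}+\e)t$. Plugging in Proposition~\ref{PropLplusB} rewrites this as $L_{\mathcal{L}^{+}\to E}\leq \mu t - c_0\,t^{\beta}$ with $c_0=(1-\sqrt{\eta})^{2}/2$, so the contribution of this event is bounded by $\Pb(L_{\mathcal{L}^{+}\to E}\leq \mu t - c_0\,t^{\beta})$, a lower-tail estimate for the half-line LPP.

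To control this lower tail I would dominate the half-line by a single, well-chosen point $A^{*}\in\mathcal{L}^{+}$. The natural choice is $A^{*}=(-\lfloor(1-\eta)t/2\rfloor,\lfloor(1-\eta)t/2\rfloor)$, the asymptotically optimal starting point: from $A^{*}$ the endpoint $E$ is reached across a square of side $\lfloor(1+\eta)t/2\rfloor$, all of whose interior vertices have second coordinate $\geq 1$ and therefore carry $\exp(1)$ weights, and Proposition~\ref{propJohConvergence} applied with parameters $\eta_{\mathrm{pp}}=1$ and $\ell=(1+\eta)t/2$ delivers mean $(1+\sqrt{1})^{2}\ell=2(1+\eta)t=\mu\,t$. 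The trivial inclusion $L_{\mathcal{L}^{+}\to E}\geq L_{A^{*}\to E}$ together with the point-to-point lower tail of Proposition~\ref{devtwo} applied at $s\asymp -t^{\beta-1/3}$ yields
\[
\Pb\bigl(L_{\mathcal{L}^{+}\to E}\leq \mu t - c_0\,t^{\beta}\bigr)\leq \Pb\bigl(L_{A^{*}\to E}\leq \mu t - c_0\,t^{\beta}\bigr)\leq C\exp\bigl(-c\,t^{3(\beta-1/3)/2}\bigr),
\]
which is a fortiori at most $C\exp(-c\,t^{\beta-1/3})$ since $3(\beta-1/3)/2\geq \beta-1/3$ for $\beta\geq 1/3$.

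The only technical subtleties I foresee are bookkeeping: checking that the integer-part rounding in $A^{*}$ and $D_\gamma$ shifts the centering $\mu t$ by at most $O(1)$, and verifying that every path from $A^{*}$ to $E$ lies entirely in the $j\geq 1$ half-plane so that the homogeneous $\exp(1)$ environment of Proposition~\ref{devtwo} applies. No new deviation estimate is required: the argument is a clean combination of Propositions~\ref{PropLplusA}--\ref{PropLplusB} with the standard point-to-point lower tail.
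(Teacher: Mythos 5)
Your proposal follows the same route as the paper's proof: decompose over the good event $E_{D_\gamma}$ of Proposition~\ref{PropLplusA}, use Proposition~\ref{PropLplusB} to turn passage through $D_\gamma$ into a lower-tail deviation of size $t^{\beta}$ for $L_{\mathcal{L}^{+}\to E}$, and control that lower tail by dominating the half-line by its orthogonal projection point (your $A^{*}$ is exactly the paper's $Z^{+}$) and invoking Proposition~\ref{devtwo}, with the final exponent comparison $3(\beta-1/3)/2\geq\beta-1/3$. All of that is correct.

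One step as written would fail, though it is easily repaired. You claim that taking $\chi=1-\beta$ in Proposition~\ref{PropLplusA} gives $\e=t^{\beta-1}$, which ``matches exactly'' the $\e$ of Proposition~\ref{PropLplusB}. It does not: Proposition~\ref{PropLplusB} requires $\e=C\,t^{\beta-1}$ with $C=(1-\sqrt{\eta})^2/2$, and this constant matters. The deficit being exploited is $(\mu_\gamma+\mu_{\mathrm{pp},\gamma}-\mu)t=-(1-\gamma)(1-\sqrt{\eta})^2 t$, which for $\gamma$ at the top of its range equals $-(1-\sqrt{\eta})^2 t^{\beta}$; since $(1-\sqrt{\eta})^2<1$ for every $\eta\in(0,1)$, adding $\e t=t^{\beta}$ would make $\mu_\gamma+\mu_{\mathrm{pp},\gamma}+\e-\mu$ \emph{positive} there, and the reduction to a lower-tail event breaks down for those $\gamma$. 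The correct resolution is the paper's: fix $\e=(1-\sqrt{\eta})^2 t^{\beta-1}/2$ as in Proposition~\ref{PropLplusB}, and note that the bound of Proposition~\ref{PropLplusA} survives the extra constant prefactor on $\e$ (only the constant $c'$ in the exponent changes). With that adjustment your argument is complete.
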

\begin{proof}[Proof of Proposition~\ref{PropLplusC}]
Denote by $I_{D_\gamma}$ the event that the maximizers from $\mathcal{L}^+$ to $E$ passes by the point $D_\gamma$, namely
\begin{equation}
I_{D_\gamma}=\{\omega:D_\gamma\in \pi^{\rm max}_{\mathcal{L}^{+}\to E}(\omega)\}.
\end{equation}
Let us choose $\e = t^{\beta-1} (1-\sqrt{\eta})^2/2$. Then,
\begin{equation}
\Pb(I_{D_\gamma})\leq \Pb\bigg(I_{D_\gamma}\cap \Big(\bigcap_{D_\gamma} E_{D_\gamma}\Big)\bigg)+\Pb\bigg(\Big(\bigcap_{D_\gamma} E_{D_\gamma}\Big)^c\bigg).
\end{equation}
The second term is exactly (\ref{eq29}) with $\chi=1-\beta$ (the extra coefficient in the definition of $\e$ is irrelevant, since it just modifies the value of the constant $c'$). Thus, the decay of the second term is as $\exp(-c' t^{\beta-1/3})$.

To bound the first term, notice that if $\omega \in I_{D_\gamma}$ and at the same time in each of the $E_{D_\gamma}$'s, then by Propositions~\ref{PropLplusA} and~\ref{PropLplusB},
\begin{equation}
\begin{aligned}
L_{\mathcal{L}^+\to E}(\omega) &\leq (\mu_\gamma+\mu_{{\rm pp},\gamma}+\e)t=\mu t + (\mu_\gamma+\mu_{{\rm pp},\gamma}+\e-\mu)t \\
&\leq \mu t - (C t^{\beta-1/3}) t^{1/3}.
\end{aligned}
\end{equation}
Therefore,
\begin{equation}\label{eq41}
\Pb\bigg(I_{D_\gamma}\cap \Big(\bigcap_{D_\gamma} E_{D_\gamma}\Big)\bigg) \leq
\Pb(L_{\mathcal{L}^+\to E}\leq \mu t - (C t^{\beta-1/3}) t^{1/3}).
\end{equation}
Further, denote by $Z^+$ the orthogonal projection of $E$ on $\mathcal{L}^{+}$, i.e., \mbox{$Z^+=\lfloor \frac{1-\eta}{2} \rfloor (-1,1)$}. Then, since $L_{\mathcal{L}^{+}\to E}\geq L_{Z^+ \to E}$, it follows that
\begin{equation}\label{eq42}
(\ref{eq41})\leq \Pb(L_{Z^+\to E}\leq \mu t - (C t^{\beta-1/3}) t^{1/3}).
\end{equation}
Moreover, since $L_{Z^+\to E}\overset{d}{=}L_{0\to \left(\lfloor \frac{1+\eta}{2}t\rfloor, \lfloor \frac{1+\eta}{2}t\rfloor \right)}$ we can apply the bound of Proposition~\ref{devtwo} (with $\ell \to (1+\eta) t/2$, $\eta \to 1$, and $s\ell^{1/3}\to C t^\beta$) to obtain
\begin{equation}\label{eq43bb}
(\ref{eq42})\leq \tilde C \exp(-\tilde c t^{3\beta/2-1/2})
\end{equation}
for some constants $\tilde C,\tilde c>0$.

Since for $\beta\in (1/3,1]$ and $\beta-1/3 \leq 3\beta/2-1/2$, then for all $t$ large enough
\begin{equation}
\Pb(I_{D_\gamma})\leq C \exp(-c t^{\beta-1/3}),
\end{equation}
for some $t$-independent constants $C,c>0$, which is the claimed result.
\end{proof}

\begin{proof}[Proof of Proposition~\ref{nocrossingPlus}]
The proof is a straightforward consequence of Proposition~\ref{PropLplusC}, since the cardinality of the family of points $\{D_\gamma\}_{\gamma\in [0,1-t^{\beta-1}]}$ is only of order $t$.
\end{proof}

\begin{proof}[Proof of Proposition~\ref{nocrossing+}]
The proof is very similar to the one of Proposition~\ref{nocrossingPlus}. Note first that for
$\gamma> 1-t^{\nu-1}$ then $\Pb(D_\gamma \in \pi_{\mathcal{L}^{+}\to E^{+}}^{\max})=0$.
For the analogue of Proposition~\ref{PropLplusA}, one only has to replace $E$ by $E^{+}$ in \eqref{bbb},
which amounts to replace $\eta$ by  $\tilde{\eta}=\frac{(1-\gamma)\eta t-t^{\nu}}{(1-\gamma)t-t^{\nu}}\to_{t\to\infty} \eta$  in \eqref{eqd},
$\mu_{\mathrm{pp},\gamma}$ by $\mu_{\mathrm{pp},\gamma}^{+}=(1-\gamma-t^{\nu-1})\left(1+\sqrt{\frac{\eta-t^{\nu-1}}{1-t^{\nu-1}}}\right)^{2}\to_{t\to\infty} \mu_{\mathrm{pp},\gamma}$ and apply Proposition~\ref{devone} to this new point-to-point LPP.
 The following  analogue
 of Proposition~\ref{PropLplusB} is a bit different.
\begin{prop}\label{PropLplusB+}
Fix an $0< \eta < 1$, a $\nu,\beta\in (1/3,1)$ , and $\gamma\in [0,1-t^{\beta-1}]$. Let us set \mbox{$\e = C t^{\beta-1}$}.
Then for all $t$ large it holds
\begin{equation}
\frac{(\mu_{\gamma}^{+}+\mu_{{\rm pp},\gamma}^{+}+\e -\mu^{+})t}{t^{1/3}}\leq -C t^{\beta-1/3},
\end{equation}
with $C= (1-\sqrt{\eta})^2/4$, and
\begin{equation}
\mu^{+}=2(1+\eta)-4t^{\nu-1},\quad \mu_{\gamma}^{+}=\gamma\mu^+,\quad \mu_{{\rm pp},\gamma}^{+}=(1-\gamma-t^{\nu-1})\left(1+\sqrt{\frac{\eta-t^{\nu-1}}{1-t^{\nu-1}}}\right)^{2}.
\end{equation}
\end{prop}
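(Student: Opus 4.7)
The plan is to reduce everything to an algebraic estimate of $\mu_{\rm pp,\gamma}^+ - (1-\gamma)\mu^+$, since $\mu_\gamma^+ = \gamma\mu^+$ gives $\mu_\gamma^+ + \mu_{\rm pp,\gamma}^+ - \mu^+ = \mu_{\rm pp,\gamma}^+ - (1-\gamma)\mu^+$, and then to argue as in Proposition~\ref{PropLplusB}. Setting $\delta := t^{\nu-1}$, from the expansion $\sqrt{(\eta-\delta)/(1-\delta)} = \sqrt\eta - \tfrac{1-\eta}{2\sqrt\eta}\delta + O(\delta^2)$ I would first obtain
\[
\left(1+\sqrt{\tfrac{\eta-\delta}{1-\delta}}\right)^2 = (1+\sqrt\eta)^2 - \frac{(1-\sqrt\eta)(1+\sqrt\eta)^2}{\sqrt\eta}\,\delta + O(\delta^2),
\]
and then, multiplying by $(1-\gamma-\delta)$, subtracting $(1-\gamma)\mu^+ = (1-\gamma)(2(1+\eta)-4\delta)$, and using the identity $(1+\sqrt\eta)^2 - 2(1+\eta) = -(1-\sqrt\eta)^2$, arrive at
\[
\mu_{\rm pp,\gamma}^+ - (1-\gamma)\mu^+ = -(1-\gamma)(1-\sqrt\eta)^2 + \delta\bigl[(1-\gamma)A(\eta) - B(\eta)\bigr] + O(\delta^2),
\]
where $A(\eta) := 4-(1-\sqrt\eta)(1+\sqrt\eta)^2/\sqrt\eta$ and $B(\eta) := (1+\sqrt\eta)^2$.

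The main obstacle is that the $\delta$-correction is of size $t^{\nu-1}$, so after multiplying by $t$ it produces an $O(t^\nu)$ term that would naively dwarf the leading contribution of size $(1-\gamma)t\asymp t^\beta$, because $\beta<\nu$. To overcome this I would show that the coefficient $(1-\gamma)A(\eta) - B(\eta)$ is \emph{uniformly non-positive} in $(1-\gamma)\in[0,1]$. Being linear in $(1-\gamma)$, its maximum on $[0,1]$ equals $\max(-B(\eta),\,A(\eta)-B(\eta))$; the first quantity is clearly negative, while
\[
A(\eta)-B(\eta) = 4 - \frac{(1+\sqrt\eta)^2}{\sqrt\eta} = 4 - \left(\frac{1}{\sqrt\eta}+2+\sqrt\eta\right) < 0
\]
for every $\eta\in(0,1)$ by AM-GM (strict since equality in $1/\sqrt\eta+\sqrt\eta\geq 2$ occurs only at $\eta=1$). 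Hence there exists $c_0(\eta)>0$ with $(1-\gamma)A(\eta)-B(\eta)\leq -c_0(\eta)$, and for $t$ large enough the $O(\delta^2)$ remainder is absorbed by $c_0(\eta)\delta/2$, yielding $\mu_{\rm pp,\gamma}^+ - (1-\gamma)\mu^+ \leq -(1-\gamma)(1-\sqrt\eta)^2$.

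From here the argument mirrors that of Proposition~\ref{PropLplusB}: multiplying by $t$, adding $\e t = Ct^\beta$, and using $1-\gamma \geq t^{\beta-1}$ together with $C=(1-\sqrt\eta)^2/4$, I obtain
\[
(\mu_\gamma^+ + \mu_{\rm pp,\gamma}^+ + \e - \mu^+)\,t \leq \bigl(C-(1-\sqrt\eta)^2\bigr)t^\beta = -\tfrac{3}{4}(1-\sqrt\eta)^2\,t^\beta,
\]
which after division by $t^{1/3}$ is at most $-\tfrac{3}{4}(1-\sqrt\eta)^2\,t^{\beta-1/3}\leq -Ct^{\beta-1/3}$, completing the proof. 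The slightly smaller constant $C=(1-\sqrt\eta)^2/4$ (as opposed to $(1-\sqrt\eta)^2/2$ in Proposition~\ref{PropLplusB}) provides precisely the cushion that absorbs the Taylor remainder, and if $1-\gamma<t^{\nu-1}$ the formal quantity $\mu_{\rm pp,\gamma}^+$ is negative so the inequality holds a fortiori.
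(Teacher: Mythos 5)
Your proof is correct, but it takes a more computational route than the paper. You expand $\bigl(1+\sqrt{(\eta-\delta)/(1-\delta)}\bigr)^2$ to first order in $\delta=t^{\nu-1}$ and then verify that the coefficient $(1-\gamma)A(\eta)-B(\eta)$ of the $\delta$-term is uniformly negative on $(1-\gamma)\in[0,1]$ (via linearity and the AM--GM check $4<1/\sqrt\eta+2+\sqrt\eta$), so that the entire $t^{\nu-1}$ correction can simply be discarded as nonpositive. The paper avoids the Taylor expansion altogether: it uses the elementary monotonicity $\sqrt{\tfrac{\eta-t^{\nu-1}}{1-t^{\nu-1}}}<\sqrt\eta$ for $\eta<1$, which immediately gives $\mu^+_{\rm pp,\gamma}\leq(1-\gamma)(1+\sqrt\eta)^2$; the only surviving correction is then the $+4(1-\gamma)t^{\nu-1}$ coming from $-(1-\gamma)\mu^+$, and since this carries the factor $(1-\gamma)$ it is dominated by $(1-\gamma)(1-\sqrt\eta)^2$ once $t$ is large (using $\nu<1$), at the cost of a factor $1/2$ in the leading constant. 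The concern you flag — that an $O(t^{\nu})$ term not proportional to $(1-\gamma)$ could overwhelm the $(1-\gamma)t\gtrsim t^\beta$ main term — is real a priori, and your uniform sign analysis and the paper's upper bound are two valid ways of showing that no such term with the wrong sign survives; the paper's is shorter, yours makes the structure of the $\delta$-correction explicit. Your closing remark that the case $1-\gamma<t^{\nu-1}$ (where $\mu^+_{\rm pp,\gamma}<0$) holds a fortiori is also correct, though it deserves the one-line check that $(1-\gamma)\mu^+\geq 2Ct^{\beta-1}$, which follows from $1-\gamma\geq t^{\beta-1}$ and $\mu^+\geq 2>2C$.
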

\begin{proof}
Using $\sqrt{\frac{\eta-t^{\nu-1}}{1-t^{\nu-1}}}<\sqrt{\eta}$ for $\eta<1$, we have $\mu_{{\rm pp},\gamma}^{+}\leq (1-\gamma)(1+\sqrt{\eta})^2$ so that \begin{equation}\label{eq123}
\frac{(\mu_{\gamma}^{+}+\mu_{{\rm pp},\gamma}^{+}+\e -\mu^{+})t}{t^{1/3}}\leq C t^{\beta-1/3}-(1-\gamma)\left(t^{2/3}(1-\sqrt{\eta})^2-4t^{\nu-1/3}\right).
\end{equation}
Then, using $\nu<1$ and $1-\gamma\geq t^{\beta-1}$ we have, for $t$ large enough,
\begin{equation}
(\ref{eq123}) \leq C t^{\beta-1/3}-t^{\beta-1/3}(1-\sqrt{\eta})^2/2=-C t^{\beta-1/3}.
\end{equation}
\end{proof}
With these two analogous statements at hand, we can adopt the proof of Proposition~\ref{PropLplusC},
simply replace  again $E$ by $E^{+}$ in \eqref{eq42}, and then again apply Proposition~\ref{devtwo} with $\ell\to \frac{1+\eta}{2}t-t^{\nu}$ to obtain a bound analogous to \eqref{eq43bb},  which finishes the proof.
\end{proof}

\begin{proof}[Proof of Proposition~\ref{nocrossingPtPt}]
The proof of Proposition~\ref{nocrossingPtPt} is almost identical, so let us indicate just the minor modifications.
What we have to do is to replace $\mathcal{L}^+$ with the point $(\lfloor(\eta-1)t\rfloor,0)$, now $\mu=4$ and $\mu_\gamma=4\gamma$. Further, there is one simplification, namely, the step (\ref{eq42}) is not needed (we would have equality in there).
\end{proof}

\begin{proof}[Proof of Proposition~\ref{nocrossingsymmetric}]
The analogue of  Proposition~\ref{PropLplusA} can be proven almost identically, one has
$\mu_\gamma=\left(1+\sqrt{\frac{1+\gamma}{\gamma}}\right)^{2}\gamma, \mu_{{\rm pp},\gamma}=4(1-\gamma)$ and uses twice Proposition~\ref{devone}.

The analogue of Proposition~\ref{PropLplusB} is again a bit different.
\begin{prop}\label{PropLplusB++}
Fix a $\beta\in (1/3,1]$ , and $\gamma\in [0,1-t^{\beta-1}]$. Let us set \mbox{$\e = C t^{\beta-1}$}.
Then for all $t$ large it holds
\begin{equation}
\frac{(\mu_{\gamma}+\mu_{{\rm pp},\gamma}+\e -\mu)t}{t^{1/3}}\leq -C t^{\beta-1/3},
\end{equation}
with $C= (3-2\sqrt{2})/4$, and
\begin{equation}
\mu_\gamma=\left(1+\sqrt{\frac{1+\gamma}{\gamma}}\right)^{2}\gamma,\quad \mu=(1+\sqrt{2})^{2},\quad \mu_{{\rm pp},\gamma}=4(1-\gamma).
\end{equation}
\end{prop}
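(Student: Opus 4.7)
The plan mirrors the proofs of Propositions~\ref{PropLplusB} and~\ref{PropLplusB+}: reduce the claim to a pointwise inequality on $[0,1]$ for
\begin{equation*}
h(\gamma):=\mu_\gamma+\mu_{{\rm pp},\gamma}-\mu=1+2\gamma+2\sqrt{\gamma(1+\gamma)}+4(1-\gamma)-(3+2\sqrt{2}),
\end{equation*}
and then use $1-\gamma\geq t^{\beta-1}$ to convert this into the stated $t^{\beta-1/3}$ bound. Unlike in Proposition~\ref{PropLplusB}, $h$ will not factor as a perfect square times $-(1-\gamma)$, so one extra algebraic manipulation is needed to expose the vanishing factor at $\gamma=1$.

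First I would simplify $h(\gamma)=2(1-\gamma)+2(\sqrt{\gamma(1+\gamma)}-\sqrt{2})$ and rationalize the square-root difference as $\sqrt{\gamma(1+\gamma)}-\sqrt{2}=(\gamma-1)(\gamma+2)/(\sqrt{\gamma(1+\gamma)}+\sqrt{2})$, which exhibits the common factor $(1-\gamma)$:
\begin{equation*}
h(\gamma)=2(1-\gamma)\,g(\gamma),\qquad g(\gamma):=1-\frac{\gamma+2}{\sqrt{\gamma(1+\gamma)}+\sqrt{2}}.
\end{equation*}
A direct computation gives $g(1)=(4-3\sqrt{2})/4<0$, so $h$ vanishes at $\gamma=1$ and is locally controlled by $g(1)$.

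The main algebraic step is to prove the uniform bound $g(\gamma)\leq g(1)$ on $[0,1]$. Rearranging, this is $\sqrt{2}(2\gamma+1)\geq 3\sqrt{\gamma(1+\gamma)}$; both sides are non-negative, so squaring yields $2(2\gamma+1)^2\geq 9\gamma(1+\gamma)$, which simplifies to $(1-\gamma)(\gamma+2)\geq 0$, trivially true on $[0,1]$. Therefore
\begin{equation*}
h(\gamma)\leq 2(1-\gamma)g(1)=-(1-\gamma)\,\tfrac{3\sqrt{2}-4}{2}.
\end{equation*}

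Inserting $\e=Ct^{\beta-1}$ with $C=(3-2\sqrt{2})/4$ and using $1-\gamma\geq t^{\beta-1}$ on the range of interest,
\begin{equation*}
\frac{(h(\gamma)+\e)t}{t^{1/3}}\leq \Bigl(C-\tfrac{3\sqrt{2}-4}{2}\Bigr)t^{\beta-1/3}=\tfrac{11-8\sqrt{2}}{4}\,t^{\beta-1/3},
\end{equation*}
and the required bound $-Ct^{\beta-1/3}=\tfrac{2\sqrt{2}-3}{4}\,t^{\beta-1/3}$ follows from the numerical inequality $\tfrac{11-8\sqrt{2}}{4}\leq \tfrac{2\sqrt{2}-3}{4}$, i.e.\ $5\sqrt{2}\geq 7$, equivalently $50\geq 49$. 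The only non-routine ingredient is the identification of the factored form and the maximum of $g$ on $[0,1]$; everything else is book-keeping, and the slim margin $50\geq 49$ is exactly what permits the constant $C=(3-2\sqrt{2})/4$ to close the argument.
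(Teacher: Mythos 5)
Your proof is correct. It performs the same reduction as the paper---bound $h(\gamma)=\mu_\gamma+\mu_{{\rm pp},\gamma}-\mu$ from above and then feed in $1-\gamma\geq t^{\beta-1}$---but the control of $h$ is obtained differently. The paper notes that $h$ is increasing in $\gamma$ and then Taylor-expands at $\gamma=1$, obtaining $h(\gamma)=(\gamma-1)\tfrac{3-2\sqrt{2}}{\sqrt{2}}+\Or((\gamma-1)^{2})$ and absorbing the quadratic error ``for $t$ large enough.'' You instead extract the factor $(1-\gamma)$ exactly, writing $h(\gamma)=2(1-\gamma)g(\gamma)$ and proving $g(\gamma)\leq g(1)$ on all of $[0,1]$ by rationalizing and squaring. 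Since $2g(1)=-\tfrac{3\sqrt{2}-4}{2}=-\tfrac{3-2\sqrt{2}}{\sqrt{2}}$ is precisely the paper's first-order coefficient, you recover the same linear bound, but as an exact inequality valid on the whole interval with no remainder. That buys you two things: you need neither the monotonicity claim nor any discussion of when the $\Or((\gamma-1)^{2})$ term is dominated (a point that is slightly delicate in the paper at $\beta=1$, where that term is of the same order $t^{2/3}$ as the leading one), and the conclusion holds for all $t$ rather than only for $t$ large. Your final numerical check is equivalent to the paper's requirement $\tfrac{3-2\sqrt{2}}{\sqrt{2}}\geq 2C$, and is verified correctly; I would only caution that the ``slim margin $50\geq 49$'' is an artifact of the particular squaring you chose rather than an intrinsic tightness of the constant, since $C=(3-2\sqrt{2})/4$ sits strictly below the actual threshold $\tfrac{3\sqrt{2}-4}{4}$.
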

\begin{proof}[Proof of Proposition~\ref{PropLplusB++}]
We have
\begin{equation}
\mu_{\gamma}+\mu_{{\rm pp},\gamma}-\mu=2 \left(\sqrt{\frac{1}{\gamma }+1}-1\right) \gamma -2 \sqrt{2}+2
\end{equation}
that is increasing in $\gamma$. Further, it holds
\begin{equation}
\mu_{\gamma}+\mu_{{\rm pp},\gamma}-\mu=(\gamma-1)\frac{3-2\sqrt{2}}{\sqrt{2}}+\Or((\gamma-1)^2).
\end{equation}
Thus by choosing $\gamma=1-t^{\beta-1}$ we get
\begin{equation}
\begin{aligned}\label{calc}
\frac{(\mu_{\gamma}+\mu_{{\rm pp},\gamma}+\e -\mu)t}{t^{1/3}}&\leq -t^{\beta-1/3} \left(\frac{3-2\sqrt{2}}{\sqrt{2}}-C\right)+  \Or(t^{2(\beta-1/3)})\leq -C t^{\beta-1/3}
\end{aligned}
\end{equation}
for $t$ large enough.
\end{proof}
The analogue of Proposition~\ref{PropLplusC} can be proven almost identically, the only difference being that the step~\eqref{eq42} is not needed.
\end{proof}

\subsubsection{Proof of Proposition~\ref{nocrossingMinus}}\label{sectProofNocrossingLminus}
The proof is very close to the one of Proposition~\ref{nocrossingPlus}, therefore we will skip some of the details, focusing more on the differences.
\begin{prop}\label{PropLminusA}
Fix an $\eta>\alpha^2/(2-\alpha)^2$, a $\beta\in (1/3,1]$ and  a $\chi\in (0,2/3)$. Let us set $\e = t^{-\chi}$.
We define a finite family of events $\big\{\tilde E_{D_{\gamma}}\big\}_{\gamma \in [0,1-t^{\beta-1}]}$ via
\begin{equation}
\begin{aligned}
\tilde E_{D_\gamma}:=&\{\omega: L_{\mathcal{L}^{-}\to D_\gamma}(\omega)\leq (\tilde \mu_\gamma +\e /2)t\}
\cap \{L_{ D_\gamma\to E}(\omega)\leq (\mu_{{\rm pp},\gamma} +\e /2)t\},
\end{aligned}
\end{equation}
where
\begin{equation}\label{eq66}
\tilde \mu_{\gamma}= 2 \gamma \left(\frac{\eta}{\alpha}+\frac{1}{2-\alpha}\right),\quad \mu_{{\rm pp},\gamma}=(1-\gamma)(1+\sqrt{\eta})^{2}.
\end{equation}
Then
\begin{equation}
\Pb\bigg(\bigcup_{D_\gamma} \Omega\setminus \tilde E_{D_\gamma}\bigg)\leq C' \exp(-c' t^{2/3-\chi})
\end{equation}
for some constants $C',c'>0$.
\end{prop}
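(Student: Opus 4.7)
The plan is to mimic the proof of Proposition~\ref{PropLplusA} almost verbatim, simply replacing the $\mathcal{L}^+$-to-point bound with the corresponding $\mathcal{L}^-$-to-point bound from Corollary~\ref{CorLminus}. Since the number of distinct points $D_\gamma$ with $\gamma\in[0,1-t^{\beta-1}]$ is of order $t$, it suffices by a union bound to show that, uniformly in $\gamma$, $\Pb(\Omega\setminus \tilde E_{D_\gamma})\leq C\exp(-c\,t^{2/3-\chi})$ for some absolute constants $C,c>0$; the extra polynomial factor $t$ from the union bound is then absorbed into the exponential at the price of slightly worse constants.

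For a fixed $\gamma$, I would split
\begin{equation}
\Pb(\Omega\setminus\tilde E_{D_\gamma})\leq \Pb\!\left(L_{\mathcal{L}^{-}\to D_\gamma}> (\tilde\mu_\gamma+\e/2)t\right)+\Pb\!\left(L_{D_\gamma\to E}> (\mu_{{\rm pp},\gamma}+\e/2)t\right).
\end{equation}
The first term is controlled directly by Corollary~\ref{CorLminus} (with $\chi\in(0,2/3)$ and $\beta\in(1/3,1]$ as given here), which yields a bound $C\exp(-\tilde c\,t^{2/3-\chi})$ uniform in $\gamma\in[0,1-t^{\beta-1}]$. This is why the leading order $\tilde\mu_\gamma$ in~(\ref{eq66}) is precisely what Corollary~\ref{CorLminus} provides.

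For the second term, note that $L_{D_\gamma\to E}$ only involves the weights $\omega_{i,j}$ with $j\geq 1$, all of which are $\exp(1)$. Hence by translation invariance we have the identity in distribution
\begin{equation}
L_{D_\gamma\to E}\stackrel{d}{=} L_{0\to(\lfloor(1-\gamma)\eta t\rfloor,\lfloor(1-\gamma)t\rfloor)},
\end{equation}
so that this is a standard point-to-point LPP with $\exp(1)$ weights. Applying Proposition~\ref{devone} with $\ell=(1-\gamma)t$ and $s=\e t^{2/3}/(2(1-\gamma)^{1/3})$ gives
\begin{equation}
\Pb(L_{D_\gamma\to E}> (\mu_{{\rm pp},\gamma}+\e/2)t)\leq C\exp\!\left(-c\,\e t^{2/3}/(1-\gamma)^{1/3}\right)\leq C\exp(-c\,t^{2/3-\chi}),
\end{equation}
uniformly in $\gamma\leq 1$. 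Combining the two bounds and multiplying by the $\Or(t)$ cardinality of the family of points $D_\gamma$ yields the claim.

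There is no real obstacle here: the deviation inputs have already been established in Corollary~\ref{CorLminus} (which is itself the nontrivial ingredient, relying on the Fredholm determinant representation of Proposition~\ref{PropDistrLminus} and the steep-descent estimates of Propositions~\ref{propDecayHalfLinePt2a}--\ref{propDecayHalfLinePt2b}) and in Proposition~\ref{devone}. The only point to verify is the uniformity in $\gamma$, which follows from $\gamma\leq 1$ in both estimates.
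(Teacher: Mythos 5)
Your proposal is correct and follows essentially the same route as the paper: the paper's proof of Proposition~\ref{PropLminusA} simply refers back to the proof of Proposition~\ref{PropLplusA}, replacing Corollary~\ref{CorLplus} by Corollary~\ref{CorLminus} for the half-line term and using the distributional identity $L_{D_\gamma\to E}\overset{d}{=}L_{0\to(\lfloor(1-\gamma)\eta t\rfloor,\lfloor(1-\gamma)t\rfloor)}$ together with Proposition~\ref{devone} for the point-to-point term, followed by the union bound over the $\Or(t)$ points $D_\gamma$. Your bookkeeping of the constants (including the harmless factor of $2$ in the choice of $s$) and the uniformity in $\gamma\leq 1$ match the paper's argument.
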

\begin{proof}
The proof is like the one of Proposition~\ref{PropLplusA}, with the only difference that we employ Corollary~\ref{CorLminus} instead of Corollary~\ref{CorLplus} to control the decay of $\Pb(L_{\mathcal{L}^{-} \to D_\gamma}\geq (\tilde{\mu}_\gamma +\e /2)t)$.
\end{proof}

Now we know that if a path goes through the a point $D_\gamma$, then its typical last passage time is smaller than $(\tilde \mu_\gamma+\mu_{{\rm pp},\gamma}+2\e)t$. However, the typical last passage time of the maximizing path is $\tilde{\mu} t$ which is much larger.
\begin{prop}\label{PropLminusB}
Fix $\eta>\alpha^2/(2-\alpha)^2$, $\beta\in (1/3,1]$, and $\gamma\in [0,1-t^{\beta-1}]$.
Let us set  \mbox{$\e = C t^{\beta-1}$}. Then for all $t>0$ it holds
\begin{equation}\label{eq68}
\frac{(\tilde \mu_{\gamma}+\mu_{{\rm pp},\gamma}+\e -\tilde\mu)t}{t^{1/3}}\leq -\tilde{C}t^{\beta-1/3},
\end{equation}
with $C=\frac{(\alpha-(2-\alpha)\sqrt{\eta})^2}{2\alpha(2-\alpha)}$, and
\begin{equation}
\tilde\mu= 2 \left(\frac{\eta}{\alpha}+\frac{1}{2-\alpha}\right),\quad \tilde \mu_{\gamma}= \gamma \tilde\mu, \quad \mu_{{\rm pp},\gamma}=(1-\gamma)(1+\sqrt{\eta})^{2}.
\end{equation}
\end{prop}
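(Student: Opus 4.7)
The statement is a purely algebraic inequality in the spirit of Proposition~\ref{PropLplusB}, so the plan is to reduce it to a single identity and then to a one-line estimate using $1-\gamma\geq t^{\beta-1}$.

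First I would rewrite the combination $\tilde\mu_\gamma+\mu_{{\rm pp},\gamma}-\tilde\mu$ by factoring $(1-\gamma)$ out. Since $\tilde\mu_\gamma=\gamma\tilde\mu$ and $\mu_{{\rm pp},\gamma}=(1-\gamma)(1+\sqrt\eta)^2$, one immediately gets
\begin{equation}
\tilde\mu_\gamma+\mu_{{\rm pp},\gamma}-\tilde\mu
=-(1-\gamma)\bigl[\tilde\mu-(1+\sqrt\eta)^2\bigr].
\end{equation}
The core of the proof is then the algebraic identity
\begin{equation}
\tilde\mu-(1+\sqrt\eta)^2
=\frac{\bigl(\alpha-(2-\alpha)\sqrt\eta\bigr)^{2}}{\alpha(2-\alpha)},
\end{equation}
which is verified by bringing $\tilde\mu=\tfrac{2[\eta(2-\alpha)+\alpha]}{\alpha(2-\alpha)}$ and $(1+\sqrt\eta)^2$ to the common denominator $\alpha(2-\alpha)$ and noticing that the resulting numerator $\eta(2-\alpha)^2+\alpha^2-2\alpha(2-\alpha)\sqrt\eta$ is exactly the expansion of $(\alpha-(2-\alpha)\sqrt\eta)^2$. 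The assumption $\eta>\alpha^2/(2-\alpha)^2$ then ensures that $\alpha-(2-\alpha)\sqrt\eta\neq 0$, so this quantity is strictly positive (and equals $2C$ with $C$ as in the statement).

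Combining these two steps yields
\begin{equation}
\tilde\mu_\gamma+\mu_{{\rm pp},\gamma}-\tilde\mu
=-(1-\gamma)\cdot 2C,\qquad C=\frac{\bigl(\alpha-(2-\alpha)\sqrt\eta\bigr)^{2}}{2\alpha(2-\alpha)}.
\end{equation}
Plugging in $\varepsilon=Ct^{\beta-1}$ and using $1-\gamma\geq t^{\beta-1}$ (which is the very definition of the range of $\gamma$) gives
\begin{equation}
\frac{(\tilde\mu_\gamma+\mu_{{\rm pp},\gamma}+\varepsilon-\tilde\mu)\,t}{t^{1/3}}
=\varepsilon t^{2/3}-(1-\gamma)\,2C\,t^{2/3}
\leq Ct^{\beta-1/3}-2Ct^{\beta-1/3}
=-Ct^{\beta-1/3},
\end{equation}
which is the required bound with $\tilde C=C$.

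There is no real obstacle here: the only point requiring a touch of care is checking that the numerator of $\tilde\mu-(1+\sqrt\eta)^2$ collapses to a perfect square, which is what makes the hypothesis $\eta>\alpha^2/(2-\alpha)^2$ enter naturally and guarantees the strict positivity of the gap. Once that identity is in hand, the estimate is literally one line, exactly parallel to Proposition~\ref{PropLplusB}, and can be used as the input to the analogue of Proposition~\ref{PropLplusC} in the proof of Proposition~\ref{nocrossingMinus}.
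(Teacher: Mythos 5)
Your proof is correct and follows essentially the same route as the paper: both reduce the claim to the identity $\tilde\mu_\gamma+\mu_{{\rm pp},\gamma}-\tilde\mu=(\gamma-1)\frac{(\alpha-(2-\alpha)\sqrt\eta)^2}{\alpha(2-\alpha)}$ and then apply $1-\gamma\geq t^{\beta-1}$ together with the choice of $\e$. You merely spell out the algebraic verification that the paper compresses into ``a simple computation,'' and your conclusion $\tilde C=C$ matches the paper's final bound.
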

\begin{proof}
A simple computations gives,
\begin{equation}
\begin{aligned}
\frac{(\tilde \mu_{\gamma}+\mu_{{\rm pp},\gamma}+\e -\tilde\mu)t}{t^{1/3}} &= (\gamma-1) \frac{(\alpha-(2-\alpha)\sqrt{\eta})^2}{\alpha(2-\alpha)}t^{2/3}+C t^{\beta-1/3}\\
&\leq -t^{\beta-1/3} \left( \frac{(\alpha-(2-\alpha)\sqrt{\eta})^2}{\alpha(2-\alpha)}-C\right) \leq -C t^{\beta-1/3}
\end{aligned}
\end{equation}
where we used $\alpha<1$ and $\gamma-1\leq -t^{\beta-1}$ and the fact that $\eta>\alpha^2/(2-\alpha)^2$.
\end{proof}

We can now proceed to the final proposition.
\begin{prop}\label{PropLminusC}
Fix an $\eta>\alpha^2/(2-\alpha)^2$, a $\beta\in (1/3,1]$ and let \mbox{$\gamma\in [0,1-t^{\beta-1}]$}.
Then, there exists a $t_0>0$ such that for all $t\geq t_0$ it holds
\begin{equation}
\Pb(\{\omega:D_\gamma\in \pi^{\rm max}_{\mathcal{L}^{-}\to E}(\omega)\}) \leq \tilde{C}\exp(-c\, t^{\beta-1/3}),
\end{equation}
for some $t$-independent constants $\tilde{C},c>0$.
\end{prop}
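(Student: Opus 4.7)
The plan is to mimic the proof of Proposition~\ref{PropLplusC} in a largely mechanical way, with one substantive new ingredient. Setting $\e=\tilde C_{0}\,t^{\beta-1}$ with $\tilde C_{0}$ as prescribed in Proposition~\ref{PropLminusB}, and writing $I_{D_\gamma}=\{\omega:D_\gamma\in\pi^{\rm max}_{\mathcal{L}^{-}\to E}(\omega)\}$, I would start from the decomposition
\begin{equation}
\Pb(I_{D_\gamma})\leq \Pb\bigg(I_{D_\gamma}\cap\bigcap_{D_{\gamma'}}\tilde E_{D_{\gamma'}}\bigg)+\Pb\bigg(\bigg(\bigcap_{D_{\gamma'}}\tilde E_{D_{\gamma'}}\bigg)^{c}\bigg).
\end{equation}
The second term is handled directly by Proposition~\ref{PropLminusA} with $\chi=1-\beta$, and yields a bound of order $\exp(-c\,t^{\beta-1/3})$ after a union bound over the $\Or(t)$ choices of $D_{\gamma'}$ (the prefactor in the definition of $\e$ is, as in the proof of Proposition~\ref{PropLplusC}, absorbed into the constant).

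On $I_{D_\gamma}\cap\bigcap_{\gamma'}\tilde E_{D_{\gamma'}}$ the maximizer passes through $D_\gamma$, so that $L_{\mathcal{L}^{-}\to E}=L_{\mathcal{L}^{-}\to D_\gamma}+L_{D_\gamma\to E}$ on this event; together with the two bounds defining $\tilde E_{D_\gamma}$ and Proposition~\ref{PropLminusB} this gives $L_{\mathcal{L}^{-}\to E}(\omega)\leq \tilde\mu t-\tilde C\,t^{\beta}$. The first term in the decomposition is therefore at most $\Pb(L_{\mathcal{L}^{-}\to E}\leq \tilde\mu t-\tilde C\,t^{\beta})$, and what remains is to show that this quantity decays like $\exp(-c\,t^{\beta-1/3})$.

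This is the point at which the argument departs from Proposition~\ref{PropLplusC}. I would use the trivial inequality $L_{\mathcal{L}^{-}\to E}\geq L_{Z^{-}\to E}$ with $Z^{-}=\bigl(\eta-\alpha^{2}/(2-\alpha)^{2}\bigr)(t/2,-t/2)\in\mathcal{L}^{-}$, the choice of $Z^{-}$ being forced by requiring that $(z,-z)\mapsto \max_{m}\bigl[(1/\alpha)(\sqrt{m-z}+\sqrt{z})^{2}+(\sqrt{\eta t-m}+\sqrt{t})^{2}\bigr]$ be maximized in $z$ at our chosen point (a short differentiation in $z$ gives the critical relation $m=2z$, and a further differentiation in $m$ gives the Snell-type condition $(2-\alpha)/\alpha=\sqrt{t/(\eta t-2z)}$, pinning down $z=\bigl(\eta-\alpha^{2}/(2-\alpha)^{2}\bigr)t/2$), and yielding $\tilde\mu t$ as the leading order of $L_{Z^{-}\to E}$. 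Because $L_{Z^{-}\to E}$ traverses both the $j\leq 0$ and $j\geq 1$ regions, Proposition~\ref{devtwo} cannot be applied to it directly. I would bypass this by dropping to the further lower bound
\begin{equation}
L_{Z^{-}\to E}\geq L_{Z^{-}\to M}+L_{M'\to E},\qquad M=(2z,0),\ M'=(2z,1),
\end{equation}
whose two summands are independent (they live on disjoint blocks of $\omega$'s) and each is a homogeneous point-to-point LPP, of rate $\alpha$ and $1$ respectively, with leading orders $4z/\alpha$ and $4t/(2-\alpha)^{2}$ summing exactly to $\tilde\mu t$. After the trivial rescaling $\omega\mapsto\alpha\omega$ in the first piece, Proposition~\ref{devtwo} applies to each sub-LPP with deviation scale $\sim t^{\beta-1/3}$, and a union bound gives a tail of order $\exp(-c\,t^{(3\beta-1)/2})$; since $(3\beta-1)/2\geq\beta-1/3$ for $\beta\geq 1/3$, this is absorbed into the target bound $\tilde C\exp(-c\,t^{\beta-1/3})$.

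The main obstacle I anticipate is precisely this last piece: one must do the Snell-type shape-function computation to justify the choice of $Z^{-}$ and the matching midpoint $M=(2z,0)$, and then verify that the two sub-LPPs thereby produced fall squarely in the homogeneous point-to-point regime of Proposition~\ref{devtwo} with leading orders that add up to $\tilde\mu t$. Once that decomposition is in hand, everything else is a routine transcription of the arguments of Section~\ref{sectProofNocrossingLplus}, with $\mathcal{L}^{+}$ replaced by $\mathcal{L}^{-}$ and Propositions~\ref{PropLplusA},~\ref{PropLplusB} replaced by their ``minus'' counterparts~\ref{PropLminusA},~\ref{PropLminusB}.
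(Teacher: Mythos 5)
Your proposal is correct and follows essentially the same route as the paper: the same reduction via $\tilde E_{D_\gamma}$ and Propositions~\ref{PropLminusA}--\ref{PropLminusB}, and then the same key step of lower-bounding $L_{\mathcal{L}^-\to E}$ by a concatenation through $Z^-=(\kappa t/2,-\kappa t/2)$ and $B=(\kappa t,0)$ (your $M$, with the harmless variant $M'=(2z,1)$) so that Proposition~\ref{devtwo} applies to two homogeneous point-to-point problems whose leading orders sum to $\tilde\mu t$. The independence of the two sub-LPPs that you invoke is not actually needed, since the union bound $\Pb(X+Y\leq a+b)\leq\Pb(X\leq a)+\Pb(Y\leq b)$ holds without it, which is how the paper proceeds.
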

\begin{proof}[Proof of Proposition~\ref{PropLminusC}]
This proof is very close to the one of Proposition~\ref{PropLplusC}. This time we choose $\e= \frac{C}{2} t^{\beta-1}$ with $C=\frac{(\alpha-(2-\alpha)\sqrt{\eta})^2}{2\alpha(2-\alpha)}$ and denote by $\tilde I_{D_\gamma}$ the events such that the maximizers from $\mathcal{L}^-$ to $E$ passes by the point $D_\gamma$. Then,
\begin{equation}
\Pb(\tilde I_{D_\gamma})\leq \Pb\bigg(\tilde I_{D_\gamma}\cap \Big(\bigcap_{D_\gamma} \tilde E_{D_\gamma}\Big)\bigg)+\Pb\bigg(\Big(\bigcap_{D_\gamma} \tilde E_{D_\gamma}\Big)^c\bigg).
\end{equation}
Using Corollary~\ref{CorLminus} we can bound the second term as $\exp(-c' t^{\beta-1/3})$. By Propositions~\ref{PropLminusA} and~\ref{PropLminusB} we obtain
\begin{equation}
\begin{aligned}
L_{\mathcal{L}^-\to E}(\omega) &\leq (\tilde \mu_\gamma+\mu_{{\rm pp},\gamma}+\e)t=\tilde \mu t + (\tilde \mu_\gamma+\mu_{{\rm pp},\gamma}+\e-\tilde\mu)t \\
&\leq \tilde \mu t - (\tilde{C} t^{\beta-1/3}) t^{1/3}
\end{aligned}
\end{equation}
for $\omega \in \tilde I_{D_\gamma}$ and at the same time in each of the $\tilde E_{D_\gamma}$'s.
Therefore,
\begin{equation}\label{eq41b}
\Pb\bigg(\tilde I_{D_\gamma}\cap \Big(\bigcap_{D_\gamma} \tilde E_{D_\gamma}\Big)\bigg) \leq
\Pb\left(L_{\mathcal{L}^-\to E}\leq \mu t - (\tilde{C} t^{\beta-1/3}) t^{1/3}\right).
\end{equation}
The following is slightly different from the previous proof. Denote by
\begin{equation}
Z^-= (\kappa t/2,-\kappa t/2),\quad B=(\kappa t,0),
\end{equation}
where $\kappa=\eta-\alpha^2/(2-\alpha)^2$.
Then, since $L_{\mathcal{L}^{-}\to E}\geq L_{Z^- \to B}+ L_{B\to E}$, it follows that
\begin{equation}\label{eq42b}
\begin{aligned}
(\ref{eq41b})& \leq \Pb\left(L_{Z^-\to B}+L_{B\to E}\leq \tilde \mu t - (\tilde{C} t^{\beta-1/3}) t^{1/3}\right)\\
&\leq \Pb\bigg(L_{Z^-\to B}\leq \tilde \mu_1 t - \frac{\tilde{C} t^{\beta-1/3}}{2} t^{1/3}\bigg)+
\Pb\bigg(L_{B\to E}\leq \tilde \mu_2 t - \frac{\tilde{C} t^{\beta-1/3}}{2} t^{1/3}\bigg),
\end{aligned}
\end{equation}
where $\tilde\mu_1=2\kappa/\alpha$ and $\tilde\mu_2=\tilde\mu-\tilde\mu_1=4/(2-\alpha)^2$. We can finally apply the bound of Proposition~\ref{devtwo} to the two point-to-point problems and finish the proof as in Proposition~\ref{PropLplusC}.
\end{proof}

\begin{proof}[Proof of Proposition~\ref{nocrossingMinus}]
The proof is a straightforward consequence of Proposition~\ref{PropLminusC}, since the cardinality of the family of points $\{D_\gamma\}_{\gamma\in [0,1-t^{\beta-1}]}$ is only of order $t$.
\end{proof}

\subsection{Verification of Assumptions~\ref{Assumpt1}--\ref{Assumpt3}}\label{sectApplications}
\begin{proof}[Proof of Corollary~\ref{Cor1}]
Assumption~\ref{Assumpt1} is fulfilled through Propositions~\ref{propHalfFlatConvergence} and~\ref{propHalfFlatConvergence2}. Note that taking $\hat{\sigma}_\eta,\tilde{\sigma}_\eta$ or
$\hat{\sigma}_{\eta_{0}},\tilde{\sigma}_{\eta_{0}}$ yields the same limits.
Let $\tilde{\mu}_\eta=2(\eta/\alpha+1/(2-\alpha))$ and $\mu_\eta=2(1+\eta)$ be the leading order terms  of the two LPP problems for $\eta$.
The shift in $G_2$ comes from the fact that $\frac{(\mu-\tilde{\mu}_\eta)t}{t^{1/3}}=-\frac{2u}{\alpha}$ and $\frac{(\mu-\mu_\eta)t}{t^{1/3}}=-2u$. Assumption~\ref{Assumpt2} is directly satisfied via Propositions~\ref{propJohConvergence} and~\ref{propHalfFlatConvergence} with $E^{+}=(\eta t-t^{\nu},t-t^{\nu})$. Finally, Assumption~\ref{Assumpt3} is precisely the content of Propositions~\ref{nocrossingPlus} and~\ref{nocrossingMinus}.
\end{proof}

\begin{proof}[Proof of Corollary~\ref{Cor2}]
Clearly any maximizing path $\pi_{\mathcal{L}^{+}\to (\eta t,t)}^{\max}$ starts off at $(-\lfloor \beta_0 t +b t^{1/3}\rfloor ,0)$. Let $\tilde{\mu}_\eta=2(\eta/\alpha+1/(2-\alpha))$ and  $\mu_{\mathrm{pp},\eta}=4+2(u+b)t^{-2/3}$ be the leading order terms of the two LPP problems for $\eta$.Then we have $\frac{(4-\tilde{\mu}_\eta)t}{t^{1/3}}=-\frac{2u}{\alpha}$, $\frac{(4-\mu_{\mathrm{pp},\eta})t}{t^{1/3}}=-2(u+b)$.
Assumption~\ref{Assumpt1} is fulfilled through Propositions~\ref{propJohConvergence} and~\ref{propHalfFlatConvergence2}. The requirement $\alpha<1$ comes from the requirement $\eta_0>\alpha^2/(2-\alpha)^2$ from Proposition~\ref{propHalfFlatConvergence2}.
Assumption~\ref{Assumpt2} is directly satisfied via Propositions~\ref{propJohConvergence} and~\ref{propHalfFlatConvergence}. Finally, Assumption~\ref{Assumpt3} is precisely the content of Propositions~\ref{nocrossingPtPt} and~\ref{nocrossingMinus}.
\end{proof}

\begin{proof}[Proof of Corollary~\ref{Cor3}]
Any maximizing path $\pi_{\mathcal{L}^{+}\to (\eta t,t)}^{\max}$ starts off from $(-\lfloor \beta t \rfloor,0)$.
Let $\mu_{\mathrm{pp},\eta}=(1+\sqrt{1+\beta})^{2}+(1+\frac{1}{\sqrt{1+\beta}})ut^{-2/3}$ be the leading order of $L_{\mathcal{L}^{+}\to(\eta t,t)}$, i.e.
$\frac{(\mu-\mu_{\mathrm{pp},\eta})t}{t^{1/3}}=\mbox{$(1+\frac{1}{\sqrt{1+\beta}})u$}$, so
Assumption~\ref{Assumpt1} is fulfilled through Proposition~\ref{propJohConvergence} with $G_1(s)=F_2(s/\sigma-u(1+1/\sqrt{1+\beta})/\sigma)$.
Note now $L_{\mathcal{L}^{-}\to (\eta t, t)}\overset{d}{=}L_{0\to (\eta t, (1+\beta)t)}$, implying that the leading order of this LPP is $\mu_{\mathrm{pp},\gamma}=(1+\sqrt{1+\beta})^{2}+(1+\sqrt{1+\beta})ut^{-2/3}$  so that
$\frac{(\mu-\mu_{\mathrm{pp},\gamma})t}{t^{1/3}}=-u(1+\sqrt{1+\beta})$, which shows $G_2(s)=F_2(s/\sigma-u(1+\sqrt{1+\beta})/\sigma)$.
Assumption~\ref{Assumpt2} is directly satisfied via Proposition~\ref{propJohConvergence}. Finally, Assumption~\ref{Assumpt3} holds by Proposition~\ref{nocrossingsymmetric}.

\end{proof}

\section{Derivation of the kernel for TASEP with $\alpha$-particles}\label{sectDistrLminus}
In order to prove Proposition~\ref{PropDistrLminus} we first study the system with only $M$ $\alpha-$particles. We denote by $\Pb^{(M)}$ the probability measure for this system. The system we are considering is then recovered by taking the $M\to\infty$. We first recall the generic theorem for joint distributions in TASEP, specialized to our jump rates and initial configuration.

\begin{prop}[Proposition 4 in~\cite{BFS09}]\label{PropFiniteM}
Let us consider particles starting from
\begin{equation}\label{IC}
x_j (0)= 2(M-j), 1 \leq j \leq M,  \quad x_j(0)= -j+M, j> M
\end{equation} and having jump rates
$v_j$  given by
\begin{equation}\label{jumprates}
v_j=\alpha,1\leq j\leq M,\quad v_j=1, j> M.
\end{equation}
Denote $x_{j}(t)$ the position of particle $j$ at time $t$. Then
\begin{equation}
\Pb^{(M)}(x_{n}(t) > s) =\det(\Id -\chi_{s} K_{n,t} \chi_{s})_{\ell^{2}(\mathbb{Z})},
\end{equation}
where $\chi_{s}=\Id(x <s)$.
The kernel $K_{n,t}$ is given by
\begin{equation}\label{newkern}
K_{n,t}(x_1,x_2)= \sum_{k=1}^{n}\Psi_{n-k}^{n,t}(x_1)\Phi_{n-k}^{n,t}(x_2).
\end{equation}
 The functions $\Psi_{n-j}^{n,t}$ are given by
\begin{equation}\label{Psi}
\Psi_{n-j}^{n,t}(x)=\frac{1}{2\pi \I}\oint_{\Gamma_{0}}\frac{\mathrm{d}w}{w}\frac{e^{tw}}{w^{x-x_{j}(0)+n-j}}\prod_{k=j+1}^{n}(w-v_k).
\end{equation}
The functions $\{\Phi_{n-j}^{n,t}\}_{1\leq j \leq n}$ are characterized by the two conditions:
\begin{equation}\label{orthogonality}
 \langle \Psi_{n-j}^{n,t},\Phi_{n-k}^{n,t}\rangle:=\sum_{x\in \mathbb{Z}}\Psi_{n-j}^{n,t}(x)\Phi_{n-k}^{n,t}(x)=\delta_{j,k},\quad 1\leq j,k\leq n,
\end{equation}
and
\begin{equation}\label{spanning}
\mathrm{span}\{ \Phi_{n-j}^{n,t}(x), 1\leq j\leq n \}=\mathrm{span}\{1,x,\ldots,x^{n-M-1},\alpha^{x},x\alpha^{x},\ldots,x^{M-1}\alpha^{x}\}.
\end{equation}
\end{prop}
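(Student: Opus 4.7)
The proposition is essentially Proposition 4 of \cite{BFS09}, so the proof amounts to specializing the general biorthogonalization framework for TASEP with arbitrary particle-wise jump rates (developed in \cite{BFS09}, building on earlier work such as \cite{Sas05, BFPS06}) to our specific initial condition (\ref{IC}) and jump rates (\ref{jumprates}). Two things then need verification: that the specialization produces the explicit contour-integral formula (\ref{Psi}) for $\Psi$, and that the span (\ref{spanning}) is the correct one for $\Phi$.

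The first step is a Sch\"utz-type representation for the transition probability. For TASEP with particle-dependent rates $v_1,\ldots,v_n$, the joint law of $(x_1(t),\ldots,x_n(t))$ starting from a fixed initial configuration is given by a determinant of contour-integral kernels that generalize Sch\"utz's classical formula, with the rates entering through factors of the form $(w-v_k)$. Combined with the Eynard--Mehta theorem applied to the resulting signed measure, this yields a determinantal point-process structure on $\Z$ whose correlation kernel has the biorthogonal form (\ref{newkern}) expressed through two $n$-dimensional families $\{\Psi_{n-k}^{n,t}\}$ and $\{\Phi_{n-k}^{n,t}\}$. A direct contour-integral computation of the single-particle propagator, with Bethe-ansatz factors $\prod_{k=j+1}^{n}(w-v_k)$ encoding the rates of the particles $j+1,\ldots,n$ ahead of the marker and a factor $w^{-(x-x_j(0))}$ encoding the displacement from the starting position $x_j(0)$, then yields exactly the formula (\ref{Psi}).

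The $\Phi$'s are characterized intrinsically by biorthogonality (\ref{orthogonality}) together with the span condition (\ref{spanning}). The span is dictated by the initial data and the rates: in the general framework of \cite{BFS09}, a block of $m$ particles sharing a common rate $v$ and placed on a matching arithmetic progression contributes the subspace $\{v^{x},\,xv^{x},\ldots,\,x^{m-1}v^{x}\}$ to the span of $\{\Phi_{n-k}^{n,t}\}_{k=1}^n$. For our setup there are $M$ slow particles (rate $\alpha$, positions $2(M-j)$ for $1\le j\le M$, an arithmetic progression of step $2$) and $n-M$ fast particles (rate $1$, positions $-j+M$ for $M<j\le n$, step $1$). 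These contribute, respectively, $\{\alpha^{x},x\alpha^{x},\ldots,x^{M-1}\alpha^{x}\}$ and $\{1,x,\ldots,x^{n-M-1}\}$, whose union is $n$-dimensional and matches (\ref{spanning}) exactly.

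The only genuine technical content lies in the general biorthogonalization theorem of \cite{BFS09}; what is model-specific here are two routine verifications, namely the contour-integral computation of $\Psi$, and the identification of the correct span for $\Phi$ by matching the initial positions and rates to the abstract formulas. The main obstacle one might worry about -- producing a closed-form expression for $\Phi$ -- is in fact not required at this stage, since (\ref{spanning}) characterizes $\Phi$ only implicitly. An explicit expression for $\Phi_{n-k}^{n,t}$, suitable for asymptotic analysis, will only be extracted afterwards by performing biorthogonalization and then taking $M\to\infty$ to derive Proposition~\ref{PropDistrLminus}.
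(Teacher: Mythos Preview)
The paper does not actually prove this proposition: it is stated as a direct citation (``Proposition~4 in~\cite{BFS09}'') and recalled without argument, with the subsequent lemma then supplying explicit formulas for $\Phi$ and $\Psi$. Your proposal takes the same route---deferring to the general biorthogonalization theorem of \cite{BFS09}---and adds a reasonable sketch of why the specialization produces (\ref{Psi}) and (\ref{spanning}); this is consistent with the paper's approach and correct.

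One small point: your heuristic that ``a block of $m$ particles sharing a common rate $v$ and placed on a matching arithmetic progression contributes $\{v^x,xv^x,\ldots,x^{m-1}v^x\}$'' gives the right answer here, but the phrasing slightly overstates the role of the arithmetic progression. In the framework of \cite{BFS09} the span is governed by the multiplicities of the rates $v_k$ appearing in the product $\prod_{k}(w-v_k)$ (a rate $v$ of multiplicity $m$ forces $\Phi$ to lie in the space spanned by $x^j v^x$, $0\le j\le m-1$), and the specific spacing of the initial positions enters only through the exponent of $w$ in $\Psi$, not through the span of $\Phi$. This does not affect the validity of your argument.
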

The following lemma gives explicit formulas for the orthogonal functions $\Phi,\Psi$ defined in the preceeding proposition.  We only give them for $n\geq M+1$, since these are the ones we need.
\begin{lem}
Let $n\geq M+1$. We then have two cases:
\begin{itemize}
\item[(a)] for $j=M+1,\ldots,n,$
\begin{equation}
\begin{aligned}
&\Psi_{n-j}^{n,t}(x)=\frac{1}{2\pi \I}\oint_{\Gamma_{-1}}\frac{\dx w}{w+1}\frac{e^{t(w+1)}}{(w+1)^{x-M+n}}
w^{n-j}
\\&\Phi_{n-j}^{n,t}(x)=\frac{1}{2 \pi \I}\oint_{\Gamma_{0}}\dx z \frac{(z+1)^{x-M+n}}{e^{t(z+1)}z^{n-j+1}}
\end{aligned}
\end{equation}
\item[(b)] for $j=1,\ldots,M,$
\begin{equation}
\begin{aligned}
&\Psi_{n-j}^{n,t}(x)=\frac{1}{2\pi \I}\oint_{\Gamma_{-1}}\frac{\dx w}{w+1}\frac{w^{n-M}(w+1-\alpha)^{M-j}}{(w+1)^{x-2M+n+j}}e^{t(w+1)}
\\&\Phi_{n-j}^{n,t}(x)=\frac{1}{(2 \pi \I)^{2}}\oint_{\Gamma_{\alpha-1}}\dx v
\oint_{\Gamma_{0,v}}\dx z\frac{(z+1)^{x-M+n}}{e^{t(z+1)}z^{n-M}}\\
&\hspace{10em}\times\frac{2v+2-\alpha}{((v+1)(v+1-\alpha))^{M-j+1}}\frac{1}{z-v}
\end{aligned}
\end{equation}
\end{itemize}
\end{lem}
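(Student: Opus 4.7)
The plan is to derive the $\Psi_{n-j}^{n,t}$ formulas by direct substitution in~\eqref{Psi} and then to verify that the proposed $\Phi_{n-j}^{n,t}$ formulas satisfy the characterizing conditions~\eqref{orthogonality} and~\eqref{spanning} from Proposition~\ref{PropFiniteM}. For $\Psi$, I substitute the initial condition~\eqref{IC} and jump rates~\eqref{jumprates}: in case~(a), $j\ge M+1$, $x_j(0)=M-j$, and $\prod_{k=j+1}^n(w-v_k)=(w-1)^{n-j}$; in case~(b), $j\le M$, $x_j(0)=2(M-j)$, and $\prod_{k=j+1}^n(w-v_k)=(w-\alpha)^{M-j}(w-1)^{n-M}$. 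In both cases the shift $w\mapsto w+1$ carries $\Gamma_0$ to $\Gamma_{-1}$ and yields the stated expressions.

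For the span condition~\eqref{spanning} I first analyze case~(a): the single contour integral around $z=0$ extracts the coefficient of $z^{n-j}$ in $(z+1)^{x-M+n}e^{-t(z+1)}$, which is a polynomial in $x$ of degree exactly $n-j$. Letting $j$ range over $M+1,\dots,n$ produces polynomials of degrees $0,1,\dots,n-M-1$. In case~(b), I split the inner $z$-integral over $\Gamma_{0,v}$ into residues at $z=0$ and $z=v$. The $z=0$ residue is a polynomial in $x$ of degree at most $n-M-1$, already contained in the case~(a) span, while the $z=v$ residue leaves a single integral over $\Gamma_{\alpha-1}$ whose integrand contains $(v+1)^{x-M+n}/(v+1-\alpha)^{M-j+1}$. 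The residue at $v=\alpha-1$ then produces $\alpha^{x-M+n}$ times a polynomial of degree exactly $M-j$ in $x$, and running $j$ over $1,\dots,M$ fills out the $\alpha^x,x\alpha^x,\dots,x^{M-1}\alpha^x$ part of the span.

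For orthogonality I proceed uniformly. The $\Psi_{n-j}^{n,t}$ integrand has its only singularity at $w=-1$, of order growing linearly in $x$, so $\Psi_{n-j}^{n,t}(x)$ vanishes for $x$ sufficiently negative and decays super-exponentially as $x\to+\infty$ via the Taylor series of $e^{t(w+1)}$. Choosing the $w$-contour with $|w+1|>|z+1|$ (so $\Gamma_{-1}$ has radius exceeding $1$ and therefore encloses $0$ and the $z$-contour), I sum the geometric series in $(z+1)/(w+1)$. This collapses $\sum_x\Psi_{n-j}^{n,t}(x)\Phi_{n-k}^{n,t}(x)$ into a double (triple in case~(b)) contour integral containing $1/(w-z)$; since the $w$-integrand is entire away from $w=z$, the $w$-residue at $w=z$ reduces each case to a lower-dimensional integral. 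Case~(a)-(a) yields directly $\frac{1}{2\pi\I}\oint_{\Gamma_0} z^{k-j-1}\,\dx z=\delta_{j,k}$. Case~(b)-(a) leaves a polynomial in $z$, which integrates to $0$. Case~(a)-(b) gives a $z$-integrand proportional to $1/(z^{j-M}(z-v))$ whose residues at $z=0$ and $z=v$ inside $\Gamma_{0,v}$ cancel. Case~(b)-(b) reduces, after the $z$-residue at $z=v$, to $\frac{1}{2\pi\I}\oint_{\Gamma_{\alpha-1}}\frac{2v+2-\alpha}{(v+1)^{j-k+1}(v+1-\alpha)^{j-k+1}}\,\dx v$.

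The main obstacle is evaluating this last $v$-integral uniformly in $j,k$. It vanishes trivially for $j<k$ (no pole at $v=\alpha-1$) and equals $1$ for $j=k$ by a direct simple-pole residue. For $j>k$ the required cancellation rests on the identity $2v+2-\alpha=(v+1)+(v+1-\alpha)$, which splits the integrand into two symmetric rational functions whose higher-order residues at $v=\alpha-1$ are negatives of each other. Verifying this cancellation for arbitrary pole order, together with the residue-matching in case~(a)-(b) and the polynomial reduction in case~(b)-(a), is the algebraic heart of the argument. The specific numerator $2v+2-\alpha$ in the proposed $\Phi_{n-j}^{n,t}$ and the use of the nested contour $\Gamma_{0,v}$ (rather than $\Gamma_0$ or $\Gamma_v$ alone) are precisely what make these cancellations work, so the main task is to check that these features of the ansatz indeed produce the required $\delta_{j,k}$.
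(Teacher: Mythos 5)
Your proposal is correct and follows essentially the same route as the paper: direct substitution into \eqref{Psi} for $\Psi$, residue expansions at $z=0$ and $z=v$ for the span condition, and the geometric-series summation over $x$ (valid once $|w+1|>|z+1|$) followed by the case-by-case residue analysis for biorthogonality, with all four cases reducing exactly as you describe. The only cosmetic difference is the final (b)--(b) integral, which you evaluate by splitting $2v+2-\alpha=(v+1)+(v+1-\alpha)$ and checking that the two higher-order residues at $v=\alpha-1$ cancel (they do), whereas the paper uses the change of variables $u=(v+1)(v+1-\alpha)$ to reduce it to $\frac{1}{2\pi\I}\oint_{\Gamma_0} u^{-(j-k+1)}\,\dx u=\delta_{j,k}$; both are valid.
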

\begin{proof}
The formulas for $\Psi_{n-j}^{n,t}$ are easily obtained by plugging \eqref{IC},\eqref{jumprates} into \eqref{Psi}.

In case $(a)$, using the derivative formula for the residue, one sees that
$\Phi_{n-j}^{n,t}$ is a polynomial of degree $n-j$ and thus
\begin{equation}
\text{span}\{\Phi_{n-j}^{n,t}(x),j=M+1,\ldots,n\}=\text{span}\{1,x,\ldots,x^{n-M-1}\}.
\end{equation}
In case $(b)$, taking the residue at $z=v$, one gets
\begin{align}\label{A}
&\Phi_{n-j}^{n,t}(x)=\frac{1}{2 \pi \I}\oint_{\Gamma_{\alpha-1}}\text{d}v
\frac{(2v+2-\alpha)(v+1)^{x-2M+j-1}}{e^{t(v+1)}v^{n-M}(v+1-\alpha)^{M-j+1}}
\\&\label{B}+\frac{1}{(2\pi \I)^{2}}\oint_{\Gamma_{\alpha-1}}\text{d}v\oint_{\Gamma_{0}}
\text{d}z\frac{(z+1)^{x-M+n}}{e^{t(z+1)}z^{n-M}}\frac{2v+2-\alpha}{((v+1)
(v+1-\alpha))^{M-j+1}}\frac{1}{z-v}.
\end{align}
Now, $\eqref{A}=\alpha^{x}p_{M-j}(x),$ where $p_{M-j}$ is a polynomial of degree
$M-j$. For \eqref{B}, we choose the integration paths such that $|v|>|z|$, apply the identity $(z-v)^{-1}=-v^{-1}\sum_{\ell\geq 0}
(z/v)^\ell$, and obtain
\begin{equation}
\eqref{B}=\sum_{\ell\geq 0}\frac{-1}{(2\pi \I)^{2}} \oint_{\Gamma_{\alpha-1}}\text{d}v\frac{(2v+2-\alpha)v^{-(\ell+1)}}{((v+1)(v+1-\alpha))^{M-\ell+1}}\oint_{\Gamma_{0}}\text{d}z\frac{(z+1)^{x-M+n}}{e^{t(z+1)}
z^{n-M-\ell}},
\end{equation}
which for $\ell=0,\ldots,n-M-1$ is a polynomial of degree $n-M-1-\ell$, and is $0$ for larger $\ell$. Therefore \eqref{spanning} holds.

Next we check the biorthogonality relations \eqref{orthogonality}.
We shall recurrently use
\begin{equation}\label{eq58}
\sum_{x\geq M-n}\left(\frac{z+1}{w+1}\right)^{x-M+n} =  \frac{w+1}{w-z},
\end{equation}
which holds if $|w+1|>|z+1|$.

\textit{Case} $M+1\leq j,k\leq n$:
\begin{equation}\label{eq59}
\begin{aligned}
\langle \Psi_{n-j}^{n,t},\Phi_{n-k}^{n,t}\rangle
&=\sum_{x \in \mathbb{Z}}\frac{1}{(2\pi \I)^{2}}
\oint_{\Gamma_{-1}}\frac{\textrm{d}w}{w+1}\frac{e^{t(w+1)}w^{n-j}}{(w+1)^{x-M+n}}
\oint_{\Gamma_{0}}\mbox{d} z \frac{(z+1)^{x-M+n}}{e^{t(z+1)}z^{n-j+1}}
\\& =\sum_{x \geq M-n}\frac{1}{(2\pi \I)^{2}}
\oint_{\Gamma_{-1}}\frac{\textrm{d}w}{w+1}\frac{e^{t(w+1)}w^{n-j}}{(w+1)^{x-M+n}}
\oint_{\Gamma_{0}}\mbox{d} z \frac{(z+1)^{x-M+n}}{e^{t(z+1)}z^{n-j+1}}
\end{aligned}
\end{equation}
since for $x<M-n$ the functions $\Psi^{n,t}_{n-j}(x)=0$. We can now choose the integration paths such that $|w+1|>|z+1|$. Applying (\ref{eq58}), the pole at $w=-1$ disappears and instead there is a simple pole at $w=z$,
\begin{equation}\label{eq60}
\begin{aligned}
(\ref{eq59})&=\frac{1}{(2\pi \I)^{2}}\oint_{\Gamma_{0}}\mbox{d} z \frac{1}{e^{t(z+1)}z^{n-j+1}}
\oint_{\Gamma_{z}}dw\frac{e^{t(w+1)}w^{n-j}}{w-z}
\\&=\frac{1}{2\pi \I}\oint_{\Gamma_{0}}\text{d}z\frac{1}{z^{j-k+1}}=\delta_{j,k}.
\end{aligned}
\end{equation}

\textit{Case} $M+1\leq j\leq n$ and $1\leq k\leq M$: Also in this case we first restrict the sum over $x\geq M-n$, use (\ref{eq58}), and integrate out the remaining simple pole at $w=z$, with the result
\begin{equation}\label{eq61}
\begin{aligned}
\langle \Psi_{n-j}^{n,t}&,\Phi_{n-k}^{n,t}\rangle
=\sum_{x \in \mathbb{Z}}\frac{1}{(2\pi \I)^{3}}
\oint_{\Gamma_{-1}}\frac{\text{d}w}{w+1}\frac{e^{t(w+1)}w^{n-j}}{(w+1)^{x-M+n}}\\&\times\oint_{\Gamma_{\alpha-1}}\text{d}v
\oint_{\Gamma_{0,v}}\text{d}z\frac{(z+1)^{x-M+n}}{e^{t(z+1)}z^{n-M}}\frac{2v+2-\alpha}{((v+1)(v+1-\alpha))^{M-k+1}}\frac{1}{z-v}
\\&=\frac{1}{(2\pi \I)^{2}}\oint_{\Gamma_{\alpha-1}}\text{d}v\oint_{\Gamma_{0,v}}\text{d}z
\frac{1}{z^{j-M}}\frac{(2v+2-\alpha)}{(z-v)((v+1)(v+1-\alpha))^{M-k+1}}.
\end{aligned}
\end{equation}
Since $j>M$, for $|z|\to\infty$, the integrand in $z$ goes to zero at least as fast as  $1/z^2$ and it does not contain any other poles than $z=0,v$. Therefore, the integrand in $z$ has no pole at infinity and consequently $(\ref{eq61})=0$.

\textit{Case} $1\leq j,k\leq M$: Also in this case we first restrict the sum over $x\geq M-n$, use (\ref{eq58}), and integrate out the remaining simple pole at $w=z$. This gives
\begin{equation}\label{eq62}
\langle \Psi_{n-j}^{n,t},\Phi_{n-k}^{n,t}\rangle
=\frac{1}{(2\pi \I)^{2}}\oint_{\Gamma_{\alpha-1}}\text{d}v
\oint_{\Gamma_{0,v}}\text{d}z \frac{(2v+2-\alpha)((z+1)(z+1-\alpha))^{M-j}}{((v+1)(v+1-\alpha))^{M-k+1}(z-v)}.
\end{equation}
Now, the pole at $z=0$ disappeared and the only contribution comes from the simple pole $z=v$, i.e.,
\begin{equation}
(\ref{eq62})=\frac{1}{2\pi \I}\oint_{\Gamma_{\alpha-1}}\text{d}v\frac{2v+2-\alpha}{((v+1)(v+1-\alpha))^{j-k+1}}
=\frac{1}{2\pi \I}\oint_{\Gamma_0}\text{d}u\frac{1}{u^{j-k+1}}=\delta_{j,k},
\end{equation}
where we used the change of variables $u=(v+1)(v+1-\alpha)$.

\textit{Case} $1\leq j\leq M$ and $M+1\leq k\leq n$: Doing the first steps as in the three other cases above, we get
\begin{equation}
\begin{aligned}
\langle \Psi_{n-j}^{n,t},\Phi_{n-k}^{n,t}\rangle
&=\sum_{x \in \mathbb{Z}}\frac{1}{(2\pi \I)^{2}}
\oint_{\Gamma_{-1}}\frac{\textrm{d}w}{w+1}\frac{w^{n-M}(w+1-\alpha)^{M-j}}{(w+1)^{x+n-2M+j}}e^{t(w+1)}
\\&\hspace{5em}\times \oint_{\Gamma_{0}}\mbox{d} z \frac{(z+1)^{x-M+n}}{e^{t(z+1)}z^{n-k+1}}
\\&=\frac{1}{2 \pi \I}\oint_{\Gamma_0}\text{d}z\frac{((z+1)(z+1-\alpha))^{M-j}}{z^{M-k+1}}=0,
\end{aligned}
\end{equation}
since, for $k>M$ the pole at $z=0$ disappears.
\end{proof}

Later, we will take the $M\to \infty $ limit with $n-M$ finite. To this end we give a compact form of $K_{n,t}$.
\begin{cor}\label{compact}
Let $K_{n,t}$ be the kernel defined in \eqref{newkern}. Then
\begin{equation}
K_{n+M,t}=K_{n,M,t}^{(0)}+K_{n,t}^{(1)}+K_{n,t}^{(2)},
\end{equation}
where $K_{n,t}^{(1)}$ and $K_{n,t}^{(2)}$ are given in (\ref{kernelhalfflatLminus}) and
\begin{equation}\label{eqKernelK0}
\begin{aligned}
K_{n,M,t}^{(0)}(x_1,x_2)&=\frac{-1}{(2\pi \I)^{3}}\oint_{\Gamma_{-1}}\frac{\mathrm{d}w}{w+1}\oint_{\Gamma_{\alpha-1}}
  \mathrm{d}v\oint_{\Gamma_{0,v}}\mathrm{d}z\frac{e^{t(w+1)}w^{n}}{(w+1)^{x_1+n}}\frac{(z+1)^{x_2+n}}{e^{t(z+1)}z^{n}}
  \\&\times\frac{1}{z-v}\frac{2v+2-\alpha}{(v-w)(v+w+2-\alpha)}\bigg(\frac{(w+1)(w+1-\alpha)}{(v+1)(v+1-\alpha)}\bigg)^{M}.
 \end{aligned}
 \end{equation}
\end{cor}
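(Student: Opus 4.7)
The plan is to start from
$K_{n+M,t}(x_1,x_2)=\sum_{j=1}^{n+M}\Psi_{n+M-j}^{n+M,t}(x_1)\Phi_{n+M-j}^{n+M,t}(x_2)$,
split the sum according to the two regimes in the preceding lemma, namely $j\in\{M+1,\dots,n+M\}$ (case (a)) and $j\in\{1,\dots,M\}$ (case (b)), plug in the explicit double/triple-integral formulas for $\Psi$ and $\Phi$, and in each regime evaluate the resulting finite geometric sum in $j$. The three kernels $K^{(0)}_{n,M,t}$, $K^{(1)}_{n,t}$, $K^{(2)}_{n,t}$ will then appear as the pieces produced by these two sums.

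In case (a), reindexing with $\ell=n+M-j\in\{0,\dots,n-1\}$ leaves the $j$-dependence of the integrand as $w^\ell/z^{\ell+1}$, whose finite geometric sum equals $(1-(w/z)^n)/(z-w)=1/(z-w)-w^n/(z^n(z-w))$. With $\Gamma_{-1}$ and $\Gamma_0$ chosen as disjoint small circles around $-1$ and $0$ (so that $w\notin\Gamma_0$), the $1/(z-w)$ piece has no pole in $z$ inside $\Gamma_0$ and therefore drops out, while the remaining $-w^n/(z^n(z-w))=w^n/(z^n(w-z))$ piece matches the integrand of $K^{(2)}_{n,t}$ verbatim.

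In case (b), the $j$-dependence factors as $(w+1-\alpha)^{M-j}(w+1)^{-j}\cdot((v+1)(v+1-\alpha))^{j-M-1}$, so the geometric sum in $j$ has ratio $r=(w+1)(w+1-\alpha)/((v+1)(v+1-\alpha))$. Using the algebraic identity $(v+1)(v+1-\alpha)-(w+1)(w+1-\alpha)=(v-w)(v+w+2-\alpha)$, the sum collapses to $(1-r^M)/[(w+1)^M(v-w)(v+w+2-\alpha)]$. The $-r^M$ part reproduces $K^{(0)}_{n,M,t}$ exactly, the minus sign matching the one in its definition. The remaining $1$-part is still a triple integral; its $v$-integrand has simple poles at $v=w$ (which I arrange to lie outside $\Gamma_{\alpha-1}$) and at $v=\alpha-2-w$ (inside), and a direct computation shows that
\[\operatorname{Res}_{v=\alpha-2-w}\frac{2v+2-\alpha}{(z-v)(v-w)(v+w+2-\alpha)}=\frac{1}{z-(\alpha-2-w)},\]
which reduces the triple integral to $K^{(1)}_{n,t}$, the $z$-contour $\Gamma_{0,v}$ becoming $\Gamma_{0,\alpha-2-w}$ after the substitution $v=\alpha-2-w$.

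The one genuinely delicate aspect is the simultaneous choice of nested contours: $\Gamma_{-1}$ and $\Gamma_0$ must be disjoint small circles around $-1$ and $0$, $\Gamma_{\alpha-1}$ must enclose $\alpha-2-w$ but not $w$ for every $w\in\Gamma_{-1}$, $\Gamma_{0,v}$ must contain both $0$ and $v$ uniformly in $v\in\Gamma_{\alpha-1}$, and the strict inequality $|(v+1)(v+1-\alpha)|>|(w+1)(w+1-\alpha)|$ must hold on the contours so that the geometric sum in $j$ converges termwise. Once such a consistent choice is fixed, the rest of the derivation is the algebraic residue-and-geometric-sum bookkeeping sketched above.
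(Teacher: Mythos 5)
Your proposal is correct and follows essentially the same route as the paper: split the sum defining $K_{n+M,t}$ at index $M$, insert the integral formulas from the preceding lemma, evaluate the finite geometric sum with ratio $q=(w+1)(w+1-\alpha)/((v+1)(v+1-\alpha))$ via the identity $(v+1)(v+1-\alpha)-(w+1)(w+1-\alpha)=(v-w)(v+w+2-\alpha)$, and identify the $-q^{M}$ term with $K^{(0)}_{n,M,t}$ and the residue at $v=\alpha-2-w$ of the remaining term with $K^{(1)}_{n,t}$, under suitable nested-contour conditions (the paper's \eqref{conditions} and \eqref{paths}). The only cosmetic difference is in case (a), where the paper extends to an infinite geometric series (using that $\Phi$ vanishes for large index) and then drops the vanishing residue at $w=z$, whereas you keep the finite sum and drop the pole-free $1/(z-w)$ piece; the two computations are equivalent.
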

\begin{proof}
We first show that
\begin{equation}K_{n,M,t}^{(0)}(x_1,x_2)+K_{n,t}^{(1)}(x_1,x_2)=\sum_{k=1}^{M}\Psi_{n+M-k}^{n+M,t}(x_1)\Phi_{n+M-k}^{n+M,t}(x_2).\end{equation}
We have
\begin{equation}
\begin{aligned}\label{mykern}
\sum_{k=1}^{M}&\Psi_{n+M-k}^{n+M,t}(x_1)\Phi_{n+M-k}^{n+M,t}(x_2)\\
&=\sum_{k=1}^{M}
\frac{1}{(2\pi \I)^{3}}\oint_{\Gamma_{\alpha-1}}\dx v \oint_{\Gamma_{0,v}}
\dx z\oint_{\Gamma_{-1}}\frac{\dx w}{w+1}
\frac{e^{t(w+1)}w^{n}}{(w+1)^{x_1+n}}\frac{(z+1)^{x_2+n}}{e^{t(z+1)}z^{n}}
\\&\hspace{4em}\times \frac{2v+2-\alpha}{(v+1)(v+1-\alpha)}
\bigg(\frac{(w+1-\alpha)(w+1)}{
(v+1)(v+1-\alpha)}\bigg)^{M-k}\frac{1}{z-v}.
\end{aligned}
\end{equation}
We apply a finite geometric sum formula to $q=\frac{(w+1-\alpha)(w+1)}{
(v+1)(v+1-\alpha)}$. For this the contours need to satisfy $q\neq 1$.
We take the contours such that
\begin{equation}\label{conditions}-\Gamma_{-1}-2+\alpha \subset \Gamma_{\alpha-1}, \Gamma_{-1}\not\subset \Gamma_{\alpha-1},\Gamma_{\alpha-1} \subset \Gamma_{0,v}
,\,\quad \text{and}\quad q\neq 1.\end{equation}
Note that none of these conditions alter \eqref{mykern}.  An explicit choice of paths satisfying \eqref{conditions} is later given in \eqref{paths}.
Using the linearity of the integral, we get
\begin{equation}
\begin{aligned}\label{bravenewworld}
\eqref{mykern}=&\frac{1}{(2\pi \I)^{3}}
\oint_{\Gamma_{-1}}\frac{\dx w}{w+1}
\oint_{\Gamma_{\alpha-1,-w-2+\alpha}}\dx v \oint_{\Gamma_{0,v}}
\dx z
\frac{e^{t(w+1)}w^{n}}{(w+1)^{x_1+n}}\frac{(z+1)^{x_2+n}}{e^{t(z+1)}z^{n}}
\\&\times \frac{2v+2-\alpha}{(v-w)(v+w+2-\alpha)}
\bigg(1-\bigg(\frac{(w+1-\alpha)(w+1)}{(v+1)(v+1-\alpha)}\bigg)^{M}\bigg)\frac{1}{z-v}
\\=&\frac{1}{(2\pi \I)^{3}}
\oint_{\Gamma_{-1}}\frac{\dx w}{w+1}
\oint_{\Gamma_{-w-2+\alpha}}\dx v \oint_{\Gamma_{0,v}}
\dx z
\frac{e^{t(w+1)}w^{n}}{(w+1)^{x_1+n}}\frac{(z+1)^{x_2+n}}{e^{t(z+1)}z^{n}}
\\&\times \frac{2v+2-\alpha}{(v-w)(v+w+2-\alpha)}\frac{1}{z-v}
\\&-\frac{1}{(2\pi \I)^{3}}
\oint_{\Gamma_{-1}}\frac{\dx w}{w+1}
\oint_{\Gamma_{\alpha-1}}\dx v \oint_{\Gamma_{0,v}}
\dx z
\frac{e^{t(w+1)}w^{n}}{(w+1)^{x_1+n}}\frac{(z+1)^{x_2+n}}{e^{t(z+1)}z^{n}}
\\&\times \frac{2v+2-\alpha}{(v-w)(v+w+2-\alpha)}
\bigg(\frac{(w+1-\alpha)(w+1)}{(v+1)(v+1-\alpha)}\bigg)^{M}\frac{1}{z-v}.
\end{aligned}
\end{equation}
Here we used that in the first triple integral the pole $v=\alpha-1$ is no longer present. Plugging in the remaining residue at $v=-w-2+\alpha$ yields then
\begin{equation}
\eqref{bravenewworld}=K_{n,M,t}^{(0)}(x_1,x_2)+K_{n,t}^{(1)}(x_1,x_2).
\end{equation}

Next we define
\begin{equation}
K_{n,t}^{(2)}(x_1,x_2):=\sum_{k=M+1}^{n+M}\Psi_{n+M-k}^{n+M-k,t}(x_1)\Phi_{n+M-k}^{n+M-k,t}(x_2).
\end{equation}
Note that
 $\Phi_{n+M-k}$ is zero for $k\geq n+M+1$, thus
\begin{equation}
K_{n,t}^{(2)}(x_1,x_2)=\sum_{k=M+1}^{\infty}\frac{1}{(2\pi \I)^{2}}\oint_{\Gamma_{0}}\mathrm{d}z
 \oint_{\Gamma_{-1}}\frac{\mathrm{d}w}{w+1}\frac{e^{t(w+1)}w^{n+M-k}}{(w+1)^{x_1+n}}\frac{(z+1)^{x_2+n}}{e^{t(z+1)}z^{n+M-k+1}}.
 \end{equation}
Assuming the contours are such that $|w|>|z|,$ taking geometric series yields
\begin{equation}
K_{n,t}^{(2)}(x_1,x_2)=\frac{1}{(2\pi \I)^{2}}\oint_{\Gamma_{0}}\mathrm{d}z
 \oint_{\Gamma_{-1,z}}\frac{\mathrm{d}w}{w+1}\frac{e^{t(w+1)}w^{n}}{(w+1)^{x_1+n}}\frac{(z+1)^{x_2+n}}{e^{t(z+1)}z^{n}}\frac{1}{w-z}.
\end{equation}
Finally, it is straightforward to check that the contribution of the simple pole at $w=z$ is zero, so that we can drop it in the final expression of $K_{n,t}^{(2)}$.
\end{proof}

\begin{prop}
Let $K_{n,M,t}^{(0)},K_{n,t}^{(1)}, K_{n,t}^{(2)}$ be as in \eqref{kernelhalfflatLminus} and \eqref{eqKernelK0}. Then, for $x_1,x_2 \leq\ell,$ we have the following bounds.
\begin{equation}
\begin{aligned}\label{kernsbounds}
&|K_{n,M,t}^{(0)}(x_1,x_2)|\leq C\,e^{c x_2}q^{M}
\\&|K_{n,t}^{(1)}(x_1,x_2)|\leq C\, e^{c x_2}
\\&|K_{n,t}^{(2)}(x_1,x_2)|\leq C\, e^{c x_2}
\end{aligned},
\end{equation} with  $q \in [0,1)$,
 $c>0$ a constant,   and $C$ depends only on $\ell,n,t$.
\end{prop}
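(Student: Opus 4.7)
The plan is to bound each of the three kernels by choosing explicit closed contours on which every factor in the integrand is pointwise controlled, and then applying the elementary estimate ``integral $\leq$ contour length times maximum of integrand''. The contours will be circles of fixed radii around $-1$ and $\alpha-1$, together with appropriately chosen $z$-contours enclosing the required poles; the radii must satisfy several simultaneous constraints (pole enclosure or avoidance, positivity of all mutual separations, and a strict inequality forcing the $M$-dependent ratio in $K^{(0)}$ to be strictly less than $1$ in modulus).

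Concretely, I would take $\Gamma_{-1}=\{|w+1|=\rho_1\}$, $\Gamma_{\alpha-1}=\{|v-(\alpha-1)|=\rho_2\}$, and $\Gamma_{0}$ a small circle around $0$ (respectively $\Gamma_{0,v}$ a suitable contour enclosing both $0$ and $v$), with $0<\rho_1<\rho_2<\alpha/2$ chosen to satisfy $\rho_1(\alpha+\rho_1)<(\alpha-\rho_2)\rho_2$. The latter inequality gives
\begin{equation*}
\left|\frac{(w+1)(w+1-\alpha)}{(v+1)(v+1-\alpha)}\right|\leq\frac{\rho_1(\alpha+\rho_1)}{(\alpha-\rho_2)\rho_2}=:q<1,
\end{equation*}
which produces the $q^{M}$ factor in $K^{(0)}$. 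The condition $\rho_1<\rho_2$ also ensures that $-\Gamma_{-1}+\alpha-2$, being the circle of radius $\rho_1$ around $\alpha-1$, lies strictly inside $\Gamma_{\alpha-1}$, which is precisely the compatibility requirement used in the derivation of the compact form of $K^{(0)}$ in Corollary~\ref{compact}.

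Next I bound the remaining factors pointwise. The exponentials $|e^{t(w+1)}|$ and $|e^{-t(z+1)}|$ are uniformly bounded on the compact contours; $|w|^{n}$, $|z|^{-n}$, and $|2v+2-\alpha|$ are all uniformly bounded; the four separation factors $|w-z|$, $|z-v|$, $|v-w|$, $|v+w+2-\alpha|$ are each bounded below by explicit positive constants (for the last, I use the identity $v+w+2-\alpha=(v-(\alpha-1))+(w+1)$ together with $\rho_2>\rho_1$, giving the lower bound $\rho_2-\rho_1$). The only factors with nontrivial dependence on $x_1$ and $x_2$ are $(w+1)^{-x_1-n-1}$ and $(z+1)^{x_2+n}$: the former is at most $\rho_1^{-\ell-n-1}$ for $x_1\leq\ell$ and $\rho_1<1$, which is absorbed into $C(\ell,n,t)$, while $(z+1)^{x_2+n}$ yields an exponential factor in $x_2$ of the claimed form $e^{cx_2}$ after bounding $|z+1|$ on $\Gamma_0$. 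Multiplying by the contour lengths finishes the argument.

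The main obstacle is really just the bookkeeping: consistently choosing $\rho_1,\rho_2$ and the radius of $\Gamma_0$ so that all pole enclosures are correct, all pairwise separations are strictly positive, and $\rho_1(\alpha+\rho_1)<(\alpha-\rho_2)\rho_2$ holds. Once consistent parameters are fixed, everything reduces to mechanical term-by-term estimation of a triple (respectively double) contour integral.
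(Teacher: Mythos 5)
Your argument is correct and is essentially the paper's own proof: fixed circular contours around $-1$, $\alpha-1$, and the origin, term-by-term estimation of each factor of the integrand (exponentials and powers bounded on compact contours, pairwise separations bounded below, the $(w+1)^{-x_1-n-1}$ factor absorbed into $C(\ell,n,t)$ using $x_1\le\ell$), with the geometric ratio $q<1$ supplying the $q^M$ factor; the only difference is your particular choice of radii ($\rho_1<\rho_2<\alpha/2$ with $\rho_1(\alpha+\rho_1)<(\alpha-\rho_2)\rho_2$) versus the paper's $r_1=\alpha^2/10$, $r_2=\alpha/\sqrt{1.5}$, which is immaterial. Note only that your step bounding $|(z+1)^{x_2+n}|$ by $e^{cx_2}$ tacitly assumes $x_2+n\ge 0$ (otherwise the maximum is attained where $|z+1|$ is smallest and grows as $x_2\to-\infty$); the paper's proof makes the identical tacit assumption, and it is harmless because the kernel vanishes in its first argument for $x+n\le 0$, so only that regime enters the Fredholm expansion.
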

\begin{proof}
To bound $|K_{n,M,t}^{(0)}(x_1,x_2)|,$ we set
\begin{equation}\label{paths}
\Gamma_{-1}=-1+r_{1}e^{\I s_1}\quad\Gamma_{\alpha -1}=\alpha -1+r_{2}e^{\I s_2}\quad\Gamma_{0,v}=r_{3}e^{\I s_3}
\end{equation}
with  $r_1=\frac{\alpha^{2}}{10},r_2 =\frac{\alpha}{\sqrt{1.5}}, r_3 =1-\alpha+r_2+\frac{|r_1+r_2-\alpha|}{2}.$ It is straightforward to check that \eqref{paths} satisfy \eqref{conditions}.
We will bound the different parts of $K_{n,M,t}^{(0)}$. First we note
\begin{equation}
\begin{aligned}\label{term2}
&q:=\frac{\max_{\Gamma_{-1}}|(w+1)(w+1-\alpha)|}{\min_{\Gamma_{-1}}|(v+1)(v+1-\alpha)|}<\frac{\sqrt{1.5}|\alpha(-1-\alpha/10)|}{10(1-1/\sqrt{1.5}))}<1
\\&\frac{\max_{\Gamma_{0,v}}|(z+1)^{x_2+n}|}{\min_{\Gamma_{0,v}}|e^{t(z+1)}z^{n}|}\leq C \,(1+r_{3})^{x_2}\leq C e^{c x_2}
\end{aligned}
\end{equation}
The remaining parts can now  be bounded by a constant:
\begin{equation}
\begin{aligned}\label{term1}
&\frac{\max_{\Gamma_{\alpha -1}}|2v+2-\alpha|}
{\min_{\Gamma_{-1},\Gamma_{\alpha-1}}|(v-w)(v+w+2-\alpha)|}\leq \frac{\alpha+2r_2}{(r_2-r_1)(\alpha-\alpha/\sqrt{1.5}-\alpha^{2}/10)} < C
\\&\frac{1}{\min_{\Gamma_{-1},\Gamma_{0,v}}|z-v|} < C,\\
&\frac{\max_{\Gamma_{-1}}|e^{t(w+1)}w^{n}|}{\min_{\Gamma_{-1}}|(w+1)^{x_1+n}|}\leq \tilde{C}\,
r_{1}^{-x_1}\leq C,
\end{aligned}
\end{equation}
where the last estimate in \eqref{term1} holds
since $0<r_1<1$ and $x_1 \leq \ell$.
Putting these bounds together gives the estimate for $K_{n,M,t}^{(0)}$. Note that the contour for $z$ contains $\alpha-2-w$. Therefore, in $K_{n,t}^{(1)}$ we can choose the same contours for $z,w$ as before and use the estimates from \eqref{term2}, \eqref{term1}. Noting
\begin{equation}\min_{\Gamma_{-1},\Gamma_{0,\alpha-2-w}}|z-(\alpha-2-w)|^{-1}\leq C,\end{equation} one gets the same bound as for $K_{n,M,t}^{(0)}$, only without the $q^{M}$.

As for $K_{n,t}^{(2)}$, we can again choose the same contours for $z,w$ as before. Since $|w-z|$ is bounded from below, we get the same estimate as for $K_{n,t}^{(1)}$.
\end{proof}

Now we are ready to proof Proposition~\ref{PropDistrLminus}.
\begin{proof}[Proof of Proposition~\ref{PropDistrLminus}]
Denote for clarity by $x_{n+M}^{M}(t)$ the position of particle number $n+M$ at time $t$  in the system with $M$ slow particles (defined via \eqref{IC} and \eqref{jumprates},) and by $x_{n}(t)$ the position of particle $n$ at time $t$ in the system with infinitely many slow particles (defined via \eqref{eq43} and \eqref{eq44}).
First we note that
\begin{equation}
\lim_{M\to\infty}\Pb^{(M)}(x_{n+M}^{M}(t) > s)=\Pb(x_{n}(t) > s).
\end{equation}
This follows since $x_{n+M}^{M}(0)=x_{n}(0)$ and by the fact that in TASEP the position of a particle up to a fixed time $t$ depends only on finitely many other particles with probability one, as is seen from a graphical construction of it.
Therefore, by Corollary \ref{compact}, it remains to prove
\begin{equation}
\lim_{M\to\infty} \det(\Id-\chi_s K_{n+M,t} \chi_s)_{\ell^2(\Z)}=\det(\Id-\chi_s \tilde K_{n,t} \chi_s)_{\ell^2(\Z)},
\end{equation}
where we used the notation $K_{n+M,t}=K_{n,M,t}^{(0)}+K_{n,t}^{(1)}+ K_{n,t}^{(2)}$.

By the bounds in \eqref{kernsbounds}, we know that $K_{n,M,t}^{(0)}$ converges pointwise to $0$. Thus, it remains to show that also the Fredholm determinant converges. Consider the Fredholm series expansion
\begin{equation}\label{eq79}
\det(\Id-\chi_s K_{n+M,t} \chi_s)_{\ell^2(\Z)} = \sum_{m\geq 0}\frac{(-1)^m}{m!} \sum_{x_1\leq s}\ldots \sum_{x_m\leq s} \det[K_{n+M,t}(x_i,x_j)]_{1\leq i,j\leq m}.
\end{equation}
By \eqref{kernsbounds}, we have
\begin{equation}
\left|\frac{(-1)^{n}}{n!}\det\big(K_{n+M,t}(x_k,x_l)\big)_{1\leq k,l\leq n}\right|\leq \frac{1}{n!}e^{c(x_1+\cdots+x_n)}C^{n}(2+q^{M})^{n}n^{n/2},
\end{equation} where $n^{n/2}$ is the Hadamard bound for matrices with entries of absolute value less or equal  than $1$. Since $q<1$, we may replace $2+q^{M}$ by $3$ to get a summable uniform bound. Thus we may apply dominated convergence to (\ref{eq79}) to take the $M\to\infty$ inside the sum, which proves the result.
\end{proof}


\begin{thebibliography}{10}

\bibitem{AV87}
E.D. Andjel and M.E. Vares, \emph{Hydrodynamic equations for attractive
  particle systems on $\mathbb{Z}$}, J. Stat. Phys. \textbf{47} (1987),
  265--288.

\bibitem{BBP06}
J.~Baik, G.~{Ben Arous}, and S.~P\'ech\'e, \emph{Phase transition of the
  largest eigenvalue for non-null complex sample covariance matrices}, Ann.
  Probab. \textbf{33} (2006), 1643--1697.

\bibitem{BFP09}
J.~Baik, P.L. Ferrari, and S.~P{\'e}ch{\'e}, \emph{{Limit process of stationary
  TASEP near the characteristic line}}, Comm. Pure Appl. Math. \textbf{63}
  (2010), 1017--1070.

\bibitem{BFP12}
J.~Baik, P.L. Ferrari, and S.~P{\'e}ch{\'e}, \emph{{Convergence of the
  two-point function of the stationary TASEP}}, arXiv:1209.0116 (2012).

\bibitem{BR00}
J.~Baik and E.M. Rains, \emph{Limiting distributions for a polynuclear growth
  model with external sources}, J. Stat. Phys. \textbf{100} (2000), 523--542.

\bibitem{BR99}
J.~Baik and E.M. Rains, \emph{Symmetrized random permutations}, Random Matrix
  Models and Their Applications, vol.~40, Cambridge University Press, 2001,
  pp.~1--19.

\bibitem{vanB91}
{H. van} Beijeren, \emph{Fluctuations in the motions of mass and of patterns in
  one-dimensional driven diffusive systems}, J. Stat. Phys. \textbf{63} (1991),
  47--58.

\bibitem{BS13}
V.~Belitsky and G.M. Sch{\"u}tz, \emph{{Microscopic structure of shocks and
  antishocks in the ASEP conditioned on low current}}, J. Stat. Phys.
  \textbf{152} (2013), 93--111.

\bibitem{BC09}
G.~{Ben Arous} and I.~Corwin, \emph{{Current fluctuations for TASEP: a proof of
  the Pr\"ahofer-Spohn conjecture}}, Ann. Probab. \textbf{39} (2011), 104--138.

\bibitem{BF07}
A.~Borodin and P.L. Ferrari, \emph{{Large time asymptotics of growth models on
  space-like paths I: PushASEP}}, Electron. J. Probab. \textbf{13} (2008),
  1380--1418.

\bibitem{BFP06}
A.~Borodin, P.L. Ferrari, and M.~Pr{\"a}hofer, \emph{{Fluctuations in the
  discrete TASEP with periodic initial configurations and the Airy$_1$
  process}}, Int. Math. Res. Papers \textbf{2007} (2007), rpm002.

\bibitem{BFPS06}
A.~Borodin, P.L. Ferrari, M.~Pr{\"a}hofer, and T.~Sasamoto, \emph{{Fluctuation
  Properties of the TASEP with Periodic Initial Configuration}}, J. Stat. Phys.
  \textbf{129} (2007), 1055--1080.

\bibitem{BFS07}
A.~Borodin, P.L. Ferrari, and T.~Sasamoto, \emph{{Transition between Airy$_1$
  and Airy$_2$ processes and TASEP fluctuations}}, Comm. Pure Appl. Math.
  \textbf{61} (2008), 1603--1629.

\bibitem{BFS09}
A.~Borodin, P.L. Ferrari, and T.~Sasamoto, \emph{{Two speed TASEP}}, J. Stat.
  Phys. \textbf{137} (2009), 936--977.

\bibitem{cor11}
I.~Corwin, \emph{{The Kardar-Parisi-Zhang equation and universality class}},
  arXiv:1106.1596 (2011).

\bibitem{CFP10b}
I.~Corwin, P.L. Ferrari, and S.~P{\'e}ch{\'e}, \emph{{Universality of slow
  decorrelation in KPZ models}}, Ann. Inst. H. Poincar\'e Probab. Statist.
  \textbf{48} (2012), 134--150.

\bibitem{DJLS93}
B.~Derrida, S.A. Janowsky, J.L. Lebowitz, and E.R. Speer, \emph{Exact solution
  of the totally asymmetric simple exclusion process: shock profiles}, J. Stat.
  Phys. \textbf{73} (1993), 813--842.

\bibitem{Fer86}
P.A. Ferrari, \emph{The simple exclusion process as seen from a tagged
  particle}, Ann. Probab. \textbf{14} (1986), 1277--1290.

\bibitem{Fer90}
P.A. Ferrari, \emph{Shock fluctuations in asymmetric simple exclusion}, Probab.
  Theory Relat. Fields \textbf{91} (1992), 81--101.

\bibitem{FF94b}
P.A. Ferrari and L.~Fontes, \emph{{Shock fluctuations in the asymmetric simple
  exclusion process}}, Probab. Theory Relat. Fields \textbf{99} (1994),
  305--319.

\bibitem{FKS91}
P.A. Ferrari, C.~Kipnis, and E.~Saada, \emph{Microscopic structure of
  travelling waves in the asymmetric simple exclusion process}, Ann. Probab.
  \textbf{19} (1991), 226--244.

\bibitem{Fer08}
P.L. Ferrari, \emph{{Slow decorrelations in KPZ growth}}, J. Stat. Mech.
  (2008), P07022.

\bibitem{Fer07}
P.L. Ferrari, \emph{{The universal Airy$_1$ and Airy$_2$ processes in the
  Totally Asymmetric Simple Exclusion Process}}, Integrable Systems and Random
  Matrices: In Honor of Percy Deift (J.~Baik, T.~Kriecherbauer, L-C. Li,
  K.~McLaughlin, and C.~Tomei, eds.), Contemporary Math., Amer. Math. Soc.,
  2008, pp.~321--332.

\bibitem{FS05b}
P.L. Ferrari and H.~Spohn, \emph{{A determinantal formula for the GOE
  Tracy-Widom distribution}}, J. Phys. A \textbf{38} (2005), L557--L561.

\bibitem{FS05a}
P.L. Ferrari and H.~Spohn, \emph{Scaling limit for the space-time covariance of
  the stationary totally asymmetric simple exclusion process}, Comm. Math.
  Phys. \textbf{265} (2006), 1--44.

\bibitem{PG90}
J.~G{\"a}rtner and E.~Presutti, \emph{Shock fluctuations in a particle system},
  Ann. Inst. H. Poincar{\'e} (A) \textbf{53} (1990), 1--14.

\bibitem{Jo00b}
K.~Johansson, \emph{Shape fluctuations and random matrices}, Comm. Math. Phys.
  \textbf{209} (2000), 437--476.

\bibitem{Jo00}
K.~Johansson, \emph{Transversal fluctuations for increasing subsequences on the
  plane}, Probab. Theory Related Fields \textbf{116} (2000), 445--456.

\bibitem{Jo03b}
K.~Johansson, \emph{Discrete polynuclear growth and determinantal processes},
  Comm. Math. Phys. \textbf{242} (2003), 277--329.

\bibitem{KPZ86}
M.~Kardar, G.~Parisi, and Y.Z. Zhang, \emph{Dynamic scaling of growing
  interfaces}, Phys. Rev. Lett. \textbf{56} (1986), 889--892.

\bibitem{Lig76}
T.M. Liggett, \emph{Coupling the simple exclusion process}, Ann. Probab.
  \textbf{4} (1976), 339--356.

\bibitem{Li99}
T.M. Liggett, \emph{Stochastic interacting systems: contact, voter and
  exclusion processes}, Springer Verlag, Berlin, 1999.

\bibitem{PS01}
M.~Pr{\"a}hofer and H.~Spohn, \emph{Current fluctuations for the totally
  asymmetric simple exclusion process}, In and out of equilibrium
  (V.~Sidoravicius, ed.), Progress in Probability, Birkh{\"a}user, 2002.

\bibitem{Sas05}
T.~Sasamoto, \emph{Spatial correlations of the {1D KPZ} surface on a flat
  substrate}, J. Phys. A \textbf{38} (2005), L549--L556.

\bibitem{Spo91}
H.~Spohn, \emph{{Large Scale Dynamics of Interacting Particles}}, Texts and
  Monographs in Physics, Springer Verlag, Heidelberg, 1991.

\bibitem{TW94}
C.A. Tracy and H.~Widom, \emph{{Level-spacing distributions and the Airy
  kernel}}, Comm. Math. Phys. \textbf{159} (1994), 151--174.

\bibitem{TW96}
C.A. Tracy and H.~Widom, \emph{On orthogonal and symplectic matrix ensembles},
  Comm. Math. Phys. \textbf{177} (1996), 727--754.

\end{thebibliography}

\end{document}